\def\singlespace{\def\baselinestretch{1}\@normalsize}
\newcommand{\kV}{{\cal V}}
\newcommand{\kU}{{\cal U}}
\newcommand{\kG}{{\cal G}} 
\newcommand{\kA}{{\cal A}}
\newcommand{\kS}{{\cal S}}
\newcommand{\vd}{{\bm d}}
\newcommand{\ve}{{\bm e}}
\newcommand{\vS}{{\bm S}}
\newcommand{\vE}{{\bm E}}
\newcommand{\vr}{{\bm r}}
\newcommand{\vu}{{\bm u}}
\newcommand{\vv}{{\bm v}}
\newcommand{\vw}{{\bm w}}
\newcommand{\vx}{{\bm x}}
\newcommand{\vy}{{\bm y}}
\newcommand{\vA}{{\bm A}}
\newcommand{\vI}{{\bm I}}
\newcommand{\vJ}{{\bm J}}
\newcommand{\vnull}{{\bm 0}}
\newcommand{\lam}{{\lambda}}
\newcommand{\vDelta}{\bm{\Delta}}
\newcommand{\vSigma}{\bm{\Sigma}}
\newcommand{\vOmega}{\bm{\Omega}}
\newcommand{\vTheta}{\bm{\Theta}}
\newcommand{\valpha}{\bm{\alpha}}
\newcommand{\vbeta}{\bm{\beta}}
\newcommand{\vgamma}{\bm{\gamma}}
\newcommand{\veta}{\bm{\eta}}
\newcommand{\vLam}{\bm{\Lambda}}
\newcommand{\vnu}{\bm{\nu}}
\newcommand{\vtheta}{\bm{\theta}}
\newcommand{\vxi}{\bm{\xi}}
\newcommand{\vphi}{\bm{\phi}}
\newcommand{\vkappa}{\bm{\kappa}}
\newcommand{\ep}{\epsilon}
\newcommand{\bay}{\begin{array}}
	\newcommand{\eay}{\end{array}}
\newcommand{\vxh}{\widehat{\vx}}
\newcommand{\xh}{\widehat{x}}
\newcommand{\vdh}{\widehat{\vd}}
\newcommand{\vSigmah}{\widehat{\vSigma}}
\newcommand{\vOmegah}{\widehat{\vOmega}} 
\newcommand{\vbetah}{\widehat{\vbeta}}
\newcommand{\vthetah}{\widehat{\vtheta}}
\newcommand{\vThetah}{\widehat{\vTheta}}
\newcommand{\vkappah}{\widehat{\vkappa}}
\newcommand{\vetah}{\widehat{\veta}}
\newcommand{\vxw}{\widetilde{\vx}}
\newcommand{\xw}{\widetilde{x}}
\newcommand{\vbetaw}{\widetilde{\vbeta}}
\newcommand{\E}{\mbox{E}}
\newcommand{\Var}{\mbox{Var}}
\newcommand{\Cov}{\mbox{Cov}}
\newcommand{\ra}{\rightarrow}
\newcommand{\rad}{\xrightarrow{d}}
\newcommand{\rap}{\xrightarrow{p}}
\newtheorem{theorem}{Theorem}
\newtheorem{lemma}{Lemma} 
\newtheorem{corollary}{Corollary}
\newtheorem{definition}{Definition}
\newcommand{\bqa}{\begin{eqnarray*}}
	\newcommand{\eqa}{\end{eqnarray*}}
\newcommand{\bqan}{\begin{eqnarray}}
\newcommand{\eqan}{\end{eqnarray}}
\newcommand{\bqt}{\begin{quote}}
	\newcommand{\eqt}{\end{quote}}
\newcommand{\bt}{\begin{tabbing}}
	\newcommand{\et}{\end{tabbing}}
\newcommand{\bit}{\begin{itemize}}
	\newcommand{\eit}{\end{itemize}}
\newcommand{\ben}{\begin{enumerate}}
	\newcommand{\een}{\end{enumerate}}
\newcommand{\beq}{\begin{equation}}
\newcommand{\eeq}{\end{equation}}
\newcommand{\bdefi}{\begin{definition}}
	\newcommand{\edefi}{\end{definition}}
\newcommand{\bpro}{\begin{proposition}}
	\newcommand{\epro}{\end{proposition}}
\newcommand{\blem}{\begin{lemma}}
	\newcommand{\elem}{\end{lemma}}
\newcommand{\bco}{\begin{corollary}}
	\newcommand{\eco}{\end{corollary}}
\newcommand{\bdes}{\begin{description}}
	\newcommand{\edes}{\end{description}}
\newcommand{\bbR}{\mathbb{R}}
\newcommand{\bbK}{\mathbb{K}}
\newcommand{\bbB}{\mathbb{B}}
\newcommand{\bbE}{\mathbb{E}}
\newcommand{\bbM}{\mathbb{M}}
\newcommand{\bbU}{\mathbb{U}} 
\newcommand{\bbP}{\mathbb{P}}
\def\boxit#1{\vbox{\hrule\hbox{\vrule\kern6pt
			\vbox{\kern6pt#1\kern6pt}\kern6pt\vrule}\hrule}}
\newcommand{\sgn}{\mbox{sgn}}
\DeclareMathOperator*{\argmin}{arg\,min}
\newcommand*{\addFileDependency}[1]{
	\typeout{(#1)}
	\@addtofilelist{#1} 
	\IfFileExists{#1}{}{\typeout{No file #1.}} 
}
\begin{document}

	\title{\large
		\bf Model-Assisted Uniformly Honest Inference for Optimal Treatment Regimes in High Dimension}
	\author{Yunan Wu, Lan Wang and Haoda Fu}
	\date{}
	
	\maketitle

	\begin{singlespace}
		\begin{footnotetext}[1]
			{ Yunan Wu is Assistant Professor, Department of Mathematical Sciences, University of Texas at Dallas. Emails: yunan.wu@utdallas.edu. 
				Lan Wang is Professor, Department of Management Science, University of
				Miami. Emails: lanwang@mbs.miami.edu.
				Wang and Wu's research was partly supported by
				NSF DMS-1952373 and
				NSF OAC-1940160.
				Dr. Haoda Fu is Research Fellow, Enterprise Lead for Machine Learning and AI,
                Eli Lilly and Company.  Email: fu\_haoda@lilly.com. Wang and Wu's research was partly supported by
				NSF DMS-1952373 and
				NSF OAC-1940160.
				The authors are grateful to the referees, the associate editor and the Co-editor for their
				valuable comments, which have significantly improved the paper.}	
		\end{footnotetext}
	\end{singlespace}

	\bigskip
	\begin{abstract}
		This paper develops new tools to quantify uncertainty in optimal decision making and to gain insight into 
		which variables one should collect information about given the potential cost of measuring a large number of variables.
		We investigate simultaneous inference to determine if a group of variables is relevant for estimating an optimal decision rule in
		a high-dimensional semiparametric framework.
		The unknown link function permits flexible modeling of the interactions between the treatment and the covariates, but 
		leads to nonconvex estimation in high dimension and
		imposes significant challenges for inference.  We first establish that a local restricted strong convexity 
		condition holds with high probability and that
		any feasible local sparse solution of the estimation problem can achieve the near-oracle
		estimation error bound. We further rigorously verify that a wild bootstrap procedure based on a debiased version of the local solution 
		can provide asymptotically honest uniform inference for the effect of a group of variables on optimal decision making. 
		The advantage of honest inference is that it does not require the initial estimator to achieve perfect model selection 
		and does not require the zero 
		and nonzero effects to be well-separated.	We also propose an efficient algorithm for estimation. 
		Our simulations suggest satisfactory performance. An example from a diabetes study illustrates the real
		application. 
	\end{abstract}
	
	\noindent%
	{\it Keywords:}   confidence interval; inference; kernel smoothing; multiplier bootstrap; high-dimensional data; optimal treatment regime; precision medicine.
	

	\section{Introduction}	\label{sec:intro}
	
	Precision medicine is an innovative practice for disease treatment that takes into account individual variability in genes, environment, and lifestyle for each patient. 
	Substantial efforts have recently been devoted to studying how to estimate the optimal personalized treatment regime 
	given the individual-level information, which aims to
	yield the best expected outcome if the treatment regime is followed by each individual in the population.
	Several successful approaches have been developed for this estimation problem, 
	including Q-learning and A-learning based methods \citep{watkins1992q, robins2000, Murphy2003, moodie2010, Qian2011}, 
	and classification-based methods
	\citep{zhang2012robust, zhao2012estimating, zhao2015,  wang2018, qi2018d}, among others.
	We refer to \citet{CM2013} and \citet{KM2016} for a general introduction to this area and other relevant references.
	
	Inference or uncertainty quantification is important in practice. This paper studies the following inference problem for optimal 
	personalized decision making: 
	suppose we have a large number of covariates 
	(e.g., hundreds of genes),
	how will we determine if a given subset of covariates (e.g., genes associated with a given biological pathway)
	is relevant for making the optimal treatment recommendation?   
	Scientifically, this knowledge would enable the doctors and researchers to identify critical characteristics (e.g., gender, age, gene pathways) that
	are influential for the optimal decision. It also helps gain insight into what information
	is worth collecting to be more cost effective, given the possibility of measuring a large number of variables (genetic, clinic, etc).

	In the last few years, important progress has been made in inference with optimal decision rules. 
	\citet{laber2010} developed a novel locally consistent adaptive confidence
	interval for the Q-learning approach.  \citet{chakraborty2013inference} proposed a practically convenient adaptive $m$-out-of-$n$ bootstrap
	method for inference for Q-learning. \citet{Song2015} studied penalized Q-learning.
	\citet{Jeng2018} developed Lasso-based debiased procedure 
	for A-learning.
	Different but related, \citet{Chakraborty2014} 
	and \citet{Luedtke2016}, \citet{zhu2018proper} developed confidence intervals for another quantity of interest: the value function.
	However, existing work mostly deals with the classical
	asymptotic setting of fixed $p$ and large $n$, where $p$ is the number of covariates and $n$ is the 
	sample size, and have not addressed
	the challenge of inference with high-dimensional variables.
	Moreover, the aforementioned work often assumes that the interaction between the covariates and the treatment has a known functional form.

	Motivated by the overarching goal of precision medicine to incorporate genetic information (e.g, measurements on thousands
	of genes) in the decision making process, this paper investigates inference about the effect of a group of variables on the optimal decision rule
	in the high-dimensional setting. 
	The existing frameworks are known to face challenges for the purpose of inference in high dimension.
	The Q-learning approach is prone to model-misspecification. 
	Robust model-free procedures 
	that directly estimate the Bayes rule (e.g., \citet{zhang2012robust})
	have a nonstandard convergence rate, see for example, the recent analysis in 
	\citet{wang2018} on the cubic-root convergence rate. 
	On the other hand, the theory of Hinge-loss based O-learning (\cite{ zhao2012estimating})
	has been focused on the generalization error bound. Inference 
	for the Bayes rule based on the nonsmooth surrogate loss
	is very challenging in high dimension.
	We alleviate the above difficulty by
	adopting a flexible semiparametric model-assisted approach for 
	optimal decision estimation and inference. The semiparametric structure permits nonparametric main effects
	and nonlinear interaction effect between the covariates and treatment
	via an unknown smooth link function. This semiparametric framework incorporates many existing models as special cases.
	
	When the interaction effects are nonlinear, the 
	parameter indexing the optimal decision rule does not necessarily correspond
	to the solution of a convex problem.
	For inference, we first propose and study a preliminary
	estimator based on a high-dimensional penalized
	profile estimation equation.
	This estimator is motivated by earlier work on classical single-index models 
	(e.g., \citet{Powell1989}, \citet{duan1991slicing},
	\citet{ichimura1993}, \citet{ZhuXue}, \citet{carroll1997},
	\citet{xia1999extended}, \citet{yu2002penalized},
	\citet{wang2010estimation},
	\citet{ma2013}, \citet{ma2016}, among others).
	Several paper recently studied estimation 
	for  high-dimensional single-index models (e.g.,
	\citet{radchenko2015high}, \citet{neykov2016l1},
	\citet{yang2017high}, \citet{lin2019sparse}, among others) but 
	focused on statistical properties of the global solution which may not be numerically achieved
	due to the nonconvex nature of the problem. 
	Adopting tools from modern
	empirical process and random matrix theory,
	we establish that a local restricted strong convexity 
	condition holds with high probability in high dimension and that
	any local sparse solution of the penalized estimation equation can achieve desirable
	estimation accuracy. Moreover, we propose a new algorithm for efficient computation in high dimension. 
	
	Our research also makes new contributions to statistical inference in high-dimensional semiparametric
	models.
	Recent work on inference has been mostly limited to linear regression or generalized linear
	regression, see
	\citet{zhang2014confidence,van2014asymptotically,javanmard2014confidence, 
		Belloni2015uniform, cai2017confidence, 
		Ning2017, zhang2017, zhu2018linear, Shi2020}, among others.
	High-dimensional inference in the semiparametric setting with 
	estimated nonparametric components
	is a substantially harder problem and has been little studied.
	We have a particularly challenging setting where the parameter of interest and nonparametric
	component are bundled together, that is, the nuisance functions
	depend on the parameter of interest (\citet{ding2011sieve}).
	So far, statistical inference
	for single-index model has mostly been limited to
	the lower-dimensional setting (e.g.,
	\citet{liang2010estimation}), \citet{gueuning2016confidence}).
	
	Our approach is inspired by the de-biasing (or de-sparsifying) idea 
	proposed in 
	\citet{zhang2014confidence} and \citet{van2014asymptotically}, which intuitively can be thought of
	inverting the
	Karush-Kuhn-Tucker conditions \citep{van2014asymptotically}.
	We generalize this idea to the semiparametric setting and prove 
	that valid honest uniform inference can be obtained based on a debiased version of a local solution.
	Specifically, we derive simultaneous confidence intervals for inference on a group of variables 
	while allowing the number of covariates to exceed the sample size. The confidence intervals enjoy the {\it honest}
	property in the following sense
	\bqa
	\sup_{\vbeta_0: ||\vbeta_0||_0\leq s}\sup_{\alpha\in(0,1)}\Big|P\Big( \sqrt{n}\max_{j\in\kG}|\widetilde{\beta}_j-\beta_{0j}|\leq c^*_{1-\alpha}  \Big)-(1-\alpha)\Big| = o(1),
	\eqa
	where $\vbeta_0=(\beta_{01},\ldots, \beta_{0p})^T $ is the population parameter indexing the optimal treatment regime,
	$\widetilde{\beta}_j$'s denote debiased estimators that will be introduced later, $\kG$ denotes the group of variables of interest,
	$||\cdot||_{0}$ denotes the $l_0$ norm of a vector, and $s$ is a positive integer denoting the sparsity size.
	The significance of the honest property is that the coverage probability is asymptotically 
	valid uniformly over a class of $s$-sparse models.
	An immediate implication is that it relaxes the assumption on signal strength and does not require the  
	zero and nonzero effects to be well-separated (so-called $\beta_{\mbox{min}}$ condition). 
	In particular, this procedure does not require the initial estimator to achieve perfect model selection.
	It avoids the problems associated with the nonuniformity of the limiting theory for penalized
	estimators, see discussions in \citet{li1989honest, potscher2009confidence, van2014asymptotically, mckeague2015}, among others.
	It is also worth noting that
	the number of variables in $\kG$ can be 
	either small or large. For example, one may be interested in assessing how a group of genes
	corresponding to a particular biological pathway, the size of which can be comparable with or even larger than the sample size, 
	affect optimal decision making. The critical value $c^*_{1-\alpha}$ is obtained using a wild bootstrap 
	procedure, which automatically accounts for the dependence of the coordinates for testing component-wise hypotheses and
	leads to more accurate finite-sample performance.

	The remainder of the paper is organized as follows. Section 2 introduces the new methodology.
	Section 3 studies the statistical properties. Section 4 provides the details on computation and reports
	numerical results from Monte Carlo studies. Section 5 illustrates the new methods on a real data example from a diabetes study.
	Section 6 discusses some extensions. 
	The regularity conditions,  all the proofs and additional numerical examples are given in the online supplementary material.

	\section{Methodology}\label{sec:method}	
	\subsection{A Semiparametric Framework}\label{sec:notation}
	
	For notational simplicity, we will focus on the binary decision setting. 
	Let $A\in \mathcal{A}=\{0,1\}$ denote a binary treatment
	and $\vx\in \mathcal{X}$ denote a $p$-dimensional vector of baseline covariates.
	Let $Y$ denote the outcome of interest.  Without loss of generality, we assume a larger value
	of the outcome is preferred. 
	The observed data consist of $\{(\vx_i,A_i,Y_i): i=1,\cdots,n\}$.
	We are interested in the setting where $p\gg n$.
	
	A treatment regime is an individualized decision rule that can be represented as a function $d(\vx): \mathcal{X}\rightarrow \mathcal{A}$.
	The optimal treatment regime is defined as the decision rule which, if followed by the whole population, will 
	achieve the largest average outcome. Formally, it is defined using the potential outcome framework in causal inference 
	\citep{Neyman1990,Rubin78}.
	Let $Y^*(a)$ be the potential outcome had the subject been assigned to treatment $a\in \{0,1\}$.
	Given a treatment regime $d(\vx)$,  the corresponding potential
	outcome is $Y^*(d)=Y^*(1)d(\vx)+Y^*(0)(1-d(\vx))$. The optimal treatment regime is defined as
	$d^{\scalebox{.8}{\mbox{opt}}}(\vx)=\arg \max_{d}\E\{Y^*(d)\}$.
	It is now well known that $d^{\scalebox{.8}{\mbox{opt}}}(\vx)=\arg \max_{a\in \mathcal{A}}\E(Y|\vx, A=a)$ \citep{Qian2011}.

	This paper considers a 
	flexible semiparametric framework for optimal treatment regime estimation and inference
	in the high-dimensional setting.
	Specifically, we assume
	\begin{align}
	Y_i = g(\vx_i) +(A_i-1/2) f_0(\vx_i^T\vbeta_0)+\epsilon_i,\quad i=1, \ldots, n, \label{model}
	\end{align}
	where $\vbeta_0=(\beta_{01},\beta_{02},\cdots,\beta_{0p})^T$, $g(\vx_i)$ is the unknown main effect, 
	and $f_0(\cdot)$ is an unknown  function that describes the interaction between the
	treatment and covariates, and the random error $\ep_i$ satisfies
	$\E(\epsilon_i|\vx_i)=0$, $i=1, \ldots, n$.
	For identification purpose, we assume that there exists a relevant covariate which has a continuous density given the other covariates \citep{ichimura1993}. Such an identification condition is required even in the lower-dimensional setting when the true model is known. Without loss of generality,  we assume that the first covariate
	$x_1$ satisfies this condition and normalize its coefficient $\beta_{01}$ such that $\beta_{01}=1$, see Remark (c) in Section~S2 of the online supplementary material for more discussions on the identifiability condition.   We denote $\bbB_0 = \{\vbeta=(\beta_1,\cdots,\beta_p)^T:   \beta_1=1\}$ as the candidate set for $\vbeta_0$.
	Under model (\ref{model}), the optimal treatment regime is $d^{\scalebox{.8}{\mbox{opt}}}(\vx)=\mbox{I}\big(f_0(\vx_i^T\vbeta_0)>0\big)$, where $\mbox{I}(\cdot)$
	denotes the indicator function. Note that the class of index rules are popular in practice due to its interpretability.

	Existing work on inference for optimal treatment regime is mostly based on a parametric
	generative model, which is prone to model misspecification. 
	The semiparametric structure alleviates this difficulty. In particular, it  
	allows for possible nonlinear interaction effects between the covariates and treatment.
	It also circumvents the curse of dimensionality associated with a fully nonparametric model.

	Our goal is to estimate $\vbeta_0$ and make inference on its components in the high-dimensional setting.
	In the special case $f_0(u)=u$, which is popularly used in practice, the problem can be formulated
	as a high-dimensional convex estimation problem. However, when $f_0$ is nonlinear,
	it generally leads to a high-dimensional nonconvex problem. Both estimation and inference 
	need to overcome new challenges.

	\subsection{Profiled Semiparametric Estimation}\label{sec:est_method}
	We start with introducing a 
	penalized profiled semiparametric estimation equation for estimating the parameter indexing the optimal treatment regime.
	We consider data from a random
	experiment, that is, $P(A_i=0)=P(A_i=1)=1/2$, $i=1, \ldots, n$. Extension to data from observational studies is discussed in Section~\ref{sec:discuss}. 
	Inspired by an observation made for the linear model \citep{Tian}, we observe
	\begin{align}
	2(2A_i-1)Y_i =  f_0(\vx_i^T\vbeta_0)+2(2A_i-1)\big[\epsilon_i+g(\vx_i)\big].\label{modified_model1}
	\end{align}
	Let $\widetilde{Y}_i=2(2A_i-1)Y_i$ be the modified response, and let $\widetilde{\epsilon}_i=2(2A_i-1)\big[\epsilon_i+g(\vx_i)\big]$ be the modified error. 
	We have
	\begin{align}
	\E\{\widetilde{Y}_i| \vx_i\}=f_0(\vx_i^T\vbeta_0).\label{modified_model2}
	\end{align} 
	In the ideal situation where the link function $f_0$ is known, we have $\vbeta_0=\arg \min_{\vbeta}\E\big[\widetilde{Y}_i-f_0(\vx_i^T\vbeta_0)\big]^2$.
	It is noteworthy that for a nonlinear function $f_0$, the objective function is usually nonconvex in $\vbeta$.
	\citet{ichimura1993} carefully studied the properties of the global minimizer for a semiparametric nonlinear least-squares approach
	in the classical finite-dimensional setting.
	
	To estimate $\vbeta_0$ in the high-dimensional setting with an known $f_0$, we consider a penalized profiled semiparametric estimation equation. 
	In the ideal situation where $f_0$ is known a prior,  
	$\vbeta_0$ satisfies the following unbiased estimating equation
	\bqan\label{score0}
	\E\big\{\big[\widetilde{Y}_i-f_0(\vx_i^T\vbeta_0)\big]f'_0(\vx_i^T\vbeta_0)\vx_i\big\}=\vnull,
	\eqan
	where $f_0'(\cdot)$ denotes the derivative of $f_0(\cdot)$. We will replace the unknown
	$f_0$ and $f_0'$ by their respective profiled nonparametric estimator,
	and consider an appropriately penalized version of the estimated score function to handle the high-dimensional covariates.
	
	We summarize the main steps of estimation as follows. 
	Define  $G(t|\vbeta)=\E\{\widetilde{Y} | \vx^T\vbeta=t\}$. Note that $G(t|\vbeta_0) = f_0(t)$.
	However, when $\vbeta\neq \vbeta_0$, $G(t|\vbeta)$ usually has a functional form different from $f_0$.
	\citet{ichimura1993} showed that $\frac{\partial G(\vx_i^T\vbeta|\vbeta)}{\partial \vbeta}\approx f_0'(\vx_i^T\vbeta)\big[ \vx_i-\E(\vx_i|\vx_i^T\vbeta)\big]^T $ for $\vbeta$
	close to $\vbeta_0$.
	Consider the Nadaraya-Watson kernel estimator for $G(t|\vbeta) $:
	\bqan
	\widehat{G}(t|\vbeta) = \sum_{i=1}^n  W_{ni}(t,\vbeta)\widetilde{Y}_i,\label{Ghat_def}
	\eqan
	where $K_h(z) = h^{-1}K(z/h)$, and $ W_{ni}(t,\vbeta) = \frac{K_h(t-\vx_i^T\vbeta)}{\sum_{j=1}^nK_h(t-\vx_j^T\vbeta)}.$
	Write $G^{(1)}(t|\vbeta) = \frac{d }{d t} G(t|\vbeta) $ and $W_{ni}^{(1)}(t,\vbeta) =  \frac{d}{d t}W_{ni}(t,\vbeta) $. Then
	the kernel estimator for the derivative $G^{(1)}(t|\vbeta)$ is 
	\bqan
	\widehat{G}^{(1)}(t|\vbeta) = \sum_{i=1}^n  W_{ni}^{(1)}(t,\vbeta)\widetilde{Y}_i.\label{G1hat_def}
	\eqan
	Write $G(\vx^T\vbeta|\vbeta) = \E\{\widetilde{Y} | \vx^T\vbeta\}$.  
	To estimate $\widehat{G}(\vx_j^T \vbeta|\vbeta) $ and $\widehat{G}^{(1)}(\vx_j^T \vbeta|\vbeta) $, we employ the following leave-one-out estimators
	\bqan
	\widehat{G}(\vx_j^T \vbeta|\vbeta) = \sum_{i=1, i\neq j}^n W_{nij}(\vx_j^T \vbeta,\vbeta)\widetilde{Y}_i,\label{Ghat_def_j}\quad 
	\widehat{G}^{(1)}(\vx_j^T \vbeta|\vbeta) = \sum_{i=1,i\neq j}^n W_{nij}^{(1)}(\vx_j^T \vbeta,\vbeta)\widetilde{Y}_i,\label{G1hat_def_j}
	\eqan
	where $W_{nij}(\vx_j^T \vbeta,\vbeta)  = \frac{K_h(\vx_j^T \vbeta-\vx_i^T\vbeta)}{\sum_{k\neq j}K_h(\vx_j^T \vbeta-\vx_k^T\vbeta)},$ and $W_{nij}^{(1)}(\vx_j^T \vbeta,\vbeta) = \frac{d}{d t}W_{nij}(t,\vbeta)\Big|_{t=\vx_j^T \vbeta} $.
	Similarly, we estimate $\E(\vx|\vx^T\vbeta_0)$ by
	$
	\widehat{\E}(\vx_j|\vx_j^T \vbeta) = \sum_{i=1, i\neq j}^n W_{ni}(\vx_j^T \vbeta,\vbeta)\vx_i.
	$
	Denote  $\vx_i = (x_{i,1},\vx_{i,-1}^T)^T$.
	Motivated by the semiparaemtric efficient score derived in \citet{liang2010estimation}, we consider the following profiled semiparametric estimating function 
	\bqan\label{Sn}
	\vS_n(\vbeta,\widehat{G},\widehat{\E})=-n^{-1}\sum_{i=1}^n\big[\widetilde{Y}_i-\widehat{G}(\vx_i^T\vbeta|\vbeta)\big]\widehat{G}^{(1)}(\vx_i^T\vbeta|\vbeta)[\vx_{i,-1}-\widehat{\E}(\vx_{i,-1}|\vx_i^T\vbeta)].
	\eqan

	In the high-dimensional setting, the estimating equation $\vS_n(\vbeta)=\vnull$ is ill-posed when $p\gg n$. 
	Let $\vbetah=(\widehat{\beta}_1, \ldots, \widehat{\beta}_p)^T=(1,  \vbetah_{-1}^T)^T$ be a solution in  $\bbB_0$ that solves the following penalized semiparametric profiled estimating equation
	\bqan\label{local}
	\vS_n(\vbeta,\widehat{G},\widehat{\E}) + \lambda\vkappa = \vnull, 
	\eqan
	where $\lambda>0$ is a tuning parameter,
	$\vkappa=(\kappa_2, \ldots, \kappa_p)^T\in\partial ||\vbeta_{-1}||_1$ with $||\vbeta_{-1}||_1$ denoting the 
	$l_1$ norm of $\vbeta_{-1}=(\beta_2, \ldots, \beta_p)^T$ and
	$\partial ||\vbeta_{-1}||_1$ denoting the subdifferential of $ ||\vbeta_{-1}||_1$, that is
	$\kappa_j=\mbox{sign}(\beta_j)$ if $\beta_j\neq 0$, and $\kappa_j\in [-1, 1]$ otherwise, $j=2, \ldots, p$.
	In (\ref{local}), $\widehat{G}$ and $\widehat{\E}$
	are evaluated at the corresponding $\vbeta$ in the estimating equations, hence here they
	stand for $\widehat{G}(\vx_i^T\vbeta|\vbeta)$
	and $\widehat{\E}(\vx_i |\vx_i^T\vbeta)$,
	respectively.
	Note that (\ref{local})
	may have multiple solutions. The theory we develop
	in Section~\ref{sec:est_theorem} provides a near-optimal error bound for any sparse local solution of the estimating equation.
	The satisfactory performance of the proposed profiled estimator is demonstrated in the numerical simulations in Section~\ref{sec:sim_res}.


	\subsection{Inference on the Optimal Decision Rule} \label{sec:inf_method}
	To quantify the importance of the covariates on optimal decision making, we 
	will construct confidence intervals for the individual components of 
	$\vbeta_0=(1, \vbeta_{0,-1}^T)^T$ via
	debiasing a local solution to the semiparametric estimating equation (\ref{local}).
	This generalizes the work of debiased confidence intervals for high-dimensional linear regression in
	\citet{zhang2014confidence} and \citet{van2014asymptotically} to the semiparametric setting where
	the initial estimator is an estimating equation solution and 
	an estimated infinite-dimensional functional is present. The theory for semiparametric
	inference  in high dimension is highly nontrivial and is carefully studied in Section~\ref{sec:theorem}.
	We further investigate a wild bootstrap procedure for testing a general group hypothesis, which aims to achieve accurate
	finite-sample performance.
	
	Let $\vbetah=(1,  \vbetah_{-1}^T)^T$ denote a solution satisfying (\ref{local}). 
	In the high-dimensional linear regression setting, 
	the main idea of debiased estimator is to invert the Karush–Kuhn–Tucker (KKT) condition of the lasso.
	Inspired by this idea, we consider the following debiased estimator of $\vbeta_{0,-1}$:
	\begin{align}
	\vbetaw_{-1} = \vbetah_{-1} - \vThetah^T 	\vS_n(\vbetah,\widehat{G},\widehat{\E}), \label{desp}
	\end{align}
	where the $(p-1)\times (p-1)$ matrix $\vThetah$ is an approximation to the inverse of $\nabla  	\vS_n(\vbetah,\widehat{G},\widehat{\E})$, the derivative matrix of 
	$	\vS_n(\vbeta,\widehat{G},\widehat{\E})$ 
	with respect to $\vbeta_{-1}$ evaluated at $\vbeta=\vbetah$. 
	To construct the approximate inverse $\vThetah$,  we propose a nodewise Dantzig estimator.
	Specifically, given the initial estimator $\vbetah$ and a positive number $\eta$, 
	for $j=2,\cdots,p$, define
	\begin{align}
	\vd_j(\vbetah,\eta) =  \argmin\limits_{\bm{v}\in\bbR^{p-2}}||\vv||_1 \mbox{ s.t. }   \Big|\Big|n^{-1}\sum_{i=1}^n \big[\widehat{G}^{(1)}(\vx_i^T\vbetah|\vbetah)\big]^2 (\xh_{i,j}-\vxh_{i,-j*}^T\vv)\vxh_{i,-j*}\Big|\Big|_\infty\leq \eta,\label{dj_def}
	\end{align} 
	where $||\cdot||_\infty$ denotes the infinity norm of a vector, $\vxh _i= \vx_i-\widehat{\E}(\vx_i|\vx_i^T\vbetah)$,
	$\xh_{i,j}$ denotes the $j^{th}$ entry of the vector $\vxh_i$, $\vxh_{i,-1}$ denotes the $(p-1)$-subvector of $\vxh_i$ that excludes the $1^{st}$ entry, and the $\vxh_{i,-j*}$ denotes the $(p-2)$-subvector of $\vxh_i$ that excludes the $1^{st}$ and $j^{th}$ entries. Furthermore, for $j=2,\cdots,p$, we define 
	\begin{align}
	\vphi_j(\vbetah,\eta)& = \Big(-\big(\vd_j(\vbetah,\eta) \big)^T_{1:(j-2)}, 1, -\big(\vd_j(\vbetah,\eta) \big)^T_{(j-1):(p-2)}\Big)^T,\label{phij_def}\\
	\tau_j^{2}(\vbetah,\eta)&=n^{-1}\sum_{i=1}^n\big[ \widehat{G}^{(1)}(\vx_i^T\vbetah|\vbetah)\big]^2\xh_{i,j} \vxh_{i,-1}^T\vphi_j(\vbetah,\eta),\label{tauj_def} \\
	\vtheta_j(\vbetah,\eta) &= \tau_j^{-2}(\vbetah,\eta)\vphi_j(\vbetah,\eta),\label{thetaj_def}
	\end{align} 
	where for a vector $\vu= (u_1,\cdots,u_p)^T$,  
	given $1\leq i\leq j\leq p$,
	$(\vu)_{i:j}$ returns the subvector $(u_i,\cdots,u_j)^T$, and for any $ i> j$, $(\vu)_{i:j}$ returns the empty vector.  
	For notational simplicity, denote $\vdh_j = \vd_j(\vbetah,\eta)$, $\widehat{\tau}_j^{2} = \tau_j^{2}(\vbetah,\eta)$, and $\vthetah_j =\vtheta_j(\vbetah,\eta) $. The approximate inverse of  $\nabla	\vS_n(\vbetah,\widehat{G},\widehat{\E})$ is then constructed as  
	\bqa
	\vThetah = (\vthetah_2,...,\vthetah_p).
	\eqa 
	

	The validity of $\vThetah$ as an approximation to the inverse of $\nabla 	\vS_n(\vbetah,\widehat{G},\widehat{\E})$ is given in Lemma~\ref{dbound} of Section~\ref{sec:inf_theorem}.
	Section~\ref{sec:theorem} will also present the statistical properties of the debiased estimator $\vbetaw_{-1}=
	(\widetilde{\beta}_2,\cdots,\widetilde{\beta}_p)^T$. This then leads to
	the following asymptotic $100(1-\alpha)\%$ confidence interval for $\beta_{0j}$, 
	\begin{align}
	\Big\{\ \widetilde{\beta}_j - \Phi^{-1}(1-\alpha/2)\big( \widehat{\Sigma}_{jj}/n \big)^{1/2}, \widetilde{\beta}_j +  \Phi^{-1}(1-\alpha/2)\big( \widehat{\Sigma}_{jj}/n \big)^{1/2}\ \Big\},\label{conf_j}
	\end{align}
	where $j=2,\cdots,p$,  $ \Phi^{-1}(\cdot)$ is the quantile function of the standard normal distribution, and $\widehat{\Sigma}_{jj}$ denotes the $(j-1)^{th}$ diagonal entry of $\vSigmah(\vbetah)$, with
	\begin{align} 
	\vSigmah(\vbetah) \triangleq\vThetah^T\Big\{\frac{1}{n}\sum_{i=1}^n \big[\widetilde{Y}_i-\widehat{G}(\vx_i^T\vbetah|\vbetah)\big]^2 [\widehat{G}^{(1)}(\vx_i^T\vbetah|\vbetah)]^2 \vxh_{i,-1}\vxh_{i,-1}^T\Big\} \vThetah.\label{desp_var}
	\end{align}
	Corollary~\ref{cor:marginal_conf} in Section~\ref{sec:theorem} justifies the asymptotic uniform validity of this marginal confidence interval.

	Next, we consider the following more general simultaneous testing problem
	\bqan\label{group}
	H_{0,\kG}: \beta_{0j} = 0\mbox{ for all }j\in \kG \mbox{\quad versus \quad}H_{1,\kG}: \beta_{0j} \neq 0\mbox{ for some }j\in \kG, \label{sim_test}
	\eqan
	where $\kG$ is a prespecified subset of $\{2,\ldots,p\}$.  The size of $\kG$ may depend on the sample size $n$.  
	Such a hypothesis naturally arises in the high-dimensional setting. 
	For example, researchers may want to test whether a gene pathway, consisting of multiple genes 
	for the same biological functions, is important for optimal treatment regime recommendation.
	For this purpose, we propose an effective bootstrap procedure. Although the asymptotic normal distribution of the debiased estimator 
	(see Theorem~\ref{desparse}) allows for construction of confidence intervals for individual coefficients (or  fixed-dimensional subvector of coefficients), applying it to make inference for groups of variables when the group size diverges (potentially larger than $n$) is not straightforward. Moreover, 
	confidence intervals
	based on the asymptotic distribution have been observed to sometimes lead to undercoverage for nonzero coefficients in finite samples. The bootstrap procedure we study  automatically accounts for the dependence structure of the variables in the group and provides more accurate critical value.

	When deriving the asymptotic property of the debiased estimator (in the proof of Theorem~\ref{desparse}), it is observed that the asymptotic property
	of $\sqrt{n}(\widetilde{\vbeta}_{-1}-\vbeta_{0,-1})$ is determined by 
	the leading term $\sqrt{n}\vThetah^T \vS_n(\vbeta_0, G, \E)$. This suggests that
	we approximate the distribution of $\sqrt{n}(\widetilde{\beta}_j-\beta_{0j})$,
	$j=2,\ldots,p$,
	by the distribution of the following multiplier bootstrap statistic
	\bqan\label{btest}
	\delta^*_j&\triangleq  \frac{1}{n}\sum_{i=1}^nr_i \big[\widetilde{Y}_i- \widehat{G}(\vx_i^T\vbetah|\vbetah)\big] \widehat{G}^{(1)}(\vx_i^T\vbetah|\vbetah)\vxh_{i,-1}^T\vthetah_j,
	\eqan
	where 
	$r_1,\cdots,r_n$  are i.i.d. standard normal random variables, independent of the data. 
	Let $c_{1-\alpha}^*$ be the upper $\alpha$-quantile of the distribution of $\max_{j\in \kG}|\delta^*_j|$ conditional on the data, 
	which can be easily simulated by generating multiple independent copies of the random weights. 
	We reject the null hypothesis at level $\alpha$ if $\max_{j\in \kG}|\widetilde{\beta_j}|>c_{1-\alpha}^*$.
	The asymptotic validity of the bootstrap procedure is formally established in Section~\ref{sec:theorem}.
	Its performance is demonstrated in the numerical simulations in Section~\ref{sec:sim_res}.


	\section{Statistical Properties}\label{sec:theorem}
	

	\subsection{Theory for Estimation}\label{sec:est_theorem}
	Making inference about the optimal treatment regime requires an adequate
	initial estimator for $\vbeta_0$.
	To obtain such an initial estimator
	in the high-dimensional semiparametric framework,
	a significant challenge 
	is that the corresponding estimation problem is not necessarily convex.
	To tackle this, we first establish in Lemma~\ref{lem:local_LRC} below
	that the  estimated $(p-1)$-dimensional gradient $\vS_n(\cdot,\widehat{G},\widehat{\E})$ in (\ref{Sn}) possesses an important
	local restricted strong convexity property with high probability. 
	Theorem~\ref{Lasso_error} then shows that all local sparse solutions within a small neighborhood of $\vbeta_0$ enjoy a near-optimal error rate
	under mild conditions.
	In the sequel, we use $a\vee b$ to denote $\max(a,b)$, and $a\wedge b$ to denote $\min(a,b)$. Let 
	$s=||\vbeta_0||_0$ be the sparsity size of $\vbeta_0$, the population parameter indexing the optimal treatment regime.

	\blem \label{lem:local_LRC} (local restricted strong convexity property)
	Assume conditions (A1)--(A5) in Section~S2 of the online supplementary material are satisfied. 
	If $ d_0\big[\frac{s\log (p\vee n)}{n}\big]^{1/5}\leq h<1$  for some constant $d_0>0$, 
	then there exist universal positive constants $c_0$, $c_1$, $c_2$ and $r\leq 1$, which do not depend on $n$, $p$ and $\vbeta_0$, such that 
	\bqa
	&&P\Big(\big\langle	\vS_n(\vbeta,\widehat{G},\widehat{\emph{E}})-	\vS_n(\vbeta_0,\widehat{G},\widehat{\emph{E}}),  \vbeta_{-1}-\vbeta_{0,-1}\big\rangle \geq c_0 ||\vbeta-\vbeta_0||_2^2 - c_1 h^2||\vbeta-\vbeta_0||_2,\ \forall \ \vbeta\in\bbB\Big) \\
	&\geq &1-\exp( -c_2  \log p  ),
	\eqa
	for all $n$ sufficiently large, where $\bbB=\{\vbeta\in\bbB_0: ||\vbeta-\vbeta_0||_2\leq r,  ||\vbeta||_0\leq ks\}$
	and $k>1$ is a positive constant.
	\elem
	
	\noindent{\it Remark 1.}  Lemma~\ref{lem:local_LRC} characterizes the local geometry of the profiled score function.
	For high-dimensional regression with convex loss function such as 
	$L_1$ penalized linear regression, restricted strong convexity plays an important role
	on the theory of the regularized estimator \citet{Negahban2012}.
	Local restricted strong convexity condition were investigated in \citet{Loh2015} and \citet{Mei2018}
	for some specific nonconvex loss functions. Those results, however, do not apply to our setting due to the estimated infinite-dimensional parameter.

	Theorem 1 below presents non-asymptotic high-probability error bounds for any local sparse solution  $\vbetah$ that satisfies the penalized
	profiled estimation equation (\ref{local}).
	\begin{theorem}
		\label{Lasso_error}
		Assume conditions (A1)--(A5) in Section~S2 of the online supplementary material are satisfied. 
		Suppose $\lambda = d_1 h^2 $ 
		for some constant $d_1>0$, and $ d_0\big[\frac{s\log (p\vee n)}{n}\big]^{1/5}\leq h\leq d_0n^{-1/6}$ for some constant $d_0>0$.  Then there
		exist universal positive constants $c_0$ and $c_1$ such that
		for any solution $\vbetah$ in $\bbB$, we have
		\begin{align*}
		||\vbetah-\vbeta_0||_2\leq \frac{6}{c_0}\lambda\sqrt{s}, \quad	||\vbetah-\vbeta_0||_1\leq \frac{24}{c_0}\lambda s,
		\end{align*}
		with probability at least  $1-\exp(-c_1\log p)$,  for all $n$ sufficiently large.
	\end{theorem}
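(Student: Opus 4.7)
The plan is to follow the classical basic-inequality argument for $L_1$-penalized $M$-estimators, adapted to the semiparametric estimating-equation setting of (\ref{local}) and driven by the LRSC property of Lemma~\ref{lem:local_LRC}. Three ingredients are needed: (i) the KKT condition $\vS_n(\vbetah,\widehat G,\widehat{\E})=-\lambda\vkappa$ with $\vkappa\in\partial\|\vbetah_{-1}\|_1$ (this is the definition of $\vbetah$); (ii) a uniform deviation bound $\|\vS_n(\vbeta_0,\widehat G,\widehat{\E})\|_\infty\le\lambda/2$ on a high-probability event; and (iii) the LRSC inequality applied at $\vbetah\in\bbB$.

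Write $S$ for the support of $\vbeta_{0,-1}$, so that $|S|\le s$. Taking the inner product of the KKT identity with $\vbetah_{-1}-\vbeta_{0,-1}$ and using the subgradient characterization of $\|\cdot\|_1$ (together with $\vbeta_{0,S^c}=\vnull$) gives
\begin{equation*}
\big\langle\vS_n(\vbetah,\widehat G,\widehat{\E})-\vS_n(\vbeta_0,\widehat G,\widehat{\E}),\vbetah_{-1}-\vbeta_{0,-1}\big\rangle \;\le\; \|\vS_n(\vbeta_0,\widehat G,\widehat{\E})\|_\infty\|\vbetah-\vbeta_0\|_1+\lambda\|(\vbetah-\vbeta_0)_S\|_1-\lambda\|(\vbetah-\vbeta_0)_{S^c}\|_1.
\end{equation*}
Lower bounding the left-hand side by Lemma~\ref{lem:local_LRC} as $c_0\|\vbetah-\vbeta_0\|_2^2-c_1h^2\|\vbetah-\vbeta_0\|_2$, applying (ii), and splitting $\|\vbetah-\vbeta_0\|_1=\|(\vbetah-\vbeta_0)_S\|_1+\|(\vbetah-\vbeta_0)_{S^c}\|_1$ yields
\[
c_0\|\vbetah-\vbeta_0\|_2^2-c_1h^2\|\vbetah-\vbeta_0\|_2 \;\le\; \tfrac{3\lambda}{2}\|(\vbetah-\vbeta_0)_S\|_1 \;-\; \tfrac{\lambda}{2}\|(\vbetah-\vbeta_0)_{S^c}\|_1.
\]
If the right-hand side is nonpositive, one obtains $\|\vbetah-\vbeta_0\|_2\le c_1h^2/c_0$, which is sharper than the claim. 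Otherwise the cone condition $\|(\vbetah-\vbeta_0)_{S^c}\|_1\le 3\|(\vbetah-\vbeta_0)_S\|_1$ holds, hence $\|\vbetah-\vbeta_0\|_1\le 4\sqrt{s}\,\|\vbetah-\vbeta_0\|_2$; substituting this back and using $\lambda=d_1h^2$ (choosing $d_0,d_1$ so the $c_1h^2$ residual is absorbed into $\lambda\sqrt{s}$) delivers $\|\vbetah-\vbeta_0\|_2\le(6/c_0)\lambda\sqrt{s}$, and the $\ell_1$ bound follows from the cone relation.

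The main obstacle is ingredient (ii): showing $\|\vS_n(\vbeta_0,\widehat G,\widehat{\E})\|_\infty=O(h^2)$ with probability at least $1-\exp(-c\log p)$, uniformly over the $p-1$ coordinates. I would decompose
\begin{equation*}
\vS_n(\vbeta_0,\widehat G,\widehat{\E}) \;=\; \vS_n(\vbeta_0,G,\E) \;+\; \bigl[\vS_n(\vbeta_0,\widehat G,\widehat{\E})-\vS_n(\vbeta_0,G,\E)\bigr],
\end{equation*}
where $G$ and $\E(\vx\mid\vx^T\vbeta_0)$ are the population analogs of $\widehat G$ and $\widehat{\E}$. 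The oracle piece is a mean-zero sum whose coordinates have sub-Gaussian-type tails under the moment conditions in (A1)-(A2) on $\vx$ and $\epsilon$, so a standard maximal inequality bounds its $\ell_\infty$ norm by $O(\sqrt{\log p/n})$, which is $O(h^2)$ under $h\ge d_0(s\log(p\vee n)/n)^{1/5}$. The plug-in remainder is controlled using uniform rates for the leave-one-out Nadaraya-Watson estimators in (\ref{Ghat_def_j}) (kernel bias $O(h^2)$ and stochastic fluctuation $O(\sqrt{\log n/(nh)})$), combined with the tail assumptions on $\vx$; the two-sided bandwidth constraints $d_0(s\log(p\vee n)/n)^{1/5}\le h\le d_0n^{-1/6}$ are exactly what is needed so that both the stochastic and bias contributions of the plug-in are $O(h^2)$. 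Once (ii) is in place, the remainder of the proof is essentially Lasso book-keeping layered on top of Lemma~\ref{lem:local_LRC}; the genuinely nontrivial work is the uniform kernel-in-$\vbeta_0$, sup-over-coordinates concentration in the bundled semiparametric setting.
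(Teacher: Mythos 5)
Your proposal is essentially the paper's own argument: the stationarity/KKT identity tested against $\vbetah_{-1}-\vbeta_{0,-1}$, the local restricted strong convexity bound of Lemma~\ref{lem:local_LRC}, an $\ell_\infty$ control of $\vS_n(\vbeta_0,\widehat{G},\widehat{\E})$ (the paper's Lemma~A1, proved by exactly your oracle-plus-plug-in decomposition and in fact yielding the sharper rate $\sqrt{\log(p\vee n)/n}=O(h^{5/2})$), the cone condition, and standard Lasso bookkeeping with the same constants $6/c_0$ and $24/c_0$. The only slip is in your case split: when the right-hand side is nonpositive you obtain the $l_2$ bound but the cone condition fails, so your final sentence ``the $l_1$ bound follows from the cone relation'' does not cover that branch; the paper avoids this by first using $||\vetah||_2\leq||\vetah||_1$ to absorb the $c_1h^2||\vetah||_2$ term into the $l_1$-norm terms, after which the remaining left-hand side is nonnegative and the cone condition (hence the $l_1$ bound) holds unconditionally.
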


	\noindent{\it Remark 2.} Theorem~\ref{Lasso_error} shows that under some mild regularity conditions, local solutions of the profiled estimation equation (\ref{local}) enjoy desirable estimation error rates, same as Lasso does for high-dimensional linear regression.	 For the purpose of inference, the initial estimator is not require to achieve perfect variable selection.
	The debiased estimator, however, can achieve the $n^{-1/2}$ rate for each individual coefficient,
	as we will show in Section~\ref{sec:inf_theorem}.
	Carefully going through the proof of the theorem also reveals that the above error bounds hold uniformly for all
	$\vbeta_0$ such that $||\vbeta_0||_0\leq s$. \\
	
 	\noindent{\it Remark 3.} Based on Theorem~\ref{Lasso_error}, Lemmas A5--A6 in the 
		online supplement establish the uniform convergence rates for the nonparametric estimator $\widehat{G}(\vx_i^T\vbeta|\vbeta)$ and $\widehat{G}^{(1)}(\vx_i^T\vbeta|\vbeta)$, which are of independent interest. 
		Under the assumptions of Theorem~\ref{Lasso_error},  we show that there exist universal positive constants $c_0$ and $c_1$  such that 
		\begin{align*}
		&P\Big(\max_{ 1\leq i \leq n}\sup\limits_{\vbeta \in \bbB}\big|\widehat{G}(\vx_i^T\vbeta|\vbeta)-G(\vx_i^T\vbeta|\vbeta)  \big| \geq c_0h^2\Big)\leq  \exp[-c_1\log(p\vee n)], \\
		&P\Big(\max_{ 1\leq i \leq n}\sup\limits_{\vbeta \in \bbB}\big|\widehat{G}^{(1)}(\vx_i^T\vbeta|\vbeta)-G^{(1)}(\vx_i^T\vbeta|\vbeta)  \big| \geq c_0h\Big)\leq \exp[-c_1\log(p\vee n)]. 
		\end{align*}

	\subsection{Theory for Inference}\label{sec:inf_theorem}
	We first introduce some additional notation. 
	Let $\vxw_i = \vx_i-\E(\vx_i|\vx_i^T\vbeta_0)$, and let $\vxw_{i,-1}$ denote the $(p-1)$-subvector of $\vxw_i$ that excludes its $1^{st}$ entry.
	Let $\vOmega =\E \big\{[G^{(1)}(\vx_i^T\vbeta_0|\vbeta_0)]^2 \vxw_{i,-1}\vxw_{i,-1}^T\big\}$.
	Assume the $(p-1)\times (p-1)$ matrix $\vOmega$ is positive definite and write
	its inverse $\vOmega^{-1}\triangleq\vTheta = (\vtheta_2,...,\vtheta_p) $. For $j=2, \ldots, p$, let $\vOmega_{-(j-1),-(j-1)}\in\bbR^{(p-2)\times (p-2)}$ 
	be the submatrix of $\vOmega$ with its $(j-1)^{th}$ row and $(j-1)^{th}$ column removed;
	similarly $\vOmega_{-(j-1),(j-1)}\in\bbR^{p-1}$ denotes the $(j-1)^{th}$ column of $\vOmega$ with its $(j-1)^{th}$ entry removed.
	Note that $\vOmega_{-(j-1),-(j-1)}$ is positive definite. Define
	$\vd_{0j} = (\vOmega_{-(j-1),-(j-1)})^{-1}\vOmega_{-(j-1),(j-1)}$, $s_j = ||\vd_{0j}||_0$, $\widetilde{s}= \max_{2\leq j\leq p}s_j $ and
	$\tau^2_{0j} =\vOmega_{(j-1),(j-1)} - \vd_{0j}^T\vOmega_{-(j-1),(j-1)}=(\Theta_{(j-1),(j-1)})^{-1}$,  $j=2, \ldots, p$.
	
	Lemma \ref{dbound} below establishes useful properties of the approximate inverse of $\nabla 	\vS_n(\vbetah,\widehat{G},\widehat{\E})$,
	defined in Section~\ref{sec:inf_method}.
	
	\begin{lemma}
		\label{dbound}
		Assume the conditions of Theorem~\ref{Lasso_error} are satisfied.
		Let $\eta = d_2h$ for some positive constant $d_2>0$. If $\eta \widetilde{s}\leq d_0$ and $d_0\big[\frac{s \log (p\vee n)}{n}\big]^{1/5}\leq h\leq d_0n^{-1/6}$  for some constant $d_0>0$,  then there exist some universal positive constants $d_2$, $c_0$ and $c_1$ such that results (1)-(3) below hold uniformly in $j=2,\ldots,p$, with probability at least $1-\exp(-c_1\log p)$ for all $n$ sufficiently large:\\
		(1) $  ||\vdh_j -\vd_{0j}||_2 \leq \frac{8\sqrt{s_j}\eta}{\xi_2}$, and $ ||\vdh_j -\vd_{0j}||_1 \leq \frac{16s_j\eta}{\xi_2}$;\\
		(2) $ |\tau^2_{0j}-\widehat{\tau}_j^{2}| \leq c_0\sqrt{s_j}\eta$, and $ |\tau^{-2}_{0j}-\widehat{\tau}_j^{-2}| \leq c_0\sqrt{s_j}\eta$;\\
		(3)  $ ||\vthetah_j-\vtheta_{j} ||_2 \leq c_0\sqrt{s_j}\eta$, and $ ||\vthetah_j-\vtheta_{j} ||_1 \leq c_0s_j\eta$;\\
		where 
		$\xi_2>0$ is the smallest eigenvalue of $\vOmega$.
	\end{lemma}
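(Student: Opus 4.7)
The plan is to combine a max-norm concentration bound for the empirical profiled Hessian with a nodewise Dantzig analysis in the style of \citet{van2014asymptotically}, while carefully tracking the plug-in error coming from both $\vbetah$ and the kernel estimator $\widehat{G}^{(1)}$. Introduce the empirical matrix $\vOmegah = n^{-1}\sum_{i=1}^n [\widehat{G}^{(1)}(\vx_i^T\vbetah|\vbetah)]^2\vxh_{i,-1}\vxh_{i,-1}^T$, so that the constraint in (\ref{dj_def}) is equivalent to $\|\vOmegah_{-(j-1),j-1}-\vOmegah_{-(j-1),-(j-1)}\vv\|_\infty\leq\eta$. The first step is to establish both $\|\vOmegah-\vOmega\|_\infty=O_p(h)$ and the sharper linear-form bound $\|(\vOmegah-\vOmega)\vd_{0j}\|_\infty=O_p(h)$, uniformly in $j$. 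This decomposes into (i) concentration of the oracle average with $G^{(1)}$ and $\vxw_{i,-1}$ in place of their estimators, of order $O_p(\sqrt{\log p/n})$; (ii) the plug-in error from replacing $G^{(1)}$ by $\widehat{G}^{(1)}$, controlled by the uniform rate $O_p(h)$ supplied in Remark 3; and (iii) the error from $\widehat{\E}(\vx_i|\vx_i^T\vbetah)-\E(\vx_i|\vx_i^T\vbeta_0)$, dominated by $h$ together with the Theorem~\ref{Lasso_error} rates for $\vbetah-\vbeta_0$. All three pieces are $O(h)$ under the stated bandwidth regime, so choosing $d_2$ large enough yields feasibility of $\vd_{0j}$ in (\ref{dj_def}) with probability $1-\exp(-c_1\log p)$.

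Given feasibility, optimality of $\vdh_j$ in (\ref{dj_def}) forces $\|\vdh_j\|_1\leq\|\vd_{0j}\|_1$, so the standard cone argument yields $\|(\vdh_j-\vd_{0j})_{S_j^c}\|_1\leq\|(\vdh_j-\vd_{0j})_{S_j}\|_1$ for $S_j=\mathrm{supp}(\vd_{0j})$, and hence $\|\vdh_j-\vd_{0j}\|_1\leq 2\sqrt{s_j}\|\vdh_j-\vd_{0j}\|_2$. Next, a restricted eigenvalue bound $\vu^T\vOmegah_{-(j-1),-(j-1)}\vu\geq(\xi_2/2)\|\vu\|_2^2$ on this cone follows from the smallest eigenvalue of $\vOmega_{-(j-1),-(j-1)}$ being at least $\xi_2$ plus a localized sparse-vector control $|\vu^T(\vOmegah-\vOmega)_{-(j-1),-(j-1)}\vu|=o(1)\|\vu\|_2^2$, which is exactly the place where the assumption $\eta\widetilde{s}\leq d_0$ is used. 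Combining with the two-sided Dantzig inequality $\|\vOmegah_{-(j-1),-(j-1)}(\vdh_j-\vd_{0j})\|_\infty\leq 2\eta$ and $\vu^T\vOmegah_{-(j-1),-(j-1)}\vu\leq\|\vu\|_1\cdot 2\eta$ delivers $\|\vdh_j-\vd_{0j}\|_2\leq 8\sqrt{s_j}\eta/\xi_2$, and the $\ell_1$ bound is immediate from the cone inequality; this is part (1).

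For part (2), write $\vphi_{0j}$ for the population analog of $\vphi_j$ with $\vd_{0j}$ in place of $\vdh_j$, and note $\widehat{\tau}_j^2=(\vOmegah\vphi_j)_{j-1}$ and $\tau_{0j}^2=(\vOmega\vphi_{0j})_{j-1}$. Adding and subtracting $(\vOmegah\vphi_{0j})_{j-1}$ splits the difference into a linear-form term $((\vOmegah-\vOmega)\vphi_{0j})_{j-1}$, bounded via the linear-form concentration of step~1 together with $\|\vphi_{0j}\|_1=1+\|\vd_{0j}\|_1=O(\sqrt{s_j})$, and a term $\vOmegah_{j-1,-(j-1)}(\vd_{0j}-\vdh_j)$, bounded by $\|\vOmegah_{j-1,-(j-1)}\|_\infty\,\|\vd_{0j}-\vdh_j\|_1$ using part (1); this yields the $\sqrt{s_j}\eta$ rate. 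A Taylor expansion of $1/x$ around $\tau_{0j}^2$ together with the lower bound $\tau_{0j}^2\geq\xi_2>0$ transfers the bound to $|\widehat{\tau}_j^{-2}-\tau_{0j}^{-2}|\leq c_0\sqrt{s_j}\eta$. Part (3) then follows from $\vthetah_j-\vtheta_j=\widehat{\tau}_j^{-2}(\vphi_j-\vphi_{0j})+(\widehat{\tau}_j^{-2}-\tau_{0j}^{-2})\vphi_{0j}$, where the first term is controlled in $\ell_2/\ell_1$ by part (1) and the second by part (2). A union bound over $j=2,\ldots,p$ promotes every statement to the uniform form, paying only a $\log p$ factor that is already absorbed into the sub-exponential rate.

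The main obstacle is step 1: establishing the max-norm and linear-form concentration of $\vOmegah-\vOmega$ at rate $O_p(h)$, uniformly over the sparse neighborhood $\bbB$, because the summands bundle the estimated link derivative $\widehat{G}^{(1)}(\vx_i^T\vbeta|\vbeta)$ with the estimated conditional mean inside $\vxh_{i,-1}$, so one must combine uniform empirical-process bounds on the kernel estimators (the supplement's Lemmas A5--A6 / Remark 3) with the Theorem~\ref{Lasso_error} plug-in rates in a way that keeps the dependence on $\vbeta$ uniform. Once this is in place, the rest is a semiparametric adaptation of the nodewise Dantzig argument.
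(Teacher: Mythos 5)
Your proposal is correct and follows essentially the same route as the paper's proof: establishing feasibility of $\vd_{0j}$ by splitting the empirical Gram linear form into the oracle concentration term, the $\widehat{G}^{(1)}$ plug-in error, and the $\widehat{\E}$ plug-in error (the paper's Lemmas on sub-exponential concentration, $\lbrack\widehat{G}^{(1)}\rbrack^2$ error, and $\vxh$ versus $\vxw$ error), then a restricted-eigenvalue bound on the cone using $\widetilde{s}h$ small, the standard Dantzig cone chain giving the $8\sqrt{s_j}\eta/\xi_2$ and $16 s_j\eta/\xi_2$ bounds, and finally the same algebraic decompositions of $\widehat{\tau}_j^2-\tau_{0j}^2$ and $\vthetah_j-\vtheta_j$ for parts (2) and (3). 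No gaps; the hard step you flag (uniform control of the bundled kernel plug-in errors) is exactly where the paper invests its technical lemmas.
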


	Lemma \ref{dbound} requires $\tilde{s}=\max_{2\leq j\leq p}s_j$ to be of order $O(h^{-1})$. 
	For high-dimensional generalized linear models (Theorem 3.1,  Van de Geer et al. [2014]), the corresponding sparsity constraint is
	$\tilde{s}=o(\sqrt{n/\log p})$. Our constrain is somewhat stricter due to the need to estimate the infinite-dimensional
	nuisance parameter.
	Building on Lemma \ref{dbound}, we prove the statistical property of the debiased estimator $\vbetaw_{-1}$ defined in (\ref{desp}).  
	\begin{theorem}
		\label{desparse}
		Assume the conditions of Lemma~\ref{dbound} are satisfied.
		Let  $\Delta_{n,p} =s h^3 \sqrt{n} +  \widetilde{s}h \sqrt{\log  p}$.
		Assume $\Delta_{n,p} =o (1)$ and $ s \log (p\vee n)\leq d_0nh^5$  for some constant $d_0>0$. Then for all $n$ sufficiently large, 
		\begin{align*}
		\sqrt{n}\big(\widetilde{\beta}_j  - \beta_{0j} \big) =W_j+\Delta_j,  \quad j= 2, \ldots, p,
		\end{align*}
		with
		\bqa
		&& W_j=n^{-1/2}\ve_{j-1}^T\sum_{i=1}^{n}\widetilde{\epsilon}_{i}  G^{(1)}(\vx_i^T\vbeta_0|\vbeta_0)\vxw_{i,-1},\\
		&&P\big(\max_{2\leq j\leq p}|\Delta_j|\geq c_0\Delta_{n,p}\big)\leq  \exp(-c_1\log p),
		\eqa
		where $c_0$, $c_1$ are universal positive constants, and $\ve_{j-1}$ denotes the $(p-1)$-dimensional vector with the $(j-1)^{th}$ 
		entry being one and all the other entries equal to zero.
	\end{theorem}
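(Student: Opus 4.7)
My plan is to derive a Bahadur-type expansion of the debiased estimator via the standard debiasing template, adapted to the bundled semiparametric profiled score. Starting from (\ref{desp}),
\begin{align*}
\widetilde{\beta}_j - \beta_{0j} = (\widehat{\beta}_j - \beta_{0j}) - \vthetah_j^T \vS_n(\vbetah,\widehat{G},\widehat{\E}),
\end{align*}
I apply a multivariate mean-value expansion of $\vbeta_{-1}\mapsto \vS_n(\vbeta,\widehat{G},\widehat{\E})$ between $\vbetah$ and $\vbeta_0$, writing $\vS_n(\vbetah,\widehat{G},\widehat{\E}) = \vS_n(\vbeta_0,\widehat{G},\widehat{\E}) + \vH_n(\vbetah_{-1}-\vbeta_{0,-1})$, where each row of $\vH_n$ is a gradient of the corresponding component of $\vS_n$ at an intermediate point. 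Using $\widehat{\beta}_j - \beta_{0j} = \ve_{j-1}^T(\vbetah_{-1}-\vbeta_{0,-1})$ and inserting the zero $\vS_n(\vbeta_0,G,\E)-\vS_n(\vbeta_0,G,\E)$, one obtains the master decomposition
\begin{align*}
\sqrt{n}(\widetilde{\beta}_j-\beta_{0j}) &= -\sqrt{n}\,\vthetah_j^T \vS_n(\vbeta_0, G, \E) \\
&\quad + \sqrt{n}\,(\ve_{j-1} - \vH_n^T\vthetah_j)^T(\vbetah_{-1}-\vbeta_{0,-1}) \\
&\quad - \sqrt{n}\,\vthetah_j^T\bigl[\vS_n(\vbeta_0,\widehat{G},\widehat{\E}) - \vS_n(\vbeta_0,G,\E)\bigr].
\end{align*}

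The first line is the leading stochastic term: plugging in $\vS_n(\vbeta_0, G, \E) = -n^{-1}\sum_i \widetilde{\epsilon}_i G^{(1)}(\vx_i^T\vbeta_0|\vbeta_0)\vxw_{i,-1}$ and replacing $\vthetah_j$ by $\vtheta_j=\vOmega^{-1}\ve_{j-1}$ recovers $W_j$, while the replacement cost $(\vthetah_j - \vtheta_j)^T\vS_n(\vbeta_0,G,\E)$ is absorbed into $\Delta_j$ using Lemma~\ref{dbound}(3) and a sub-Gaussian maximal inequality bound on $\|\vS_n(\vbeta_0,G,\E)\|_\infty$. For the second line I use $\sqrt{n}\,\|\ve_{j-1}-\vH_n^T\vthetah_j\|_\infty\,\|\vbetah_{-1}-\vbeta_{0,-1}\|_1$. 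Theorem~\ref{Lasso_error} supplies the $\ell_1$ factor as $O(\lambda s)=O(sh^2)$, and the $\ell_\infty$ factor is handled by decomposing $\vH_n^T\vthetah_j - \ve_{j-1} = (\vH_n - \vOmega)^T\vthetah_j + \vOmega(\vthetah_j - \vtheta_j)$: the first piece is $O(h)$ via the Dantzig feasibility constraint (\ref{dj_def}) together with concentration of $\vH_n$ around $\vOmega$, and the second by Lemma~\ref{dbound}(3). This contributes a term of order $sh^3\sqrt{n}$, the first half of $\Delta_{n,p}$.

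The third line supplies the $\widetilde{s}h\sqrt{\log p}$ half of $\Delta_{n,p}$. Expand $\vS_n(\vbeta_0,\widehat{G},\widehat{\E})-\vS_n(\vbeta_0,G,\E)$ into three pieces depending on $\widehat{G}-G$, $\widehat{G}^{(1)}-G^{(1)}$, and $\widehat{\E}-\E$. A Neyman-orthogonality cancellation is the crux: at $\vbeta_0$ the conditional mean-zero property $\E[\widetilde{Y}_i-G(\vx_i^T\vbeta_0|\vbeta_0)|\vx_i]=0$, combined with the residual centering by $\widehat{\E}(\vx_{i,-1}|\vx_i^T\vbeta_0)$ in $\vS_n$, eliminates the leading linear-in-noise term, leaving only a kernel bias of order $h^2$ and a second-order fluctuation of order $h\sqrt{\log p/n}$ after invoking the uniform kernel rates in Remark~3. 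Multiplying by $\|\vthetah_j\|_1=O(\widetilde{s})$ and a union bound over $j\le p$ yields the claimed rate.

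The main obstacle is the third line: the nuisances $\widehat{G}(\cdot|\vbeta)$ and $\widehat{\E}(\cdot|\vx^T\vbeta)$ are bundled with the parameter of interest, so they cannot be treated as independent plug-ins. One must carefully linearize the leave-one-out Nadaraya--Watson estimators to separate their stochastic and bias contributions, exploit the leave-one-out construction to decouple $\widehat{G}(\vx_i^T\vbeta_0|\vbeta_0)$ from $\widetilde{Y}_i$, and establish empirical-process concentration of the resulting random-design weighted sums uniformly over $1\le i\le n$ and $2\le j\le p$. Verifying the orthogonality cancellation itself hinges on the projection structure $\vx_{i,-1}-\widehat{\E}(\vx_{i,-1}|\vx_i^T\vbeta_0)$ built into $\vS_n$. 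Once this orthogonal expansion is in hand, the remaining bookkeeping---combining Theorem~\ref{Lasso_error}, Lemma~\ref{dbound}, Remark~3, standard sub-Gaussian maximal inequalities, and a union bound over the $p-1$ coordinates---is routine.
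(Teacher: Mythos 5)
Your overall architecture matches the paper's: the paper writes $\sqrt{n}(\vbetaw_{-1}-\vbeta_{0,-1})=\vA_{n1}+\vA_{n2}+\vA_{n3}$ with $\vA_{n1}=-\sqrt{n}\vThetah^T\vS_n(\vbeta_0,G,\E)$ the leading term, $\vA_{n2}=\sqrt{n}(\vI_{p-1}-\vThetah^T\vJ_1)(\vbetah_{-1}-\vbeta_{0,-1})$ the bias-correction residual, and $\vA_{n3}$ the nuisance/nonlinearity remainder, and your treatment of the leading term (replace $\vthetah_j$ by $\vtheta_j$, pay $\|\vthetah_j-\vtheta_j\|_1\cdot\|\sqrt{n}\vS_n(\vbeta_0,G,\E)\|_\infty\lesssim \widetilde{s}h\sqrt{\log p}$ via Lemma~\ref{dbound}(3)) is exactly the paper's step (1). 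The genuine gap is in your second line. You linearize via a mean-value Jacobian $\vH_n$ of $\vbeta_{-1}\mapsto\vS_n(\vbeta,\widehat{G},\widehat{\E})$ and then claim $\|(\vH_n-\vOmega)^T\vthetah_j\|_\infty=O(h)$ ``via the Dantzig feasibility constraint together with concentration of $\vH_n$ around $\vOmega$.'' But the Dantzig program (\ref{dj_def}) constrains the explicit outer-product matrix $\vJ_1=n^{-1}\sum_i[\widehat{G}^{(1)}(\vx_i^T\vbetah|\vbetah)]^2\vxh_{i,-1}\vxh_{i,-1}^T$, not the true Jacobian of the profiled score. Because $\widehat{G}(\cdot|\vbeta)$, $\widehat{G}^{(1)}(\cdot|\vbeta)$ and $\widehat{\E}(\cdot|\vx^T\vbeta)$ are bundled with $\vbeta$, $\vH_n$ contains derivatives of leave-one-out Nadaraya--Watson estimators with respect to the index, evaluated at row-dependent intermediate points; establishing $\ell_\infty$-concentration of this object around $\vOmega$ (or around $\vJ_1$) is not supplied by any of the cited lemmas and is itself a hard empirical-process problem --- arguably harder than the theorem. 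The paper deliberately never forms $\vH_n$: it uses the fixed matrix $\vJ_1$ (for which $\|\vI_{p-1}-\vThetah^T\vJ_1\|_\infty\leq d_0\eta$ follows directly from the Dantzig constraint and the $\widehat{\tau}_j^2$ normalization) and shoves the entire discrepancy $\vS_n(\vbetah,\widehat{G},\widehat{\E})-\vS_n(\vbeta_0,G,\E)-\vJ_1(\vbetah_{-1}-\vbeta_{0,-1})$ into $\vA_{n3}$, which is then controlled term by term through Taylor expansion of the population functions and the uniform empirical-process bounds of Lemmas~\ref{lem:ghat1_g1}--\ref{lem:Ehat_E}. To repair your argument you would either need to prove the missing concentration for $\vH_n$, or abandon the mean-value expansion and adopt the paper's $\vJ_1$-based remainder.

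Two smaller points. First, your third line only accounts for nuisance error at $\vbeta_0$, i.e., $\vS_n(\vbeta_0,\widehat{G},\widehat{\E})-\vS_n(\vbeta_0,G,\E)$; the actual estimator evaluates the bundled nuisances at $\vbetah$, and depending on whether your mean-value expansion differentiates through that dependence, either $\vH_n$ becomes even messier or a term goes unaccounted for. Second, the rate bookkeeping is off: the $\widetilde{s}h\sqrt{\log p}$ half of $\Delta_{n,p}$ comes from the $\vthetah_j\to\vtheta_j$ substitution in the \emph{leading} term, while the remainder $\vA_{n3}$ contributes the $sh^3\sqrt{n}$ half (via $\sqrt{n}\,\eta\,\|\vbetah-\vbeta_0\|_1\asymp\sqrt{n}\,h\cdot sh^2$ and the lemmas above); also $\|\vtheta_j\|_1=O(\sqrt{\widetilde{s}})$, not $O(\widetilde{s})$. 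These are fixable, but the $\vH_n$ issue is the substantive one.
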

	
	\noindent{\it Remark 4.} 
	Theorem~\ref{desparse} suggests that
	if we consider a lower-dimensional linear combination of coefficients
	$\valpha^T\vbeta_{0,-1}$, where $\valpha$ is a $(p-1)$-dimensional nonzero vector of constants, then 
	$\valpha^T(\vbetaw_{-1}-\vbeta_{0,-1})$ has the asymptotic distribution  $N(0,\valpha^T\vTheta^T\vLam\vTheta\valpha)$ with $\vLam =\E \big\{[\widetilde{\epsilon}_iG^{(1)}(\vx_i^T\vbeta_0|\vbeta_0)]^2\vxw_{i,-1}\vxw_{i,-1}^T\big\}$. The asymptotic covariance matrix resembles that in the literature on profiled estimation for index models in lower dimension, see \citet{liang2010estimation}, \citet{ma2016}, among other. The assumption $\Delta_{n,p} =o (1)$ is a sufficient condition for the remaining term of the linear approximation of $\sqrt{n}\big(\widetilde{\beta}_j  - \beta_{0j} \big)$ to be uniformly negligible. It still allows $p$ to grow at an exponential rate of $n$.\\
	
	\noindent{\it Remark 5.}
	The proof of Theorem~\ref{desparse} is given in the online supplement.
	To build the theory, we show that
	\begin{align*}
	\sqrt{n}(\vbetaw_{-1}-\vbeta_{0,-1}) 
	=&\sqrt{n}\vThetah^T\vS_n(\vbeta_0,G, \E) + \sqrt{n}(\vI_{p-1}- \vThetah^T\vJ_1) (\vbetah_{-1}-\vbeta_{0,-1}) \\
	&-\sqrt{n}\vThetah^T[\vS_n(\vbetah,\widehat{G},\widehat{\E}) -\vS_n(\vbeta_0,G, \E)-  \vJ_1(\vbetah_{-1}-\vbeta_{0,-1})  ]\\ 
	\triangleq& \vA_{n1} +\vA_{n2} +\vA_{n3},
	\end{align*}
	where  $\vJ_1 =  n^{-1}\sum_{i=1}^{n}  [\widehat{G}^{(1)}(\vx_i^T\vbetah|\vbetah)]^2\vxh_{i,-1}\vxh_{i,-1}^T$ is the leading term in the approximation to $\nabla \vS_n(\vbeta_0,G, \E) $. In the proof, we carefully justify that:
	(1) The $(j-1)^{th}$ component of $\vA_{n1}$ can be approximated by $W_j$ in the theorem, for $2\leq j \leq p$,
	(2) $P(|| \vA_{n2} ||_\infty\geq c_0\Delta_{n,p})\leq \exp(-c_1\log p)$,
	and
	(3) $P(|| \vA_{n3} ||_\infty\geq c_0\Delta_{n,p})\leq \exp(-c_1\log p)$,
	for some positive constants $c_0$ and $c_1$.
	Furthermore, to provide a deeper insight into the extension into the semiparametric setting, we consider the Gateaux functional derivative
	of the estimating function with respect to the
	infinite-dimensional nuisance parameters.
	Consider the functional
	$
	M(z; \vbeta,G,E)=[\widetilde{Y}-G(\vx^T\vbeta|\vbeta)\big]G^{(1)}(\vx^T\vbeta|\vbeta)[\vx_{-1}-\E(\vx_{-1}|\vx^T\vbeta)],
	$
	where $z=(A, X, \widetilde{Y})$ denotes a vector of random observations
	of the data. The Gateaux derivative of $M(z; \vbeta,G,E)$ at $G$ in the direction $[\overline{G}-G]$ is  defined as
	\bqa 
	\lim_{\tau\ra 0}
	\frac{\mathbb{E}\big\{M(z;\vbeta,
		G+\tau(\overline{G}-G),
		E) -M(z;\vbeta,G,E)\big\}}
	{\tau}. \eqa
	It is easy to see that this Gateaux derivative at $G$ is zero when evaluated at $\vbeta=\vbeta_0$. Similarly, the Gateaux derivative with respect to $E$
	vanishes at the true value $\vbeta_0$. This orthogonality behavior
	suggests the insensitivity of the estimating function to the infinite-dimensional nuisance parameters.

	
	The following corollary establishes uniform validity of the marginal confidence intervals  (\ref{conf_j}) introduced in Section~\ref{sec:inf_method}.
	\begin{corollary}
		\label{cor:marginal_conf}
		Under the conditions of Theorem~\ref{desparse}, 
		\begin{align*}
		\sup_{\vbeta_0\in\bbB_0:||\vbeta_0||_0\leq s}\max_{2\leq j \leq p}\sup_{\alpha\in(0,1)}\Big|P\Big(\big| \sqrt{n}(\widetilde{\beta}_j-\beta_{0j})\widehat{\Sigma}_{jj}^{-1/2} \big|\leq \Phi^{-1}(1-\alpha/2) \Big)-(1-\alpha)\Big| = o(1),
		\end{align*}
		where $\widehat{\Sigma}_{jj}$ denotes the $(j-1)^{th}$ diagonal entry of $\vSigmah(\vbetah)$ defined in Section~\ref{sec:inf_method}, and
		$\Phi^{-1}(\cdot)$ is the quantile function of $N(0,1)$.
	\end{corollary}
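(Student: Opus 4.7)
The plan is to deduce the corollary from Theorem~\ref{desparse} by combining (i) a uniform Berry--Esseen bound for the leading linear term $W_j$, (ii) a uniform consistency statement for the plug-in variance estimator $\widehat{\Sigma}_{jj}$, and (iii) a Slutsky-type absorption of the remainder $\Delta_j$ and the variance replacement. The triple supremum in the statement will then be handled coordinate by coordinate: uniformity over $\alpha\in(0,1)$ is automatic once the CDF of the studentized statistic converges uniformly on $\bbR$ to $\Phi$ (Polya's theorem, since $\Phi$ is continuous); uniformity in $j$ comes from a union bound driven by the tail of $\max_j|\Delta_j|$; and uniformity over $s$-sparse $\vbeta_0$ is built in throughout by keeping every constant independent of the support of $\vbeta_0$.

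First, I would use Theorem~\ref{desparse} to write $\sqrt{n}(\widetilde{\beta}_j-\beta_{0j})=W_j+\Delta_j$ with $P\big(\max_{2\leq j\leq p}|\Delta_j|\geq c_0\Delta_{n,p}\big)\leq \exp(-c_1\log p)$ and $\Delta_{n,p}=o(1)$ under the stated bandwidth/sparsity conditions. Next, for the leading term, $W_j$ is a normalized sum of $n$ i.i.d.\ mean-zero random variables with variance $\sigma_j^2=\E\big\{\widetilde{\epsilon}^2[G^{(1)}(\vx^T\vbeta_0|\vbeta_0)]^2 (\ve_{j-1}^T\vxw_{-1})^2\big\}$; under the sub-Gaussian/bounded moment assumptions (A1)--(A5) its third absolute moment is bounded by a constant that does not depend on $j$ or on $\vbeta_0$, and $\sigma_j^2$ is bounded away from zero by the positive definiteness of $\vOmega$. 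The one-dimensional Berry--Esseen inequality then gives
\[
\sup_{t\in\bbR}\big|P(W_j/\sigma_j\leq t)-\Phi(t)\big|\leq C\,n^{-1/2},
\]
with $C$ universal in $(\vbeta_0,j)$.

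The third and most delicate step is showing $\max_{2\leq j\leq p}|\widehat{\Sigma}_{jj}-\sigma_j^2|=o_p(1)$ uniformly over $\{\vbeta_0: ||\vbeta_0||_0\leq s\}$. The estimator $\widehat{\Sigma}_{jj}$ involves three layers of plug-ins: the node-wise direction $\vthetah_j$, the kernel-based quantities $\widehat{G}$, $\widehat{G}^{(1)}$, $\widehat{\E}(\vx|\vx^T\vbetah)$, and the initial estimator $\vbetah$. I would bound $|\widehat{\Sigma}_{jj}-\sigma_j^2|$ by decomposing into (a) the error from replacing $\vtheta_j$ by $\vthetah_j$, controlled by the $\ell_1$ rate $||\vthetah_j-\vtheta_j||_1\lesssim s_j h$ from Lemma~\ref{dbound}; (b) the error from replacing the infinite-dimensional components by their kernel estimates, controlled by the uniform rates in Remark~3 for $\widehat{G}$ and $\widehat{G}^{(1)}$; (c) the error from replacing $\vbeta_0$ by $\vbetah$, controlled by Theorem~\ref{Lasso_error}; and (d) the stochastic deviation of $n^{-1}\sum_i \widetilde{\epsilon}_i^2 [G^{(1)}]^2 \vxw_{i,-1}\vxw_{i,-1}^T$ from its expectation, handled by a matrix Bernstein/Hanson--Wright concentration on a restricted sparse cone together with a union bound over $j$.

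Combining the three steps, write
\[
\frac{\sqrt{n}(\widetilde{\beta}_j-\beta_{0j})}{\widehat{\Sigma}_{jj}^{1/2}} = \frac{W_j}{\sigma_j}\cdot\frac{\sigma_j}{\widehat{\Sigma}_{jj}^{1/2}}+\frac{\Delta_j}{\widehat{\Sigma}_{jj}^{1/2}},
\]
where the ratio tends to one and the additive term is $o_p(1)$ uniformly in $j$ and $\vbeta_0$. A standard perturbation lemma for distribution functions (multiplying by $1+o_p(1)$ and adding $o_p(1)$ preserves uniform convergence of CDFs to a continuous limit) promotes the Berry--Esseen bound of Step~2 to a uniform Kolmogorov bound for the studentized statistic; evaluating at $t=\Phi^{-1}(1-\alpha/2)$ and $t=-\Phi^{-1}(1-\alpha/2)$, taking a supremum over $\alpha\in(0,1)$, a maximum over $j$, and a supremum over $s$-sparse $\vbeta_0$ gives the conclusion. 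The main obstacle I anticipate is Step~3: obtaining the uniform-in-$j$-and-in-$\vbeta_0$ consistency of $\widehat{\Sigma}_{jj}$ without losing sparsity-dependent constants, since $\widehat{\Sigma}_{jj}$ couples the Dantzig direction with two nonparametric kernel objects and the nonconvex initial estimator, so the error bounds from Lemma~\ref{dbound} and Remark~3 must be assembled carefully to keep the final rate independent of which coordinates belong to $\mathrm{supp}(\vbeta_0)$.
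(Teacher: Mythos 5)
Your proposal is correct and follows essentially the same route as the paper: the linear expansion from Theorem~\ref{desparse}, a Berry--Esseen bound for the leading term with constants kept uniform in $j$ and $\vbeta_0$ via the sub-Gaussian assumptions and the eigenvalue bounds on $\vOmega$ (Lemma~\ref{lem:thetaj}), uniform consistency of $\widehat{\Sigma}_{jj}$ (your step (a)--(d) decomposition is exactly the content of Lemma~\ref{cor_Sig}, which the paper's proof simply invokes), and a Slutsky-type absorption of the remainder. No substantive differences.
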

	
	Finally, Theorem~\ref{multi_boots}  below establishes the validity of the bootstrap procedure introduced
	in Section~\ref{sec:inf_method} for testing the group hypothesis (\ref{group}). 
	Given a group of variables $\kG\subseteq \{2,\ldots,p\}$, 
	the wild bootstrap test statistic is defined as $\sqrt{n}\max_{j\in\kG}| \delta^*_j|$, where
	$\delta^*_j\triangleq  n^{-1}\sum_{i=1}^nr_i \{\widetilde{Y}_i- \widehat{G}(\vx_i^T\vbetah|\vbetah)\} \widehat{G}^{(1)}(\vx_i^T\vbetah|\vbetah)\vxh_{i,-1}^T\vthetah_j$, 
	and $r_1,\cdots,r_n$  are standard normal random variables that are independent of the data. 
	Denote $\vr =\{r_1,\cdots,r_n\}$, and let
	$\vw = \{w_1,\cdots, w_n\}$ denote the random sample $w_i = (A_i, \vx_i, \widetilde{Y}_i)$.  
	Given $0<\alpha<1$, recall that the bootstrap critical value for a level-$\alpha$ test is defined as
	\begin{align}\label{cstar_def}
	c^*_{1-\alpha} = \inf\Big\{t\in\bbR: P\Big( \sqrt{n}\max_{j\in\kG}| \delta^*_j| \leq t \big| \{w_i\}_{i=1}^n\Big)\geq 1-\alpha\Big\}. 
	\end{align}
	
	\begin{theorem}
		\label{multi_boots} 
		Assume the conditions of Theorem~\ref{desparse} are satisfied.
		If $\Delta_{n,p} \sqrt{\log p}=o (1)$, $   h\geq d_0\big[\frac{s\log (p\vee n)}{n}\big]^{1/5}$ for some constant $d_0>0$, and  $\sqrt{\widetilde{s}} h\log^2 p =o(1)$, then 
		\begin{align*}
		\sup_{\vbeta_0\in\bbB_0:||\vbeta_0||_0\leq s}\sup_{\alpha\in(0,1)}\Big|P\Big( \sqrt{n}\max_{j\in\kG}|\widetilde{\beta}_j-\beta_{0j}|\leq c^*_{1-\alpha}(\kG)  \Big)-(1-\alpha)\Big| = o(1).
		\end{align*}  
	\end{theorem}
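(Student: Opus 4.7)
The plan is to combine the linearization of Theorem~\ref{desparse} with the high-dimensional central limit theorem and multiplier bootstrap results of Chernozhukov, Chetverikov and Kato (CCK) for maxima of sums of independent vectors. The strategy splits into three reductions: (i) replace $\sqrt{n}\max_{j\in\kG}|\widetilde{\beta}_j-\beta_{0j}|$ by $\max_{j\in\kG}|W_j|$ up to a negligible Kolmogorov error, (ii) show a Gaussian approximation $\sup_t|P(\max_{j\in\kG}|W_j|\leq t)-P(\max_{j\in\kG}|Z_j|\leq t)|=o(1)$ for $\vZ\sim N(\vnull,\vV)$ with $\vV$ the covariance of the summand, and (iii) show that, conditional on the data, $\sqrt{n}\max_{j\in\kG}|\delta^*_j|$ is a Gaussian maximum whose covariance is close in sup-norm to $\vV$, hence (by Gaussian comparison) shares the same limiting distribution.

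For step (i), Theorem~\ref{desparse} gives $\sqrt{n}(\widetilde\beta_j-\beta_{0j})=W_j+\Delta_j$ with $P(\max_j|\Delta_j|\ge c_0\Delta_{n,p})\le \exp(-c_1\log p)$. Combining with the Nazarov--CCK anti-concentration inequality for Gaussian maxima (which gives $P(|\max_j|Z_j|-t|\le \varepsilon)\lesssim \varepsilon\sqrt{\log p}$), the assumption $\Delta_{n,p}\sqrt{\log p}=o(1)$ lets me absorb the $\Delta_j$ term into a uniform Kolmogorov error $o(1)$. For step (ii), $W_j$ is a linear functional of $n^{-1/2}\sum_{i=1}^n\widetilde\epsilon_i G^{(1)}(\vx_i^T\vbeta_0|\vbeta_0)\vxw_{i,-1}$, a sum of centered i.i.d.\ vectors with coordinates bounded in an appropriate Orlicz norm under conditions (A1)--(A5). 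Applying the CCK high-dimensional CLT for max-of-sums to the $|\kG|$-dimensional sub-vector (using absolute values via the standard symmetrization trick of replacing each coordinate by $\pm$ it) yields a Gaussian approximation with error $(\log p)^{7/8}n^{-1/8}=o(1)$ under the working assumptions.

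For step (iii), conditional on the data, $\sqrt{n}\delta^*_j$ is a centered Gaussian whose conditional covariance between $j$ and $k$ is
\[
\widehat V_{jk}=n^{-1}\sum_{i=1}^n\bigl[\widetilde Y_i-\widehat G(\vx_i^T\vbetah|\vbetah)\bigr]^2\bigl[\widehat G^{(1)}(\vx_i^T\vbetah|\vbetah)\bigr]^2(\vxh_{i,-1}^T\vthetah_j)(\vxh_{i,-1}^T\vthetah_k).
\]
I will show $\max_{j,k\in\kG}|\widehat V_{jk}-V_{jk}|=o(1/\log^2 p)$ with high probability, where $V_{jk}$ is the corresponding population quantity evaluated at $\vbeta_0$, $G$, $G^{(1)}$, $\E$ and $\vTheta$. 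This requires combining: the sparse estimation rate of Theorem~\ref{Lasso_error}, the uniform rates for $\widehat G$ and $\widehat G^{(1)}$ from Remark 3 (errors $O(h^2)$ and $O(h)$ respectively), the inverse-approximation rates from Lemma~\ref{dbound} ($\|\vthetah_j-\vtheta_j\|_1\lesssim s_j h$), a standard $l_1/l_\infty$ decomposition of the cross-product differences, and a Bernstein-type maximal inequality to pass from $V_{jk}$ in population to its empirical version over the pair index $(j,k)\in\kG\times\kG$. The condition $\sqrt{\widetilde s}\,h\log^2 p=o(1)$ is precisely what is needed to make the $\|\vthetah_j-\vtheta_j\|_2$ contribution to $\widehat V_{jk}-V_{jk}$ negligible in sup-norm at the required level. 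With this in hand, the CCK Gaussian comparison lemma (Lemma 3.1 of their 2013 paper) yields $\sup_t|P(\max|\widehat Z_j|\le t\mid \vw)-P(\max|Z_j|\le t)|=o_P(1)$, hence the conditional quantile $c^*_{1-\alpha}$ is uniformly close in probability to the Gaussian $(1-\alpha)$-quantile.

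Chaining (i)--(iii) gives $\sup_{\alpha\in(0,1)}|P(\sqrt{n}\max_{j\in\kG}|\widetilde\beta_j-\beta_{0j}|\le c^*_{1-\alpha})-(1-\alpha)|=o(1)$ for each fixed $\vbeta_0$; uniformity in $\alpha$ is automatic because the Gaussian maximum admits a density bounded away from infinity except on a set of vanishing measure (handled by Nazarov anti-concentration), and uniformity over $\{\vbeta_0:\|\vbeta_0\|_0\le s\}$ follows because every constant in Theorems~\ref{Lasso_error} and~\ref{desparse} and in Lemma~\ref{dbound} is universal (they depend only on the sparsity level $s$ and the constants in conditions (A1)--(A5), not on which $s$-sparse $\vbeta_0$ is the truth). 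The main obstacle will be step (iii): controlling $\|\widehat V-V\|_{\max}$ at the sharp rate $o(1/\log^2 p)$ with a plug-in that involves the bundled nonparametric pieces $\widehat G$, $\widehat G^{(1)}$ and $\widehat\E$ all depending on the estimated $\vbetah$, while simultaneously handling the inverse-approximation error in $\vthetah_j$ uniformly over $j\in\kG$. The interplay between the bandwidth rate $h$, the nuisance sparsity $\widetilde s$ and $\log p$ is the delicate quantitative balance that the assumption $\sqrt{\widetilde s}\,h\log^2 p=o(1)$ encodes.
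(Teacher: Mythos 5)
Your proposal follows essentially the same route as the paper's proof: linearize via Theorem~\ref{desparse}, absorb the remainder $\Delta_j$ using Gaussian anti-concentration under $\Delta_{n,p}\sqrt{\log p}=o(1)$, apply the Chernozhukov--Chetverikov--Kato high-dimensional CLT to the leading term $\max_{j\in\kG}|\widetilde{\delta}_j|$, and conclude by a Gaussian comparison inequality after showing the conditional bootstrap covariance $\vSigmah(\vbetah)$ is within $O_p(\sqrt{\widetilde{s}}\,h)$ of $\vTheta^T\vLam\vTheta$ in sup-norm (the paper's Lemma~\ref{cor_Sig}), which is exactly where $\sqrt{\widetilde{s}}\,h\log^2p=o(1)$ enters. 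The uniformity arguments over $\alpha$ and over $s$-sparse $\vbeta_0$ are also handled the same way, so no substantive differences remain.
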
  
	Theorem \ref{multi_boots}  ensures that the multiplier bootstrap procedure is valid for the simultaneous testing problem (\ref{sim_test}).
	It is also {\it honest} in the sense of being valid uniformly over $s$-sparse models of the form (\ref{model}).
	It does not require the nonzero components of $\vbeta_0$ to be well-separated from zero. In particular, the multiple bootstrap procedure
	does not require the local solution of the profiled estimation to achieve perfect variable selection, which is usually unrealistic
	in practice. 

	\section{Monte Carlo Studies}\label{sec:simulate}
	\subsection{Algorithm for Estimation}\label{sec:algo_est}
	To solve  the  penalized
	high-dimensional profiled estimating equation
	for the initial estimator $\vbetah$, 
	we extend 
	the composite gradient algorithm  (\citet{Nesterov, agarwal2012}) for high-dimensional M-estimator without nuisance parameters.
	A summary of the proposed algorithm is given in Algorithm~1 in Section~S10.1 of the supplementary material. 
	
	Specifically, given a current estimator $\vbeta^t=(1, (\vbeta_{-1}^t)^T)^T$ at step $t$, we update the estimate by
	\begin{align}\label{update}
	\vbeta_{-1}^{t+1} = \argmin\limits_{\substack{\vbeta_{-1}\in\bbR^{p-1}\\ ||\vbeta_{-1}||_1\leq \rho} }
	\Big\{
	\frac{\gamma_u}{2}||\vbeta_{-1}-\vbeta_{-1}^t||_2^2 + [	\vS_n(\vbeta^t,\widehat{G},\widehat{\E})]^T(\vbeta_{-1}-\vbeta_{-1}^t)  + \lambda||\vbeta_{-1}||_1\Big\},
	\end{align} 
	where $\gamma_u$ is the step size, $\rho$ is a positive constant such that $||\vbeta_{0,-1}||_1\leq \rho$. 
	An appealing practical property of the algorithm is that the update in step (\ref{update}) can be done efficiently by the following formula:
	\begin{align}
	\vbeta_{-1}^{t+1} =T_s\Big(\vbeta_{-1}^t-\frac{1}{\gamma_u}	\vS_n(\vbeta^t,\widehat{G},\widehat{\E}), \lambda\Big), \label{update1}
	\end{align} 
	where the function $T_s(\vbeta_{-1}, \lambda) = \Big(\sgn(\beta_j)*\max(|\beta_j|-\lambda,0)\Big)_{j=2,\cdots,p}$ is the soft-threshold operator. Then to ensure the constraint $||\vbeta_{-1}||_1\leq \rho$, we employ the projection method introduced in \citet{Duchi2008}, which is described in Algorithm~2 in Section~S10.1 of the online supplement.

	In implementation, we choose the kernel function $K(\cdot)$ as the distribution function of the standard normal distribution.
	The bandwidth is set to be  $h=0.9n^{-1/6}\min\{\mbox{std}(\vx_i^T\vbeta),\\ \mbox{ IQR}(\vx_i^T\vbeta)/1.34\}$, as motivated by \citet{silverman1986density}, where “std” denotes the standard deviation, and “IQR” denotes the interquartile range.  For the step-size parameter,  inspired by \citet{agarwal2012}, we employ an expanding series for $\gamma_u$, which ensures that the stepsize diminishes during the update process.
	Given a set of candidate tuning parameters $\{\lambda_k\}$ and the corresponding estimators $\vbetah_{\lambda_k}$, we employ $5$-fold cross-validation to select the optimal tuning parameter $\lambda$ by minimizing 
	MSE$(\lambda) = n^{-1}\sum_{i=1}^n \{\widetilde{Y}_i - \widehat{G}(\vx_i^T\vbetah_\lambda|\vbetah_\lambda)\}^2$.
	
	To obtain the debiased estimator $\vbetaw$, the nodewise Dantzig estimator $\vd_j(\vbetah,\eta)$ in (\ref{dj_def}) is computed via linear programming, see details in Section~S10.2 of the supplementary material.

	\subsection{Monte Carlo Results}\label{sec:sim_res}
	We generate random data from the model $Y=(\vx^T\veta)^2 + (A-\frac{1}{2})f_0(\vx^T\vbeta_0) + \ep$, where 
	$\ep\sim N(0, 1)$, $A\sim \mbox{Bernoulli}(0.5)$, and $\vx$  follows a $p$-dimensional multivariate normal distribution with mean zero and identity covariance matrix,  $\veta = (0.5, 0.5,-0.5,-0.5, 0,\cdots,0)^T$, $\vbeta_0=(1,-1,-0.5,0.4,-0.3, 0,\cdots,0)^T$,  and $f_0(u)=20*\{[1+\exp(-u)]^{-1}-0.5\}$.  We consider $n=300, 500$ and  $p=200, 800, 2000$ in the Monte Carlo experiment.

	We first investigate the finite-sample performance of the penalized profiled semiparametric estimator in Section~\ref{sec:est_method}. Table~\ref{tab:est} reports the average $l_1$- and $l_2$-estimation errors, the average number of false negatives (nonzero components incorrectly identified as zero) and 
	false positives (zero components incorrectly identified as nonzero), with their standard errors in the parentheses,
	based on 500 simulation runs. Results in Table~\ref{tab:est} demonstrate satisfactory performance of the profiled estimator for both the scenarios $p<n$ and $p>n$.

	\begin{table}[!h]
		\centering
		\caption{Performance of the penalized profile least-squares estimator}{
			\begin{tabular}{cccccc}
				\hline 
				$n$&$p$& $l_1$ error & $l_2$ error & False Negative & False Positive \\ 
				\hline 
				\multirow{3}{*}{300}
				&200& 0.85 (0.02)& 0.31 (0.00) & 0.01 (0.01) & 10.95 (0.32) \\  
				&800& 1.10 (0.03)& 0.37 (0.00) & 0.07 (0.01) & 19.05 (1.13) \\  
				&2000& 1.32 (0.03)& 0.40 (0.00) & 0.09 (0.01) & 31.25 (1.57)\\ 
				\hline 
				\multirow{3}{*}{500}
				&200& 0.58 (0.01)& 0.22 (0.00) & 0.00 (0.00) & 9.30 (0.30) \\  
				&800& 0.79 (0.02)& 0.27 (0.00) & 0.00 (0.00) & 17.39 (0.66) \\  
				&2000&0.94 (0.02) & 0.31(0.00) & 0.01 (0.00) &25.60 (1.18) \\ 
				\hline
		\end{tabular}}
		\label{tab:est}
	\end{table}

	
	
	Next we investigate the wild bootstrap procedure introduced
	in Section~\ref{sec:inf_method} for testing the group hypothesis (\ref{group}). We consider the following six different choices for the groups: $\kG_1 = \{6,7,8,9\}$, $\kG_2 = \{5,6,7,8,9\}$, $\kG_3= \{4,6,7,8,9\}$, $\kG_4 = \{4,5,6,7,8,9\}$, $\kG_5 = \{3,6,7,8,9\}$ and $\kG_6 = \{2,6,7,8,9\}$. Note that $\kG_1$ consists of only zero entries in $\vbeta_0$, while all the other groups include at least one non-zero elements.  Table~\ref{tab:inference} summarizes the average Type I errors and powers for each scenario, based on 1000 Bootstrap samples and 500 simulation runs.
	
	\begin{table}[!h]
		\centering
		\caption{Performance of the bootstrap procedure in Section~\ref{sec:inf_method} for simultaneous testing.}{
			\begin{tabular}{cc|c|ccccc}
				\hline 
				\multirow{2}{*}{$n$}&\multirow{2}{*}{$p$}& Type I error & \multicolumn{5}{c}{Power} \\
				\cline{3-8}
				&& $\kG_1$ &$\kG_2$ &$\kG_3$ &$\kG_4$  &$\kG_5$ &$\kG_6$  \\ 
				\hline 
				\multirow{3}{*}{300}
				&200& 5.6\%& 96.4\% & 96.2\% & 97.8\% & 98.6\% & 100\%\\  	
				&800& 5.4\%& 94.6\% & 97.6\% & 99.0\%& 99.6\% &100\% \\  
				&2000&3.2\% & 92.4\% & 96.8\% &98.4\% & 99.0\% & 100\%\\ 
				\hline 
				\multirow{3}{*}{500}
				&200& 4.4\%& 100\% &100\% & 100\% & 100\% & 100\%\\  
				&800& 5.0\%& 99.6\% & 99.6\% & 100\%& 99.2\% & 100\% \\  
				&2000&4.6\% & 98.8\% & 98.6\% &99.0\% & 99.2\% & 100\%\\ 
				\hline 
		\end{tabular}}
		\label{tab:inference}
	\end{table}
	Table~\ref{tab:inference} indicates that type I errors are reasonable controlled for all scenarios. 
	Power performance generally depends on the number and magnitudes of the nonzero components.
	The hypothesis corresponding to $\kG_2$ represents a more challenging situation where the only non-zero element is -$0.3$, close to $0$. The average powers for
	this case for different values of $p$ are still over 90\%. 
	
	Note that for inference, we need to estimate the approximate inverse of $\nabla S_n(\vbetah)$ 
	which involves an additional tuning parameter $\eta$. We observe that
	the inference procedure is not overly sensitive to its choice and fix
	it at the value $\eta = 25h$ to save computational time. Alternatively,
	it can also be selected via cross-validation similarly as what has been done for $\lambda$ selection. We provide additional simulation results in Section~S10.3 of the online supplement, including
	investigation on the choice of $\eta$ and comparing with alternative procedures for
	estimating the optimal value function.

	\section{A Real Data Example}\label{sec:realdata}
	We illustrate the application on a clinical data set introduced by \citet{realdata}. This is a randomized, double-blind, parallel treatment arm,
	phase III clinical trial to compare the efficacy and safety of pioglitazone versus gliclazide on metabolic control in naive patients with Type 2 diabetes mellitus. This data set we consider contains information on clinical characteristics for 813 individuals with Type 2 diabetes. The patients were randomized into two treatment arms: pioglitazone (treatment 0) and gliclazide (treatment 1). Their glycosylated haemoglobin A$_{1c}$ (HbA$_{1c}$) and fasting plasma glucose (FPG) levels were recorded every four weeks, up to week 52. 
	
	The primary efficacy endpoint is the change of HbA$_{1c}$ from baseline to the last available post-treatment value. 
	We consider the main effects of
	22 baseline covariates and their two-way interactions in the model. The dimension of the model is over 250. In the analysis, we standardize the covariates to have mean zero and sample variance one.

	We consider testing the significance of six different groups of variables.
	Table~\ref{tab:realdata} summarizes these six different groups and their respective $p$-values, based on the bootstrap procedure 
	in Section~\ref{sec:inf_method}. The estimated coefficients are reported in Section~S10.3 of the supplementary.

	\begin{table}[!h]
		\centering
		\caption{Real data analysis: evaluation of the significance of different groups of variables}{
			\begin{tabular}{c|c|c }
				\hline 
				Group& Variables & $p-$value \\
				\hline 
				1&HbA$_{1c}$, creatinine,  BMI, waist circumference, HomaS& $0.003$\\ 
				\multirow{2}{*}{2}& all variables in Group 1, all their two-way interactions,&\multirow{2}{*}{$0.011$}\\
				& and  their  interactions with fasting insulin&   \\  
				3&HbA$_{1c}$,   HomaS& $<0.001$\\ 
				4&BMI, creatinine,  waist circumference,& $0.242$  \\  
				5& LDL-C, total cholesterol, age, weight& $ 0.494$ \\ 
				\hline 
		\end{tabular}}
		\label{tab:realdata}
	\end{table}
	
	Based on the scientific literature and suggestions from our clinical collaborators, 
		fasting insulin is important for estimating the optimal treatment regime. We normalize its coefficient as 1 in our model.
	The first group includes the main effects of five characteristics, which are the baseline average levels for HbA$_{1c}$, creatinine,   BMI, waist circumference and homeostatic model assessment insulin sensativity (HomaS).  
	The variables in this group are those identified by diabetes experts to be potentially important for optimal treatment regime estimation.
	The bootstrap procedures suggests a significant $p$-value
	(0.003) for this group, which indicates that at one variable in this group is influential
	for making an optimal personalized decision in the choice of the two treatments. Group 2 augments Group 1 by including all the two-way interaction of these six characteristics (including fasting insulin), hence includes 20 variables in total. The estimated $p-$value is  0.011. Group 3 and Group 4 are subgroups of Group 1. The third group only includes two main effects: baseline HbA$_{1c}$  and HomaS, while the fourth group includes the remaining three main effects. The estimated $p-$values suggest that the significant characteristics are among those in Group 3 rather than Group 4.
	Group 5 consists of four variables: the baseline average levels for the   low-density lipoprotein cholesterol (LDL-C), total cholesterol, age and weight. This group of variables is of interest because 
	Glucose and lipid metabolism are linked to each other in many ways \citep{parhofer2015}. Age and weight are also always taken into account for optimal treatment regime estimation.
	Our test suggests that Group 5 does not appear to be influential in optimal treatment recommendation.

	\section{Discussions}\label{sec:discuss}
	We propose a flexible semi-parametric approach for making honest simultaneous inference about the importance of
	a group of variables on optimal treatment regime estimation. We develop new statistical theory to overcome the challenges of
	nonconvexity, high dimensionality and infinite-dimensional nonparametric components.
	
	In this paper, we focus on a randomized trial.
	For observation studies,
	let $\pi(\vx)=P(A=1|\vx)$ be the propensity score. Observing that $\E\{[A-\pi(\vx)]g(\vx)\} = 0$, we have
	$$4[ A_i-\pi(\vx_i)] Y_i =4[ A_i-\pi(\vx_i)]  g(\vx_i) +4[ A_i-\pi(\vx_i)] (A_i-1/2) f_0(\vx_i^T\vbeta_0)+4[ A_i-\pi(\vx_i)] \epsilon_i.$$
	Let $\widetilde{Y}_i = 4[ A_i-\pi(\vx_i)] Y_i $, $\widetilde{\epsilon}_i = 4[ A_i-\pi(\vx_i)][\epsilon_i+g(\vx_i)]$, then we have $$\E\widetilde{Y}_i =4[ A_i-\pi(\vx_i)] (A_i-1/2) f_0(\vx_i^T\vbeta_0). $$
	Denote $G(t|\vbeta) = \E(\widetilde{Y}|\vx^T\vbeta=t) = 2\E\{[ A-\pi(\vx)] (2A-1) f_0(\vx^T\vbeta_0)|\vx^T\vbeta=t \}$, $G^{(1)}(t|\vbeta) =\frac{d}{d t}G(t|\vbeta) $, and define $\widehat{G}(t|\vbeta)$, $\widehat{G}^{(1)}(t|\vbeta)$ similarly as in Section~\ref{sec:est_method}.
	Assume $\pi(\vx)=P(A=1|\vx)$ can be modeled as
	$\pi(\vx, \vxi)$, where $\vxi$ is a finite-dimensional parameter. 	Let $\widehat{\vxi}$ be an estimate of $\vxi$, such as the one based on the regularized logistic regression.
	Define the profiled semiparametric estimating function $\vS_n(\vbeta,\widehat{G},\widehat{\E},\widehat{\vxi}) = -n^{-1}\sum_{i=1}^n\{ 4[ A_i-\pi(\vx_i,\widehat{\vxi})] Y_i- \widehat{G}(\vx_i^T\vbeta|\vbeta) \}  \widehat{G}^{(1)}(\vx_i^T\vbeta|\vbeta)[\vx_{i,-1}-\widehat{\E}(\vx_{i,-1}|\vx_i^T\vbeta)].$
	We then estimate $\vbeta_0$ through the following penalized semiparametric profiled estimating equation
	$\vS_n(\vbeta,\widehat{G},\widehat{\E},\widehat{\vxi}) + \lambda\vkappah =\vnull.$
	Promising numerical performance of this estimator is reported in Section~S10.3 of the supplementary. Our approach can still be applied to investigate the theory but is it more complex due to the additional nuisance
	parameter.
	We will explore 
	the complete theory
	for the above estimator in the future work. Alternative approaches that can potentially be extended to our setting
	include \citet{nie2017}, \citet{knzel2018}, among ohers.

	Our approach for high-dimensional inference generalizes the ``inverting KKT condition" technique 
	in \citet{van2014asymptotically}. An alternative approach, which is more suitable if one is interested in some targeted
	lower-dimensional parameter is based on the idea of orthogonalization, see
	for example
	\citet{Belloni2015uniform}, \citet{Ning2017}, \citet{chernozhukov2018double}.
	In contrast, our approach is able to achieve debiasing for the $p$-dimensional coefficient vector simultaneously.
	The main idea of the orthogonalization approach is to construct a lower-dimensional estimating equation
	which is locally insensitive to the nuisance parameters. The construction
	of such a lower-dimensional moment condition is nontrivial for high-dimensional
	semiparametric setting, particularly for index model, where the challenge of bundled parameter arises.

	\bibliography{Hreference6}

	\section*{Supplementary material}
	
	Section~\ref{sec:proofnotation} of the supplementary material summarizes all the commonly used notation in the proof. Section~\ref{sec:appendix} summaries the regularity conditions and presents some discussions on these conditions.
	Section~\ref{sec:lemmas} presents the technical lemmas used in the proof.
	Section~\ref{sec:proof3.1} and Section~\ref{sec:proof3.2} of the supplementary material provide the proofs of 
	the theoretical results in Section 3.1 and 3.2 of the main paper, respectively. Section~\ref{sec:proof_append} presents the proofs of the technical lemmas in Section~\ref{sec:lemmas}.  Section~\ref{sec:proof_auxil} provides additional technical details.
	Section~\ref{sec:id_cond} introduces the identification conditions for the single index models mentioned in Assumption~\ref{A1}-(c).
	Section~\ref{sec:normal_verify} provides examples for verifying the regularity conditions. Section~\ref{sec:algo_numeric} presents the algorithms introduced in Section~\ref{sec:algo_est} of the main paper and some additional numerical results.
	
	\setcounter{section}{0}
	\setcounter{equation}{0}
	\renewcommand{\thesection}{S\arabic{section}}  
	\renewcommand{\theequation}{S\arabic{equation}} 
	\renewcommand{\thetable}{S\arabic{table}} 
	\section{Review of some useful notation}\label{sec:proofnotation}
	We collect below some notation 
	introduced in the main paper for easy reference in the proof.
	First, recall model (1) in the main paper is $$Y_i = g(\vx_i) +(A_i-1/2) f_0(\vx_i^T\vbeta_0)+\epsilon_i,\quad i=1, \ldots, n.$$ 
	Recall $G(t|\vbeta)=\E\{ f_0(\vx^T\vbeta_0) | \vx^T\vbeta=t\}$ and $G^{(1)}(t|\vbeta)=\frac{d}{d t}G(t|\vbeta)$. Note that $G(t|\vbeta_0) = f_0(t)$ and $G^{(1)}(t|\vbeta_0) = f_0'(t)$. We assume that $\vbeta_0\in \bbB_0 = \{\vbeta=(\beta_1,\cdots,\beta_p)^T:   \beta_1=1\}$.
	
	Denote $\vx_i =(x_{i,1},\vx_{i,-1}^T)^T$ and $\vbeta =(\beta_{1},\vbeta_{-1}^T)^T$. 
	The estimated profiled score function is
	\begin{align*}
	\vS_n(\vbeta,\widehat{G},\widehat{\E})=&-n^{-1}\sum_{i=1}^n\big[\widetilde{Y}_i-\widehat{G}(\vx_i^T\vbeta|\vbeta)\big]\widehat{G}^{(1)}(\vx_i^T\vbeta|\vbeta)\big[\vx_{i,-1}-\widehat{\E}(\vx_{i,-1}|\vx_i^T\vbeta)\big],
	\end{align*}
	where $\widetilde{Y}_i = 2(2A_i-1)Y_i$, $\widetilde{\epsilon}_i = 2(2A_i-1)[\epsilon_i+g(\vx_i)]$, and 
	\begin{align*}
	\widehat{G}(t|\vbeta) = \sum_{i=1}^n  W_{ni}(t,\vbeta)\widetilde{Y}_i,\quad	\widehat{G}^{(1)}(t|\vbeta) = \sum_{i=1}^n  W_{ni}^{(1)}(t,\vbeta)\widetilde{Y}_i,\quad 	\widehat{\E} (\vx_{-1}|\vx^T \vbeta=t) = \sum_{i=1}^n W_{ni}(t,\vbeta)\vx_{i,-1},
	\end{align*}
	with $K_h(z) = h^{-1}K(z/h)$, $ W_{ni}(t,\vbeta) = \frac{K_h(t-\vx_i^T\vbeta)}{\sum_{j=1}^nK_h(t-\vx_j^T\vbeta)}$, and $W_{ni}^{(1)}(t,\vbeta) =  \frac{d}{d t}W_{ni}(t,\vbeta) $. Note that to estimate $\widehat{G}(\vx_j^T \vbeta|\vbeta) $, $\widehat{G}^{(1)}(\vx_j^T \vbeta|\vbeta) $ and $\widehat{\E}(\vx_{j,-1}|\vx_j^T \vbeta) $, we  employ the  leave-one-out estimators.

	The debiased estimator is defined as
	$$\vbetaw_{-1} = \vbetah_{-1} - \vThetah^T 	\vS_n(\vbetah,\widehat{G},\widehat{\E}),$$
	where the $(p-1)\times(p-1)$ matrix $\vThetah= (\vthetah_2,...,\vthetah_p)$ is an approximation to the inverse of $\nabla \vS_n(\vbetah,\widehat{G},\widehat{\E})$,
	where $\nabla$ denotes the gradient with respect to the components of $\vbeta_{-1}$. 
	Given an initial estimator $\vbetah$, to compute $\vthetah_{j}$, let
	\begin{align*}
	\vdh_j\triangleq \vd_j(\vbetah,\eta) = \argmin\limits_{\bm{v}\in\bbR^{p-2} }||\vv||_1 \mbox{  subject to }   \Big|\Big|n^{-1}\sum_{i=1}^n \big[\widehat{G}^{(1)}(\vx_i^T\vbetah|\vbetah)\big]^2 (\xh_{i,j}-\vxh_{i,-j*}^T\vv)\vxh_{i,-j*}\Big|\Big|_\infty\leq \eta,
	\end{align*} 
	for some $\eta>0$, $j=2,\cdots,p$,
	where $||\cdot||_\infty$ denotes the infinity norm of a vector, $\vxh _i= \vx_i-\widehat{\E}(\vx_i|\vx_i^T\vbetah)$,
	$\xh_{i,j}$ denotes the $j^{th}$ entry of the vector $\vxh_i$, $\vxh_{i,-1}$ denotes the $(p-1)$-subvector of $\vxh_i$ that excludes its $1^{st}$ entry, and $\vxh_{i,-j*}$ denotes the $(p-2)$-subvector of $\vxh_i$ that excludes both its $1^{st}$ and $j^{th}$ entries. Furthermore, for $j=2,\cdots,p$, let 
	\begin{align*}
	\vphi_j(\vbetah,\eta)& = \Big(-\big(\vd_j(\vbetah,\eta) \big)^T_{1:(j-2)}, 1, -\big(\vd_j(\vbetah,\eta) \big)^T_{(j-1):(p-2)}\Big)^T,\\
	\widehat{\tau}_j^{2} \triangleq\tau_j^{2}(\vbetah,\eta)&=n^{-1}\sum_{i=1}^n\big[ \widehat{G}^{(1)}(\vx_i^T\vbetah|\vbetah)\big]^2\xh_{i,j} \vxh_{i,-1}^T\vphi_j(\vbetah,\eta) ,  \\
	\vthetah_j \triangleq\vtheta_j(\vbetah,\eta) &= \tau_j^{-2}(\vbetah,\eta)\vphi_j(\vbetah,\eta).
	\end{align*}   
	Then we can define the matrix $\vSigmah(\vbetah)$ as below:
	$$\vSigmah(\vbetah) \triangleq\vThetah^T\Big\{\frac{1}{n}\sum_{i=1}^n \big[\widetilde{Y}_i-\widehat{G}(\vx_i^T\vbetah|\vbetah)\big]^2 [\widehat{G}^{(1)}(\vx_i^T\vbetah|\vbetah)]^2 \vxh_{i,-1}\vxh_{i,-1}^T\Big\} \vThetah.$$
	
	Define the matrix $\vOmega =\E \big\{[G^{(1)}(\vx_i^T\vbeta_0|\vbeta_0)]^2 \vxw_{i,-1}\vxw_{i,-1}^T\big\}$, with $\vxw_{i,-1} = \vx_{i,-1}-\E(\vx_{i,-1}|\vx_i^T\vbeta_0)$,
	and its inverse $\vOmega^{-1}\triangleq\vTheta = (\vtheta_2,...,\vtheta_p) $. For $j=2, \ldots, p$, let $\vOmega_{-(j-1),-(j-1)}\in\bbR^{(p-2)\times (p-2)}$ 
	be the submatrix of $\vOmega$ with its $(j-1)^{th}$ row and $(j-1)^{th}$ column removed;
	similarly $\vOmega_{-(j-1),(j-1)}\in\bbR^{p-2}$ denotes the $(j-1)^{th}$ column of $\vOmega$ with its $(j-1)^{th}$ entry removed.
	Define
	$\vd_{0j} = (\vOmega_{-(j-1),-(j-1)})^{-1}\vOmega_{-(j-1),(j-1)}$, $s_j = ||\vd_{0j}||_0$, $\widetilde{s}= \max_{2\leq j\leq p}s_j $ and
	$\tau^2_{0j} =\Omega_{(j-1),(j-1)} - \vd_{0j}^T\vOmega_{-(j-1),(j-1)}=[\Theta_{(j-1),(j-1)}]^{-1}$, $\vphi_{0j} = \tau^{2}_{0j}\vtheta_j$. Denote $\bbK(p,s_0) = \{\vv\in\bbR^p:||\vv||_2\leq 1,||\vv||_0\leq s_0\}$ for any integers $p$ and $s_0$.
	Finally, recall $s=||\vbeta_0||_0$, $\bbB_1= \big\{\vbeta\in\bbB:||\vbeta-\vbeta_0||_2\leq c_0 \sqrt{s}h^2 , ||\vbeta||_0\leq ks\big\},$ where $\bbB=\big\{\vbeta\in\bbB_0: ||\vbeta-\vbeta_0||_2\leq r,  ||\vbeta||_0\leq ks\big\}$.
	
	For any $p-$dimensional vector $\vv=(v_1,\cdots,v_p)$, we denote $||\vv||_1=\sum_{j=1}^p|v_j|$, $||\vv||_2=\sqrt{\sum_{j=1}^p|v_j|^2}$, and $||\vv||_\infty=\max_{1\leq j\leq p}|v_j|$. For  any matrix $\vA=(a_{ij})\in\bbR^{p_1\times p_2}$, where $p_1$, $p_2$ are two arbitrary integers, we denote $||\vA||_\infty=\max_{ 1\leq i \leq p_1,1\leq j\leq p_2}|a_{ij}|$.

	\section{Regularity Conditions}\label{sec:appendix}
	
	We 
	define some notation first.  
	Given any square matrix $A$, $\lambda_{\max}(A)$ and $\lambda_{\min}(A)$ denote the largest and the smallest eigenvalues of $A$, respectively.
	Let $\mathcal{V}_1=
	\{\vv\in\bbR^{p-1}:  ||\vv||_2=1,  	||\vv||_0\leq 2ks\}$, 
	where $k>$ is a positive integer.
	Let
	$\E^{(j)}(\vx_{-1}|\vx^T\vbeta=t)$ denote the derivatives of
	$\E(\vx_{-1}|\vx^T\vbeta=t)$ with respect to $t$, $j=1,2$.
	Let 
	$\E^{(1)}(\vx_{-1}\vx_{-1}^T|\vx^T\vbeta=t)$ denote the derivative
	of $\E(\vx_{-1}\vx_{-1}^T|\vx^T\vbeta=t)$ with respect to $t$.
	
	We state below a set of regularity conditions, followed by Remarks (a)--(c)
	to discuss these conditions. 
	
	\label{sec:assump}
	\begin{enumerate} [label=(A\arabic*)]
		\item \label{A1}
		\begin{enumerate}
			\item  The distributions of $\vx\in\bbR^p$ and $\ep$ are sub-Gaussian with variance proxy $\sigma_x^2$ and $\sigma_\epsilon^2$, respectively, where $p\geq 2$. 
			\item The function $f_0(\cdot)$ satisfies 	
			$\E\{[f'_0(\vx^T\vbeta_0)]^2\}= a^2>0$ 
			and $\max_{ 1\leq i \leq n}| f_0'(\vx_i^T\vbeta_0)|\leq b$ for some positive 
			constants $a$ and $b$, where 
			$f'_0$ denotes its first derivative, and $||\vbeta_0||_2$ is bounded.
			Its second derivative $ f_0''(z) $ and third derivative $ f_0'''(z) $ are bounded for $z\in\bbR$.
			\item The lower-dimensional true model $f_0(\vx^T\vbeta_0)$ satisfies the 
			identifiability conditions for the classical single index models (e.g., \citet{ichimura1993,horowitz2012semiparametric}, see 
			Section~\ref{sec:id_cond} of the online supplement  for details).
			\item The main effect $g(\cdot)$ satisfies
			$P\big(\max_{ 1\leq i \leq n}|g(\vx_i)|\leq M \big)=1$ for some positive constant $M$.
		\end{enumerate}  
		\item \label{A2}
		\begin{enumerate}
			\item  
			There exist some positive constants $M$, $\xi_0$, $\xi_1$, $\xi_2$, $\xi_3$ and $\xi_4$   such that we have 
			$\inf_{\vv\in \mathcal{V}_1}\vv^T\E\big[\Cov (\vx_{-1}|\vx^T\vbeta_0)\big]\vv \geq \xi_0$,
			$\lambda_{\max}(\E\big[\Cov (\vx_{-1}|\vx^T\vbeta_0)\big]) \leq \xi_1$,
			$\lambda_{\min}(\vOmega)\geq \xi_2$, 
			and $\lambda_{\max}(\E(\vx\vx^T))\leq \xi_3$.
			Also,  $\sup_{\vbeta\in\bbB}n^{-1}\sum_{i=1}^n\left[\lambda_{\max}\big(\E(\vx_i\vx_i^T|\vx_i^T\vbeta)\big)\right]^2\leq \xi_4$
			and $\max_{ 1\leq i \leq n}\sup_{\vbeta\in\bbB_1}   \lambda_{\max}\big(\E(\vx_i\vx_i^T|\vx_i^T\vbeta)\big) \leq M  \log(p\vee n)$, for all $n$ sufficiently large.

			\item $\E(\vx_{-1}|\vx^T\vbeta=t)$ is twice-differentiable with respect to $t$,
			and $\E(\vx_{-1}\vx_{-1}^T|\vx^T\vbeta=t)$ is differentiable with respect to $t$.
			There exists some positive constant $M$ such that for any $\veta\in\bbR^{p-1}$,
			$\max_{ 1\leq i \leq n}\sup_{\vbeta\in\bbB} |\E^{(1)}(\vx_{i,-1}^T\veta|\vx_i^T\vbeta)| \leq M ||\veta||_2$,  
			and $\sup_{|t|\leq 2||\vbeta_0||_2\sigma_x\sqrt{\log(p\vee n)}}\sup_{\vbeta\in\bbB} |\E^{(2)}(\vx_{-1}^T\veta|\vx^T\vbeta=t)|\leq M ||\veta||_2$. \\
			Furthermore, $\max_{ 1\leq i \leq n}\sup_{\vbeta\in\bbB_1}   \big\{\big|\E^{(1)}[(\vx_{i,-1}^T\veta)^2] |\vx_i^T\vbeta\big| \leq M ||\veta||_2^2\sqrt{\log(p\vee n)}$,
			for all $n$ sufficiently large.
			\item For some positive constant $C$,
			\begin{align*}
			&\sup_{\vv\in\bbK(p,2ks+\widetilde{s})}\big|[\E(\vx|\vx^T\vbeta_1)-\E(\vx|\vx^T\vbeta_2)]^T\vv\big|\\
			\leq & C \Big(|\vx^T\vbeta_1-\vx^T\vbeta_2|  +\max(|\vx^T\vbeta_1|,|\vx^T\vbeta_2|)  ||\vbeta_1-\vbeta_2||_2\Big),
			\end{align*}
			for any $\vbeta_1,\vbeta_2\in\bbB$, $\vx\in\bbR^p$, where $\bbK(p,2ks+\widetilde{s}) = \{\vv\in\bbR^p:||\vv||_2\leq 1,||\vv||_0\leq 2ks+\widetilde{s}\}$, and  $\widetilde{s}=\max_{ 1\leq j\leq p}||\vd_{0j}||_0$. 
		\end{enumerate} 
		\item \label{K1} The kernel function $K(\cdot)$ is nonnegative,  symmetric about $0$, and 
		twice differentiable and bounded on the real line. 
		The function $K(\cdot)$ and its derivatives $K'(\cdot)$, $K''(\cdot)$ are all Lipschitz on the real line.
		Furthermore,
		$\lim\limits_{|\nu| \rightarrow \infty} K(\nu)= 0$, $\int_{-\infty}^{\infty}  K(\nu) d\nu=1$,  $\int_{-\infty}^{\infty} \nu K'(\nu) d\nu=-1$, and $\int_{-\infty}^{\infty} \nu^2 K''(\nu) d\nu=2$. 
		For any integer $0\leq i\leq 4$
		$\int |\nu^i K(\nu)| d\nu<\infty$; for integer $0\leq i\leq 3$, $\int |\nu^i K'(\nu)| d\nu<\infty$; for integer $0\leq i\leq 2$, $\int |\nu^i K''(\nu) |d\nu<\infty$.
		\item \label{K3} Let  $f_{\bm{\beta}}(\cdot)$ denote the density function of $\vx^T\vbeta$.
		Suppose that $f_{\bm{\beta}}(\cdot)$ is twice differentiable, and $f_{\bm{\beta}}(\cdot)$, $f_{\bm{\beta}}'(\cdot)$, 
		$f_{\bm{\beta}}''(\cdot)$ are all bounded on the real line. Furthermore, for some positive constant $M$, $P(\max_{ 1\leq i \leq n}\sup_{\vbeta\in\bbB}  f^{-1}_{\bm{\beta}}(\vx_i^T\vbeta) \leq M)=1$.
		.
		\item \label{K4}  \begin{enumerate}
			\item For any $\vbeta\in\bbB$ and $t\in \bbR$, 
			$G(t|\vbeta)=\E\{f_0(\vx^T\vbeta_0) | \vx^T\vbeta=t\}$ is twice differentiable with respect to $t$.  Its first derivative satisfies 
			$P(\max_{ 1\leq i \leq n}\sup_{\vbeta\in\bbB}  |G^{(1)}(\vx_i^T\vbeta|\vbeta)|\leq b)=1$, for 
			some positive constant  $b$.
			Its second derivative $G^{(2)}(t|\vbeta)$ is bounded.
			\item 
			$G^{(1)}(t|\vbeta)$ satisfies 
			\begin{align*}
			n^{-1} \sum_{i=1}^{n}\big[G^{(1)}(\vx_i^T\vbeta|\vbeta) - G^{(1)}(\vx_i^T\vbeta_0|\vbeta_0)\big]^2 \leq c_1 ||\vbeta-\vbeta_0||_2^{2},
			\end{align*}
			for any $\vbeta\in\bbB$, some positive constant $c_1$ and all $n$ sufficiently large. 
			\item $G(t|\vbeta)$ and $G^{(1)}(t|\vbeta)$ satisfy the local Lipschitz conditions:
			\begin{align*}
			\sup_{|t|\leq c_0\sqrt{s\log(p\vee n)}}  \big[G(t|\vbeta_1) - G(t|\vbeta_2)\big]^2 \leq c_1s\log (p\vee n)||\vbeta_1-\vbeta_2||_2 ,\\
			\sup_{ |t|\leq c_0\sqrt{s\log(p\vee n)}} \big[G^{(1)}(t|\vbeta_1) - G^{(1)}(t|\vbeta_2)\big]^2 \leq c_1s\log (p\vee n)||\vbeta_1-\vbeta_2||_2 ,
			\end{align*}
			for any $\vbeta_1,\vbeta_2\in \bbB$, 
			for some positive constants $c_0$,  $c_1$, and all $n$ sufficiently large.
		\end{enumerate}
	\end{enumerate}

	\noindent{\it Remark (a).}
	Our assumptions on the covariates in \ref{A1}  are similar to those in the literature on high-dimensional inference
	with random designs (e.g., \citet{van2014asymptotically,
		Belloni2015uniform}, among others). 
	We assume $p\geq 2$. If $p=1$, then the index model degenerates to a nonparametric model.
	The conditions in \ref{K1} are standard assumptions on 
	the kernel function for nonparametric smoothing. 
	The assumptions in \ref{K3} on the distributions of $\vx^T\vbeta$
	are common for index models. 
	Assumption \ref{A2} involves restricted eigenvalue types assumptions and conditions on $\E(\vx|\vx^T\vbeta)$ 
	and \ref{K4} imposes conditions on the function $G(t|\vbeta)$ and $G^{(1)}(t|\vbeta)$.
	In Section~\ref{sec:normal_verify} of the supplementary material, we verify that 
	these key assumptions
	are satisfied when $\vx$ follows the multivariate normal distribution
	in the high-dimensional setting.\\

	\noindent{\it Remark (b).}
	Comparing with low-dimensional single-index models, conditions (A2) and (A5) are worthy of some discussions. The conditional expectations
	in both conditions depend on $\vx^T\vbeta$  and $\vbeta$, possibly in
	a nonlinear fashion. As an example, in the multivariate normal distribution setting,
	the linearity condition $\E(\vx^T\veta|\vx^T\vbeta)=c_{\veta, \vbeta}\vx^T\vbeta$, where
	$c_{\veta, \vbeta}$  is a non-stochastic constant,
	plays an important role in the 
	low-dimensional theory. In the high-dimensional setting,
	$c_{\veta, \vbeta}$ (depending on $\vbeta$, $\veta$ nonlinearly) 
	requires more careful analysis.\\
	
	\noindent{\it Remark (c).}
	For identifiability, we assume that there exists a covariate with a nonzero coefficient. In practice, domain experts may help suggest such a candidate continuous covariate and the statisticians can run confirmatory analysis (e.g., comparing the conditional treatment effect conditional on this covariate) to verify if this is a viable choice.
	In the literature, another popular condition for identifiability is $||\vbeta_0||_2=1$. However, it was also recognized (\citet{yu2002penalized}, \citet{ZhuXue}, \citet{wang2010estimation}, among others) that technical derivation under this identifiability condition is more involved 
	due to the fact $\vbeta_0$ is a boundary point of a unit sphere and the derivative 
	does not exist at $\vbeta_0$. To handle this, the aforementioned literature suggested
	a delete-one-component approach. It was assumed that the true $\vbeta$ has a positive component $\beta_{r}$.
	Let $\vbeta=(\beta_{1},\cdots,\beta_{p})^T$ and $\vbeta^{(r)}$ be a $(p-1)$-subvector of $\vbeta$ that excludes the $r^{th}$ entry.  Then we can write $\vbeta=(\beta_{1},\cdots,\beta_{r-1}, \sqrt{1-||\vbeta^{(r)}||_2^2},\beta_{r+1},\cdots,\beta_{p})^T.$ Thus the model can be reparametrized using the $(p-1)$-dimensional parameter $\vbeta^{(r)}$.
	Under the assumption $||\vbeta^{(r)}||_2<1$ (reasonable under the assumption that the 
	underlying model has dimension at least two, otherwise it degenerates to a fully nonparametric model), the Jacobian matrix of the transformation can be computed as $\vJ_{\vbeta^{(r)}} = (\vgamma_1,\cdots,\vgamma_p)^T$,
	where $\vgamma_s = \ve_s$ ($s^{th}$ column of the identity matrix $\vI_p$), for $1\leq s\leq p,s\neq r$, and $\vgamma_r = (1-||\vbeta^{(r)}||_2^2)^{-1/2}(\beta_{1},\cdots,\beta_{r-1}, -\sqrt{1-||\vbeta^{(r)}||_2^2},\beta_{r+1},\cdots,\beta_{p})^T$. 
	Note that this transformation analysis also relies on knowing a covariate with a positive coefficient.

	\section{Some useful definitions and lemmas} \label{sec:lemmas}
	In this section, we introduce several useful definitions and lemmas which will be used in the proof of the theory. The proofs of these lemmas are given in Section~\ref{sec:proof_append}.
	
	\begin{definition} 
		A random vector $\vx\in\bbR^p$ is said to be sub-Gaussian with variance proxy $\sigma_x^2$ if $\emph{E}\vx=\vnull$, and for each (fixed) unit vector $\vv\in\bbR^p$, 
		$$\emph{E}[\exp(s\vx^T\vv)]\leq \exp\Big(\frac{\sigma_x^2s^2}{2}\Big), \qquad \forall\ s\in\bbR.$$
		An equivalent definition is that for each (fixed) unit vector $\vv\in\bbR^p$ and any $t>0$, $P(|\vx^T\vv|\geq t)\leq 2\exp\big(-\frac{t^2}{2\sigma_x^2}\big)$. 
		
		Property: 
		Let $\vx_1,\cdots,\vx_n\in\bbR^p$ be independent sub-Gaussian random vectors in 
		$\bbR^p$ with variance proxy $\sigma_x^2$.
		Then $\forall\ t>0$, 
		$P\big(\max_{1\leq i\leq n}||\vx_i||_\infty>t\big)\leq 2np \exp\big(-\frac{t^2}{2\sigma_x^2}\big)$. As a result,
		\begin{align}
		P\Big(\max_{1\leq i\leq n}||\vx_i||_\infty>2\sigma_x\sqrt{\log (np)}\Big)\leq 2\exp\big[-\log (np)\big].\label{subG_prop}
		\end{align} 
	\end{definition}
	
	\begin{definition} 
		Let $L_k(\bbP_n) = \left|n^{-1}\sum_{i=1}^{n}\vgamma^k(Z_i)\right|^{1/k}$, $k=1,2$.
		For $\delta>0$, the $\delta-$covering number $N(\delta,\Gamma,L_k(\bbP_n))$ of 
		the class of functions $\Gamma$ is the minimum number of $L_k(\bbP_n)-$balls with radius $\delta$ to cover $\Gamma$. The entropy is $H(\cdot,\Gamma,L_k(\bbP_n))\triangleq \log\left[N(\delta,\Gamma,L_k(\bbP_n))\right]$.
	\end{definition}
	
	\begin{definition} 
		A Rademacher sequence is a sequence $\epsilon_1,\cdots,\epsilon_ n$ of i.i.d copies of a random variable $\epsilon$ taking values in $\{1,-1\}$, with $P(\ep=1) = P(\ep=-1)=1/2$.
	\end{definition}

	\setcounter{lemma}{0}
	\renewcommand{\thelemma}{A\arabic{lemma}} 
	
	\blem
	\label{Lgrad}
	Under the assumptions of Theorem~\ref{Lasso_error},
	there exist universal positive constants $d_0$ and $d_1$  such that 
	$\big|\big|\vS_n(\vbeta_0,\widehat{G},\widehat{\emph{E}})\big|\big|_\infty\leq d_0  \sqrt{\frac{\log (p\vee n)}{n}} $ with probability at least $1- \exp[-d_1\log (p\vee n)]$.
	\elem

	\blem \label{lem:subg} 
	If $\vx\in\bbR^p$ is sub-Gaussian with variance proxy $\sigma_x^2$, then for any $\vbeta\in\bbB$, $\emph{E}(\vx|\vx^T\vbeta)$ and $\vx-\emph{E}(\vx|\vx^T\vbeta)$ are both sub-Gaussian, with variance proxy $\sigma_x^2$ and $2\sigma_x^2$, respectively. Furthermore, under assumption~\ref{A1}, $\widetilde{\epsilon} = 2(2A-1)\big[\epsilon+g(\vx)\big]$ is  sub-Gaussian with variance proxy $4(\sigma_\ep^2 + M^2)$. 
	In addition, if  $x$ is a random variable such that $|x|\leq \sigma_x$, for some positive constant $\sigma_x$, and $y$ is a sub-Gaussian random variable with variance proxy $\sigma_y^2$, then $xy-\emph{E}(xy)$ is  sub-Gaussian  with variance proxy no larger than $4\sigma_x\sigma_y$.
	\elem
	
	\blem \label{lem:events} 
	Define the following four events:
	\begin{align*} 
	\mathcal{G}_n &= \left\{\max_{ 2\leq j\leq p}\sup_{\vbeta\in\bbB }n^{-1}\sum_{i=1}^n\left|  \left[\vx_{i,-1}-\emph{E}(\vx_{i,-1}|\vx_i^T\vbeta)\right]^T\vtheta_j\right|^2\leq d_0\xi_2^{-2}\sigma_x^2\right\},\\ 
	\mathcal{H}_n &= \left\{\max_{ 2\leq j\leq p}n^{-1}\sup_{\vbeta\in\bbB }\sum_{i=1}^n\left| 
	2[\ep_i+g(\vx_i)]*	 \left[\vx_{i,-1}-\emph{E}(\vx_{i,-1}|\vx_i^T\vbeta)\right]^T\vtheta_j\right|^2\leq d_1\xi_2^{-2}\sigma_x^2(\sigma_\ep^2+M^2)\right\},\\
	\mathcal{J}_n &= \left\{\max_{1\leq i \leq n}\sup_{\vbeta\in\bbB_1}\left| \vx_i^T\vbeta \right|\leq 2||\vbeta_0||_2\sigma_x\sqrt{\log(p\vee n)}\right\},\\
	\mathcal{K}_n &= \left\{\sup_{\vv\in\bbK(p,s_0)}n^{-1}\sum_{i=1}^n\left| \vx_i^T\vv \right|^2\leq 2\sigma_x^2\right\},
	\end{align*} 
	for some positive constants $d_0>4$ and $d_1>256\sqrt{2}$.
	Under the assumptions of Theorem~\ref{Lasso_error}, 
	there exists some universal positive constant $c$ such that $P\left(\mathcal{G}_n\cap\mathcal{H}_n\cap\mathcal{J}_n\cap\mathcal{K}_n\right)\geq 1-\exp[-c\log(p\vee n)]$, for all $n$ sufficiently large. 
	\elem
	
	\blem \label{lem:Gbound}  
	Under the assumptions of Theorem 1, 
	for any $\vbeta\in\bbB$,
	\begin{align}
	G(\vx^T\vbeta|\vbeta) - G(\vx^T\vbeta_0|\vbeta_0)= &f_0'(\vx^T\vbeta)\big[\vx_{-1}^T\vgamma  -\emph{E}(\vx_{-1}^T\vgamma  |\vx^T\vbeta)\big] \nonumber \\
	&-  \Big\{ h(\vx_{-1}^T\vgamma) - \emph{E}\big[h(\vx_{-1}^T\vgamma) |\vx^T\vbeta\big] \Big\}, \label{G-1}\\ 
	G^{(1)}(\vx^T\vbeta|\vbeta) - G^{(1)}(\vx^T\vbeta_0|\vbeta_0)=&f''_0(\vx^T\vbeta) \big[ \vx_{-1}^T\vgamma-\emph{E}( \vx_{-1}^T\vgamma|\vx^T\vbeta)\big] -h_1(\vx_{-1}^T\vgamma)\nonumber\\
	&-   f_0'(\vx^T\vbeta)\emph{E}^{(1)}(\vx_{-1}^T\vgamma|\vx^T\vbeta) +\emph{E}^{(1)}\big[h(\vx_{-1}^T\vgamma) |\vx^T\vbeta\big],  \label{G1-1}
	\end{align}
	where $\vgamma=\vbeta_{-1}-\vbeta_{0,-1}$,  $h(u) = \int_{0}^uaf_0''(a+\vx^T\vbeta_0)da$, $ h_1(u)=\int^{u}_{0} af'''_0(a+\vx^T\vbeta_0)da$, and $\emph{E}^{(1)}(\cdot|\vx^T\vbeta=t)$ is the first derivative of $\emph{E}(\cdot|\vx^T\vbeta=t)$ with respect to $t$. Moreover,  there exist universal positive constants $c_1$ and $c_2$ such that for   all $n$ sufficiently large,
	\begin{align}
	\max_{1\leq i \leq n}\sup_{\vbeta_1,\vbeta_2\in\bbB}\left[ G(\vx_i^T\vbeta_1|\vbeta_1)-G(\vx_i^T\vbeta_2|\vbeta_2) \right]^2\leq& c_1||\vbeta_1-\vbeta_2||_2s\log( p\vee n),\label{Gt}\\
	\max_{1\leq i \leq n}\sup_{\vbeta_1,\vbeta_2\in\bbB}\left[  G^{(1)}(\vx_i^T\vbeta_1|\vbeta_1)-G^{(1)}(\vx_i^T\vbeta_2|\vbeta_2) \right]^2\leq& c_1||\vbeta_1-\vbeta_2||_2s\log( p\vee n),\label{Gtt}\\
	n^{-1}\sum_{i=1}^{n}\big[G(\vx_i^T\vbeta|\vbeta) - G(\vx_i^T\vbeta_0|\vbeta_0)\big]^2\leq &c_1||\vbeta-\vbeta_0||_2^2,\ \forall\ \vbeta\in \bbB\label{G-2} 
	\end{align}
	with probability at least $1-\exp[-c_2\log(p\vee n)]$.  
	\elem

	\blem \label{Gfunc}
	Under the assumptions of Theorem~\ref{Lasso_error}, 
	there exist universal positive constants $c_0$ and $c_1$  such that for all $n$ sufficiently large,
	\begin{align*}
	P\Big(\max_{ 1\leq i \leq n}\sup\limits_{\vbeta \in \bbB}\big|\widehat{G}(\vx_i^T\vbeta|\vbeta)-G(\vx_i^T\vbeta|\vbeta)  \big| \geq c_0h^2\Big)\leq  \exp[-c_1\log(p\vee n)].
	\end{align*} 
	\elem
	
	\blem \label{G1func}
	Under the assumptions of Theorem~\ref{Lasso_error}, there exist universal positive constants $c_0$ and $c_1$  such that for all $n$ sufficiently large,
	\begin{align*}
	P\Big(\max_{ 1\leq i \leq n}\sup\limits_{\vbeta \in \bbB}\big|\widehat{G}^{(1)}(\vx_i^T\vbeta|\vbeta)-G^{(1)}(\vx_i^T\vbeta|\vbeta)  \big| \geq c_0h\Big)\leq \exp[-c_1\log(p\vee n)].
	\end{align*}  
	\elem
	
	\blem \label{lem:Ebound} 
	Under the assumptions of Theorem 1, there exist universal positive constants $c_0$, $c_1$ such that 
	for all $n$ sufficiently large,
	\begin{align*}
	P\Big(\max_{ 1\leq i \leq n}\sup\limits_{\substack{\vbeta \in \bbB\\\vv\in\bbK(p,2ks)}} \Big|\Big[\widehat{\emph{E}}(\vx_{i}|\vx_i^T\vbeta)-\emph{E}(\vx_{i}|\vx_i^T\vbeta)\Big]^T\vv\Big| \geq c_0h^2\Big)\leq \exp[-c_1\log(p\vee n)],\\
	P\Big(\max_{ 1\leq i \leq n}\sup\limits_{\vbeta \in \bbB} \Big|\Big|\widehat{\emph{E}}(\vx_{i}|\vx_i^T\vbeta)-\emph{E}(\vx_{i}|\vx_i^T\vbeta)\Big|\Big|_\infty \geq c_0h^2\Big)\leq \exp[-c_1\log(p\vee n)].
	\end{align*}  
	Furthermore, if $\widetilde{s}\log p\leq d_0n$ for some positive constant $d_0$, where $s_j=||\vd_{0j}||_0$, $\widetilde{s}=\max_{ 2\leq j\leq p}s_j$, then  there exist universal positive constants $d_1$, $d_2$ such that for all $n$ sufficiently large,
	$$	P\Big(\max_{ 1\leq i \leq n}\sup\limits_{\substack{\vbeta \in \bbB\\\vv\in\bbK(p,2ks+ \widetilde{s})}} \Big|\Big[\widehat{\emph{E}}(\vx_{i}|\vx_i^T\vbeta)-\emph{E}(\vx_{i}|\vx_i^T\vbeta)\Big]^T\vv\Big| \geq d_1h^2\Big)\leq \exp[-d_2\log(p\vee n)].$$ 
	\elem
	

	\blem \label{lem:ghat1_g1} 
	Under the assumptions of Theorem~\ref{desparse}, there exist universal positive constants $c_0$, $c_1$, such that for all $n$ sufficiently large,
	\begin{align}
	P\left(\max_{2\leq j \leq p}   \left|n^{-1/2}\sum_{i=1}^n\vtheta_j^T\vgamma(Z_i,\vbetah,\widehat{G}^{(1)})  \right| \geq c_0\left[h^2\log(p\vee n)\right]^{1/4}\right)\leq\exp(-c_1\log p),\label{ghat1_g1} \\
	P\left(  \left|\left|n^{-1/2}\sum_{i=1}^n \vgamma(Z_i,\vbetah,\widehat{G}^{(1)}) \right|\right|_\infty\geq c_0\left[h^2\log(p\vee n)\right]^{1/4}\right)\leq\exp(-c_1\log p),\label{ghat1_g1_inf} 
	\end{align} 
	where $Z_i=(\vx_i,\epsilon_i,A_i)$, $\vgamma(Z_i,\vbetah,\widehat{G}^{(1)}) = \big[\widehat{G}^{(1)}(\vx_i^T\vbeta|\vbeta) - G^{(1)}(\vx_i^T\vbeta_0|\vbeta_0)\big] \widetilde{\epsilon}_i \big[\vx_{i,-1} -\emph{E}(\vx_{i,-1}|\vx_i^T\vbeta_0)\big]$, with $\widetilde{\epsilon}_i=2(2A_i-1)[\ep_i+g(\vx_i)]$.
	\elem
	

	\blem \label{lem:ghat_g}  
	Under the assumptions of Theorem~\ref{desparse}, there exist  universal positive constants $c_0$, $c_1$ such that for all $n$ sufficiently large,
	\begin{align}
	P\left(\max_{2\leq j \leq p}  \left|n^{-1/2}\sum_{i=1}^n \vtheta_j^T\vnu_1(Z_i,\vbetah,\widehat{G}^{(1)}) \right|\geq c_0sh^3\sqrt{n} \right)\leq\exp(-c_1\log p),\label{ghat_g1} \\
	P\left( \left|\left|n^{-1/2}\sum_{i=1}^n \vnu_1(Z_i,\vbetah,\widehat{G}^{(1)}) \right|\right|_\infty\geq c_0sh^3\sqrt{n} \right)\leq\exp(-c_1\log p),\label{ghat_g1_inf}\\
	P\left(\max_{2\leq j \leq p}  \left|n^{-1/2}\sum_{i=1}^n \vtheta_j^T\vnu_2(Z_i,\vbetah,\widehat{G},\widehat{G}^{(1)}) \right|\geq c_0sh^3\sqrt{n} \right)\leq\exp(-c_1\log p),\label{ghat_g2} \\
	P\left( \left|\left|n^{-1/2}\sum_{i=1}^n \vnu_2(Z_i,\vbetah,\widehat{G},\widehat{G}^{(1)}) \right|\right|_\infty\geq c_0sh^3\sqrt{n} \right)\leq\exp(-c_1\log p),\label{ghat_g2_inf}
	\end{align} 
	where $Z_i=(\vx_i,\epsilon_i,A_i)$, 
	$\vnu_1(Z_i,\vbetah,\widehat{G}^{(1)}) = \big[G(\vx_{i}^{T} \vbeta_0|\vbeta_0)-G(\vx_{i}^{T} \vbetah|\vbetah) - G^{(1)}(\vx_i^T\vbetah|\vbetah) \vxh_{i,-1}^T(\vbeta_{0,-1}-\vbetah_{-1})\big] \widehat{G}^{(1)}(\vx_i^T\vbetah|\vbetah)\vxh_{i,-1} $, $\vnu_2(Z_i,\vbetah,\widehat{G},\widehat{G}^{(1)}) = \big[G(\vx_{i}^{T} \vbetah|\vbetah) -\widehat{G}(\vx_{i}^{T} \vbetah|\vbetah) \big]\widehat{G}^{(1)}(\vx_i^T\vbetah|\vbetah)\vxh_{i,-1} $, with $\vxh_{i,-1} = \vx_{i,-1} -\widehat{\emph{E}}(\vx_{i,-1}|\vx_i^T\vbetah)$.
	\elem

	\blem \label{lem:Ehat_E}  
	Under the assumptions of Theorem~\ref{Lasso_error}, there exist universal positive constants $c_0$ and $c_1$ such that for all $n$ sufficiently large,
	\begin{align}
	P\left(\max_{2\leq j \leq p} \left|n^{-1/2}\sum_{i=1}^n \vtheta_j^T\vxi(Z_i,\vbetah,\widehat{\emph{E}}) \right| \geq  c_0h\sqrt{s\log(p\vee n)}\right)\leq\exp(-c_1\log p)\label{Ehat_E},\\
	P\left( \left|\left|n^{-1/2}\sum_{i=1}^n\vxi(Z_i,\vbeta,\widehat{\emph{E}}) \right|\right|_\infty\geq c_0h\sqrt{s\log(p\vee n)}\right)\leq\exp(-c_1\log p)\label{Ehat_E_inf},
	\end{align} 
	where $Z_i=(\vx_i,\epsilon_i,A_i)$, 
	$\vxi(Z_i,\vbetah,\widehat{\emph{E}}) =  \widetilde{\epsilon}_i G^{(1)}(\vx_i^T\vbeta_0|\vbeta_0) \left[\widehat{\emph{E}}(\vx_{i,-1}|\vx_i^T\vbetah)-\emph{E}(\vx_{i,-1}|\vx_i^T\vbeta_0)\right] $, with $\widetilde{\epsilon}_i=2(2A_i-1)[\ep_i+g(\vx_i)]$.
	\elem
	%

	\blem \label{Gbetafunc}
	Under the assumptions of Lemma~\ref{dbound}, 
	there exist universal positive constants $c_0$ and $c_1$ such that for all $n$ sufficiently large,
	\begin{align}
	&P\Big(\max_{ 1\leq i \leq n}\sup \limits_{ \vbeta\in\bbB_1}\big|\widehat{G} (\vx_i^T\vbeta|\vbeta)-G (\vx_i^T\vbeta_0|\vbeta_0)  \big| \geq c_0sh^2\sqrt{\log (p\vee n)} \Big)\leq  \exp(-c_1\log p) \label{Gbetabound},\\
	&P\Big(\max_{ 1\leq i \leq n}\sup \limits_{\vbeta\in\bbB_1}\big|\widehat{G}^{(1)} (\vx_i^T\vbeta|\vbeta)-G^{(1)} (\vx_i^T\vbeta_0|\vbeta_0)  \big| \geq c_0h \Big) \leq \exp(-c_1\log p) \label{G1betabound},\\
	&P\Big(\max_{ 1\leq i \leq n}\sup \limits_{\substack{\vbeta \in \bbB\\\vv\in\bbK(p,2ks+\widetilde{s})}}\big| [\widehat{\emph{E}} (\vx_{i}|\vx_i^T\vbeta)-\emph{E}(\vx_{i}|\vx_i^T\vbeta_0)]^T\vv \big| \geq c_0 sh^2\sqrt{\log (p\vee n)} \Big)\leq  \exp(-c_1\log p) \label{Ebetabound}.
	\end{align}  
	\elem
	
	\blem \label{Efunc} 
	Under the assumptions of Lemma~\ref{dbound}, there exist universal positive constants $c_0$, $c_1$, 
	such that  all $n$ sufficiently large,
	\begin{align}
	P\Big(\sup \limits_{\substack{\vbeta \in \bbB\\\vv\in\bbK(p,2ks+\widetilde{s})}}\Big|\frac{1}{n}\sum_{i=1}^n\vv^T\big[\widehat{\emph{E}} (\vx_{i}|\vx_i^T\vbeta)-\emph{E}(\vx_{i}|\vx_i^T\vbeta_0)\big]& \big[\widehat{\emph{E}} (\vx_{i}|\vx_i^T\vbeta)-\emph{E}(\vx_{i}|\vx_i^T\vbeta_0)\big]^T\vv \Big|\nonumber\\
	\geq& c_0 sh^4 \Big)\leq  \exp(-c_1\log p) \label{Eboundphi},\\
	P\Big(\sup \limits_{\vbeta\in\bbB_1}\Big|\Big|\frac{1}{n}\sum_{i=1}^n\big[\widehat{\emph{E}} (\vx_{i}|\vx_i^T\vbeta)-\emph{E}(\vx_{i}|\vx_i^T\vbeta_0)\big]& \big[\widehat{\emph{E}} (\vx_{i}|\vx_i^T\vbeta)-\emph{E}(\vx_{i}|\vx_i^T\vbeta_0)\big] ^T\Big|\Big|_\infty \nonumber\\
	\geq& c_0 sh^4 \Big)\leq  \exp(-c_1\log p) \label{Ebound}.
	\end{align}   
	\elem

	Lemma~\ref{lem:thetaj} below gives an alternative expression for $\vphi_{0j}$.
	\blem \label{lem:thetaj}  
	Under assumption \ref{A2}, we have
	\begin{align*}
	\vphi_{0j} = \Big(-(\vd_{0j})_{1:(j-2)}^T,1,-(\vd_{0j})_{(j-1):(p-1)}^T\Big)^T,
	\end{align*}
	and $ ||\vtheta_j||_0\leq \widetilde{s}+1$ uniformly in $j=2, \ldots, p$. 
	Under Assumption \ref{A1} and Assumption \ref{A2},   we have  
	$\tau^{-2}_{0j}\leq ||\vtheta_j||_2 \leq  \xi_2^{-1} $ and $\tau^{2}_{0j}\leq b^2\xi_1$ uniformly in $j=2, \ldots, p$. 
	
	\elem
	
	\blem 
	\label{cor_Sig}
	Under the assumptions of Theorem~\ref{desparse}, $||\vSigmah(\vbetah) - \vTheta^T\vLam\vTheta||_\infty  = o_p(1)$.
	\elem

	\section{Proofs of results in Section~\ref{sec:est_theorem} of the main paper} \label{sec:proof3.1}
	\begin{proof}[Proof of Lemma~\ref{lem:local_LRC}] 
		Note that
		\begin{align*} 
		\vS_n(\vbeta,\widehat{G},\widehat{\E}) =&- \frac{1}{n}\sum_{i=1}^n \big[\widetilde{Y}_i  - \widehat{G}(\vx_i^T\vbeta|\vbeta)\big]\widehat{G}^{(1)}(\vx_i^T\vbeta|\vbeta)\big[\vx_{i,-1} -\widehat{\E}(\vx_{i,-1}|\vx_i^T\vbeta)\big]\\
		=&- \frac{1}{n}\sum_{i=1}^n \widetilde{\epsilon}_i\widehat{G}^{(1)}(\vx_i^T\vbeta|\vbeta)\big[\vx_{i,-1} -\widehat{\E}(\vx_{i,-1}|\vx_i^T\vbeta)\big]\\
		&- \frac{1}{n}\sum_{i=1}^n \big[G(\vx_i^T\vbeta_0|\vbeta_0)  - \widehat{G}(\vx_i^T\vbeta|\vbeta)\big]\widehat{G}^{(1)}(\vx_i^T\vbeta|\vbeta)\big[\vx_{i,-1} -\widehat{\E}(\vx_{i,-1}|\vx_i^T\vbeta)\big],
		\end{align*} 
		where  $\widetilde{\epsilon}_i  =  2(2A_i-1)[\epsilon_i+g(\vx_i)]$.
		Denote $\vgamma=\vbeta_{-1}-\vbeta_{0,-1}$. We have
		\begin{align*}
		&\langle \vS_n(\vbeta,\widehat{G},\widehat{\E}) - \vS_n(\vbeta_0,\widehat{G},\widehat{\E}),\vgamma \rangle \\ 
		=&- \frac{1}{n}\sum_{i=1}^n \widetilde{\epsilon}_i\widehat{G}^{(1)}(\vx_i^T\vbeta|\vbeta)\big[\widehat{\E}(\vx_{i,-1}|\vx_i^T\vbeta_0) -\widehat{\E}(\vx_{i,-1}|\vx_i^T\vbeta)\big]^T\vgamma\\
		&- \frac{1}{n}\sum_{i=1}^n \widetilde{\epsilon}_i\big[\widehat{G}^{(1)}(\vx_i^T\vbeta|\vbeta)-\widehat{G}^{(1)}(\vx_i^T\vbeta_0|\vbeta_0)\big]\big[\vx_{i,-1} -\widehat{\E}(\vx_{i,-1}|\vx_i^T\vbeta_0)\big]^T\vgamma\\
		&- \frac{1}{n}\sum_{i=1}^n \big[G(\vx_i^T\vbeta_0|\vbeta_0)  - \widehat{G}(\vx_i^T\vbeta|\vbeta)\big]\widehat{G}^{(1)}(\vx_i^T\vbeta|\vbeta)\big[\widehat{\E}(\vx_{i,-1}|\vx_i^T\vbeta_0) -\widehat{\E}(\vx_{i,-1}|\vx_i^T\vbeta)\big]^T\vgamma\\
		& - \frac{1}{n}\sum_{i=1}^n \big[G(\vx_i^T\vbeta_0|\vbeta_0)  - \widehat{G}(\vx_i^T\vbeta|\vbeta)\big]\big[\widehat{G}^{(1)}(\vx_i^T\vbeta|\vbeta)-\widehat{G}^{(1)}(\vx_i^T\vbeta_0|\vbeta_0)\big]\big[\vx_{i,-1} -\widehat{\E}(\vx_{i,-1}|\vx_i^T\vbeta_0)\big]^T\vgamma\\
		& - \frac{1}{n}\sum_{i=1}^n \big[\widehat{G}(\vx_i^T\vbeta_0|\vbeta_0)  - \widehat{G}(\vx_i^T\vbeta|\vbeta)\big]\widehat{G}^{(1)}(\vx_i^T\vbeta_0|\vbeta_0)\big[\vx_{i,-1} -\widehat{\E}(\vx_{i,-1}|\vx_i^T\vbeta_0)\big]^T\vgamma\\
		\triangleq&-\sum_{k=1}^5 A_k(\vbeta),
		\end{align*} 
		where the definition of $A_k$'s, $k=1,\cdots,5$, is clear from the context. Each $A_k$ can be further decomposed.
		\bqa
		&&A_1 (\vbeta)\\
		&=&  \frac{1}{n}\sum_{i=1}^n \widetilde{\epsilon}_iG^{(1)}(\vx_i^T\vbeta|\vbeta)\big[\E(\vx_{i,-1}|\vx_i^T\vbeta_0) - \E(\vx_{i,-1}|\vx_i^T\vbeta)\big]^T\vgamma  \\
		& &+ \frac{1}{n}\sum_{i=1}^n \widetilde{\epsilon}_iG^{(1)}(\vx_i^T\vbeta|\vbeta)\big[\widehat{\E}(\vx_{i,-1}|\vx_i^T\vbeta_0) -\E(\vx_{i,-1}|\vx_i^T\vbeta_0)-\widehat{\E}(\vx_{i,-1}|\vx_i^T\vbeta_0)+\E(\vx_{i,-1}|\vx_i^T\vbeta)\big]^T\vgamma\\
		&& + \frac{1}{n}\sum_{i=1}^n \widetilde{\epsilon}_i\big[\widehat{G}^{(1)}(\vx_i^T\vbeta|\vbeta)-G^{(1)}(\vx_i^T\vbeta|\vbeta)\big]\big[\widehat{\E}(\vx_{i,-1}|\vx_i^T\vbeta_0) -\widehat{\E}(\vx_{i,-1}|\vx_i^T\vbeta)\big]^T\vgamma\\
		&\triangleq&\sum_{l=1}^3 A_{1l}(\vbeta).
		\eqa 
		\bqa
		&& A_2(\vbeta)\\
		& =&  \frac{1}{n}\sum_{i=1}^n \widetilde{\epsilon}_i\big[G^{(1)}(\vx_i^T\vbeta|\vbeta)-G^{(1)}(\vx_i^T\vbeta_0|\vbeta_0)\big]\big[\vx_{i,-1} -\E(\vx_{i,-1}|\vx_i^T\vbeta_0)\big]^T\vgamma\\
		&& + \frac{1}{n}\sum_{i=1}^n \widetilde{\epsilon}_i\big[G^{(1)}(\vx_i^T\vbeta|\vbeta)-G^{(1)}(\vx_i^T\vbeta_0|\vbeta_0)\big]\big[\E(\vx_{i,-1}|\vx_i^T\vbeta_0) -\widehat{\E}(\vx_{i,-1}|\vx_i^T\vbeta_0)\big]^T\vgamma\\
		& &+ \frac{1}{n}\sum_{i=1}^n \widetilde{\epsilon}_i\big[\widehat{G}^{(1)}(\vx_i^T\vbeta|\vbeta)-G^{(1)}(\vx_i^T\vbeta|\vbeta)-\widehat{G}^{(1)}(\vx_i^T\vbeta_0|\vbeta_0)+G^{(1)}(\vx_i^T\vbeta_0|\vbeta_0)\big]\\
		&&\qquad\qquad*\big[\vx_{i,-1} -\widehat{\E}(\vx_{i,-1}|\vx_i^T\vbeta_0)\big]^T\vgamma\\
		&\triangleq&\sum_{l=1}^3 A_{2l}(\vbeta),\\\ \\ \  \\
		&&A_3(\vbeta)\\
		&=&   \frac{1}{n}\sum_{i=1}^n \big[G(\vx_i^T\vbeta_0|\vbeta_0)  - G(\vx_i^T\vbeta|\vbeta)\big]G^{(1)}(\vx_i^T\vbeta|\vbeta)\big[\E(\vx_{i,-1}|\vx_i^T\vbeta_0) - \E(\vx_{i,-1}|\vx_i^T\vbeta)\big]^T\vgamma\\
		&&+\frac{1}{n}\sum_{i=1}^n\big[G(\vx_i^T\vbeta_0|\vbeta_0)  - G(\vx_i^T\vbeta|\vbeta)\big]G^{(1)}(\vx_i^T\vbeta|\vbeta)\\
		&&\qquad\quad*\big[\widehat{\E}(\vx_{i,-1}|\vx_i^T\vbeta_0)-\E(\vx_{i,-1}|\vx_i^T\vbeta_0)-\widehat{\E}(\vx_{i,-1}|\vx_i^T\vbeta_0)+\E(\vx_{i,-1}|\vx_i^T\vbeta)\big]^T\vgamma\\
		&&+ \frac{1}{n} \sum_{i=1}^n \big[G(\vx_i^T\vbeta_0|\vbeta_0)  -G(\vx_i^T\vbeta|\vbeta)\big] \big[\widehat{G}^{(1)}(\vx_i^T\vbeta|\vbeta) -G^{(1)}(\vx_i^T\vbeta|\vbeta)\big]\\
		&&\qquad\qquad*\big[\widehat{\E}(\vx_{i,-1}|\vx_i^T\vbeta_0) -\widehat{\E}(\vx_{i,-1}|\vx_i^T\vbeta)\big]^T\vgamma\\
		&& +  \frac{1}{n}\sum_{i=1}^n \big[G(\vx_i^T\vbeta|\vbeta)  - \widehat{G}(\vx_i^T\vbeta|\vbeta)\big]\widehat{G}^{(1)}(\vx_i^T\vbeta|\vbeta)\big[\widehat{\E}(\vx_{i,-1}|\vx_i^T\vbeta_0) -\widehat{\E}(\vx_{i,-1}|\vx_i^T\vbeta)\big]^T\vgamma\\
		&\triangleq&\sum_{l=1}^4 A_{3l}(\vbeta).
		\eqa
		\bqa
		&&A_4(\vbeta)+ A_5(\vbeta)\\
		&=& \frac{1}{n}\sum_{i=1}^n \big[G(\vx_i^T\vbeta_0|\vbeta_0)  - G(\vx_i^T\vbeta|\vbeta)\big]G^{(1)}(\vx_i^T\vbeta|\vbeta)\big[\vx_{i,-1} - \E(\vx_{i,-1}|\vx_i^T\vbeta_0)\big]^T\vgamma\\ 
		&&+\frac{1}{n}\sum_{i=1}^n \big[G(\vx_i^T\vbeta_0|\vbeta_0)  - G(\vx_i^T\vbeta|\vbeta)\big]G^{(1)}(\vx_i^T\vbeta|\vbeta)\big[\E(\vx_{i,-1}|\vx_i^T\vbeta_0) -\widehat{\E}(\vx_{i,-1}|\vx_i^T\vbeta_0)\big]^T\vgamma\\
		&& + \frac{1}{n}\sum_{i=1}^n \big[G(\vx_i^T\vbeta_0|\vbeta_0)  - G(\vx_i^T\vbeta|\vbeta)\big]\big[\widehat{G}^{(1)}(\vx_i^T\vbeta|\vbeta)-G^{(1)}(\vx_i^T\vbeta|\vbeta)][\vx_{i,-1} -\widehat{\E}(\vx_{i,-1}|\vx_i^T\vbeta_0)\big]^T\vgamma\\	
		&& +  \frac{1}{n}\sum_{i=1}^n \big[G(\vx_i^T\vbeta|\vbeta)  - \widehat{G}(\vx_i^T\vbeta|\vbeta)\big]\big[\widehat{G}^{(1)}(\vx_i^T\vbeta|\vbeta)-\widehat{G}^{(1)}(\vx_i^T\vbeta_0|\vbeta_0)\big]\big[\vx_{i,-1} -\widehat{\E}(\vx_{i,-1}|\vx_i^T\vbeta_0)\big]^T\vgamma\\
		&& +  \frac{1}{n}\sum_{i=1}^n \big[\widehat{G}(\vx_i^T\vbeta_0|\vbeta_0)  -G(\vx_i^T\vbeta_0|\vbeta_0) - \widehat{G}(\vx_i^T\vbeta|\vbeta) +G(\vx_i^T\vbeta|\vbeta)\big]\\
		&&\qquad\qquad*\widehat{G}^{(1)}(\vx_i^T\vbeta_0|\vbeta_0)\big[\vx_{i,-1} -\widehat{\E}(\vx_{i,-1}|\vx_i^T\vbeta_0)\big]^T\vgamma\\
		&\triangleq&\sum_{l=1}^5 B_{l}(\vbeta).
		\eqa 
		
		The proof involves evaluating the order of $A_k$ ($k=1,2,3$) and $A_4+A_5$.
		We provide the details of analyzing $B_1(\vbeta)$ and $B_2(\vbeta)$ which are two of the most challenging terms to study. All the other terms can be handled similarly. First, for the function $h(\cdot)$ defined in Lemma~\ref{lem:Gbound},
		we have $h(\vx_{-1}^T\vgamma) = \int_0^{\vx_{-1}^T\vgamma}af''_0(a+\vx^T\vbeta_0)da  = \frac{1}{2}f''_0(\vx^T\vbeta_1)(\vx_{-1}^T\vgamma)^2$ for some $\vbeta_1$ between $\vx^T\vbeta_0$ and $\vx^T\vbeta=\vx^T\vbeta_0+\vx_{-1}^T\vgamma $.
		By (\ref{G-1}) in Lemma~\ref{lem:Gbound}, we can write $-B_1=\sum_{q=1}^6B_{1q}$, where 
		\begin{align*}
		B_{11}&=n^{-1}\sum_{i=1}^n \big[f_0'(\vx_i^T\vbeta_0)\big]^2\vgamma^T\big[ \vx_i-\E(\vx_i|\vx_i^T\vbeta_0)\big]\big[ \vx_{i,-1}-\E(\vx_{i,-1}|\vx_i^T\vbeta_0)\big]^T\vgamma ,\\
		B_{12}&=n^{-1}\sum_{i=1}^nf_0'(\vx_i^T\vbeta_0) [f_0'(\vx_i^T\vbeta)-f_0'(\vx_i^T\vbeta_0)]\vgamma^T\big[ \vx_{i,-1}-\E(\vx_{i,-1}|\vx_i^T\vbeta_0)\big]\big[ \vx_{i,-1}-\E(\vx_{i,-1}|\vx_i^T\vbeta_0)\big]^T\vgamma ,\\
		B_{13}&=n^{-1}\sum_{i=1}^n[G^{(1)}(\vx_{i}^T \vbeta|\vbeta)-G^{(1)}(\vx_{i}^T \vbeta_0|\vbeta_0)] f_0'(\vx_i^T\vbeta) \Big\{\big[ \vx_{i,-1}-\E(\vx_{i,-1}|\vx_i^T\vbeta_0)\big]^T\vgamma\Big\}^2 ,\\
		B_{14}&=n^{-1}\sum_{i=1}^nG^{(1)}(\vx_{i}^T \vbeta|\vbeta) f_0'(\vx_i^T\vbeta)\vgamma^T\big[\E(\vx_{i,-1}|\vx_i^T\vbeta_0)-\E(\vx_{i,-1}|\vx_i^T\vbeta)\big] \big[ \vx_{i,-1}-\E(\vx_i|\vx_{i,-1}^T\vbeta_0)\big]^T\vgamma ,\\ 
		B_{15}&=(2n)^{-1}\sum_{i=1}^nG^{(1)}(\vx_{i}^T \vbeta|\vbeta) f''_0(\vx_i^T\vbeta_2) (\vx_{i,-1}^T\vgamma)^2 \big[ \vx_{i,-1}-\E(\vx_{i,-1}|\vx_i^T\vbeta_0)\big]^T\vgamma ,\\
		B_{16}&=(2n)^{-1}\sum_{i=1}^nG^{(1)}(\vx_{i}^T \vbeta|\vbeta) \E\big[  f''_0(\vx_i^T\vbeta_1) (\vx_{i,-1}^T\vgamma)^2 |\vx_i^T\vbeta\big]  \big[ \vx_{i,-1}-\E(\vx_{i,-1}|\vx_i^T\vbeta_0)\big]^T\vgamma,
		\end{align*}
		for some $\vbeta_1$ and $\vbeta_2$  between $\vbeta$ and $\vbeta_0$. Note that $G(\vx^T\vbeta_0|\vbeta_0) = f(\vx^T\vbeta_0)$, and $G^{(1)}(\vx^T\vbeta_0|\vbeta_0) = f'(\vx^T\vbeta_0)$.
		
		Observe that 
		\begin{align*}
		&B_{11} \\
		=&n^{-1}\sum_{i=1}^n \big[f_0'(\vx_i^T\vbeta_0)\big]^2\vgamma^T\E\big[\Cov (\vx_{-1}|\vx^T\vbeta_0)\big]\vgamma \\
		&+n^{-1}\sum_{i=1}^n \big[f_0'(\vx_i^T\vbeta_0)\big]^2\vgamma^T \big\{\big[ \vx_{i,-1}-\E(\vx_{i,-1}|\vx_i^T\vbeta_0)\big]\big[ \vx_{i,-1}-\E(\vx_{i,-1}|\vx_i^T\vbeta_0)\big]^T - \E\big[\Cov (\vx_{-1}|\vx^T\vbeta_0)\big]\big\}\vgamma\\
		=&B_{111} + B_{112},
		\end{align*}
		where the definitions of $B_{111}$ and $B_{112}$ are clear from the context.
		Assumption~\ref{A1}-(b) implies that $P\big(n^{-1}\sum_{i=1}^n [f_0'(\vx_i^T\vbeta_0)]^2\leq a^2/2\big)\leq \exp\left(\frac{-na^2}{4b^2}\right)$, according to Hoeffding's inequality. Combined with Assumption~\ref{A2}, we have $B_{111}\geq a^2\xi_0||\vgamma||_2^2/2$,  with probability at least $1-\exp\left(\frac{-na^2}{4b^2}\right)$. 
		Note that $\vx_{i,-1}-\E(\vx_{i,-1}|\vx_i^T\vbeta_0)$ is sub-Gaussian by Lemma~\ref{lem:subg}.
		Lemma~\ref{lem15NCL} implies  $|B_{112}|\leq c_0||\vgamma||_2^2\sqrt{\frac{s\log p}{n}}$, with probability at least $1-\exp(-c_1s\log p)$, for some positive constants $c_0$, $c_1$, and all $n$ sufficiently large. Therefore, we obtain that	$B_{11} \geq   ||\vgamma||_2^2 \left(a^2\xi_0/2- \sqrt{\frac{s\log p}{n}} \right) $  with probability at least $1-\exp(-c_1s\log p)$, for some positive constant $c_1$, and all $n$ sufficiently large.
		
		To evaluate $B_{12}$, we observe that there exists a point $\vbeta^r$ between $\vbeta_0$ and $\vbeta$ such that 
		\begin{align*}
		|B_{12}| =&  \Big|n^{-1}\sum_{i=1}^nf_0'(\vx_i^T\vbeta_0) f_0''(\vx_i^T\vbeta^r)(\vx_{i,-1}^T\vgamma)\vgamma^T\big[ \vx_{i,-1}-\E(\vx_{i,-1}|\vx_i^T\vbeta_0)\big]\big[ \vx_{i,-1}-\E(\vx_{i,-1}|\vx_i^T\vbeta_0)\big]^T\vgamma\Big|\\
		\leq &C  n^{-1}\sum_{i=1}^n|\vx_{i,-1}^T\vgamma|*\vgamma^T\big[ \vx_{i,-1}-\E(\vx_{i,-1}|\vx_i^T\vbeta_0)\big]\big[ \vx_{i,-1}-\E(\vx_{i,-1}|\vx_i^T\vbeta_0)\big]^T\vgamma\\
		\leq &C  \sqrt{n^{-1}\sum_{i=1}^n(\vx_{i,-1}^T\vgamma)^2}*\sqrt{n^{-1}\sum_{i=1}^n\left\{\big[ \vx_{i,-1}-\E(\vx_{i,-1}|\vx_i^T\vbeta_0)\big]^T\vgamma\right\}^4},
		\end{align*}
		for some positive constant $C$, given Assumption~\ref{A1}-(b). Recall that $\vxw_{i,-1}=  \vx_{i,-1}-\E(\vx_{i,-1}|\vx_i^T\vbeta_0) $. Lemma~\ref{lem:cube_rate} indicates that
		$$P\left( \Big|n^{-1}\sum_{i=1}^n\big[|\vxw_{i,-1}^T\vgamma|^4 -\E(|\vxw_{i,-1}^T\vgamma|^4)\big]\Big|\geq c_1||\vgamma||_2^4\left[\sqrt{\frac{s\log p}{n}}\vee \frac{s^2\log^2 p}{n}\right]\right)\leq \exp(-c_2  s\log p ),$$
		for some positive constants $c_1$, $c_2$, and all $n$ sufficiently large. 
		Hence $|B_{12}| \leq c_1 ||\vgamma||_2 *||\vgamma||_2^2= c_1||\vgamma||_2^3$,
		with probability at least $1-\exp(-c_2s\log p)$, for some positive  constants $c_1$, $c_2$, and all $n$ sufficiently large.

		H\"older's Inequality and Assumption~\ref{K4}-(b) imply that
		\begin{align*}
		|B_{13}| \leq&  b \Big( n^{-1}\sum_{i=1}^n|G^{(1)}(\vx_{i}^T \vbeta|\vbeta)-G^{(1)}(\vx_{i}^T \vbeta_0|\vbeta_0)|^2\Big)^{1/2}* \Big( n^{-1}\sum_{i=1}^n\big|\big[ \vx_{i,-1}-\E(\vx_{i,-1}|\vx_i^T\vbeta_0)\big]^T\vgamma\big|^4\Big)^{1/2}\\
		\leq &c_1||\vgamma||_2 * \Big( n^{-1}\sum_{i=1}^n\left|\big[ \vx_{i,-1}-\E(\vx_{i,-1}|\vx_i^T\vbeta_0)\big]^T\vgamma\right|^4\Big)^{1/2},
		\end{align*}
		with probability at least $1-\exp(-c_2\log p)$, for some positive constants $c_1$, $c_2$, and all $n$ sufficiently large.  
		Similarly as the derivation for $B_{12}$,  $|B_{13}| \leq c_1 ||\vgamma||_2 *||\vgamma||_2^2= c_1||\vgamma||_2^3$,
		with probability at least $1-\exp(-c_2\log p)$, for some positive  constants $c_1$, $c_2$, and all $n$ sufficiently large. 
		
		To evaluate $B_{14}$, we observe that 
		\begin{align*}
		&|B_{14}|\\
		\leq& \frac{b^2}{n}\sum_{i=1}^n\Big|\vgamma^T\big[\E(\vx_{i,-1}|\vx_i^T\vbeta_0)- \E(\vx_{i,-1}|\vx_i^T\vbeta)\big]  \big[ \vx_{i,-1}-\E(\vx_{i,-1}|\vx_i^T\vbeta_0)\big]^T\vgamma\Big|\\
		\leq & \frac{b^2}{n}\sum_{i=1}^n C\big[|\vx_{i,-1}^T\vgamma|  + (|\vx_i^T\vbeta|+|\vx_i^T\vbeta_0|)*||\vgamma||_2\big] \Big|\big[ \vx_{i,-1}-\E(\vx_{i,-1}|\vx_i^T\vbeta_0)\big]^T\vgamma\Big| \\
		\leq& b^2C||\vgamma||_2 \sqrt{\frac{2}{n}\sum_{i=1}^n  (\vx_{i,-1}^T\vgamma)^2+ (|\vx_i^T\vbeta|^2+|\vx_i^T\vbeta_0|^2)*||\vgamma||_2^2}\sqrt{\frac{1}{n}\sum_{i=1}^n  \Big\{\big[ \vx_{i,-1}-\E(\vx_{i,-1}|\vx_i^T\vbeta_0)\big]^T\vgamma\Big\}^2},
		\end{align*}
		where the second inequality applies Assumption~\ref{A2}-(c).
		Since $\vx_{i,-1} $  is mean-zero sub-Gaussian with variance proxy $\sigma_x^2$, similarly as previous steps, we have  
		$$\frac{1}{n}\sum_{i=1}^n  (\vx_{i,-1}^T\vgamma)^2\leq \left(\xi_3+c\sqrt{\frac{s\log p}{n}}\right)||\vgamma||_2^2,$$
		with probability at least $1-\exp(-c_1s\log p)$,   for some positive constants $c$, $c_1$. Since $[\vx_{i,-1}-\E(\vx_{i,-1}|\vx_i^T\vbeta_0)]$ is also sub-Gaussian by Lemma~\ref{lem:subg}, similarly we have $$\frac{1}{n}\sum_{i=1}^n  \Big\{\big[ \vx_{i,-1}-\E(\vx_{i,-1}|\vx_i^T\vbeta_0)\big]^T\vgamma\Big\}^2 \leq  c_2||\vgamma||_2^2,$$
		with probability at least $1-\exp(-c_1s\log p)$ for some positive constants $c_1$, $c_2$, and all $n$ sufficiently large. It follows that $|B_{14}|\leq c_0||\vgamma||_2^3 $. Since $\vx_{i,-1}$ and $[\vx_{i,-1}-\E(\vx_{i,-1}|\vx_i^T\vbeta_0)]$ are both sub-Gaussian, we apply the same techniques to $|B_{15}|$. Lemma~\ref{lem:cube_rate} ensures that $|B_{15}|\leq c_0\sigma_x^3||\vgamma||_2^3 $ with probability at least $1- \exp(-c_1 s\log p)$,  for some positive constants $c_0$, $c_1$, and all $n$ sufficiently large. 
		
		To bound $|B_{16}|$, Assumption~\ref{A2}-(a), \ref{K3} and \ref{K4}-(a)  imply that
		\begin{align*}
		|B_{16}|\leq& \frac{c}{n}\sum_{i=1}^n\Big|\vgamma^T\E (\vx_{i,-1} \vx_{i,-1}^T  |\vx_i^T\vbeta)\vgamma \big[ \vx_{i,-1}-\E(\vx_{i,-1}|\vx_i^T\vbeta_0)\big]^T\vgamma \Big|\\
		\leq &c\sqrt{ \frac{1}{n}\sum_{i=1}^n\left|\vgamma^T\E (\vx_{i,-1} \vx_{i,-1}^T  |\vx_i^T\vbeta)\vgamma\right|^2}*\sqrt{ \frac{1}{n}\sum_{i=1}^n\left| \vx_{i,-1}-\E(\vx_{i,-1}|\vx_i^T\vbeta_0)\big]^T\vgamma \right|^2}\\  
		\leq &c_0 \sqrt{\xi_4} ||\vgamma||_2^3,
		\end{align*}
		holds with probability at least  $1-\exp[-c_1\log(p\vee n)]- \exp(-c_1 s\log p)$ for some positive constants $c_0$, $c_1$, and all $n$ sufficiently large. In the above, the last inequality applies the sub-Gaussian property of $[\vx_{i,-1}-\E(\vx_{i,-1}|\vx_i^T\vbeta_0)]$, similarly as the derivation for $B_{14}$.
		
		Combining all the preceding results,  we conclude that $-B_1\geq  c_0\left(||\vgamma||_2^2-||\vgamma||_2^2\sqrt{\frac{\log p}{n}}\right)$ with probability at least $1-\exp(  -c_1  \log p ) $, and universal positive constants $c_0$ and $c_1$, for all $n$ sufficiently large.
		
		Assumption~\ref{K4}-(a), Lemma~\ref{lem:Gbound} and Lemma~\ref{lem:Ebound} imply that 
		\begin{align*}
		|B_{2}| &\leq  b\max_{ 1\leq i \leq n}\sup \limits_{\vbeta \in \bbB}\big|\big[\widehat{\E}(\vx_{i,-1}|\vx_i^T\vbeta)-\E(\vx_{i,-1}|\vx_i^T\vbeta)\big]^T\vgamma  \big|* n^{-1}\sum_{i=1}^n |G(\vx_i^T\vbeta_0|\vbeta_0)  - G(\vx_i^T\vbeta|\vbeta)|\\
		&  \leq d_1h^2||\vgamma||_2^2 ,
		\end{align*} 
		with probability at least $1- \exp[-c_2\log (p\vee n)]$, for some universal positive constants $d_1$, $c_2$, and all $n$ sufficiently large. 
		Similarly, 
		we can  show that for some positive universal constants $d_1$ and $c_1$, 
		\begin{align*}
		&|A_{11}| \leq d_1 ||\vgamma||_2^2 \sqrt{\frac{\log p}{n}}   ,& &|A_{12}| \leq d_1h^2 ||\vgamma||_2 \sqrt{\frac{\log p}{n}} ,& &|A_{13}| \leq d_1h ||\vgamma||_2^2  \sqrt{\frac{\log p}{n}},\\
		&|A_{21}| \leq d_1 ||\vgamma||_2^2  \sqrt{\frac{\log p}{n}}  ,& & |A_{22}| \leq d_1h^2||\vgamma||_2^2\sqrt{\frac{\log p}{n}},& &|A_{23}| \leq  d_1h||\vgamma||_2\sqrt{\frac{\log p}{n}},\\
		&|A_{31}| \leq d_1 ||\vgamma||_2^3  ,& &|A_{32}| \leq d_1h^2 ||\vgamma||_2^2,& &|A_{33}| \leq d_1h ||\vgamma||_2^3,\\
		&|A_{34}| \leq d_1 h^2||\vgamma||_2^2  ,& &|B_3| \leq d_1h ||\vgamma||_2^2 ,& &|B_4| \leq d_1h^3 ||\vgamma||_2  ,\\
		&|B_5| \leq d_1 h^2||\vgamma||_2 ,& 
		\end{align*} 
		hold with probability at least $1-\exp(-c_1\log p)$,  for all $n$ sufficiently large. Since  $ n^{-1} \log p = O( h^5) $, there exist some universal positive constants $c_0$, $c_1$, $c_2$ and $r\leq 1$ such that 
		$$ \big\langle\vS_n(\vbeta,\widehat{G},\widehat{\E})-\vS_n(\vbeta_0,\widehat{G},\widehat{\E}),  \vgamma\big\rangle  \geq c_0 ||\vgamma||_2^2 - c_1 h^2||\vgamma||_2,$$
		with probability at least $1-\exp( -c_1  \log p)  $, for any $\vbeta\in\bbB$ and all $n$ sufficiently large.		
	\end{proof}
	
	\begin{proof}[Proof of Theorem~\ref{Lasso_error}]
		By the definition of $\vbetah=(1,\vbetah_{-1}^T)^T$, we have 
		\begin{align}
		\big\langle \vS_n(\vbetah,\widehat{G},\widehat{\E})+\lambda \vkappah,  \vbeta_{-1} - \vbetah_{-1}\big\rangle =0,\label{stationary_cond}
		\end{align} 
		for all feasible $\vbeta$, where $\vkappah\in\partial||\vbetah_{-1}||_1$. In particular, $\big\langle\vS_n(\vbetah,\widehat{G},\widehat{\E})+\lambda \vkappah,  \vbeta_{0,-1} - \vbetah_{-1}\big\rangle=0$. By the property of convex function, we know that $||\vbeta_0||_1-||\vbetah||_1\geq  \big\langle  \vkappah,  \vbeta_{0,-1} - \vbetah_{-1}\big\rangle $, for any $\vkappah\in\partial||\vbetah_{-1}||_1$. Combining this with (\ref{stationary_cond}), we have
		\begin{align}
		\big\langle\vS_n(\vbetah,\widehat{G},\widehat{\E}),   \vetah\big\rangle= \lambda \big\langle  \vkappah,  -\vetah \big\rangle \leq \lambda(||\vbeta_{0,-1}||_1-||\vbetah_{-1}||_1).\label{cond1}
		\end{align}
		where $\vetah = \vbetah_{-1}-\vbeta_{0,-1}$. Applying the local restricted strong convexity condition  established in
		Lemma~\ref{lem:local_LRC} to $\big\langle\vS_n(\vbeta,\widehat{G},\widehat{\E})-\vS_n(\vbeta_0,\widehat{G},\widehat{\E}),  \vetah\big\rangle$, we obtain
		\begin{align}
		c_0 ||\vetah||_2^2 - c_1 h^2||\vetah||_2 &\leq   \big\langle\vS_n(\vbetah,\widehat{G},\widehat{\E}),  \vetah\big\rangle - \big\langle \vS_n(\vbeta_0,\widehat{G},\widehat{\E}),  \vetah\big\rangle \nonumber\\
		&\leq   \lambda(||\vbeta_{0,-1}||_1-||\vbetah_{-1}||_1) - \big\langle \vS_n(\vbeta_0,\widehat{G},\widehat{\E}),  \vetah\big\rangle \nonumber \\
		&\leq  \lambda(||\vbeta_{0,-1}||_1-||\vbetah_{-1}||_1)+||\vS_n(\vbeta_{0},\widehat{G},\widehat{\E})||_\infty || \vetah||_1,\label{res1}
		\end{align}
		with probability at least $1-\exp(-c_1\log p)$, for some positive constant $c_1$ and all $n$ sufficiently large. In the above, the second inequality uses (\ref{cond1}). Note that $||\vetah||_2\leq ||\vetah||_1$.
		This implies that $$c_0 ||\vetah||_2^2 \leq \left(c_1 h^2 +||\vS_n(\vbeta_0,\widehat{G},\widehat{\E})||_\infty\right) ||\vetah||_1+\lambda\left(||\vbeta_{0,-1}||_1-||\vbetah_{-1}||_1\right) .$$ 
		By Lemma~\ref{Lgrad},  $ \lambda/4 \geq ||\vS_n(\vbeta_0,\widehat{G},\widehat{\E})||_\infty$ with probability at least $1-\exp[-c_1\log (p\vee n)]$, since $\sqrt{n^{-1}\log(p\vee n)} \leq c_0h^{5/2}\leq c_0h^2$ for some positive constant $c_0$.  Let $\vetah_{\mathcal{S}}$ and $\vetah_{\mathcal{S}^C}$ be the sub-vectors of $\vetah$  on the support $\mathcal{S} = \{j:\beta_{0,j+1}\neq 0,\ j=1,\cdots,p-1\}$, and $\mathcal{S}^C$, respectively.
		Then we have 
		\begin{align}
		c_0 ||\vetah||_2^2  &\leq  \frac{\lambda}{2}  \left(||\vetah_{\mathcal{S}}||_1+||\vetah_{\mathcal{S}^C}||_1\right) + \lambda \left(||\vetah_{\mathcal{S}}||_1-||\vetah_{\mathcal{S}^C}||_1\right) \leq \frac{3\lambda}{2}  ||\vetah_{\mathcal{S}}||_1- \frac{\lambda}{2}  ||\vetah_{\mathcal{S}^C}||_1,\label{cond2}
		\end{align}
		which implies that $||\vetah_{\mathcal{S}^C}||_1\leq 3 ||\vetah_{\mathcal{S}}||_1$. Then (\ref{cond2}) implies that 
		\begin{align*}
		c_0 ||\vetah||_2^2 \leq \left(c_1 h^2 +|| \vS_n(\vbeta_0,\widehat{G},\widehat{\E})||_\infty + \lambda\right) ||\vetah||_1 \leq \frac{3\lambda}{2}||\vetah||_1 \leq 6\lambda ||\vetah_{\mathcal{S}}||_1\leq 6\lambda \sqrt{s}||\vetah||_2.
		\end{align*}
		Hence $||\vbetah-\vbeta_0||_2=||\vetah||_2\leq \frac{6}{c_0} \lambda\sqrt{s}$. Since $||\vbetah-\vbeta_0||_1=||\vetah||_1 \leq  4||\vetah_{\mathcal{S}}||_1 \leq  4\sqrt{s} ||\vetah||_2$, 
		the bound of $||\vbetah-\vbeta_0||_1$ follows immediately.
	\end{proof}

	\section{Proofs of results in Section~\ref{sec:inf_theorem} of the main paper} \label{sec:proof3.2}
	
	\begin{proof} [Proof of Lemma~\ref{dbound}]
		(1)   To derive the uniform error bound for $\vd_j(\vbetah,\eta)$, $j=2,\ldots,p$, we first prove the following two results:\\
		(i) $\vd_{0j} =  (\vOmega_{-(j-1),-(j-1)})^{-1}\vOmega_{-(j-1),(j-1)}$ is feasible in the sense that it satisfies the constraint of the Dantzig problem in (\ref{dj_def}) of the main paper with high probability, uniformly  in $j=2,\cdots,p$;\\
		(ii) $\vOmegah = \frac{1}{n}\sum_{i=1}^n [\widehat{G}^{(1)}(\vx_i^T\vbetah|\vbetah)]^2 \vxh_{i,-1}\vxh_{i,-1}^T$ satisfies a restricted eigenvalue condition on  the support of $\vphi_{0j}=\tau_{0j}^2\vtheta_{j}$, denoted by $\mathcal{S}_{\vphi_j}$,  with high probability, uniformly  in $j=2,\cdots,p$, 
		as shown below in (\ref{rsc_dj}).
		
		To prove (i), 
		the assumptions of Theorem~\ref{Lasso_error} imply that  $\sigma_x^2\sqrt{\frac{\log p}{n}} = O(h^{5/2}) =  o(\eta)$. Lemma~\ref{lem:subexp} implies that 
		$$P\left(\max_{2\leq j\leq p} \Big|\Big|\frac{1}{n} \sum_{i=1}^n [G^{(1)}(\vx_i^T\vbeta_0|\vbeta_0)]^2 \vxw_{i,-1}^T\vphi_{0j}  \vxw_{i,-j*}\Big|\Big|_\infty\geq \eta/2\right)\leq \exp(-c_2\log p),$$
		for some  positive constant $c_2$, and all $n$ sufficiently large.
		
		Lemma~\ref{lem:G1xx} implies that 
		\bqa
		P\left(\max_{2\leq j\leq p}\Big|\Big|\frac{1}{n} \sum_{i=1}^n \big\{[\widehat{G}^{(1)}(\vx_i^T\vbetah|\vbetah)]^2 - [G^{(1)}(\vx_i^T\vbeta_0|\vbeta_0)]^2\big\} \vxw_{i,-1}^T\vphi_{0j}  \vxw_{i,-j*}\Big|\Big|_\infty\geq c_0h\right)\leq \exp(-c_1\log p),
		\eqa
		for some positive constants $c_0$ and $c_1$, and all $n$ sufficiently large. 
		Lemma~\ref{lem:Ex_err} implies there exist some positive constants $c_0$, $c_1$, such 
		that for all $n$ sufficiently large,
		$$P\left( \max_{2\leq j \leq p}\Big|\Big|\frac{1}{n} \sum_{i=1}^n  \big[\widehat{G}^{(1)}(\vx_i^T\vbetah|\vbetah)\big]^2 (\vxh_{i,-1}  \vxh_{i,-1}^T-\vxw_{i,-1} \vxw_{i,-1}^T)\vphi_{0j}\Big|\Big|_\infty \geq c_0\sqrt{s}h^2\right)\leq \exp(-c_1\log p). $$
		Hence there exist some universal positive constants $d_2$ and $c_1$,  such that 
		for $\eta=d_2h$, for all $n$ sufficiently large,
		\begin{align*}
		&P\left(\max_{2\leq j\leq p} \Big|\Big| \frac{1}{n}\sum_{i=1}^n [\widehat{G}^{(1)}(\vx_i^T\vbetah|\vbetah)]^2 \vxh_{i,-1}^T\vphi_{0j} \vxh_{i,-j*}\Big|\Big|_\infty\geq \eta \right) \\
		\leq& P\left(\max_{2\leq j\leq p} \Big|\Big|\frac{1}{n} \sum_{i=1}^n [G^{(1)}(\vx_i^T\vbeta_0|\vbeta_0)]^2 \vxw_{i,-1}^T\vphi_{0j} \vxw_{i,-j*}\Big|\Big|_\infty\geq \eta/2\right)\\
		&+P\left(\max_{2\leq j\leq p}\Big|\Big|\frac{1}{n} \sum_{i=1}^n \big\{[\widehat{G}^{(1)}(\vx_i^T\vbetah|\vbetah)]^2 - [G^{(1)}(\vx_i^T\vbeta_0|\vbeta_0)]^2\big\} \vxw_{i,-1}^T\vphi_{0j} \vxw_{i,-j*}\Big|\Big|_\infty\geq \eta/4\right)\\ 
		&+P\left( \max_{2\leq j \leq p}\Big|\Big|\frac{1}{n} \sum_{i=1}^n  [\widehat{G}^{(1)}(\vx_i^T\vbetah|\vbetah)]^2 (\vxh_{i,-1}  \vxh_{i,-1}^T-\vxw_{i,-1} \vxw_{i,-1}^T)\vphi_{0j}\Big|\Big|_\infty \geq \eta/4\right)\\
		\leq&\exp(-c_1\log p),
		\end{align*} 
		as $\sqrt{s}h^2\leq d_0h^2*\sqrt{nh^5}\leq d_0h\sqrt{nh^7} =o(\eta)$ for some positive constant $d_0$ by the assumptions of Theorem~\ref{Lasso_error}. Since $\vphi_{0j} = \Big(-(\vd_{0j})_{1:(j-2)}^T,1,-(\vd_{0j})_{(j-1):(p-2)}^T\Big)^T$ by Lemma~\ref{lem:thetaj}, it implies that $ \vd_{0j}$ satisfies the constraint in (\ref{dj_def}) in Section~\ref{sec:inf_method} of the main paper, that is,
		$$\Big|\Big|n^{-1}\sum_{i=1}^n [\widehat{G}^{(1)}(\vx_i^T\vbetah|\vbetah)]^2 (\xh_{i,j}-\vxh_{i,-j}^T \vd_{0j})\vxh_{i,-j*}\Big|\Big|_\infty\leq \eta,$$
		with probability at least $1-\exp(-c_1\log p)$ uniformly in $j$, with $\eta=d_2h$, for some positive constants $d_2$, $c_1$ and all $n$ sufficiently large. This ensures that $ \vd_{0j}$ is feasible for (\ref{dj_def}) with probability at least $1-\exp(-c_1\log p)$ uniformly in $j$, for all $n$ sufficiently large. By the definition of $\vd_j(\vbetah,\eta)$, we have 
		$$P\Big(||\vd_j(\vbetah,\eta)||_1 \leq ||\vd_{0j}||_1 \mbox{ uniformly in }j  \Big)\geq1-\exp(-c_1\log p),$$ 
		with $\eta=d_2h$, for some positive constants $d_2$, $c_1$ and all $n$ sufficiently large. 
		
		Given (i), the event $\mathcal{E}_1 = \left\{ ||\vd_j(\vbetah,\eta)||_1 \leq ||\vd_{0j}||_1   \mbox{ uniformly in }j\right\}$
		holds with probability at least $1-\exp(-c_1\log p)$ for some positive constant $c_1$ and all $n$ sufficiently large. Define $\vw_j = \vphi_j(\vbetah, \eta)-\vphi_{0j}$.  
		Note that for any $j$ such that $\vphi_{0j}=\ve_{j-1}$, we have $\vd_{0j} =\vnull_{p-2}$, by Lemma~\ref{lem:thetaj}, where $\ve_{j-1}$ denotes the $(p-1)-$dimensional vector with the $(j-1)^{th}$ entry being one and all the other entries equal to zero, and $\vnull_{p-2}$ denotes the $(p-2)-$dimensional vector with all the entries equal to zero. On the event $\mathcal{E}_1$, if  $j$ is such that $\vphi_{0j}=\ve_{j-1}$,
		then $||\vw_j||_1=0$ (as we will have $||\vd_j(\vbetah,\eta)||_1 = ||\vd_{0j}||_1 =0$ for this case) and the results in Lemma~\ref{dbound}-(1) always hold. Therefore, without loss of generality, we assume that $\vphi_{0j}\neq\ve_{j-1}$ for any $j=2,\ldots,p$.   Recall that $\mathcal{S}_{\vphi_j}$ is the support set of $\vphi_{0j} = \tau_{0j}^2\vtheta_j$.
		On the event $\mathcal{E}_1$, 
		$$||\vw_{j,\mathcal{S}_{\vphi_j}^C}||_1 = \Big|\Big|\big[\vphi_j(\vbetah,\eta)\big]_{\mathcal{S}_{\vphi_j}^C} \Big|\Big|_1\leq \big|\big|\vphi_{0j, \mathcal{S}_{\vphi_j}} \big|\big|_1 - \Big|\Big| \big[\vphi_j(\vbetah,\eta) \big]_{\mathcal{S}_{\vphi_j}} \Big|\Big|_1\leq ||\vw_{j,\mathcal{S}_{\vphi_j}}||_1,$$
		where the first equality applies Lemma~\ref{lem:thetaj}; the second last inequality applies $||\vphi_j(\vbetah,\eta)||_1\leq||\vphi_{0j}||_1$; the last inequality applies $\left|\left|\vphi_{0j, \mathcal{S}_{\vphi_j}}\right|\right|_1=\left|\left| [\vphi_j(\vbetah,\eta)  ]_{\mathcal{S}_{\vphi_j}} -\vw_{j,\mathcal{S}_{\vphi_j}}\right|\right|_1\leq \left|\left| [\vphi_j(\vbetah,\eta)  ]_{\mathcal{S}_{\vphi_j}} \right|\right|+\left|\left|\vw_{j,\mathcal{S}_{\vphi_j}}\right|\right|_1$. Denote the set $\mathcal{V}_{2j}=\{\vv=(v_1,\cdots,v_{p-1})^T: ||\vv_{\mathcal{S}_{\vphi_j}^C}||_1 \leq ||\vv_{\mathcal{S}_{\vphi_j}}||_1, ||\vv||_2=1, v_{j-1}=0\}$, 
		for any $j=2,\ldots,p$. 
		We observe that on the event $\mathcal{E}_1$,   $\frac{\vw_j}{||\vw_j||_2}\in\kV_{2j}$, for any $j=2,\ldots,p$. 
		In the next step, we will prove
		\begin{align}
		P\left(\min_{2\leq j\leq p}\inf\limits_{\vv\in\mathcal{V}_{2j}} \vv^T \vOmegah \vv\leq \frac{\xi_2}{2} \right) \leq \exp(-d_0\log p), \label{rsc_dj}
		\end{align}	
		for some positive constant $d_0$ and all $n$ sufficiently large. 
		
		To prove (\ref{rsc_dj}), 
		Assumption~\ref{A2}-(a) indicates that  
		\begin{align*}
		\min_{2\leq j\leq p}\inf\limits_{\vv\in\mathcal{V}_{2j}}\vv^T \vOmegah \vv & \geq 	\min_{2\leq j\leq p} \inf\limits_{\vv\in\mathcal{V}_{2j}}\vv^T \vOmega\vv - \max_{2\leq j\leq p}\sup \limits_{\vv\in\mathcal{V}_{2j}}\vv^T\{\vOmegah-\vOmega\}\vv \\
		&\geq \xi_2 - \max_{2\leq j\leq p}\sup \limits_{\vv\in\mathcal{V}_{2j}}\vv^T\{\vOmegah-\vOmega\}\vv ,
		\end{align*}
		where $\xi_2$ is the positive constant defined in Assumption~\ref{A2}-(a).
		Note that 
		\begin{align*}
		\max_{2\leq j\leq p}\sup \limits_{\vv\in\mathcal{V}_{2j}}\big|\vv^T\{\vOmegah-\vOmega\} \vv\big| \leq&\max_{2\leq j\leq p}\sup \limits_{\vv\in\mathcal{V}_{2j}}\Big|\vv^T\Big( \frac{1}{n}\sum_{i=1}^n[\widehat{G}^{(1)}(\vx_i^T\vbetah|\vbetah)]^2[\vxh_i\vxh_i^T - \vxw_{i,-1}\vxw_{i,-1}^T]\Big)\vv\Big|\\
		&+\max_{2\leq j\leq p}\sup \limits_{\vv\in\mathcal{V}_{2j}} \Big| \frac{1}{n}\sum_{i=1}^n\Big\{ [\widehat{G}^{(1)}(\vx_i^T\vbetah|\vbetah)]^2 - [G^{(1)}(\vx_i^T\vbeta_0|\vbeta_0)]^2\Big\} (\vxw_{i,-1}^T\vv)^2\Big|  \\
		&+\max_{2\leq j\leq p}\sup \limits_{\vv\in\mathcal{V}_{2j}}\Big|\vv^T\Big( \frac{1}{n}\sum_{i=1}^n [G^{(1)}(\vx_i^T\vbeta_0|\vbeta_0)]^2 \vxw_{i,-1}\vxw_{i,-1}^T - \vOmega\Big)\vv\Big|.
		\end{align*}  
		Note that for any $\vv\in\kV_{2j}$, we have that $||\vv||_1\leq 2||\vv_{\mathcal{S}_{\vphi_j}}||_1\leq 2\sqrt{s_j}||\vv_{\mathcal{S}_{\vphi_j}}||_2\leq 2\sqrt{s_j}\leq 2\sqrt{\widetilde{s}}$, where $s_j = ||\vd_{0j}||_0$, and $\widetilde{s}=\max_{ 2\leq j\leq p}s_j$. Hence the proof of Lemma~\ref{lem:Ex_err} implies that $\max_{2\leq j\leq p}\sup \limits_{\vv\in\mathcal{V}_{2j}} \Big|\vv^T\Big( \frac{1}{n}\sum_{i=1}^n[\widehat{G}^{(1)}(\vx_i^T\vbetah|\vbetah)]^2[\vxh_{i,-1}\vxh_{i,-1}^T - \vxw_{i,-1}\vxw_{i,-1}^T]\Big)\vv\Big|\leq c\widetilde{s}sh^4\leq c_0\widetilde{s}h^4*nh^5\leq c_0h^2$ with probability at least $1-\exp(-c_1\log p)$, by the assumptions of Lemma~\ref{dbound}, for some positive constants $c$, $c_0$, $c_1$, and all $n$ sufficiently large.	The proof of Lemma~\ref{lem:G1xx} implies that with probability at least $1-\exp(-c_1\log p)$, 
		$$\max_{2\leq j\leq p}\sup \limits_{\vv\in\mathcal{V}_{2j}}  \Big| \frac{1}{n}\sum_{i=1}^n\Big\{ [\widehat{G}^{(1)}(\vx_i^T\vbetah|\vbetah)]^2 - [G^{(1)}(\vx_i^T\vbeta_0|\vbeta_0)]^2\Big\} (\vxw_{i,-1}^T\vv)^2\Big| \leq c_0h(1+\sqrt{n^{-1}\widetilde{s}\log p}),$$ for some positive constants $c_0$, $c_1$, and all $n$ sufficiently large. Note that $\sqrt{n^{-1}\widetilde{s}\log p} \leq 1$ by the assumptions of Lemma~\ref{dbound}.
		Similarly as Lemma~\ref{lem:subg}, 
		$(2A_i-1)G^{(1)}(\vx_i^T\vbeta_0|\vbeta_0)  \vxw_{i,-1}$ is sub-Gaussian with variance proxy no larger than $b^2\sigma_x^2$, where $b$ is the positive constant defined in Assumption~\ref{A1}-(b). Similarly as Lemma~\ref{lem15NCL}, we have
		$$P\left(\max_{2\leq j\leq p}\sup \limits_{\vv\in\mathcal{V}_{2j}} \Big|\vv^T\Big( \frac{1}{n}\sum_{i=1}^n  [G^{(1)}(\vx_i^T\vbeta_0|\vbeta_0)]^2\vxw_{i,-1}\vxw_{i,-1}^T - \vOmega\Big)\vv\Big|\geq  c_0\sigma_x^2\sqrt{\frac{\widetilde{s}\log p}{n}}\right)\leq \exp(-c_1\widetilde{s}\log p),$$
		where $\widetilde{s}=\max_{2\leq j\leq p}||\vd_{0j}||_0 = \max_{2\leq j\leq p}||\vphi_{0j}||_0 -1$, by Lemma~\ref{lem:thetaj}, for some positive constants $c_0$, $c_1$, and all $n$ sufficiently large.
		Hence there exist some positive constants $c_0$, $c_1$, such that for all $n$ sufficiently large,
		\begin{align*}
		P\left(\max_{2\leq j\leq p}\sup \limits_{\vv\in\mathcal{V}_{2j}} \big|\vv^T\{\vOmegah-\vOmega\} \vv\big|\geq c_0\big(h+\sqrt{n^{-1}\widetilde{s}\log p}\big)\right)\leq \exp(-c_1\log p).
		\end{align*} 
		Note that $\sqrt{\frac{\widetilde{s}\log p}{n}} \leq c_2\sqrt{\widetilde{s}h^5}\leq c_2h \sqrt{\widetilde{s}h^3}  = o(h)$, for some positive constant $c_2$, by the assumptions of Theorem~\ref{Lasso_error} and Lemma~\ref{dbound}.
		Hence  we conclude 
		\begin{align*}
		P\left(\min_{2\leq j\leq p}\inf\limits_{\vv\in\mathcal{V}_{2j}} \vv^T \vOmegah \vv\leq \frac{\xi_2}{2} \right) \leq  &P\Big(\xi_2 -\max_{2\leq j\leq p}\sup \limits_{\vv\in\mathcal{V}_{2j}} \vv^T\{\vOmegah-\vOmega\}\vv\leq \frac{\xi_2}{2} \Big) \\
		\leq &P\left(\max_{2\leq j\leq p}	\sup \limits_{\vv\in\mathcal{V}_{2j}}
		\big|\vv^T\{\vOmegah-\vOmega\} \vv \big| \geq \frac{\xi_2}{2}\right) \leq \exp(-c_1\log p), 
		\end{align*}	
		for some positive constant $c_1$ and all $n$ sufficiently large.
		It proves (\ref{rsc_dj}), i.e., (ii).

		Given (i) and (ii), the event $$\mathcal{E} = \left\{ \min_{2\leq j\leq p}\inf_{\vv\in\mathcal{V}_{2j}} \vv^T \vOmegah\vv\geq \frac{\xi_2}{2}\mbox{, and }||\vd_j(\vbetah,\eta)||_1 \leq ||\vd_{0j}||_1   \mbox{ uniformly in }j\right\},$$
		holds with probability at least $1-\exp(-c_1\log p)$ for some positive constant $c_1$ and all $n$ sufficiently large. 
		On the event $\mathcal{E}$,  $\vd_{0j}$ and $\vd_j(\vbetah,\eta)$  both satisfy the constraint of the Dantzig problem in (\ref{dj_def}) of the main paper with high probability. Then we have
		\begin{align*}
		||\vOmegah\vw_j||_\infty \leq  & \Big|\Big| \frac{1}{n}\sum_{i=1}^n [\widehat{G}^{(1)}(\vx_i^T\vbetah|\vbetah)]^2 \big[\xh_{i,j}-\vxh_{i,-j*}^T\vd_j(\vbetah,\eta)\big] \vxh_{i,-j*}\Big|\Big|_\infty \\
		&+  \Big|\Big| \frac{1}{n}\sum_{i=1}^n [\widehat{G}^{(1)}(\vx_i^T\vbetah|\vbetah)]^2 \big[\xh_{i,j}-\vxh_{i,-j*}^T\vd_{0j}\big] \vxh_{i,-j*}\Big|\Big|_\infty \leq 2\eta.
		\end{align*} 
		Note that on the event $\mathcal{E}$, 
		$\frac{\vw_j}{||\vw_j||_2}\in\kV_{2j}$ holds uniformly in $j$. On the event $\mathcal{E}$,    we have
		\begin{align*}
		||\vw_j||_1^2\leq 4s_j ||\vw_j||_2^2\leq \frac{ 8s_j\vw_j^T\vOmegah\vw_j }{\xi_2}\leq \frac{8s_j||\vw_j||_1||\vOmegah\vw_j||_\infty}{\xi_2}
		\leq \frac{16s_j\eta||\vw_j||_1}{\xi_2}\leq \frac{32s_j^{3/2}\eta||\vw_j||_2}{\xi_2},
		\end{align*}
		uniformly in $j$, 
		where the second last inequality applies the above result. It hence implies that $P\Big( ||\vw_j||_1 \leq \frac{16s_j\eta}{\xi_2} \mbox{ uniformly in }j\Big)\geq P(\mathcal{E})\geq 1-\exp(-c_1\log p)$, and $P\Big( ||\vw_j||_2 \leq \frac{8\sqrt{s_j}\eta}{\xi_2}\mbox{ uniformly in }j\Big)\geq 1-\exp(-c_1\log p)$, for some positive constant $c_1$ and all $n$ sufficiently large. Note that $||\vw_j||_1 = ||\vd_j(\vbetah,\eta)-\vd_{0j}||_1$ and $||\vw_j||_2=||\vd_j(\vbetah,\eta)-\vd_{0j}||_2$, (1) is proved.
		
		(2) Recall that $\tau^2_{0j} =\E\big\{[G^{(1)}(\vx_i^T\vbeta_0|\vbeta_0)]^2  \xw_{i,j}\vxw_{i,-1}^T\vphi_{0j}\big\}$. We have
		\begin{align*}
		\left|\tau^2_{0j}-\tau_j^{2}(\vbetah,\eta)\right| 
		\leq & \left|\frac{1}{n}\sum_{i=1}^n\left\{ [G^{(1)}(\vx_i^T\vbeta_0|\vbeta_0)]^2-[ \widehat{G}^{(1)}(\vx_i^T\vbetah|\vbetah)]^2 \right\} \xw_{i,j}\vxw_{i,-1}^T\vphi_{0j}\right|\\
		&+ \left|\frac{1}{n}\sum_{i=1}^n[\widehat{G}^{(1)}(\vx_i^T\vbetah|\vbetah)]^2[\xh_{i,j}  \vxh_{i,-1}^T-\xw_{i,j}  \vxw_{i,-1}^T]\vphi_{0j} \right|\\
		&+ \left|\frac{1}{n}\sum_{i=1}^n[\widehat{G}^{(1)}(\vx_i^T\vbetah|\vbetah)]^2\xh_{i,j}  \vxh_{i,-1}^T\vw_j \right|\\
		&+ \left|\frac{1}{n}\sum_{i=1}^n [G^{(1)}(\vx_i^T\vbeta_0|\vbeta_0)]^2 \xw_{i,j}\vxw_{i,-1}^T\vphi_{0j} -\tau^2_{0j} \right| \\
		\triangleq& |I_{j1}|+|I_{j2}|+ |I_{j3}|+ |I_{j4}|,
		\end{align*}
		where the definition of $I_{jk}$ is clear from the context, and $\vw_j = \vphi_j(\vbetah, \eta)-\vphi_{0j}$. Lemma~\ref{lem:G1xx} implies that there exist some positive constants $c_0$, $c_1$, such that for all $n$ sufficiently large,
		\begin{align*}
		&P\left( |I_{j1}|\geq c_0\sqrt{s_j}\eta\mbox{ uniformly in }j\right) \\
		\leq &P\left( \Big|\Big|\frac{1}{n} \sum_{i=1}^n \big\{[\widehat{G}^{(1)}(\vx_i^T\vbetah|\vbetah)]^2 - [G^{(1)}(\vx_i^T\vbeta_0|\vbeta_0)]^2\big\} \vxw_{i,-1}\vxw_{i,-1}^T\Big|\Big|_\infty \geq c_0\eta\right)\\
		\leq &\exp(-c_1\log p).
		\end{align*}  
		Lemma~\ref{lem:Ex_err} implies that 
		$P(|I_{j2}|\geq c_0\sqrt{s_j}\eta\mbox{ uniformly in }j) \leq \exp(-c_1\log p)$, for some positive constants $c_0$, $c_1$ and all $n$ sufficiently large,
		since $\sqrt{s}h^2\leq d_0h^2\sqrt{nh^5} = d_0h\sqrt{nh^7}=o(\eta)$, for some positive constant $d_0$ by the assumptions of Theorem~\ref{Lasso_error} and Lemma~\ref{dbound}. Lemma~\ref{G1func} and Assumption~\ref{K4}-(a) imply $ \max_{1\leq i\leq n}[\widehat{G}^{(1)}(\vx_i^T\vbetah|\vbetah)]^2\leq c_0(b+h)^2\leq 2c_0b^2$ with probability at least $1-\exp[-c_1\log(p\vee n)]$, for positive constants $c_0$, $c_1$, $b$, and all $n$ sufficiently large. Then there are some universal positive constants $c_0$ and $c_1$ such that for  all $n$ sufficiently large,
		\begin{align*}
		|I_{j3}|\leq& \max_{1\leq i\leq n}[\widehat{G}^{(1)}(\vx_i^T\vbetah|\vbetah)]^2*  \frac{1}{n}\sum_{i=1}^n |\xw_{i,j}  \vxw_{i,-1}^T\vw_j| +  \frac{1}{n}\sum_{i=1}^n [\widehat{G}^{(1)}(\vx_i^T\vbetah|\vbetah)]^2|(\xh_{i,j}  \vxh_{i,-1}^T-\xw_{i,j}  \vxw_{i,-1}^T)\vw_j|\\
		\leq&  c_0 \sqrt{ \frac{1}{n}\sum_{i=1}^n\xw_{i,j}^2}\sqrt{  \frac{1}{n}\sum_{i=1}^n (\vxw_{i,-1}^T\vw_j)^2} + c_0 ||\vw_j||_1* \frac{1}{n}\sum_{i=1}^n[\widehat{G}^{(1)}(\vx_i^T\vbetah|\vbetah)]^2||\vxh_{i,-1} \vxh_{i,-1}^T-\vxw_{i,-1} \vxw_{i,-1}^T||_\infty\\
		\leq& c_0 \sqrt{ \frac{1}{n}\sum_{i=1}^n\xw_{i,j}^2}\sqrt{ \vw_j^T\Big( \frac{1}{n}\sum_{i=1}^n\vxw_{i,-1}\vxw_{i,-1}^T\Big)\vw_j}\\
		&+ c_0 \eta\sqrt{\widetilde{s}} * \frac{1}{n}\sum_{i=1}^n[\widehat{G}^{(1)}(\vx_i^T\vbetah|\vbetah)]^2||\vxh_{i,-1} \vxh_{i,-1}^T-\vxw_{i,-1}\vxw_{i,-1}^T||_\infty,
		\end{align*} 
		uniformly in $j$ with probability at least $1-\exp[-c_1\log (p\vee n)]$, applying the bound of $||\vw_j||_1$ derived in the proof of (1).  Lemma~\ref{lem14NCL} and Lemma~\ref{lem15NCL} imply that for some universal positive constants $c_0$, $c_1$, and all  $n$ sufficiently large,
		\begin{align*}
		P\left( \max_{2\leq j\leq p} \frac{1}{n}\sum_{i=1}^n\xw_{i,j}^2\geq \max_{2\leq j\leq p}\E (\xw_{i,j}^2)+c_0\sigma_x^2\sqrt{\frac{\log p}{n}}\right)&\leq \exp(-c_1\log p),\\
		P\left( \Big|\Big|\frac{1}{n}\sum_{i=1}^n\vxw_{i,-1}\vxw_{i,-1}^T-\E(\vxw_{i,-1}\vxw_{i,-1}^T)\Big|\Big|_\infty\geq c_0 \sigma_x^2\sqrt{\frac{\log p}{n}}\right)& \leq\exp(-c_1\log p),\\
		P\bigg( \big|\vw_j^T\Big( \frac{1}{n}\sum_{i=1}^n\vxw_{i,-1}\vxw_{i,-1}^T-\E(\vxw_{i,-1}\vxw_{i,-1}^T)\Big)\vw_j\big| &\\
		\geq c_0 \sigma_x^2||\vw_j||_2^2\sqrt{\frac{\log p}{n}},\mbox{ uniformly in }j\bigg) &\leq  \exp(-c_1\log p),
		\end{align*} 
		with $\vw_j = \vphi_j(\vbetah, \eta)-\vphi_{0j}$. Since $\xw_{i,j}$ is sub-Gaussian with variance proxy at most $2\sigma_x^2$ uniformly in $j$ by Lemma~\ref{lem:subg}, we have $\max_{2\leq j\leq p}\E (\xw_{i,j}^2)\leq 2\sigma_x^2$. Note that $\E\left[\Cov(\vx_{-1}|\vx^T\vbeta_0)\right] = \E(\vxw_{-1}\vxw_{-1}^T)$. Assumption~\ref{A2}-(a) thus implies that $\xi_1$ is the largest eigenvalue of $\E(\vxw_{-1}\vxw_{-1}^T)$. The results in part (1) of the lemma indicate that there exist some positive constants $c_0$, $c_1$, such that for all $n$ sufficiently large,
		$$\vw_j^T\Big( \frac{1}{n}\sum_{i=1}^n\vxw_{i,-1}\vxw_{i,-1}^T\Big)\vw_j\leq 2 \vw_j^T\E(\vxw_{-1}\vxw_{-1}^T)\vw_j\leq 2\xi_1  ||\vw_j||_2^2\leq c_0\xi_1s_j\eta^2 \mbox{	uniformly in }j,$$  
		with probability at least  $1-\exp(-c_1\log p)$, where the first inequality applies Lemma~\ref{lem14NCL}. Lemma~\ref{lem:Ex_err} implies that there exist some positive constants $c_0$, $c_1$, such that for all $n$ sufficiently large, $$\frac{1}{n}\sum_{i=1}^n[\widehat{G}^{(1)}(\vx_i^T\vbetah|\vbetah)]^2||\vxh_{i,-1} \vxh_{i,-1}^T-\vxw_{i,-1}  \vxw_{i,-1}^T||_\infty\leq c_0\sqrt{s}h^2,$$ with probability at least  $1-\exp(-c_1\log p)$. The assumptions of Theorem~\ref{Lasso_error} and Lemma~\ref{dbound} imply that  $\sqrt{s\widetilde{s}}h^2\eta = \eta\sqrt{sh^3} \sqrt{\widetilde{s}h}\leq d_0\eta \sqrt{nh^8}=o(\eta)$, for some positive constant $d_0$, since $nh^6=O(1)$. 
		It follows that $ |I_{j3}|\leq c_0\sqrt{s_j}\eta$ uniformly in $j$ with probability at least  $1-\exp(-c_1\log p)$ for universal positive constants $c_0$, $c_1$, and all $n$ sufficiently large,
		
		To uniformly bound $|I_{4j}|$,  as in the proof of part (1) of the lemma, we observe that $G^{(1)}(\vx_i^T\vbeta_0|\vbeta_0) \xw_{i,j}$ and $G^{(1)}(\vx_i^T\vbeta_0)*\vxw_{i,-1}^T\vphi_{0j}$ are both sub-Gaussian with variance proxy at most $c_1b^2\sigma_x^2$. There exist some positive constants $c_0$, $c$, such that for all $n$ sufficiently large,
		\begin{align*}
		P\left( |I_{j4}|\geq c_0\sqrt{s_j}\eta\mbox{ uniformly in }j\right)\leq &\sum_{j=2}^pP\left( \Big|\frac{1}{n}\sum_{i=1}^n [G^{(1)}(\vx_i^T\vbeta_0|\vbeta_0)]^2 \xw_{i,j}\vxw_{i,-1}^T\vphi_{0j} -\tau^2_{0j} \Big| \geq c_0\sqrt{s_j}\eta \right)\\\leq &p\exp(-cn\eta^2).
		\end{align*} 
		Since $\log p\leq d_0 nh^5= o(n\eta^2)$ for some positive constant $d_0$, there exist some positive constants $c_0$, $c_1$, such that for all $n$ sufficiently large,
		\begin{align*}
		P\Big( |\tau^2_{0j}-\tau_j^{2}(\vbetah,\eta)| \leq c_0\sqrt{s_j}\eta\mbox{ uniformly in }j\Big)\leq  \exp(-c_1\log p).
		\end{align*} 
		The first result of part (2) of Lemma~\ref{dbound} is proved.  Note that
		\begin{align*}
		&P\Big( |\tau^{-2}_{0j}-\tau_j^{-2}(\vbetah,\eta)| \geq c_0\sqrt{s_j}\eta\mbox{ uniformly in }j\Big)\\
		=&P\Big( \tau^{-2}_{0j}*\tau_j^{-2}(\vbetah,\eta)|\tau^{2}_{0j}-\tau_j^{2}(\vbetah,\eta)| \geq c_0\sqrt{s_j}\eta\mbox{ uniformly in }j\Big)\\
		\leq &	P\Big( |\tau^2_{0j}-\tau_j^{2}(\vbetah,\eta)| \leq c_0\xi_2^{2}\sqrt{s_j}\eta/2 \mbox{ uniformly in }j\Big) + 	P\Big(\max_{2\leq j\leq p}\tau^{-2}_{0j}*\tau_j^{-2}(\vbetah,\eta) \leq 2\xi_2^{-2}\Big)\\
		\leq& \exp(-c_1\log p) + 	P\Big(\max_{2\leq j\leq p}\tau^{-2}_{0j}*\tau_j^{-2}(\vbetah,\eta) \leq 2\xi_2^{-2}\Big),
		\end{align*} 
		for some positive constants $c_0$ $c_1$, and all $n$ sufficiently large. Lemma~\ref{lem:thetaj} implies that  $\xi_2\leq \tau^{2}_{0j}\leq b^2\xi_1$, uniformly in $j$.  By the first result of part (2) of the lemma, we know that $\xi_2/2\leq\tau_j^{2}(\vbetah,\eta)\leq b^2\xi_1/2$ with probability at least $1-\exp(-c_1\log p)$, for some positive constant $c_1$ and all $n$ sufficiently large. Then we have $P\Big(\max_{2\leq j\leq p}\tau^{-2}_{0j}*\tau_j^{-2}(\vbetah,\eta) \leq 2\xi_2^{-2}\Big)\leq \exp(-c_1\log p)$ for some positive constant $c_1$ and all $n$ sufficiently large. Hence the second result of part (2) of Lemma~\ref{dbound} is proved.
		
		(3) Observe that  
		\begin{align*}
		||\vtheta_j(\vbetah,\eta)-\vtheta_{j} ||_1 
		= &	 ||\tau_j^{-2}(\vbetah,\eta)\vphi_j(\vbetah,\eta) - \tau_{0j}^{-2} \vphi_{0j}||_1 
		\leq  \tau_j^{-2}(\vbetah,\eta) ||\vw_j||_1 + |\tau_j^{-2}(\vbetah,\eta)  - \tau_{0j}^{-2}|* || \vphi_{0j}||_1 ,
		\end{align*}  
		where $\vw_j = \vphi_j(\vbetah, \eta)-\vphi_{0j}$.
		Lemma~\ref{lem:thetaj} implies  
		$ || \vphi_{0j}||_1 =  \tau_{0j}^{2}|| \vtheta_{j}||_1\leq \sqrt{s_j}\tau_{0j}^{2}|| \vtheta_{j}||_2\leq b\sqrt{s_j}\xi_1\xi_2^{-1}$ uniformly in $j$, and $\max_{2\leq j\leq p}\tau_j^{-2}(\vbetah,\eta)\leq \max_{2\leq j\leq p}\tau_{0j}^{-2}+c_0\eta \leq 2\xi_2^{-2}$, with probability at least $1-\exp(-c_1\log p)$, for some positive constants $c_0$, $c_1$ and all $n$ sufficiently large. Results in (1) and (2) imply that there exist some positive constants $d_0$, $d_1$, $d_2$, such that for all $n$ sufficiently large,
		\begin{align*}
		||\vtheta_j(\vbetah,\eta)-\vtheta_{j} ||_1 
		&\leq 	d_1s_j\eta\xi_2^{-2} + d_1s_j\eta*b^2\xi_1\xi_2^{-1}\leq d_0s_j\eta \mbox{ uniformly in }j
		\end{align*} 
		with probability at least $1-\exp(-d_2\log p)$.
		Similar proofs can be applied for the uniform bound of $ ||\vtheta_j(\vbetah,\eta)-\vtheta_{j} ||_2$.
	\end{proof}

	\begin{proof}[Proof of Theorem~\ref{desparse}] 
		Recall that 
		\begin{align*}
		\vS_n(\vbetah,\widehat{G},\widehat{\E})	&=-n^{-1}\sum_{i=1}^n\Big[\widetilde{\epsilon}_i+G(\vx_i^T\vbeta_0|\vbeta_0)-\widehat{G}(\vx_i^T\vbetah|\vbetah)\Big]\widehat{G}^{(1)}(\vx_i^T\vbetah|\vbetah)\vxh_{i,-1},\\
		\vS_n(\vbeta_0,G, \E)	&=-n^{-1}\sum_{i=1}^n\widetilde{\epsilon}_iG^{(1)}(\vx_i^T\vbeta_0|\vbeta_0)\vxw_{i,-1},
		\end{align*}
		where $\vxw_{i,-1} = \vx_{i,-1}-\E(\vx_{i,-1}|\vx_i^T\vbeta_0)$, $\vxh_{i,-1} = \vx_{i,-1}-\widehat{\E}(\vx_{i,-1}|\vx_i^T\vbetah)$, and $G^{(1)}(\vx_i^T\vbeta_0|\vbeta_0) = f'_0(\vx_i^T\vbeta_0)$. Recall that the debiased estimator $\vbetaw_{-1}=\vbetah_{-1}-\vThetah\vS_n(\vbetah,\widehat{G},\widehat{\E})$. We have
		\begin{align*}
		&\sqrt{n}(\vbetaw_{-1}-\vbeta_{0,-1}) \\
		= &\sqrt{n} (\vbetah_{-1}-\vbeta_{0,-1}) -\sqrt{n}\vThetah^T \vS_n(\vbetah,\widehat{G},\widehat{\E}) \\
		=&-\sqrt{n}\vThetah^T\vS_n(\vbeta_0,G, \E) + \sqrt{n} (\vbetah_{-1}-\vbeta_{0,-1}) -\sqrt{n}\vThetah^T\big[\vS_n(\vbetah,\widehat{G},\widehat{\E}) -\vS_n(\vbeta_0,G, \E) \big]\\
		=&-\sqrt{n}\vThetah^T\vS_n(\vbeta_0,G, \E) + \sqrt{n}(\vI_{p-1}- \vThetah^T\vJ_1) (\vbetah-\vbeta_0) \\
		&-\sqrt{n}\vThetah^T\big[\vS_n(\vbetah,\widehat{G},\widehat{\E}) -\vS_n(\vbeta_0,G, \E)-  \vJ_1(\vbetah-\vbeta_0)  \big]\\ 
		\triangleq& \vA_{n1} +\vA_{n2} +\vA_{n3},
		\end{align*}
		where the definition of $\vA_{ni}$ is clear from the context, $\vI_{p-1}$ is the $(p-1)-$dimensional identity matrix, $\vJ_1 =  n^{-1}\sum_{i=1}^{n}  \big[\widehat{G}^{(1)}(\vx_i^T\vbetah|\vbetah)\big]^2\vxh_{i,-1}\vxh_{i,-1}^T$ is the leading term in the approximation to $\nabla \vS_n(\vbeta_0,G, \E) $. Let $\Delta_{n,p} =  s h^3 \sqrt{n} +  \widetilde{s}h \sqrt{\log  p}$. To prove the theorem, we will verify:\\
		(1) The $j^{th}$ component of $\vA_{n1}$, denoted as $ \vA_{n1j} $, is approximately normal, for any $2\leq j \leq p$;\\
		(2) $P(|| \vA_{n2} ||_\infty\geq c_0\Delta_{n,p})\leq \exp(-c_1\log p)$;\\
		(3) $P(|| \vA_{n3} ||_\infty\geq c_0\Delta_{n,p})\leq \exp(-c_1\log p)$;\\
		for some positive constants $c_0$ and $c_1$, and all $n$ sufficiently large.
		
		To prove (1), we observe that $-\sqrt{n} \ve_{j-1}^T\vTheta^T\vS_n(\vbeta_0,G, \E) \rad N(\vnull, \ve_{j-1}^T\vTheta^T\vLam\vTheta \ve_{j-1})$ by the central limit theorem, with $\vLam = \E \Big\{\big[\widetilde{\epsilon}_iG^{(1)}(\vx_i^T\vbeta_0|\vbeta_0)\big]^2 \vxw_{i,-1}\vxw_{i,-1}^T\Big\}$, and $\vTheta\ve_{j-1} = \vtheta_j$, where $\ve_{j-1}$ is the $(j-1)^{th}$ column of the identity matrix $\vI_{p-1}$. It suffices to prove that  
		$$P\left(\max_{2\leq j \leq p}\Big|\sqrt{n}(\vthetah_j-\vtheta_j)^T\vS_n(\vbeta_0,G, \E)  \Big| \geq c_0\Delta_{n,p}\right)\leq \exp(-c_1\log p),$$
		for some positive constants $c_0$, $c_1$, and all $n$ sufficiently large.
		
		Note that $2(2A_i-1)$ is a Rademacher sequence and independent of $(\epsilon_i,\vx_i)$. 
		We know that
		$$\sqrt{n}\vS_n(\vbeta_0,G, \E) = n^{-1/2}\sum_{i=1}^{n}2(2A_i-1)[\ep_i+g(\vx_i)] G^{(1)}(\vx_i^T\vbeta_0|\vbeta_0) \vxw_{i,-1},$$
		has mean zero; $2(2A_i-1)[\ep_i+g(\vx_i)] G^{(1)}(\vx_i^T\vbeta_0|\vbeta_0) $ and $\vxw_{i,-1}$ are both sub-Gaussian. Note that
		\begin{align*}
		\E\left[\vS_n(\vbeta_0,G, \E) \right]=&\E_{\vx_i^T\vbeta_0}\left[\vS_n(\vbeta_0,G, \E)|\vx_i^T\vbeta_0 \right]\\
		=&\E_{\vx_i^T\vbeta_0}\left\{2G^{(1)}(\vx_i^T\vbeta_0|\vbeta_0) \E\Big\{(2A_i-1)[\ep_i+g(\vx_i)] \vxw_{i,-1} \big|\vx_i^T\vbeta_0\Big\}\right\}\\
		=&\vnull_{p-1},
		\end{align*}
		where $\vnull_{p-1}$ is  a $(p-1)-$dimensional vector with all entries $0$. Hence		Lemma~\ref{lem14NCL} implies that
		$$P\Big(\big|\big|\sqrt{n}\vS_n(\vbeta_0,G, \E) \big|\big|_\infty\geq c_0  \sqrt{ \log p}\Big)\leq \exp(-c_1\log p),$$
		for some positive constants $c_0$, $c_1$ and all $n$ sufficiently large. Hence, according to Lemma~\ref{dbound}, there exist some positive constants $c_0$, $c_1$, $c_2$, such that for all $n$ sufficiently large,
		\begin{align*}
		&P\left(\max_{2\leq j \leq p}\Big|\sqrt{n}(\vthetah_j-\vtheta_j)^T\vS_n(\vbeta_0,G, \E) \Big| \geq c_0c_2\widetilde{s}h \sqrt{\log p}\right)\\
		\leq& P\Big(\big|\big|\sqrt{n}\vS_n(\vbeta_0,G, \E) \big|\big|_\infty\geq c_0  \sqrt{ \log p}\Big) +  \sum_{j=2}^pP \Big(||\vthetah_j-\vtheta_j||_1\geq c_2\widetilde{s}h  \Big)\\
		\leq& \exp(-c_1\log p),
		\end{align*}
		which completes the proof for (1).
		
		To prove (2), recall the definition of $\vthetah_j$ in (\ref{thetaj_def}) of the main paper.   Lemma~\ref{lem:thetaj} and Lemma~\ref{dbound} imply  that there exist some positive constants $c_0$, $c_1$, $c_2$, such that for all $n$ sufficiently large,
		\begin{align*}
		&\Big|\Big|n^{-1}\sum_{i=1}^{n}  [\widehat{G}^{(1)}(\vx_i^T\vbetah|\vbetah)]^2\vxh_i^T\vthetah_j\vxh_{i,-j*}\Big|\Big|_\infty\\
		\leq& \Big|\Big|n^{-1}\sum_{i=1}^{n}  [\widehat{G}^{(1)}(\vx_i^T\vbetah|\vbetah)]^2\vxh_i^T\vphi_j(\vbetah,\eta)\vxh_{i,-j*}\Big|\Big|_\infty \widehat{\tau}_j^{-2} \\
		\leq &\eta \widehat{\tau}_j^{-2}\leq  \eta \big(\tau_j^{-2}+c_0\sqrt{s_j}\eta\big)= c_1\eta,
		\end{align*}
		uniformly in $j$, with probability at least $1-\exp(-c_2\log p)$. By (\ref{tauj_def}) and (\ref{thetaj_def}) of the main paper, we have
		$\Big| 1-n^{-1}\sum_{i=1}^{n}  [\widehat{G}^{(1)}(\vx_i^T\vbetah|\vbetah)]^2\vxh_{i,-1}^T\vthetah_j\xh_{i,j}\Big| = |1-\widehat{\tau}_j^{-2}*\widehat{\tau}_j^{2}| = 0$. Hence  $||\vI_{p-1}-\vThetah^T \vJ_1||_\infty \leq d_0\eta$ with  probability at least $1-\exp(-d_1\log p)$, for some positive constants $d_0$, $d_1$ and all $n$ sufficiently large. Therefore, there exist some positive constants $c_0\geq d_0$, $c_1$, $c_2$, such that for all $n$ sufficiently large,
		\begin{align*}
		P(|| \vA_{n2} ||_\infty\geq c_0c_2\sqrt{n}s\lambda\eta)
		\leq &P(||\vI_{p-1}-\vThetah^T \vJ_1 ||_\infty\geq c_0 \eta)+P(  ||\vbetah-\vbeta_0||_1\geq c_2s\lambda)\\
		\leq& \exp(-c_1\log p),
		\end{align*} 
		where the second inequality applies the above result and Theorem~\ref{Lasso_error}. By the assumptions of Theorem~\ref{desparse}, we have $\sqrt{n}s\lambda\eta=d_0sh^3\sqrt{n}$ for some positive constant $d_0$. It completes the proof for (2).
		
		To prove (3), we have
		\begin{align*}
		&\vA_{n3}\\
		=&-\sqrt{n}\vThetah^T\big[\vS_n(\vbetah,\widehat{G},\widehat{\E}) -\vS_n(\vbeta_0,G, \E)-  \vJ_1(\vbetah-\vbeta_0)  \big]\\
		=& n^{-1/2} \vThetah^T \sum_{i=1}^{n}\widetilde{\epsilon}_i G^{(1)}(\vx_i^T\vbeta_0|\vbeta_0)  (\vxh_{i,-1}-\vxw_{i,-1})\\ 
		&+n^{-1/2} \vThetah^T\sum_{i=1}^{n}\widetilde{\epsilon}_i \big[\widehat{G}^{(1)}(\vx_i^T\vbetah|\vbetah) -G^{(1)}(\vx_i^T\vbeta_0|\vbeta_0)\big]\vxh_{i,-1}\\
		&+n^{-1/2} \vThetah^T \sum_{i=1}^{n}\big[G(\vx_{i}^{T} \vbeta_0|\vbeta_0)-\widehat{G}(\vx_{i}^{T} \vbetah|\vbetah) - \widehat{G}^{(1)}(\vx_i^T\vbetah|\vbetah) \vxh_{i,-1}^T(\vbeta_{0,-1}-\vbetah_{-1})\big] \widehat{G}^{(1)}(\vx_i^T\vbetah|\vbetah)\vxh_{i,-1} \\
		\triangleq&\sum_{k=1}^3\vA_{n3k},
		\end{align*}
		where the definition of $\vA_{n3k}$ ($k=1,2,3$) 
		is clear from the context. We observe that
		\begin{align*}
		&\vA_{n31}\\
		=&n^{-1/2} \vTheta^T \sum_{i=1}^{n}\widetilde{\epsilon}_i G^{(1)}(\vx_i^T\vbeta_0|\vbeta_0)  (\vxh_{i,-1}-\vxw_{i,-1}) +n^{-1/2} (\vThetah-\vTheta)^T \sum_{i=1}^{n}\widetilde{\epsilon}_i G^{(1)}(\vx_i^T\vbeta_0|\vbeta_0)  (\vxh_{i,-1}-\vxw_{i,-1})\\ 
		\triangleq&\vA_{n311}+\vA_{n312},
		\end{align*}
		where the definition of $\vA_{n311}$ and $\vA_{n312}$ is clear from the context.
		Lemma~\ref{lem:Ehat_E} implies that $P(||\vA_{n311}||_\infty\geq c_0h\sqrt{s\log(p\vee  n)})\leq\exp(-c_1\log p)$, for some positive constants $c_0$, $c_1$ and all $n$ sufficiently large. Furthermore, 
		\begin{align*}
		&P\left(||\vA_{n312}||_\infty\geq c_0\widetilde{s}\eta h\sqrt{s\log(p\vee  n)}\right)\\
		\leq &P\left(\Big|\Big|n^{-1/2}  \sum_{i=1}^{n}\widetilde{\epsilon}_i G^{(1)}(\vx_i^T\vbeta_0|\vbeta_0)  (\vxh_{i,-1}-\vxw_{i,-1})\Big|\Big|_\infty\geq c_0 h\sqrt{s\log(p\vee  n)}\right) +P\left(\max_{2\leq j \leq p}||\vthetah_j-\vtheta_j||_1\geq \widetilde{s}\eta\right)\\
		\leq&\exp(-c_1\log p),
		\end{align*}
		for some positive constants $c_0$ and $c_1$, and all $n$ sufficiently large. In the above, the first inequality applies H\"older's inequality; the second inequality
		applies lemma~\ref{lem:Ehat_E} and Lemma~\ref{dbound}. The assumptions of Theorem~\ref{Lasso_error} and Lemma~\ref{dbound} imply that $h\sqrt{s\log(p\vee  n)} \leq d_0\sqrt{nh^7}\leq d_0  sh^3\sqrt{n}$, and $\widetilde{s}\eta \leq d_1$ for some positive constants $d_0$ and $d_1$.
		We have $P(||\vA_{n31}||_\infty\geq c_0\  sh^3\sqrt{n})\leq\exp(-c_1\log p)$, for some positive constants $c_0$, $c_1$, and all $n$ sufficiently large.
		
		To verify the bound for $\vA_{n32}$, we rewrite it as
		\begin{align*}
		\vA_{n32}=&n^{-1/2} \vTheta^T\sum_{i=1}^{n}\widetilde{\epsilon}_i \big[\widehat{G}^{(1)}(\vx_i^T\vbetah|\vbetah) -G^{(1)}(\vx_i^T\vbeta_0|\vbeta_0)\big]\vxw_{i,-1}\\
		&+n^{-1/2} (\vThetah-\vTheta)^T\sum_{i=1}^{n}\widetilde{\epsilon}_i \big[\widehat{G}^{(1)}(\vx_i^T\vbetah|\vbetah) -G^{(1)}(\vx_i^T\vbeta_0|\vbeta_0)\big]\vxw_{i,-1}\\
		&+n^{-1/2} \vThetah^T\sum_{i=1}^{n}\widetilde{\epsilon}_i \big[\widehat{G}^{(1)}(\vx_i^T\vbetah|\vbetah) -G^{(1)}(\vx_i^T\vbeta_0|\vbeta_0)\big](\vxh_{i,-1}-\vxw_{i,-1})\\
		\triangleq&\vA_{n321}+\vA_{n322}+\vA_{n323},
		\end{align*}
		with the definition of $\vA_{n32k}$ ($k=1,2,3$) is clear
		from the context. Similarly as the proof for $\vA_{n311}$ and $\vA_{n312}$, 
		Lemma~\ref{lem:ghat1_g1} implies that $P\left(||\vA_{n321}||_\infty\geq c_0\left[h^2\log(p\vee n)\right]^{1/4}\right)\leq\exp(-c_1\log p)$, and $P\left(||\vA_{n322}||_\infty\geq c_0\widetilde{s}\eta\left[h^2\log(p\vee n)\right]^{1/4}\right)\leq\exp(-c_1\log p)$, for some positive constants $c_0$, $c_1$, and all $n$ sufficiently large. 
		For $\vA_{n323}$, Lemma~\ref{lem14NCL} implies that for sub-Gaussian random variables $\widetilde{\epsilon}_i$, we have
		$$P\left(\left| n^{-1}\sum_{i=1}^{n}\widetilde{\epsilon}_i^2-\E(\widetilde{\epsilon}_i^2)\right| \geq 4\sqrt{\sigma_\epsilon^2+M^2}\sqrt{\frac{\log p}{n}}\right)\leq \exp(-c\log p),$$
		for some positive constant $c$ and all $n$ sufficiently large. It implies that $P\Big( n^{-1}\sum_{i=1}^{n}|\widetilde{\epsilon}_i|^2 \geq c_0\Big)\leq \exp(-c_1\log p)$,	for some positive constants $c_0$, $c_1$, and all $n$ sufficiently large. In addition,
		\begin{align*}
		&P\Big( n^{-1}\sum_{i=1}^{n} |(\vxh_{i,-1}-\vxw_{i,-1})^T\vthetah_j|^2 \geq c_2sh^4 \Big)\\
		\leq &	P\Big( n^{-1}\sum_{i=1}^{n} |(\vxh_{i,-1}-\vxw_{i,-1})^T\vtheta_j|^2 \geq c_2sh^4/2 \Big)+P\Big( n^{-1}\sum_{i=1}^{n} |(\vxh_{i,-1}-\vxw_{i,-1})^T(\vthetah_j-\vtheta_j)|^2 \geq c_2sh^4 /2\Big)\\
		\leq &	\exp(-c_3\log p),+P\Big( \Big|\Big|n^{-1}\sum_{i=1}^{n} (\vxh_{i,-1}-\vxw_{i,-1})^T(\vxh_{i,-1}-\vxw_{i,-1}) \Big|\Big|_\infty ||\vthetah_j-\vtheta_j||_1^2 \geq c_2sh^4 /2\Big)\\
		\leq &\exp(-c_4\log p),
		\end{align*}
		for some positive constants $c_2$, $c_3$, $c_4$, and all $n$ sufficiently large.
		In the above, the second inequality applies Lemma~\ref{Efunc}; the last inequality applies Lemma~\ref{dbound} and Lemma~\ref{Efunc}. Hence There exist some positive constants $d_0$, $d_1$, $d_2$, $d_3$, such that for all $n$ sufficiently large,
		\begin{align*}
		&P\Big(||\vA_{n323}||_\infty\geq d_0d_2h^3\sqrt{sn}\Big)\\
		\leq &\sum_{j=2}^pP\left(\Big|n^{-1}\sum_{i=1}^{n}\widetilde{\epsilon}_i \big[\widehat{G}^{(1)}(\vx_i^T\vbetah|\vbetah) -G^{(1)}(\vx_i^T\vbeta_0|\vbeta_0)\big](\vxh_{i,-1}-\vxw_{i,-1})^T\vthetah_j\Big| \geq d_0d_2h^3\sqrt{s}\right) \\
		\leq&P\left(\max_{1\leq i\leq n}\big| \widehat{G}^{(1)}(\vx_i^T\vbetah|\vbetah) -G^{(1)}(\vx_i^T\vbeta_0|\vbeta_0)\big| \geq  d_2h\right)\\
		&+\sum_{j=2}^pP\left( n^{-1}\sum_{i=1}^{n}|\widetilde{\epsilon}_i| *|(\vxh_{i,-1}-\vxw_{i,-1})^T\vthetah_j| \geq d_0h^2\sqrt{s}\right)\\
		\leq&\exp(-d_1\log p) + P\Big( n^{-1}\sum_{i=1}^{n}|\widetilde{\epsilon}_i|^2 \geq d_0\Big) +\sum_{j=2}^pP\left( n^{-1}\sum_{i=1}^{n} |(\vxh_{i,-1}-\vxw_{i,-1})^T\vthetah_j|^2 \geq d_0sh^4 \right)\\
		\leq &\exp(-d_3\log p),
		\end{align*}
		according to Lemma~\ref{Gbetafunc}, where the second last inequality applies
		the Cauchy-Schwartz inequality.  
		Note that the assumptions of Theorem~\ref{Lasso_error} and Lemma~\ref{dbound} imply that $\left[h^2\log(p\vee n)\right]^{1/4} \leq \sqrt{h \log(p\vee n)}\leq d_2\sqrt{h* nh^5} = d_2h^3\sqrt{n}$, and $\widetilde{s}\eta\leq d_3$ for some positive constants $d_2$ and $d_3$.
		Hence there exist positive constants $c_0$, $c_1$ such that for all $n$ sufficiently large, we have
		$P\Big(||\vA_{n32}||_\infty\geq c_0 h^3\sqrt{sn}\Big)\leq\exp(-c_1\log p).$
		
		Finally, let's examine $\vA_{n33}$. Rewrite it as follows:
		\begin{align*}
		&\vA_{n33}\\
		=&n^{-1/2} \vThetah^T \sum_{i=1}^{n}\big[G(\vx_{i}^{T} \vbeta_0|\vbeta_0)-\widehat{G}(\vx_{i}^{T} \vbetah|\vbetah) - \widehat{G}^{(1)}(\vx_i^T\vbetah|\vbetah) \vxh_{i,-1}^T(\vbeta_{0,-1}-\vbetah_{-1})\big] \widehat{G}^{(1)}(\vx_i^T\vbetah|\vbetah)\vxh_{i,-1} \\
		=&n^{-1/2} \vTheta^T  \sum_{i=1}^{n}\big[G(\vx_{i}^{T} \vbeta_0|\vbeta_0)-G(\vx_{i}^{T} \vbetah|\vbetah) - G^{(1)}(\vx_i^T\vbetah|\vbetah) \vxh_{i,-1}^T(\vbeta_{0,-1}-\vbetah_{-1})\big] \widehat{G}^{(1)}(\vx_i^T\vbetah|\vbetah)\vxh_{i,-1}\\
		&+n^{-1/2} (\vThetah-\vTheta)^T  \sum_{i=1}^{n}\big[G(\vx_{i}^{T} \vbeta_0|\vbeta_0)-G(\vx_{i}^{T} \vbetah|\vbetah) - G^{(1)}(\vx_i^T\vbetah|\vbetah) \vxh_{i,-1}^T(\vbeta_{0,-1}-\vbetah_{-1})\big] \widehat{G}^{(1)}(\vx_i^T\vbetah|\vbetah)\vxh_{i,-1} \\ 
		&+n^{-1/2} \vTheta^T  \sum_{i=1}^{n}\big[G(\vx_{i}^{T} \vbetah|\vbetah) -\widehat{G}(\vx_{i}^{T} \vbetah|\vbetah) \big] \widehat{G}^{(1)}(\vx_i^T\vbetah|\vbetah)\vxh_{i,-1} \\
		&+n^{-1/2}  (\vThetah-\vTheta)^T  \sum_{i=1}^{n}\big[G(\vx_{i}^{T} \vbetah|\vbetah) -\widehat{G}(\vxh_{i}^{T} \vbetah|\vbetah) \big]\widehat{G}^{(1)}(\vx_i^T\vbetah|\vbetah)\vxh_{i,-1}  \\ 
		&+n^{-1/2} \vThetah ^T \sum_{i=1}^{n}\big[G^{(1)}(\vx_i^T\vbetah|\vbetah)- \widehat{G}^{(1)}(\vx_i^T\vbetah|\vbetah) \big]\widehat{G}^{(1)}(\vx_i^T\vbetah|\vbetah)\vxh_{i,-1}\vxh_{i,-1}^T(\vbeta_{0,-1}-\vbetah_{-1})\\ 
		\triangleq&\sum_{k=1}^5\vA_{n33k},
		\end{align*}
		where the definitions of $\vA_{n33k}$, $k=1,\cdots,5$ are clear from the context. 
		Lemma~\ref{lem:ghat_g} and Lemma~\ref{dbound}-(3) together imply that    $P(||\vA_{n331}||_\infty\geq c_0sh^3\sqrt{n})\leq\exp(-c_1\log p)$,   $P(||\vA_{n332}||_\infty\geq c_0\widetilde{s}\eta s h^3\sqrt{n})\leq\exp(-c_1\log p)$, $P(||\vA_{n333}||_\infty\geq c_0sh^3\sqrt{n})\leq\exp(-c_1\log p)$, and $P(||\vA_{n334}||_\infty\geq c_0\widetilde{s}\eta s h^3\sqrt{n})\leq\exp(-c_1\log p)$, for some positive constants $c_0$, $c_1$, and all $n$ sufficiently large. The assumptions of Lemma~\ref{dbound} imply that $\widetilde{s}\eta \leq d_0$ for some positive constant $d_0$. 
		To bound $||\vA_{n335}||_\infty$, we have
		\begin{align*}
		&P\Big(||\vA_{n335}||_\infty\geq c_0 \sqrt{n}hs\lambda \Big)\\
		\leq &P\left(\max_{2\leq j\leq p}\left|\left|n^{-1/2}  \sum_{i=1}^{n}   \widehat{G}^{(1)}(\vx_i^T\vbetah|\vbetah) \vxh_{i,-1}\vxh_{i,-1}^T\vthetah_j\right|\right|_\infty\geq c_0^{1/3} \sqrt{n} \right) \\
		&+P\Big(\max_{1\leq i \leq n}\big|G^{(1)}(\vx_i^T\vbetah|\vbetah)- \widehat{G}^{(1)}(\vx_i^T\vbetah|\vbetah)\big|\geq c_0^{1/3} h\Big)+P\Big(||\vbetah-\vbeta_0||_1\geq c_0^{1/3} s\lambda \Big)\\
		\leq&\exp(-c_1\log p),
		\end{align*} 
		for some positive constants $c_0$, $c_1$, and all $n$ sufficiently large. In the above, the first part of the second inequality applies the proof of Lemma~\ref{lem:Ex_err}; the second and third parts apply Lemma~\ref{G1func} and Theorem~\ref{Lasso_error}, respectively.
		The assumptions of Theorem~\ref{Lasso_error}   imply that  $\sqrt{n}hs\lambda \leq d_0  \sqrt{n}sh^3$, for some positive constant $d_0$.
		Combining the above results, we  conclude that 
		$$P\Big(|| \vA_{n3} ||_\infty\geq c_0\Delta_{n,p}\Big)\leq \exp(-c_1\log p),$$
		for some positive constants $c_0$, $c_1$, and all $n$ sufficiently large.
		Hence, the theorem is proved.
	\end{proof}

	\begin{proof}[Proof of Corollary~\ref{cor:marginal_conf}]
		Let $\sigma_j^2 = \E\big\{\big[\widetilde{\epsilon}_i   G^{(1)}(\vx_i^T\vbeta_0|\vbeta_0)\vxw_{i,-1}^T\vtheta_j\big]^2\big\}$, which is the  $(j-1)^{th}$ diagonal entry of $\vTheta^T\vLam\vTheta$. Recall that $\widetilde{\epsilon}_{i} = 2(2A_i-1)\big[\epsilon_i+g(\vx_i)\big]$, where $\epsilon_i$ is sub-Gaussian independent of $\vx_i$,  and $P\big(\max_{1\leq i\leq n}|g(\vx_i)|\leq M\big)=1$. Note that
		\begin{align*}
		\sigma_j^2 =&\E\big\{\ep_i^2 \big[G^{(1)}(\vx_i^T\vbeta_0|\vbeta_0)\vxw_{i,-1}^T\vtheta_j\big]^2\big\} + \E\big\{[g(\vx_i)]^2 \big[G^{(1)}(\vx_i^T\vbeta_0|\vbeta_0)\vxw_{i,-1}^T\vtheta_j\big]^2\big\}\\
		\leq &  b^2\E(\ep_i^2)\E[(\vxw_{i,-1}^T\vtheta_j)^2] + M^2b^2\E[(\vxw_{i,-1}^T\vtheta_j)^2]\\
		\leq & b^2(\sigma_\ep^2+M^2)\xi_1||\vtheta_j||_2^2,
		\end{align*}
		where $\max_{1\leq i \leq n}\big|G^{(1)}(\vx_i^T\vbeta_0|\vbeta_0)\big|\leq b$ by Assumption~\ref{A1}-(b), and $\lambda_{\max}(\E[\Cov(\vx_{-1}|\vx^T\vbeta_0)])\leq \xi_1$ by Assumption~\ref{A2}-(a). We have
		\begin{align*}
		\sigma_j^2 \geq \E\big\{\ep_i^2 \big[G^{(1)}(\vx_i^T\vbeta_0|\vbeta_0)\vxw_{i,-1}^T\vtheta_j\big]^2\big\}   \geq  \E(\ep_i^2)\xi_2\min_{2\leq j\leq p}||\vtheta_j||_2^2\geq \sigma_\epsilon^2\xi_2\min_{2\leq j\leq p}\tau_{0j}^{-2}\geq  \sigma_\epsilon^2\xi_2(b^2\xi_1)^{-2},
		\end{align*} 
		by the smallest eigenvalue condition for $\vOmega$ in Assumption~\ref{A2}-(a) and  Lemma~\ref{lem:thetaj}.
		Hence	we have 
		$$  \sigma_\epsilon^2\xi_2 (b^2\xi_1)^{-2}\leq \sigma_j^2 \leq (\sigma_\epsilon^2+M^2)b^2\xi_1||\vtheta_j||_2^2\leq (\sigma_\epsilon^2+M^2)b^2\xi_1\xi_2^{-2}.$$ 
		This suggests that $\max_{2\leq j\leq p} \sigma_j^{-1} \leq b^2\xi_1/( \sigma_\epsilon\sqrt{\xi_2})$.
		Since $\max_{2\leq j\leq p}\big|\widehat{\Sigma}_{jj}-\sigma_j^2 \big|=o_p(1)$ by  Lemma~\ref{cor_Sig}, we have $\max_{2\leq j\leq p}\big| \widehat{\Sigma}_{jj}^{-1/2}-\sigma_j^{-1}\big| =o_p(1) $. 
		
		Note that in the proof of Theorem~\ref{desparse}, we already showed that $\max_{2\leq j\leq p}\Big| \widetilde{\beta}_j-\beta_{0j}-\frac{1}{n}\sum_{i=1}^n\widetilde{\epsilon}_{i}  G^{(1)}(\vx_i^T\vbeta_0|\vbeta_0)\vxw_{i,-1}^T\vtheta_{j}\Big| = o_p(n^{-1/2})$. Hence  
		\begin{align*}
		&\max_{2\leq j\leq p}\Big| \widehat{\Sigma}_{j,j}^{-1/2}(\widetilde{\beta}_j-\beta_{0j})-\sigma_j^{-1}\frac{1}{n}\sum_{i=1}^n\widetilde{\epsilon}_{i}  G^{(1)}(\vx_i^T\vbeta_0|\vbeta_0)\vxw_{i,-1}^T\vtheta_{j}\Big| \\
		\leq& \max_{2\leq j\leq p}\sigma_j^{-1}*\max_{2\leq j\leq p}\Big|  \widetilde{\beta}_j-\beta_{0j}-\frac{1}{n}\sum_{i=1}^n\widetilde{\epsilon}_{i}  G^{(1)}(\vx_i^T\vbeta_0|\vbeta_0)\vxw_{i,-1}^T\vtheta_{j}\Big| \\
		&+\max_{2\leq j\leq p}\left| \widehat{\Sigma}_{jj}^{-1/2}-\sigma_j^{-1}\right|*\max_{2\leq j\leq p}\left|\widetilde{\beta}_j-\beta_{0j} \right| \\
		= &o_p(n^{-1/2})*O(1) + o_p(1)*O_p(n^{-1/2}) = o_p(n^{-1/2}).
		\end{align*}
		
		Applying the Berry-Esseen bound for CLT, there exists some universal constant $c_0>0$ such that  
		\begin{align*}
		&\max_{2\leq j\leq p} \sup_{\alpha\in(0,1)}\left|P\left(\sqrt{n}\left|\sigma_j^{-1} \frac{1}{n}\sum_{i=1}^n\widetilde{\epsilon}_{i}  G^{(1)}(\vx_i^T\vbeta_0|\vbeta_0)\vxw_{i,-1}^T\vtheta_{j}\right| \leq \Phi^{-1}(1-\alpha/2){\tiny }\right)-(1-\alpha)\right| \\
		\leq &\max_{2\leq j\leq p} \frac{c_0}{\sqrt{n}}\E\left[\left|\widetilde{\epsilon}_{i}  G^{(1)}(\vx_i^T\vbeta_0|\vbeta_0)\vxw_{i,-1}^T\vtheta_{j}\right|^3\right]\leq \max_{2\leq j\leq p} \frac{c_0b^3}{\sqrt{n}}\E\left(|\widetilde{\epsilon}_{i} \vxw_{i, -1}^T\vtheta_{j}|^3\right),
		\end{align*} 
		where $\Phi(\cdot)$ is the c.d.f of $N(0,1)$, and $\Phi^{-1}(\cdot)$ is the inverse function of $\Phi(\cdot)$. The above probability is  bounded by $\frac{c_1}{\sqrt{n}}\E\big(|\vxw_{i,-1}^T\vtheta_{j}|^3\big)$, where $c_1$ does not dependent on $n$, $p$ and $\vbeta_0$. Note that  $\vxw_{i,-1}^T\vtheta_{j}$ is sub-Gaussian with variance proxy $\sigma_x^2||\vtheta_j||_2^2$. The property of the sub-Gaussian distribution and Lemma~\ref{lem:thetaj} imply that $\max_{2\leq j\leq p} \E\big(|\vxw_{i,-1}^T\vtheta_{j}|^3\big)\leq c_2||\vtheta_j||_2^3\leq c_2 \xi_2^{-3}$, where $c_2>0$  does not depend  on $n$, $p$ and $\vbeta_0$. Then  for a universal constant $c>0$,
		\begin{align*}
		&\max_{2\leq j\leq p} \sup_{\alpha\in(0,1)}\left|P\left(\sqrt{n}\left|\sigma_j^{-1} \frac{1}{n}\sum_{i=1}^n\widetilde{\epsilon}_{i}  G^{(1)}(\vx_i^T\vbeta_0|\vbeta_0)\vxw_{i,-1}^T\vtheta_{j}\right| \leq \Phi^{-1}(1-\alpha/2){\tiny }\right)-(1-\alpha)\right|  \\
		\leq &\frac{c }{\sqrt{n}}=o(1).
		\end{align*} 
		Combining all the results above, we conclude the proof of the corollary.
	\end{proof}

	\begin{proof}[Proof of Theorem~\ref{multi_boots}]  
		Let $\widetilde{\delta}_j = n^{-1}\sum_{i=1}^n \widetilde{\epsilon}_i   G^{(1)}(\vx_i^T\vbeta_0|\vbeta_0)\vxw_{i,-1}^T\vtheta_{j}$, and $\vxi=(\xi_2,\cdots,\xi_p)^T$ be a multivariate mean zero Gaussian with covariance matrix $\vTheta^T\vLam\vTheta$. 
		
		Since $\widetilde{\epsilon}_i   G^{(1)}(\vx_i^T\vbeta_0|\vbeta_0)$ and $\vxw_{i,-1}^T\vtheta_j$ are both sub-Gaussian, Comment 2.2 in \citet{Chakraborty2014} implies that $ \widetilde{\epsilon}_i   G^{(1)}(\vx_i^T\vbeta_0|\vbeta_0)\vxw_{i,-1}^T\vtheta_j$ satisfies their condition (E.1) with $B_n = C_1$ for some universal constant $C_1$, which does not depend on $n$, $p$ and $\vbeta_0$. The order of $h$ implies that $\log p = o\big((\sqrt{n}h^3)^{-1}\big)$, and $(nh^5)^{-1}=o(1)$. Hence we can derive that 
		$\frac{[\log (pn)]^7}{n} = O\Big(\frac{[\log p]^7}{n} \Big) =  o\big((n^{9/2}h^{21})^{-1}\big) = o(n^{-0.3}).$
		Note that $\max_{2\leq j \leq p} |\widetilde{\delta}_{j}| = \max_{2\leq j \leq p}\left\{\widetilde{\delta}_{j}, - \widetilde{\delta}_{j}\right\}$, hence Corollary 2.1 in \citet{chernozhukov2013} indicates that 
		$\sup_{t\in\bbR}\big|P(\sqrt{n}\max_{2\leq j \leq p}|\widetilde{\delta}_{j}| \leq t)-P(\max_{2\leq j \leq p}|\xi_j| \leq t)\big|\leq \exp(-c_1\log n)$, for some universal constant $c_1>0$.
		
		Define the event $\mathbb{T}_n(\kG)=\left\{\sqrt{n}\left|\  \max_{j\in\kG}\left|\widetilde{\beta}_j-\beta_{0j}\right| -  \max_{j\in\kG}\left|\widetilde{\delta}_j\right|\   \right|> \Delta_{n,p}\right\}$. Note that Theorem~\ref{desparse} implies for some universal constant $c_1>0$, and all $n$ sufficiently large,
		\begin{align*}
		P\big(\mathbb{T}_n(\kG)\big) \leq P\left(\sqrt{n}  \max_{j\in\kG}\left|(\widetilde{\beta}_j-\beta_{0j})-\widetilde{\delta}_j\right|  > \Delta_{n,p} \right)  \leq P\Big( || \vDelta||_\infty \geq \Delta_{n,p}  \Big) \leq  \exp(-c_1\log p).
		\end{align*}  
		Applying Corollary 16 in \citet{Wasserman2014SteinS},  we have that for some universal positive  constants $C$, $c_1$, and all $n$ sufficiently large,
		\begin{align*}
		&P\left( \sqrt{n}\max_{j\in\kG}\left|\widetilde{\beta}_j-\beta_{0j}\right|\leq c^*_{1-\alpha}(\kG)  \right)\\
		\leq&  P\left(\max_{j\in\kG}\left|\widetilde{\delta}_j\right|  \leq c^*_{1-\alpha}(\kG) + \Delta_{n,p}\right) + P\big(\mathbb{T}_n(\kG)\big)\\
		\leq&  P\left(\max_{j\in\kG}\left|\widetilde{\delta}_j\right|  \leq c^*_{1-\alpha}(\kG) \right) + C\Delta_{n,p}\sqrt{1\vee\log(p/\Delta_{n,p})} +   \exp(-c_1\log p),
		\end{align*}  
		uniformly over $\alpha\in(0,1)$.
		Note that $\Delta_{n,p}\sqrt{ \log p} = o(1) $ implies that $\Delta_{n,p}\sqrt{1\vee\log(p/\Delta_{n,p})}=o(1)$. Hence it suffices to show $P\Big(\max_{j\in\kG}\left|\widetilde{\delta}_j\right| \leq c^*_{1-\alpha} \Big)\leq 1-\alpha+o(1)$ uniformly over $\alpha\in(0,1)$.
		
		Note that $\max_{j\in\kG}|\widetilde{\delta}_j|  = \max_{j\in\kG} \left\{\widetilde{\delta}_j,-\widetilde{\delta}_j\right\}$, and $\max_{j\in\kG}| \delta_j^*| = \max_{j\in\kG} \left\{\delta_j^*,-\delta_j^*\right\}$.  
		Observe that conditional on $\vw$, $ (\delta_1^*,\cdots,\delta_p^*,-\delta_1^*,\cdots,-\delta_p^*)^T$ is multivariate mean zero Gaussian with covariance matrix $\left(\begin{array}{cc}
		\vSigmah(\vbetah)	& -\vSigmah(\vbetah) \\ 
		-\vSigmah(\vbetah)	& \vSigmah(\vbetah) \end{array}\right) $. Then let $\Delta_0 = ||\vSigmah(\vbetah) - \vTheta^T\vLam\vTheta||_\infty$,  Gaussian comparison inequality suggests that for some positive constant $C$,  
		\begin{align*}
		P\left(\max_{j\in\kG}\left|\widetilde{\delta}_j\right|   \leq c^*_{1-\alpha}(\kG)\right)&\leq	P\left(\max_{j\in\kG}\left| \delta_j^*\right|   \leq c^*_{1-\alpha}(\kG)\ \big|\ \vw \right) + C\Delta_0^{1/3}\big[1\vee\log(p/\Delta_0)\big]^{2/3}\\
		&= 1-\alpha+ C\Delta_0^{1/3}\big[1\vee\log(p/\Delta_0)\big]^{2/3},
		\end{align*} 
		uniformly over $\alpha\in(0,1)$.
		Proof of Lemma~\ref{cor_Sig} implies that $\Delta_0\leq c_0\widetilde{s}^{1/2} h $ with probability
		at least $1-\exp(-c_1\log p)$, for universal positive constants $c_0$ and $c_1$. Hence, we can derive $\Delta_0 \log^2p=o_p(1)$. It implies that $\Delta_0^{1/3}\big[1\vee\log(p/\Delta_0)\big]^{2/3}=o(1)$  for all sufficiently large $n$. We obtain 
		$$\sup_{\alpha\in(0,1)}\left[P\Big( \sqrt{n}\max_{j\in\kG}\left|\widetilde{\beta}_j-\beta_{0j}\right|\leq c^*_{1-\alpha}(\kG) \Big)-(1-\alpha)\right]= o(1),$$ 
		for  all $n$ sufficiently large.
		Similarly, we can derive that $\sup_{\alpha\in(0,1)}\bigg[(1-\alpha) - P\Big( \sqrt{n}\max_{j\in\kG}|\widetilde{\beta}_j-\beta_{0j}|\leq c^*_{1-\alpha}(\kG) \Big)\bigg]=   o(1)$, for all $n$ sufficiently large. Note that all the universal constants do not depend on $n$, $p$ and $\vbeta_0$. We thus have 
		$$	\sup_{\vbeta_0\in\bbB_0:||\vbeta_0||_0\leq s}\sup_{\alpha\in(0,1)}\Big|P\Big( \sqrt{n}\max_{j\in\kG}|\widetilde{\beta}_j-\beta_{0j}|\leq c^*_{1-\alpha}(\kG)  \Big)-(1-\alpha)\Big| = o(1).$$
	\end{proof}	
	
	\section{Derivation of the Results in Section~\ref{sec:lemmas}} \label{sec:proof_append}
	\begin{proof}[Proof of Lemma~\ref{Lgrad}]
		By the model setup, we have:
		\begin{align*}
		& \vS_n(\vbeta_0,\widehat{G},\widehat{\E}) \\
		= &-n^{-1}\sum_{i=1}^n \left\{ \widetilde{\epsilon}_i + G(\vx_i^T\vbeta_0|\vbeta_0) - \widehat{G}(\vx_{i}^{T} \vbeta_0|\vbeta_0)\right\} \widehat{G}^{(1)}(\vx_{i}^T \vbeta_0|\vbeta_0) [\vx_{i,-1}-\widehat{\E}(\vx_{i,-1}|\vx_i^T\vbeta_0)]\\
		=& -n^{-1}\sum_{i=1}^n \widetilde{\epsilon}_i G^{(1)}(\vx_{i}^T \vbeta_0|\vbeta_0) [\vx_{i,-1}- \E(\vx_{i,-1}|\vx_i^T\vbeta_0)] \\
		&-n^{-1}\sum_{i=1}^n\widetilde{\epsilon}_i \left\{ \widehat{G}^{(1)}(\vx_{i}^T \vbeta_0|\vbeta_0) - G^{(1)}(\vx_{i}^T \vbeta_0|\vbeta_0)\right\} [\vx_{i,-1}-\E(\vx_{i,-1}|\vx_i^T\vbeta_0)]\\
		& -n^{-1}\sum_{i=1}^n \widetilde{\epsilon}_i \widehat{G}^{(1)}(\vx_{i}^T \vbeta_0|\vbeta_0) [ \E(\vx_{i,-1}|\vx_i^T\vbeta_0)-\widehat{\E}(\vx_{i,-1}|\vx_i^T\vbeta_0)] \\ 
		&  -n^{-1}\sum_{i=1}^n \left\{ G(\vx_i^T\vbeta_0|\vbeta_0) - \widehat{G}(\vx_{i}^{T} \vbeta_0|\vbeta_0)\right\} G^{(1)}(\vx_{i}^T \vbeta_0|\vbeta_0) [\vx_{i,-1}-\E(\vx_i|\vx_{i,-1}^T\vbeta_0)]\\
		&  -n^{-1}\sum_{i=1}^n \left\{ G(\vx_i^T\vbeta_0|\vbeta_0) - \widehat{G}(\vx_{i}^{T} \vbeta_0|\vbeta_0)\right\} \left\{ \widehat{G}^{(1)}(\vx_{i}^T \vbeta_0|\vbeta_0) - G^{(1)}(\vx_{i}^T \vbeta_0|\vbeta_0)\right\} [\vx_{i,-1}-\E(\vx_{i,-1}|\vx_i^T\vbeta_0)]\\
		&  -n^{-1}\sum_{i=1}^n \left\{ G(\vx_i^T\vbeta_0|\vbeta_0) - \widehat{G}(\vx_{i}^{T} \vbeta_0|\vbeta_0)\right\}  \widehat{G}^{(1)}(\vx_{i}^T \vbeta_0|\vbeta_0) [\E(\vx_{i,-1}|\vx_i^T\vbeta_0)-\widehat{\E}(\vx_{i,-1}|\vx_i^T\vbeta_0)]\\
		\triangleq&\sum_{j=1}^6\vI_{nj},
		\end{align*}
		where the definition of $\vI_{nj}$ is clear from the context. Since $G^{(1)}(\vx_i^T\vbeta_0|\vbeta_0)$ is bounded by Assumption~\ref{A1}-(b), and $\widetilde{\epsilon}_i$ is sub-Gaussian by Lemma~\ref{lem:subg},
		then $G^{(1)}(\vx_{i}^T \vbeta_0|\vbeta_0)\widetilde{\epsilon}_i$ is also sub-Gaussian by Lemma~\ref{lem:subg}. Lemma~\ref{lem14NCL} in Section~\ref{sec:proof_auxil} implies that there exist positive constants $c_0$, $c_1$ and $c_2$ such that for all $n$ sufficiently large,
		$$P\left(||\vI_{n1}||_\infty\geq  c_0 \sqrt{\frac{\log (p\vee n)}{n}}\,\right)\leq \exp[-c_1\log (p\vee n)] .$$
		
		Note that $\vbeta_0\in\bbB_1$. Lemma~\ref{G1func} implies that $\max_{1\leq i \leq n}\left|\widehat{G}^{(1)}(\vx^T\vbeta_0|\vbeta_0) - G^{(1)}(\vx^T\vbeta_0|\vbeta_0)\right|\leq c_0h$, with probability at least $1 -\exp[-c_1\log (p\vee n)]$, for some positive constants $c_0$, $c_1$, and all $n$ sufficiently large. Then we can apply the proof of Lemma~\ref{lem:ghat1_g1} to show that with probability at least $1 -\exp[-c_1\log (p\vee n)]$, 
		$ \sqrt{n}||\vI_{n2}||_\infty\leq  c_0\left[h^2\log(p\vee n)\right]^{1/4}\leq  c_0\sqrt{\log (p\vee n)}$. Hence we have that with probability at least $1 -\exp[-c_1\log (p\vee n)]$, 
		$ ||\vI_{n2}||_\infty\leq   c_0\sqrt{\frac{\log (p\vee n)}{n}}$.

		To bound $||\vI_{n3}||_\infty$, observe that 
		\begin{align*} 
		||\vI_{n3}||_\infty \leq & \Big|\Big|n^{-1}\sum_{i=1}^n \widetilde{\epsilon}_i G^{(1)}(\vx_{i}^T \vbeta_0|\vbeta_0) [ \E(\vx_{i,-1}|\vx_i^T\vbeta_0)-\widehat{\E}(\vx_{i,-1}|\vx_i^T\vbeta_0)]  \Big|\Big|_\infty\\
		&+ \Big|\Big|n^{-1}\sum_{i=1}^n \widetilde{\epsilon}_i [\widehat{G}^{(1)}(\vx_{i}^T \vbeta_0|\vbeta_0) -G^{(1)}(\vx_{i}^T \vbeta_0|\vbeta_0)] * [ \E(\vx_{i,-1}|\vx_i^T\vbeta_0)-\widehat{\E}(\vx_{i,-1}|\vx_i^T\vbeta_0)]  \Big|\Big|_\infty\\
		=&||\vI_{n31}||_\infty+||\vI_{n32}||_\infty,
		\end{align*}
		where the definitions of $\vI_{n31}$ and $\vI_{n32}$ are clear from the context. Similarly, Lemma~\ref{lem:Ebound} indicates that 
		$\max_{1\leq i \leq n}\left|\left[\widehat{\E}(\vx_{i,-1}|\vx_i^T\vbeta_0) - \E(\vx_{i,-1}|\vx_i^T\vbeta_0)\right]^T\veta\right|\leq c_0h^2||\veta||_2$, with probability at least $1 -\exp[-c_1\log (p\vee n)]$, for some positive constants $c_0$, $c_1$, and all $n$ sufficiently large. Then the proof of Lemma~\ref{lem:Ehat_E} implies that $ \sqrt{n}||\vI_{n31}||_\infty\leq c_0 h\sqrt{s\log(p\vee n)}$ holds with probability at least $1 -\exp[-c_1\log (p\vee n)]$,  for some positive constants $c_0$, $c_1$, and all $n$ sufficiently large.  For $||\vI_{n32}||_\infty$,  Lemma~\ref{G1func} and Lemma~\ref{lem:Ebound} imply that
		\begin{align*} 
		||\vI_{n32}||_\infty\leq &c_0hn^{-1}\sum_{i=1}^n |\widetilde{\epsilon}_i| *\big|\big|   \E(\vx_{i,-1}|\vx_i^T\vbeta_0)-\widehat{\E}(\vx_{i,-1}|\vx_i^T\vbeta_0)\big|\big|_\infty\\
		\leq&c_0h^3\sqrt{n^{-1}\sum_{i=1}^n \widetilde{\epsilon}_i^2}  \leq c_1h^3\leq c_0n^{-1/2},
		\end{align*}
		with probability at least  $1 -\exp[-c_1\log (p\vee n)]$, for some positive constants $c_0$, $c_1$, and all $n$ sufficiently large.  Hence we have $P\Big(||\vI_{n3}||_\infty\geq  c_0 \sqrt{\frac{\log (p\vee n)}{n}}\,\Big)\leq\  \exp [-c_1\log (p\vee n)]$, for some positive constants $c_0$, $c_1$, and all $n$ sufficiently large.

		The proof of Lemma~\ref{lem:ghat_g} implies that  $P\left(\sqrt{n}||\vI_{n4}||_\infty\geq h[\log(p\vee n)]^{1/4}\,\right)\leq \ \exp[-c_1\log (p\vee n)]$. Hence   $||\vI_{n4}||_\infty\leq d_0 n^{-1/2}$  with probability at least $1- \exp[-c_1\log (p\vee n)] $.
		
		For $\vI_{n5}$, note that $\vx_{i,-1}-\E(\vx_{i,-1}|\vx_i^T\vbeta_0)$ is sub-Gaussian. Proof of Lemma~\ref{lem14NCL} indicates that for universal constant $c_1>0$,
		\begin{align*}
		&P\left(\frac{1}{n}\sum_{i=1}^n \big|\big|[\vx_{i,-1}-\E(\vx_{i,-1}|\vx_i^T\vbeta_0)][\vx_{i,-1}-\E(\vx_{i,-1}|\vx_i^T\vbeta_0)]^T\big|\big|_\infty\geq ||\Cov(\vx_{-1}|\vx^T\vbeta_0)||_\infty+\sigma_x^2\sqrt{\frac{\log (p\vee n)}{n}}\right) \\
		&\leq  \exp[-c_1\log (p\vee n)].
		\end{align*}
		Lemma~\ref{Gfunc} and Lemma~\ref{G1func} together indicate  
		$||\vI_{n5}||_\infty\leq ch^3n^{-1}\sum_{i=1}^n||\vx_{i,-1}-\E(\vx_{i,-1}|\vx_i^T\vbeta_0)||_\infty \leq c_0n^{-1/2}$ with probability at least $1- \exp[-c_1\log (p\vee n)]$.  Similarly, Lemma~\ref{Gfunc} and Lemma~\ref{lem:Ebound} indicate that  $||\vI_{n6}||_\infty \leq c_0n^{-1/2}$ with the same probability bound.	Combining all the previous results, we conclude the lemma.
	\end{proof}

	\begin{proof}[Proof of Lemma~\ref{lem:subg}]
		To prove the first part of the lemma, note that for any unit vector $\vv\in\bbR^p$ and $c\in\bbR$,  Jensen's inequality and the sub-Gaussian property of $\vx$ imply that
		\begin{align*}
		\E\big\{\exp[s\vv^T\E(\vx\big|\vx^T\vbeta)]\big\}&=\E\big\{\exp[\E(s\vx^T\vv\big|\vx^T\vbeta)]\big\} \\
		&\leq \E\big\{\E[\exp(s\vx^T\vv)\big|\vx^T\vbeta]\big\}\\
		&=\E[\exp(s\vx^T\vv)] \leq \exp\left(\frac{s^2\sigma_x^2}{2}\right).
		\end{align*}
		For $\vx-\E(\vx|\vx^T\vbeta)$, we apply H\"older's inequality, which indicates that for any $q_1, q_2>0$ such that $q_1^{-1}+q_2^{-1}=1$,
		\begin{align*}
		\E\Big\{\exp\big\{s\vv^T[\vx-\E(\vx|\vx^T\vbeta)]\big\}\Big\}\leq& \big[ \E \exp(sq_1\vx^T\vv)\big]^{1/q_1} \Big\{ \E \exp\big[sq_2\vv^T \E(\vx|\vx^T\vbeta)\big] \Big\}^{1/q_2}\\
		\leq&  \E \exp(sq_1\vx^T\vv)   * \E \exp\big[sq_2\vv^T \E(\vx|\vx^T\vbeta)\big] \\
		\leq&  \exp\left[\frac{s^2\sigma_x^2(q_1^2+q_2^2)}{2}\right].
		\end{align*}
		Let $q_1=q_2=2$, then we have $ \E\Big\{\exp\big\{s\vv^T[\vx-\E(\vx|\vx^T\vbeta)]\big\}\Big\}  \leq  \exp(2s^2\sigma_x^2).$
		
		For $\widetilde{\epsilon} = 2(2A-1)\big(\epsilon+g(\vx)\big)$, note that $\epsilon$ and $A$ are independent. Easy to show that $2(2A-1)\epsilon$ is sub-Gaussian, we note that $\E\big\{\exp[2s(2A-1)\epsilon ]\} = \frac{1}{2} \big(\E e^{2s \epsilon }+\E e^{-2s \epsilon }\big)\leq \exp( 2s^2\sigma_\ep^2 ).$
		Since $g(\cdot)$ is bounded by $M$ almost everywhere, Exercise 2.4 in \citet{subExp} implies that $2(2A-1)g(\vx)$ is sub-Gaussian with variance proxy at most $4M^2$. Then similarly as the previous step, we conclude that $\widetilde{\epsilon}$ is sub-Gaussian.
		
		To prove the second part of Lemma~\ref{lem:subg}, denote $z = xy-\E(xy)$. Note that $\E z=0$. Since  $y$ is a sub-Gaussian, for any integer $k\geq 1$, we have $\E(|y|^k)\leq (2\sigma_y^2)^{k/2}k\Gamma(k/2)$.  Hence  for any $s>0$,  
		\begin{align*}
		&\E[\exp(sz)] \\
		\leq &1+\sum_{k=2}^\infty\frac{s^k\E(|z|^k)}{k!}\\
		\leq &1+\sum_{k=2}^\infty\frac{s^k2^{k-1}\big\{\E(|xy|^k) + [\E(|xy|)]^k\big\}}{k!}\\
		\leq &1+\sum_{k=2}^\infty\frac{s^k2^{k-1}\sigma_x^k\big[\E(|y|^k) + (\E|y|)^k]}{k!}\\ 
		\leq &1+\sum_{k=2}^\infty\frac{(2 s\sigma_x)^k\E(|y|^k)}{k!}\\
		\leq &1+\sum_{n=1}^\infty\frac{ (2 s\sigma_x)^{2n} (2\sigma_y^2)^{n}(2n)\Gamma(n)}{(2n)!}+ \sum_{n=1}^\infty\frac{(2 s\sigma_x)^{2n+1}(2\sigma_y^2)^{n+1/2}(2n+1) \Gamma(n+1/2)}{(2n+1)!}\\
		\leq&1+(1+2\sqrt{2}s\sigma_x\sigma_y)\sum_{n=1}^\infty\frac{ (2\sqrt{2} s\sigma_x\sigma_y)^{2n} 2(n!)}{(2n)!} \\
		\leq&1+(1+2\sqrt{2}s\sigma_x\sigma_y)\sum_{n=1}^\infty\frac{ (2\sqrt{2} s\sigma_x\sigma_y)^{2n} }{ n!} \\
		= &\exp(8s^2\sigma_x^2\sigma_y^2) + 2\sqrt{2}s\sigma_x\sigma_y \big[ \exp(8s^2\sigma_x^2\sigma_y^2) -1\big]\\
		\leq &\exp(16s^2\sigma_x^2\sigma_y^2),
		\end{align*} 
		where the second and the fourth inequalities apply Jensen's inequality, the second last inequality applies that $2(n!)^2\leq (2n)!$ for any $n\geq 1$. The conclusion follows by the definition of sub-Gaussian random variables.
	\end{proof}

	\begin{proof}[Proof of Lemma~\ref{lem:events}]
		Note that $\vx_i $,  $[\vx_{i,-1}-\E(\vx_{i,-1}|\vx_i^T\vbeta)]^T\vtheta_j$ and $\widetilde{\epsilon}_i$ are all sub-Gaussian with variance proxy $\sigma_x^2$, $2||\vtheta_j||_2^2\sigma_x^2$, and $4(\sigma_\epsilon^2+M^2)$, respectively, by Lemma~\ref{lem:subg}.   Lemma~\ref{lem:thetaj} implies that $\max_{2\leq j \leq p}||\vtheta_j||_0\leq \widetilde{s}+1$, and $||\vtheta_j||_2\leq \xi_2^{-1}$ uniformly in $j$,  where $\xi_2$ is defined in Assumption~\ref{A2}-(a).  Denote $\widecheck{\vx}_{i,-1} =\vx_{i,-1}-\E(\vx_{i,-1}|\vx_i^T\vbeta)$, and $\vxw_{i,-1}=\vx_{i,-1}-\E(\vx_{i,-1}|\vx_i^T\vbeta_0)$.  Assumption~\ref{A2}-(c) and Lemma~\ref{lem15NCL}  imply that 
		\begin{align*}
		& \sup_{\vbeta\in\bbB }n^{-1}\sum_{i=1}^n\left|  \widecheck{\vx}_{i,-1}^T\vtheta_j - \vxw_{i,-1}^T\vtheta_j\right|^2\\
		\leq& C\xi_2^{-2}\sup_{\vbeta\in\bbB }n^{-1}\sum_{i=1}^n\left(|\vx_i^T\vbeta-\vx_i^T\vbeta_0|+\max(|\vx_i^T\vbeta|,|\vx_i^T\vbeta_0|)||\vbeta-\vbeta_0||_2\right)^2\\
		\leq& 2C\xi_2^{-2}\sup_{\vbeta\in\bbB }n^{-1}\sum_{i=1}^n\left(|\vx_i^T\vbeta-\vx_i^T\vbeta_0|^2+ (|\vx_i^T\vbeta|^2+|\vx_i^T\vbeta_0|^2)||\vbeta-\vbeta_0||_2^2\right) \\
		\leq& 2C\xi_2^{-2}\sup_{\vbeta\in\bbB }n^{-1}\sum_{i=1}^n |\vx_i^T(\vbeta-\vbeta_0)|^2+ 4Cr^2\xi_2^{-2}\sup_{\vbeta\in\bbB }n^{-1}\sum_{i=1}^n|\vx_i^T\vbeta|^2  \\
		\leq &c_0\xi_2^{-2}\sigma_x^2,
		\end{align*}
		with probability at least $1-\exp(-c_1n)$, for some positive constants $C$, $c_0$, $c_1$ and all $n$ sufficiently large, where the last inequality applies Lemma~\ref{lem15NCL}, since the sub-Gaussian property implies that $\E[(\vx_i^T\vv)^2]\leq \sigma_x^2$ for any $||\vv||_2=1$.
		We thus have
		\begin{align*}
		&P\left(\max_{2\leq j \leq p} \sup_{\vbeta\in\bbB }n^{-1}\sum_{i=1}^n\left| \widecheck{\vx}_{i,-1}^T\vtheta_j\right|^2  \geq d_0\xi_2^{-2}\sigma_x^2 \right) \\
		\leq& \sum_{j=2}^{p}P\left( \sup_{\vbeta\in\bbB }n^{-1}\sum_{i=1}^n\left|  \widecheck{\vx}_{i,-1}^T\vtheta_j\right|^2  \geq d_0\xi_2^{-2}\sigma_x^2   \right)\\
		\leq& \sum_{j=2}^{p}P\left(  \sup_{\vbeta\in\bbB }2n^{-1}\sum_{i=1}^n\left|  \widecheck{\vx}_{i,-1}^T\vtheta_j - \vxw_{i,-1}^T\vtheta_j\right|^2  \geq 2c_0\xi_2^{-2} \sigma_x^2 \right)+P\left(  2n^{-1}\sum_{i=1}^n\left|  \vxw_{i,-1}^T\vtheta_j  \right|^2   \geq (d_0-2c_0)\xi_2^{-2} \sigma_x^2  \right)\\
		\leq&(p-1)\exp(-c_2n) +P\left( \left|n^{-1}\sum_{i=1}^n\left|  \vxw_{i,-1}^T\vtheta_j  \right|^2  -\E\left(\left|  \vxw_{i,-1}^T\vtheta_j\right|^2\right) \right| \geq (d_0-2c_0-4)\xi_2^{-2} \sigma_x^2/2 \right)\\
		\leq& p\exp(-c_2n) = \exp(-c_2n+\log p)=  \exp(-cn), 
		\end{align*}  
		for some positive constants $d_0>4$, $c_2$, $c$, and all $n$ sufficiently large. In the above, 
		the last inequality applies Lemma~\ref{lem14NCL}, with $\max_{2\leq j\leq p}\E(|\vxw_{i,-1}^T\vtheta_j|^2)\leq 2\xi_2^{-2}\sigma_x^2$ by its sub-Gaussian property. It derives the probability bound for $\mathcal{G}_n$. The probability bound for  $\mathcal{K}_n$ follows from Lemma~\ref{lem15NCL} with similar technique.   For $\mathcal{H}_n$, note that $\Big| 
		2[\ep_i+g(\vx_i)]*	 \big[\vx_{i,-1}-\E(\vx_{i,-1}|\vx_i^T\vbeta)\big]^T\vtheta_j\Big|^2 = \Big| \widetilde{\epsilon}_i  \left[\vx_{i,-1}-\E(\vx_{i,-1}|\vx_i^T\vbeta)\right]^T\vtheta_j\Big|^2$, where $\widetilde{\epsilon}_i  = 2(2A_i-1)[\ep_i+g(\vx_i)]$.
		Lemma~\ref{lem:subg} implies that $\widecheck{\vx}_{i,-1}^T\vtheta_j$ and $\widetilde{\epsilon}_i$ are both sub-Gaussian. Hence we have that $\E(\widetilde{\epsilon}_i^4)\leq 16*\left[4(\sigma_\epsilon^2+M^2)\right]^2$, and $\max_{2\leq j \leq p}\E\left[(\vxw_{i,-1}^T\vtheta_j)^4\right]\leq 16*\left(2\xi_2^{-2}\sigma_x^2\right)^2$.
		Similar as the above analysis for $\mathcal{G}_n$,
		Assumption~\ref{A2}-(c) implies that  
		\begin{align*}
		\sup_{\vbeta\in\bbB }n^{-1}\sum_{i=1}^n\left|  \widecheck{\vx}_{i,-1}^T\vtheta_j - \vxw_{i,-1}^T\vtheta_j\right|^4 
		\leq& C\xi_2^{-4}\sup_{\vbeta\in\bbB }n^{-1}\sum_{i=1}^n |\vx_i^T(\vbeta-\vbeta_0)|^4+2Cr^4\xi_2^{-4}\sup_{\vbeta\in\bbB }n^{-1}\sum_{i=1}^n|\vx_i^T\vbeta|^4 \\
		\leq& c_0\xi_2^{-4}\sigma_x^4,
		\end{align*}
		with probability at least $1-\exp(-c_1\sqrt{n})$, for some positive constants $C$, $c_0$, $c_1$ and all $n$ sufficiently large, where the last inequality applies Lemma~\ref{lem:cube_rate} with $s_0=ks$ and $t= c_3\xi_2^{-4}\sigma_x^4$ for some positive constant $c_3$. Hence  by Lemma~\ref{lem:cube_rate}, there exist some  positive constants $d_1>256\sqrt{2}$, $d_2>16$, $c_2$, $c$, such that for all $n$ sufficiently large,
		\begin{align*}
		&P\left(\max_{2\leq j \leq p}\sup_{\vbeta\in\bbB }n^{-1}\sum_{i=1}^n\big|\widetilde{\epsilon}_i \widecheck{\vx}_{i,-1}^T\vtheta_j \big|^2\geq d_1\xi_2^{-2}\sigma_x^2(\sigma_\ep^2+M^2) \right)\\
		\leq &P\left(\sqrt{n^{-1}\sum_{i=1}^n \widetilde{\epsilon}_i ^4}\geq  d_2(\sigma_\ep^2+M^2) \right) + P\left(\max_{2\leq j \leq p}\sup_{\vbeta\in\bbB }\sqrt{n^{-1}\sum_{i=1}^n\left|\widecheck{\vx}_{i,-1}^T\vtheta_j \right|^4}\geq  d_1d_2^{-1}\xi_2^{-2}\sigma_x^2 \right)\\
		\leq& P\left( n^{-1}\sum_{i=1}^n \widetilde{\epsilon}_i^4\geq  d_2^2(\sigma_\ep^2+M^2)^2 \right) +\sum_{j=2}^{p}P\left( \sup_{\vbeta\in\bbB }n^{-1}\sum_{i=1}^n\big|\widecheck{\vx}_{i,-1}^T\vtheta_j \big|^4\geq d_1^2d_2^{-2}\xi_2^{-4}\sigma_x^4\right)\\
		\leq& P\left( \left| n^{-1}\sum_{i=1}^n\widetilde{\epsilon}_i^4 - \E\left(\widetilde{\epsilon}_i^4\right)\right|\geq (d_2^2-256) (\sigma_\ep^2+M^2)^2 \right) +P\left( 8 n^{-1}\sum_{i=1}^n\left(\vxw_{i,-1}^T\vtheta_j\right)^4  \geq  (d_1^2d_2^{-2}-8c_0)\xi_2^{-4}\sigma_x^4 \right)\\
		&+\sum_{j=2}^{p}P\left( \sup_{ \vbeta \in \bbB } 8n^{-1}\sum_{i=1}^n\left(\widecheck{\vx}_{i,-1}^T\vtheta_j -\vxw_{i,-1}^T\vtheta_j\right)^4  \geq  8c_0\xi_2^{-4}\sigma_x^4 \right)\\
		\leq& p \exp(-c_2 \sqrt{n} )+P\left( 8\left|n^{-1}\sum_{i=1}^n\left(\vxw_{i,-1}^T\vtheta_j\right)^4 - \E\left[\left(\vxw_{i,-1}^T\vtheta_j\right)^4\right]\right|\geq  (d_1^2d_2^{-2}-512-8c_0)\xi_2^{-4}\sigma_x^4\right)\\
		\leq& (p+1) \exp(-c_2 \sqrt{n} )=  \exp(-c\sqrt{n}).
		\end{align*}  
		Hence we prove that $P(\mathcal{H}_n)\geq 1-\exp(-c\sqrt{n})$, for some positive constant $c$ and all $n$ sufficiently large.
		
		For $\mathcal{J}_n$, note that
		$$\max_{1\leq i\leq n}\sup_{\vbeta\in\bbB_1}|\vx_i^T\vbeta|\leq \max_{1\leq i\leq n}|\vx_i^T\vbeta_0|+ \max_{1\leq i\leq n}\sup_{\vbeta\in\bbB_1}\left|\vx_i^T(\vbeta-\vbeta_0)\right|.$$
		The sub-Gaussian property of $\vx_i^T\vbeta_0$ implies that $P\left( \max_{  1\leq i\leq n}|\vx_i^T\vbeta_0|\geq ||\vbeta_0||_2\sigma_x\sqrt{\log(p\vee n)}\right)\leq \exp[-c\log(p\vee n)],$ for some positive constant $c$, and all $n$ sufficiently large. Note that for any $\vbeta\in\bbB_1$, we have that $||\vbeta-\vbeta_0||_2\leq c_0\sqrt{s}h^2$, and $||\vbeta-\vbeta_0||_0\leq (k+1)s$. Then we have that $ \max_{1\leq i\leq n}\sup_{\vbeta\in\bbB_1}\left|\vx_i^T(\vbeta-\vbeta_0)\right|\leq \max_{1\leq i\leq n}||\vx_i||_\infty\sup_{\vbeta\in\bbB_1}||\vbeta-\vbeta_0||_1\leq c_0s^{3/2}h^2\sqrt{\log (p\vee n)}$ with probability at least $1-\exp[-c_1\log(p\vee n)]$, for some positive constants $c_0$, $c_1$ and all $n$ sufficiently large,  according to (\ref{subG_prop}).
		The assumptions of Theorem~\ref{Lasso_error} imply that $s^{3/2}h^2\leq 1$. Hence we conclude that 
		$$P\left( \max_{1\leq i\leq n}\sup_{\vbeta\in\bbB_1}|\vx_i^T\vbeta|\geq 2||\vbeta_0||_2\sigma_x\sqrt{\log(p\vee n)}\right)\leq \exp[-c\log(p\vee n)],$$
		for some positive constant $c$, and all $n$ sufficiently large. It concludes the proof of the lemma.
	\end{proof}

	\begin{proof}[Proof of Lemma~\ref{lem:Gbound}]
		Recall $\vgamma = \vbeta_{-1}-\vbeta_{0,-1}$. By Taylor expansion, we can derive that
		\begin{align}
		G(t|\vbeta)  = &\E\big[f_0(\vx^T\vbeta_0)|\vx^T\vbeta=t\big] \nonumber\\
		=& \E\left[f_0(\vx^T\vbeta) -f_0'(\vx^T\vbeta)\vx_{-1}^T\vgamma+ \int_{\vx^T\vbeta}^{\vx^T\vbeta_0} f''_0(u) (\vx^T\vbeta_0-u ) du |\vx^T\vbeta=t\right] \nonumber\\
		=&f_0(t)  - f_0'(t)\E(\vx_{-1}|\vx^T\vbeta=t)^T\vgamma  + \E\left[  \int_{0}^{\vx_{-1}^T\vgamma}af''_0(a+\vx^T\vbeta_0)da |\vx^T\vbeta=t\right].\label{form:Gnew_def}
		\end{align}
		Hence we have 
		\begin{align*}
		G(\vx^T\vbeta|\vbeta)  -G(\vx^T\vbeta_0|\vbeta_0)
		=&\big[f_0(\vx^T\vbeta)-f_0(\vx^T\vbeta_0) \big] - f_0'(\vx^T\vbeta)\E(\vx|\vx^T\vbeta)^T(\vbeta-\vbeta_0 )  \\
		&+ \E\left[  \int_{0}^{\vx^T(\vbeta-\vbeta_0)}af''_0(a+\vx^T\vbeta_0)da |\vx^T\vbeta\right]  \\
		=&  f_0'(\vx^T\vbeta)\left[\vx_{-1}^T\vgamma  -\E(\vx_{-1}^T\vgamma  |\vx^T\vbeta)\right] -  \int_{0}^{\vx_{-1}^T\vgamma}af''_0(a+\vx^T\vbeta_0)da\\
		&+ \E\left[  \int_{0}^{\vx_{-1}^T\vgamma}af''_0(a+\vx^T\vbeta_0)da |\vx^T\vbeta\right] .
		\end{align*}
		It proves (\ref{G-1}). 	To prove (\ref{G1-1}),  according to equation (\ref{form:Gnew_def}), we have that 
		\begin{align*}
		G^{(1)}(t|\vbeta)  = \frac{d}{dt}G(t|\vbeta)
		=&f_0'(t)  - f_0'(t)\E^{(1)}(\vx_{-1}|\vx^T\vbeta=t)^T\vgamma  - f_0''(t)\E(\vx_{-1}|\vx^T\vbeta=t)^T\vgamma \\
		& +\frac{d}{dt} \E\left[  \int_{0}^{\vx_{-1}^T\vgamma}af''_0(a+\vx^T\vbeta_0)da |\vx^T\vbeta=t\right]\\
		=&f_0'(t)  - f_0'(t)\E^{(1)}(\vx_{-1}|\vx^T\vbeta=t)^T\vgamma - f_0''(t)\E(\vx_{-1}|\vx^T\vbeta=t)^T\vgamma \\
		&+ \E^{(1)}\left[  \int_{0}^{\vx_{-1}^T\vgamma}af''_0(a+\vx^T\vbeta_0)da |\vx^T\vbeta=t\right],
		\end{align*}
		where $\E^{(1)}(\cdot|\vx^T\vbeta=t)$ is the first derivative of $\E(\cdot|\vx^T\vbeta=t)$ with respect to $t$.  	Hence we have 
		\begin{align*}
		&G^{(1)}(\vx^T\vbeta|\vbeta)  -G^{(1)}(\vx^T\vbeta_0|\vbeta_0)\\
		=&f''_0(\vx^T\vbeta) \big[ \vx_{-1}^T\vgamma-\E( \vx_{-1}^T\vgamma|\vx^T\vbeta)\big] - \int^{\vx_{-1}^T\vgamma}_{0} af'''_0(a+\vx^T\vbeta_0)da\\
		&-   f_0'(\vx^T\vbeta)\E^{(1)}(\vx_{-1}|\vx^T\vbeta=t)^T\vgamma + \E^{(1)}\left[  \int_{0}^{\vx_{-1}^T\vgamma}af''_0(a+\vx^T\vbeta_0)da |\vx^T\vbeta=t\right],
		\end{align*}
		It proves (\ref{G1-1}).
		
		Denote the event 
		\begin{align*}
		\mathcal{E}_0= &\left\{\max_{1\leq i \leq n}\sup_{\vbeta\in\bbB}|\vx_i^T\vbeta|\geq c_0 \sqrt{s\log(p\vee n)}\right\} \\
		&\bigcap \left\{\max_{1\leq i \leq n} |\vx_i^T(\vbeta_1-\vbeta_2)| \geq c_0 \sqrt{s\log(p\vee n)}||\vbeta_1-\vbeta_2||_2,\ \forall\ \vbeta_1,\vbeta_2\in\bbB\right\},
		\end{align*}
		for some constant $c_0>0$.	 		
		Taking $t=d_0\sigma_x^2\log(p\vee n)$ and $s_0=ks$, Lemma~\ref{lem15NCL} implies that 
		\begin{align*}
		P\left(\max_{1\leq i \leq n}\sup_{\vv\in\bbK(2ks)}(\vx_i^T\vv)^2\leq (d_0+1)  \sigma_x^2s \log(p\vee n)\right) 
		\leq &\sum_{i=1}^nP\left(\sup_{\vv\in\bbK(2ks)} (\vx_i^T\vv)^2\leq (d_0+1)\sigma_x^2  s\log(p\vee n)\right)\\
		\leq &\exp[-d_1s\log(p\vee n)],
		\end{align*} 
		for some positive constants $d_0$, $d_1$, and all $n$ sufficiently large. Combining this result with the definition of $\bbB$, we obtain that $P(\mathcal{E}_0)\geq 1-\exp[-c_2s\log(p\vee n)]$, for some positive constants $c_0$, $c_2$, and all $n$ sufficiently large. 
		Note that $G(t|\vbeta)$ is twice-differentiable with respect to $t$, and the derivatives are bounded by Assumption~\ref{K4}-(a). Hence on the event $\mathcal{E}_0$, we have
		\begin{align*}
		&\max_{1\leq i \leq n} \left[  G(\vx_i^T\vbeta_1|\vbeta_1)-G(\vx_i^T\vbeta_2|\vbeta_2) \right]^2\\
		\leq& \max_{1\leq i \leq n} 2\left[  G(\vx_i^T\vbeta_1|\vbeta_1)-G(\vx_i^T\vbeta_1|\vbeta_2) \right]^2+\max_{1\leq i \leq n} 2\left[  G(\vx_i^T\vbeta_1|\vbeta_2)-G(\vx_i^T\vbeta_2|\vbeta_2) \right]^2\\
		\leq& c_3||\vbeta_1-\vbeta_2||_2s\log(p\vee n)+c_3\max_{1\leq i \leq n} \left[ \vx_i^T(\vbeta_1-\vbeta_2)\right]^2\\
		\leq& c_1||\vbeta_1-\vbeta_2||_2s\log(p\vee n),
		\end{align*}
		for any $\vbeta_1,\vbeta_2\in\bbB$, some positive constants $c_1$ and $c_3$, where the first part of the second inequality applies Assumption~\ref{K4}-(c). It proves (\ref{Gt}). We can conclude (\ref{Gtt}) with similar techniques.

		To prove (\ref{G-2}), observe that 
		\begin{align*}
		n^{-1}\sum_{i=1}^{n}\big[G(\vx_i^T\vbeta|\vbeta) - G(\vx_i^T\vbeta_0|\vbeta_0)\big]^2\leq& 3 \sum_{k=1}^3 A_{k} , 
		\end{align*} 
		where
		\begin{align*}
		A_1&= n^{-1} \sum_{i=1}^{n}\big[ f'_0(\vx_i^T\vbeta)\big]^2 \big[ \vx_{i,-1}^T\vgamma-\E( \vx_{i,-1}^T\vgamma|\vx_i^T\vbeta)\big]^2,\\
		A_2&=n^{-1} \sum_{i=1}^{n} \big[h(\vx_{i,-1}^T\vgamma)\big]^2,\\
		A_3&=n^{-1} \sum_{i=1}^{n} \left\{\E\big[h(\vx_{i,-1}^T\vgamma)|\vx_i^T\vbeta\big]\right\}^2,
		\end{align*}
		with $h(u) = \int_{0}^uaf_0''(a+\vx^T\vbeta_0)da$.
		It is sufficient to bound $A_k$ for $k=1,2,3$. To bound $A_1$, we have 
		\begin{align*}
		A_1\leq &b^2n^{-1}\sum_{i=1}^{n} \big[ \vx_{i,-1}^T\vgamma-\E( \vx_{i,-1}^T\vgamma|\vx_i^T\vbeta)\big] ^2.
		\end{align*}
		Lemma~\ref{lem:Ex_eigen} in Section~\ref{sec:proof_auxil} implies that $A_1\leq c_1||\vgamma||_2^2$ with probability at least $1-\exp(-c_2n)$, for positive constants $c_1$, $c_2$.  For $A_{2}$, note that 
		\begin{align*}
		A_2=&(4n)^{-1}\sum_{i=1}^{n}\Big[ f''_0(z_i)  (\vx_{i,-1}^T\vgamma )^2 \Big]^2\leq c_0n^{-1}\sum_{i=1}^{n}  (\vx_{i,-1}^T\vgamma )^4,
		\end{align*}
		where $z_i$ is between $\vx^T\vbeta$ and $\vx^T\vbeta_0$,
		for positive constant $c_0$. The  last inequality applies the assumption that $f_0''(\cdot)$ is bounded. Lemma~\ref{lem:cube_rate} indicates that  $A_2\leq c_1||\vgamma||_2^4$ with probability at least $1-\exp(-c_2\sqrt{n})$, since $s\log p \leq d_0nh^5\leq d_1n^{1/6} \leq d_1\sqrt{n}  $, for some positive constants $d_0$, $d_1$ and $n\geq 1$.
		
		For $A_3$, observe that
		\begin{align*}
		A_3= (4n)^{-1}\sum_{i=1}^{n}\left\{ \E\big[ f''_0(z_i) (\vx_{i,-1}^T\vgamma )^2   |\vx_i^T\vbeta\big]\right\}^2 
		\leq c_0n^{-1}\sum_{i=1}^{n}\big[\vgamma^T\E ( \vx_{i,-1}\vx_{i,-1}^T|\vx_i^T\vbeta)\vgamma \big]^2,
		\end{align*}  
		where $z_i$ is between $\vx^T\vbeta$ and $\vx^T\vbeta_0$.
		Assumption~\ref{A2}  implies that 
		\begin{align*}
		n^{-1}\sum_{i=1}^{n}\big[\vgamma^T\E ( \vx_{i,-1}\vx_{i,-1}^T|\vx_i^T\vbeta) \vgamma\big]^2  
		\leq   \xi_4||\vgamma||_2^4.
		\end{align*} 
		Hence we obtain the high probability upper bound of $A_3$. Then combining all the above results, we complete the proof for (\ref{G-2}).    
	\end{proof}

	\begin{proof}[Proof of Lemma~\ref{Gfunc}]
		Note that 
		\begin{align*}
		\widehat{G}(\vx_i^T\vbeta|\vbeta)-G(\vx_i^T\vbeta|\vbeta)=&\sum_{i=1}^n W_{ni}(t|\vbeta)\big(\widetilde{Y}_i - G(\vx_i^T\vbeta|\vbeta)\big)\\
		=&\frac{ (n-1)^{-1}\sum_{j=1,j\neq i}^n K_h(\vx_i^T\vbeta-\vx_j^T\vbeta) \big[\widetilde{Y}_j - G(\vx_i^T\vbeta|\vbeta)\big]}{ (n-1)^{-1}\sum_{j=1,j\neq i}^n K_h(\vx_i^T\vbeta-\vx_j^T\vbeta)}\\
		\triangleq& \frac{A_{n1}(\vx_i^T\vbeta|\vbeta)+A_{n2}(\vx_i^T\vbeta|\vbeta)}{A_{n3}(\vx_i^T\vbeta|\vbeta)},
		\end{align*}
		where
		\begin{align*}
		A_{n1}(\vx_i^T\vbeta|\vbeta) = &(n-1)^{-1}\sum_{j=1,j\neq i}^n K_h(\vx_i^T\vbeta-\vx_j^T\vbeta)\widetilde{\epsilon}_j,\\
		A_{n2}(\vx_i^T\vbeta|\vbeta) = &(n-1)^{-1}\sum_{j=1,j\neq j}^n K_h(\vx_i^T\vbeta-\vx_j^T\vbeta) \big[ f_0(\vx_j^T\vbeta_0) - G(\vx_i^T\vbeta|\vbeta)\big],\\
		A_{n3}(\vx_i^T\vbeta|\vbeta) =& (n-1)^{-1}\sum_{j=1,j\neq i}^n K_h(\vx_i^T\vbeta-\vx_i^T\vbeta).
		\end{align*} 
		Then Lemma~\ref{lem:An1bound}--\ref{lem:An3bound} provide the  high probability bounds for $A_{ni}$, $i=1,2,3$, as following:
		\begin{align*}
		P\left(\max_{  1\leq i\leq n}\sup \limits_{\vbeta \in \bbB} |A_{n1}(\vx_i^T\vbeta|\vbeta)| \geq c_0 h^2 \right)&\leq  \exp(-c_1nh^5), \\
		P\left(\max_{  1\leq i\leq n}\sup \limits_{\vbeta \in \bbB} |A_{n2}(\vx_i^T\vbeta|\vbeta)| \geq c_0h^2\right)&\leq \exp\big[-c_1\log(p\vee n)\big],\\
		P\left(\max_{1\leq i \leq n}\sup \limits_{\vbeta \in \bbB} \left|A_{n3}(\vx_i^T\vbeta|\vbeta)-\E\left[A_{n3}(\vx_i^T\vbeta|\vbeta)\right]\right| \geq c_0h^2\right)&\leq \exp(-c_1nh^5).
		\end{align*}
		for universal positive constants $c_0$, $c_1$, and all $n$ sufficiently large.
		
		We denote  the p.d.f of $\vx^T\vbeta$ as $f_{\vbeta}(\cdot)$. For $\E\left[A_{n3}(\vx_i^T\vbeta|\vbeta)\right]$, we have 
		\begin{align*}
		\E\left[A_{n3}(\vx_i^T\vbeta|\vbeta)\right]&= h^{-1} \int K\Big(\frac{\vx_i^T\vbeta-y}{h}\Big)f_{\vbeta}(y)dy \\
		&= \int K(-z)f_{\vbeta}(\vx_i^T\vbeta+hz)dz \\
		&=  \int K(-z)\Big[f_{\vbeta}(\vx_i^T\vbeta) + hz f_{\vbeta}'(\vx_i^T\vbeta) + \frac{h^2z^2}{2}f_{\vbeta}''(\widetilde{t})\Big]dz \\
		& = f_{\vbeta}(\vx_i^T\vbeta) + \frac{h^2}{2}\int  z^2K(-z) f_{\vbeta}''(\widetilde{t})dz,
		\end{align*}	
		where $\widetilde{t}$ is between $\vx_i^T\vbeta$ and $\vx_i^T\vbeta+hz$. In the above, the  second equality employs the transformation $z = (y-\vx_i^T\vbeta)/h$. Assumption~\ref{K1}--\ref{K3} imply that 
		$$\sup \limits_{\vbeta \in \bbB} \left| \frac{h^2}{2}\int  z^2K(-z) f_{\vbeta}''(\widetilde{t})dz\right| \leq c_0h^2,$$ 
		for some positive constant $c_0$.  Hence $f_{\vbeta}(\vx_i^T\vbeta)-c_0h^2\leq \E\left[A_{n3}(\vx_i^T\vbeta|\vbeta)\right]\leq f_{\vbeta}(\vx_i^T\vbeta)+c_0h^2$.
		Assumption~\ref{K3} implies that $\max_{ 1\leq i \leq n}\sup_{\vbeta\in\bbB}  f^{-1}_{\bm{\beta}}(\vx_i^T\vbeta) \leq M$, for some positive constant $M$. It ensures that 
		\begin{align}
		P\left(\max_{1\leq i \leq n}\sup \limits_{\vbeta \in \bbB} [A_{n3}(\vx_i^T\vbeta|\vbeta)]^{-1} \geq 2M\right)\leq \exp(-c_1nh^5),\label{A3bound}
		\end{align} 
		for universal positive constant $c_1$, and all $n$ sufficiently large.
		Then we conclude that for universal positive constants $c_0$ and $c_1$,  and all $n$ sufficiently large,
		\begin{align*}
		P\left(\max_{ 1\leq i \leq n} \sup\limits_{\vbeta \in \bbB}\big|\widehat{G}(\vx_i^T\vbeta|\vbeta)-G(\vx_i^T\vbeta|\vbeta)  \big| \geq c_0h^2\right)\leq  \exp\big[-c_1\log(p\vee n)\big].
		\end{align*} 
	\end{proof}

	\begin{proof}[Proof of Lemma~\ref{G1func}]
		Since 
		$$	\widehat{G}(\vx_i^T\vbeta|\vbeta)-G(\vx_i^T\vbeta|\vbeta)=  \frac{A_{n1}(\vx_i^T\vbeta|\vbeta)+A_{n2}(\vx_i^T\vbeta|\vbeta)}{A_{n3}(\vx_i^T\vbeta|\vbeta)},$$
		we have
		\begin{align*}
		\widehat{G}^{(1)}(\vx_i^T\vbeta|\vbeta)-G^{(1)}(\vx_i^T\vbeta|\vbeta)\triangleq&  \frac{A_{n1}^{(1)}(\vx_i^T\vbeta|\vbeta)+A_{n2}^{(1)}(\vx_i^T\vbeta|\vbeta)}{A_{n3}(\vx_i^T\vbeta|\vbeta)}\\
		& + \frac{A_{n1}(\vx_i^T\vbeta|\vbeta)+A_{n2}(\vx_i^T\vbeta|\vbeta)}{A_{n3}(\vx_i^T\vbeta|\vbeta)}* \frac{A_{n3}^{(1)}(\vx_i^T\vbeta|\vbeta)}{A_{n3}(\vx_i^T\vbeta|\vbeta)},
		\end{align*}
		where $G^{(1)}(t|\vbeta) = \frac{d}{d t}G(t|\vbeta)$, $\widehat{G}(t|\vbeta) = \frac{d}{d t}\widehat{G}(t|\vbeta)$, $A_{nk}^{(1)}(t|\vbeta) = \frac{d}{d t}A_{nk}(t|\vbeta) $, for $k=1,2,3$. Let $K_h(z) = h^{-1}K(z/h)$, and $K'_h(z) = h^{-2}K'(z/h)$. We have	
		\begin{align*}
		A_{n1}^{(1)}(\vx_i^T\vbeta|\vbeta) =& (n-1)^{-1}\sum_{j=1,j\neq i}^n K'_h(\vx_i^T\vbeta-\vx_j^T\vbeta)\widetilde{\epsilon}_j,\\
		A_{n2}^{(1)}(\vx_i^T\vbeta|\vbeta) =& (n-1)^{-1}\sum_{j=1,j\neq i}^n K'_h(\vx_i^T\vbeta-\vx_j^T\vbeta) \big[f_0(\vx_j^T\vbeta_0) - G(\vx_i^T\vbeta|\vbeta)\big] \\
		& -G^{(1)}(\vx_i^T\vbeta|\vbeta)n^{-1}\sum_{j=1,j\neq i}^n K_h(\vx_i^T\vbeta-\vx_j^T\vbeta)  \\
		\triangleq&  	A_{n21}^{(1)}(\vx_i^T\vbeta|\vbeta) -	A_{n22}^{(1)}(\vx_i^T\vbeta|\vbeta), \\
		A_{n3}^{(1)}(\vx_i^T\vbeta|\vbeta) =& (n-1)^{-1}\sum_{i=1}^n K'_h(\vx_i^T\vbeta-\vx_j^T\vbeta).
		\end{align*}
		
		First, similarly as in the proof of Lemma~\ref{lem:An1bound}, we can derive that
		$$P\left(\max_{1\leq i \leq n}\sup \limits_{ \vbeta \in \bbB} |A_{n1}^{(1)}(\vx_i^T\vbeta|\vbeta)| \geq c_0 h\right)\leq  \exp(-c_1nh^5).$$ 
		We denote  the p.d.f of $\vx^T\vbeta$ as $f_{\vbeta}(\cdot)$.  Note that $\vx_j^T\vbeta$ is independent of $\vx_i^T\vbeta$. We thus have
		\begin{align*}
		\E_{\vx_i^T\vbeta} \left[K'_h(\vx_i^T\vbeta-\vx_j^T\vbeta)\, \big|\vx_i^T\vbeta\right]	& =  h^{-2} \int K'\Big(\frac{\vx_i^T\vbeta-y}{h}\Big)   f_{\vbeta}(y) dy \\
		& =  h^{-1}\int K'(-z)   \Big[f_{\vbeta}(\vx_i^T\vbeta) + hzf_{\vbeta}'(\vx_i^T\vbeta) +  \frac{h^2z^2}{2}f_{\vbeta}''(\widetilde{t})   \Big] dz\\
		& =  f_{\vbeta}'(\vx_i^T\vbeta) + \frac{h}{2} \int z^2K'(-z)   f_{\vbeta}''(\widetilde{t}) dz,
		\end{align*}
		where $\widetilde{t}$ is between $\vx_i^T\vbeta$ and $\vx_i^T\vbeta +hz$. In the above, the second equality considers Taylor expansion at point $\vx_j^T\vbeta$, with the notation $z = -(\vx_j^T\vbeta-y)/h$, which is followed by $y = \vx_i^T\vbeta+hz$.
		Similarly as in the proof of Lemma~\ref{lem:An3bound}, we can show that for universal positive constants $c_0$ and $c_1$,
		$$P\left(\max_{1\leq i \leq n}\sup \limits_{\vbeta \in \bbB} \big|A_{n3}^{(1)}(\vx_i^T\vbeta|\vbeta)-f'_{\vbeta}(\vx_i^T\vbeta)\big| \geq c_0h \right)\leq  \exp(-c_1nh^5).$$
		Then the techniques in the proof of Lemma~\ref{lem:An1bound} and Lemma~\ref{lem:An2bound} can be applied to analyze $A_{n21}^{(1)}(\vx_i^T\vbeta|\vbeta)$ and $A_{n22}^{(1)}(\vx_i^T\vbeta|\vbeta)$. Observe that  
		\begin{align*}
		&\E \Big\{K'_h(\vx_i^T\vbeta-\vx_j^T\vbeta)\big[ f_0(\vx_j^T\vbeta_0) - G(\vx_i^T\vbeta|\vbeta)\big] \Big\}	\\ 
		= &	\E_{(\vx_i^T\vbeta,\vx_j^T\vbeta)} \left\{\E\left\{K'_h(\vx_i^T\vbeta-\vx_j^T\vbeta)\big[ f_0(\vx_j^T\vbeta_0) - G(\vx_i^T\vbeta|\vbeta)\big] \Big| \vx_i^T\vbeta, \vx_j^T\vbeta \right\} \right\}\\
		= &	\E_{(\vx_i^T\vbeta,\vx_j^T\vbeta)} \Big\{K'_h(\vx_i^T\vbeta-\vx_j^T\vbeta)\big[ G(\vx_j^T\vbeta|\vbeta)- G(t|\vbeta)\big]\Big\} \\
		=& h^{-1} \int K'(-z)\big[ G(\vx_i^T\vbeta+hz|\vbeta)- G(\vx_i^T\vbeta|\vbeta)\big]   f_{\vbeta}(\vx_i^T\vbeta+hz) dz \\
		=&   -\int K'(z)  \Big[z G^{(1)}(\vx_i^T\vbeta|\vbeta)+\frac{hz^2}{2} G^{(2)}(t_1|\vbeta)\Big]   \Big[f_{\vbeta}(\vx_i^T\vbeta) + hz f_{\vbeta}'(t_2 ) \Big]  dz\\
		= & G^{(1)}(\vx_i^T\vbeta|\vbeta) f_{\vbeta}(\vx_i^T\vbeta)  - \frac{h f_{\vbeta}(\vx_i^T\vbeta)}{2} \int z^2 K'(z) G^{(2)}(t_1|\vbeta) dz\\
		&  - h G^{(1)}(\vx_i^T\vbeta|\vbeta) \int z^2 K'(z) f_{\vbeta}'(t_2) dz - \frac{h^2}{2} \int z^3 K'(z) G^{(2)}(t_1|\vbeta)f_{\vbeta}'(t_2)  dz,
		\end{align*}
		where $t_1$ and $t_2$ are both between $\vx_i^T\vbeta$ and $\vx_i^T\vbeta+hz$. In the above, the second equality applies the independence between $\vx_i^T\vbeta$ and $\vx_j^T\vbeta$, and $G(\vx_j^T\vbeta|\vbeta) = \E\big[f_0(\vx_j^T\vbeta_0)|\vx_j^T\vbeta\big]$.
		Then Assumption \ref{K1}--\ref{K4} and the proofs in Lemma~\ref{lem:An2bound} and  Lemma~\ref{lem:An3bound} imply that for some constants $c_0$, $c_1$, and all $n$ sufficiently large,
		$$	P\left(\max_{1\leq i \leq n}\sup \limits_{\vbeta \in \bbB} \big|A_{n21}^{(1)}(\vx_i^T\vbeta|\vbeta) - G^{(1)}(\vx_i^T\vbeta|\vbeta) f_{\vbeta}(\vx_i^T\vbeta)\big| \geq c_0h\right)  \leq  \exp[-c_1\log(p\vee n)],$$ 
		$$	P\left(\max_{1\leq i \leq n}\sup \limits_{ \vbeta \in \bbB} \big|A_{n22}^{(1)}(\vx_i^T\vbeta|\vbeta) - G^{(1)}(\vx_i^T\vbeta|\vbeta) f_{\vbeta}(\vx_i^T\vbeta)\big| \geq c_0h\right)  \leq  \exp[-c_1\log(p\vee n)].$$
		This implies that 
		$	P\Big(\max_{1\leq i \leq n}\sup \limits_{\vbeta \in \bbB} \big|A_{n2}^{(1)}(\vx_i^T\vbeta|\vbeta) \big| \leq c_0h\Big)  \geq 1- 2  \exp[-c_1\log(p\vee n)].$ Assumption~\ref{K3} implies that $\max_{ 1\leq i \leq n}\sup_{\vbeta\in\bbB}  f^{-1}_{\bm{\beta}}(\vx_i^T\vbeta) \leq M$, for some positive constant $M$. by noting  the high probability bounds for $A_{ni}(\vx_i^T\vbeta|\vbeta) $ and $A_{ni}^{(1)}(\vx_i^T\vbeta|\vbeta) $, we conclude the lemma.
	\end{proof}

	\begin{proof}[Proof of Lemma~\ref{lem:Ebound}]  
		We will prove the first part of the claim below. The proof of the second and third parts is similar.
		
		Note that 
		\begin{align*}
		\widehat{\E}(\vx_i|\vx_i^T\vbeta)-\E(\vx_i|\vx_i^T\vbeta) = \frac{(n-1)^{-1}\sum_{j=1,j\neq i}^nK_h (\vx_i^T\vbeta-\vx_j^T\vbeta) \big[\vx_j- \E(\vx_i|\vx_i^T\vbeta)\big]}{A_{n3}(\vx_i^T\vbeta|\vbeta)}.
		\end{align*} 
		
		Let $B_{n}(\vx_i^T\vbeta,\vv|\vbeta) = [(n-1)h]^{-1}\sum_{j=1,j\neq i }^n \vgamma_i(z_j)^T\vv,$ where $\vgamma_i(z_j) =K \big(\frac{\vx_i^T\vbeta-\vx_j^T\vbeta}{h}\big) \big[\vx_j- \E(\vx_i|\vx_i^T\vbeta)
		\big]$. 	Lemma~\ref{lem:An3bound} and inequality (\ref{A3bound}) already provide a high probability bound for the denominator.
		It suffices to prove the high probability bound for $\sup \limits_{ \substack{\vbeta \in \bbB\\\vv\in\bbK(p,2ks)}} |B_{n}(\vx_i^T\vbeta,\vv|\vbeta)|$.
		
		We first derive the bound of $\big|\E\big[B_n(\vx_i^T\vbeta,\vv|\vbeta)\big] \big|$.
		Let $f_{\vbeta}(\cdot)$ denote the p.d.f of $\vx^T\vbeta$. Note  
		\begin{align*}
		\E [\vgamma_i(z_j)^T\vv ]=& \E \Big\{K \Big(\frac{\vx_i^T\vbeta-\vx_j^T\vbeta}{h}\Big) \big[\vx_j- \E(\vx_i|\vx_i^T\vbeta)\big]^T\vv\Big\}\\
		= & \E_{(\vx_i^T\vbeta,\vx_j^T\vbeta)} \Big\{\E \Big[K \Big(\frac{\vx_i^T\vbeta-\vx_j^T\vbeta}{h}\Big) \big[\vx_j- \E(\vx_i|\vx_i^T\vbeta)\big]^T\vv \big| \vx_i^T\vbeta, \vx_j^T\vbeta\Big]\Big\}\\
		= &  \E_{(\vx_i^T\vbeta,\vx_j^T\vbeta)} \Big\{K \Big(\frac{\vx_i^T\vbeta-\vx_j^T\vbeta}{h}\Big) \big[\E(\vx_j|\vx_j^T\vbeta)- \E(\vx_i|\vx_i^T\vbeta)\big]^T\vv\Big\}\\
		=&  \E_{\vx_i^T\vbeta }  \left\{h\int K(-z)  \big[ \E(\vx|\vx^T\vbeta=\vx_i^T\vbeta+hz) - 
		\E(\vx_i|\vx_i^T\vbeta)\big]^T\vv f_{\vbeta}(\vx_i^T\vbeta+hz) dz \right\}\\
		=& \E_{\vx_i^T\vbeta }  \bigg\{h \int K(-z)  \Big[\E^{(1)}(\vx|\vx^T\vbeta=\vx_i^T\vbeta)hz \\
		&+  \frac{h^2z^2}{2}  \E^{(2)}(\vx|\vx^T\vbeta=t_1)\Big]^T\vv \Big[f_{\vbeta}(\vx_i^T\vbeta) + hzf_{\vbeta}'(\widetilde{t})  \Big]\bigg\} dz\\
		= & \E_{\vx_i^T\vbeta }  \left\{\frac{h^3f_{\vbeta}(\vx_i^T\vbeta)}{2} \int z^2 K(-z)\E^{(2)}(\vx^T\vv|\vx^T\vbeta=t_1)dz\right\} \\
		&+ \E_{\vx_i^T\vbeta }  \left\{h^3\E^{(1)}(\vx^T\vv|\vx^T\vbeta=\vx_i^T\vbeta) \int z^2 K(-z) f_{\vbeta}'(\widetilde{t})  dz\right\} \\
		& +  \E_{\vx_i^T\vbeta }  \left\{\frac{h^4 }{2} \int z^3 K(-z) \E^{(2)}(\vx^T\vv|\vx^T\vbeta=t_1)f_{\vbeta}'(\widetilde{t})  dz\right\},
		\end{align*}
		where $t_1$ and $\widetilde{t}$ are both between $\vx_i^T\vbeta$ and $\vx_i^T\vbeta+hz$.
		In the above, the third equality uses the independence between $\vx_i^T\vbeta$ and $\vx_j^T\vbeta$.
		The  last equality uses $\int zK(-z)dz=0$. 
		Assumptions~\ref{A2}--\ref{K3} imply that $ \big|\E\big[B_n(\vx_i^T\vbeta,\vv|\vbeta)\big] \big|\leq ch^2||\vv||_2$ for some positive constant $c$. Applying the same techniques as those in the proof of Lemma~\ref{lem:An1bound}, we can derive that  $ \E\{ [\vgamma_i(z_j)^T\vv ]^2\}\leq ch||\vv||_2^2$. 
		
		Next, we derive the high probability bound of $ |B_{n}(\vx_i^T\vbeta,\vv|\vbeta)|$.  
		Note that $\vx_j^T\vv$ is independent of $\vx_i^T\vbeta$. The sub-Gaussian property of $\vx_j^T\vv$ and $\E(\vx_i^T\vv|\vx_i^T\vbeta)$ implies that $\vx_j^T\vv - \E(\vx_i^T\vv|\vx_i^T\vbeta)$ is also sub-Gaussian. Since $K(\cdot)$ is bounded on the real line,  Lemma~\ref{lem:subg} implies that  $[ \vgamma_i(z_j) -  \E \vgamma_i(z_j)]^T\vv $ is sub-Gaussian. Then the tail probability inequality for sub-Gaussian implies that 
		\begin{align*}
		P\left( \left| \sum_{j=1,j\neq i}^n[ \vgamma_i(z_j) -  \E \vgamma_i(z_j)]^T\vv \right|  \geq t \ \Big|\ \vx_i^T\vbeta\right)  \leq 2\exp\left[ -\frac{t^2}{2c^2(n-1)||\vv||_2^2 h}\right] ,
		\end{align*} 
		for some positive constant $c$, where applies $ \E\{ [\vgamma_i(z_j)^T\vv ]^2\}\leq c^2h||\vv||_2^2$.
		%
		Taking $t=c(n-1) h^3$, we have that 
		\begin{align*}
		&	P\left( \left| \sum_{j=1,j\neq i}^n[ \vgamma_i(z_j) -  \E \vgamma_i(z_j)]^T\vv \right|  \geq c(n-1) h^3   \right)   \\
		=&\E_{\vx_i^T\vbeta}\left\{	P\left( \left| \sum_{j=1,j\neq i}^n[ \vgamma_i(z_j) -  \E \vgamma_i(z_j)]^T\vv \right|  \geq  c(n-1)h^3   \ \Big|\ \vx_i^T\vbeta\right)   \right\} \\
		\leq &2 \exp(-cnh^5),
		\end{align*} 
		for some positive constant $c$ and all $n$ sufficiently large, since $||\vv||_2\leq 1$. 
		Combining this with the bound of $\big|\E\big[B_n(\vx_i^T\vbeta,\vv|\vbeta)\big] \big|$, we conclude that 
		there exist some positive constants $c_0$ and $c_1$ such that for all $n$ sufficiently large,
		$$P\Big(  \big|B_n(\vx_i^T\vbeta,\vv|\vbeta)\big| \geq c_0h^2 \Big)  \leq  \exp(-c_1nh^5).$$
		
		
		
		
		To obtain the uniform bound, we will cover $\bbB$ with $N_1$ $L_2-$balls of radius $\delta$. Denote the centers by $\vbeta_1^*,\cdots,\vbeta_{N_1}^*$. Similarly in the proof of Lemma~\ref{lem:An1bound}, we can cover $\bbK(p,2ks)$ with $N_2$ $L_2-$balls of radius $\delta$. Denote their centers by $\vv_1^*,\cdots,\vv_{N_2}^*$. Let $\mathcal{N}_{\delta}$ be the this joint cover  of $\bbB\times\bbK(p,2ks)$. We can construct the covers such that $N \triangleq |\mathcal{N}_{\delta}|= N_1*N_2 \leq c p^{4ks}\delta^{-4ks}$ for some positive constant $c$. 
		Given any $\vbeta\in\bbB$ and $\vv\in\bbK(p,2ks)$, we can find $(\vbeta^*,\vv^*)\in\mathcal{N}_{\delta}$ such that $||\vbeta-\vbeta^*||_2\leq \delta$, and $||\vv-\vv^*||_2\leq \delta$. We have
		\begin{align*}
		&\Big|(n-1)^{-1}\sum_{j=1,j\neq i}^nK_h(\vx_i^T\vbeta-\vx_j^T\vbeta)\big[\vx_j-\E(\vx_i|\vx_i^T\vbeta)\big]^T\vv\\
		&-(n-1)^{-1}\sum_{j=1,j\neq i}^nK_h(\vx_i^T\vbeta^*-\vx_j^T\vbeta^*)\big[\vx_j-\E(\vx_i|\vx_i^T\vbeta^*)\big]^T \vv^*\Big|\\
		\leq&\Big|(n-1)^{-1}\sum_{j=1,j\neq i}^n \Big[K_h(\vx_i^T\vbeta-\vx_j^T\vbeta)-K_h(\vx_i^T\vbeta^*-\vx_j^T\vbeta^*) \Big] \big[\vx_j- \E(\vx_j|\vx_j^T\vbeta) \big]^T\vv\Big|\\  
		&+\Big|(n-1)^{-1}\sum_{j=1,j\neq i}^n \Big[K_h(\vx_i^T\vbeta-\vx_j^T\vbeta)-K_h(\vx_i^T\vbeta^*-\vx_j^T\vbeta^*) \Big] \big[ \E(\vx_j|\vx_j^T\vbeta)- \E(\vx_i|\vx_i^T\vbeta) \big]^T\vv\Big|\\   
		& +\Big| (n-1)^{-1}\sum_{j=1,j\neq i}^n  K_h(\vx_i^T\vbeta^*-\vx_j^T\vbeta^*)\big[ \E(\vx_i|\vx_i^T\vbeta^*)  -  \E(\vx_i|\vx_i^T\vbeta) \big]^T\vv\Big|  \\ 
		&+\Big| (n-1)^{-1}\sum_{j=1,j\neq i}^nK_h(\vx_i^T\vbeta^*-\vx_j^T\vbeta^*)\big[\vx_j-\E(\vx_i|\vx_i^T\vbeta^*)\big]^T (\vv-\vv^*)\Big|\\
		\triangleq&\sum_{i=1}^4|I_{ni}|, 
		\end{align*} 
		where the definition of $I_{ni}$ is clear from the context. 	 
		Lemma~\ref{lem15NCL} implies that 
		\begin{align}
		&P\left(\max_{1\leq i\leq n} |\vx_i^T\vbeta | \geq  \sigma_x   \sqrt{s\log(p\vee n)}||\vbeta||_2,\ \forall \ \vbeta\in\bbB\right) 
		\leq \exp[-cs\log(p\vee n)],\label{A7-1}\\
		&P\left(\max_{1\leq i\leq n}\sup_{\vv\in\bbK(p,4ks)}\big| \E(\vx_i^T\vv|\vx_i^T\vbeta) \big|\geq\sigma_x   \sqrt{s\log(p\vee n)},\ \forall \ \vbeta\in\bbB\right)\leq \exp[-cs\log(p\vee n)], \label{A7-2}\\
		&P\left(\max_{1\leq i\leq n}  |\vx_i^T(\vbeta-\vbeta^*)| \geq  \sigma_x  \sqrt{s\log(p\vee n)}||\vbeta-\vbeta^*||_2,\ \forall\ \vbeta,\vbeta^*\in\bbB\right) 
		\leq \exp[-cs\log(p\vee n)],\label{A7-3}
		\end{align} 
		for some positive constant $c$, and all $n$ sufficiently large, where the analysis of (\ref{A7-2}) is similar as the proofs of Lemma~\ref{lem:events} and Lemma~\ref{lem:Ex_eigen}. Since $\vx_j^T(\vbeta-\vbeta^*)$ are independent sub-Gaussian random variables, Lemma~\ref{lem15NCL} implies that  
		\begin{align*}
		&P\bigg( (n-1)^{-1}\left|\sum_{j=1,j\neq i}^n |\vx_j^T(\vbeta-\vbeta^*)|^2 - (\vbeta-\vbeta^*)^T\E(\vx\vx^T)(\vbeta-\vbeta^*)\right| \\
		&\qquad\qquad\qquad\qquad\quad\geq c_0 \sigma_x^2  \sqrt{\frac{s\log(p\vee n)}{n}}||\vbeta-\vbeta^*||_2^2,\forall \ \vbeta,\vbeta^*\in\bbB\bigg)\\
		\leq &\exp[-cs\log(p\vee n)],
		\end{align*}
		for some positive constants $c_0$, $c$, and all $n$ sufficiently large. Assumption~\ref{A2} implies that $(\vbeta-\vbeta^*)^T\E(\vx\vx^T)(\vbeta-\vbeta^*)\leq \xi_3||\vbeta-\vbeta^*||_2^2$. 
		Hence we have that 
		\begin{align}
		(n-1)^{-1}\sum_{j=1,j\neq i}^n |\vx_j^T(\vbeta-\vbeta^*)|^2
		&\leq \left(\xi_3+c_0\sigma_x^2\sqrt{\frac{s\log(p\vee n)}{n}}\right) ||\vbeta-\vbeta^*||_2^2\nonumber\\
		&\leq \sigma_x^2s\log (p\vee n) ||\vbeta-\vbeta^*||_2^2 ,\ \forall \ \vbeta,\vbeta^*\in\bbB\label{A7-4}
		\end{align}
		with probability at least $1-\exp[-c_1\log(p\vee n)]$, 	for some positive constants $c_0$, $c_1$, and all $n$ sufficiently large.   
		Similarly the sub-Gaussian property of   $\E(\vx_j^T\vv|\vx_j^T\vbeta)$ and $\vx_j^T\vv- \E(\vx_j^T\vv|\vx_j^T\vbeta)$,  Lemma~\ref{lem14NCL} and Lemma~\ref{lem15NCL} imply that  
		\begin{align}
		&P\left(\sup_{\vv\in\bbK(p,4ks)}\frac{1}{n-1}\sum_{j=1,j\neq i}^n  \big[ \E(\vx_j^T\vv|\vx_j^T\vbeta) \big]^2\geq  c_0\sigma_x^2,\ \forall \ \vbeta\in\bbB  \right)\leq \exp[-c\log(p\vee n)],\label{A7-5} \\
		&P\left(\sup_{\vv\in\bbK(p,4ks)}\frac{1}{n-1}\sum_{j=1,j\neq i}^n  \big( \vx_j^T\vv\big)^2\geq  c_0\sigma_x^2  \right)\leq \exp[-c\log(p\vee n)],\label{A7-7} \\
		&P\left(\sup_{\vv\in\bbK(p,4ks)}\frac{1}{n-1}\sum_{j=1,j\neq i}^n  \big[\vx_j^T\vv- \E(\vx_j^T\vv|\vx_j^T\vbeta) \big]^2\geq c_0\sigma_x^2,\ \forall \ \vbeta\in\bbB  \right)\leq \exp[-c\log(p\vee n)], \label{A7-6}
		\end{align} 
		for some positive constants $c_0$, $c$, and all $n$ sufficiently large, where the analysis of (\ref{A7-5}) and (\ref{A7-6}) is similar as the proofs of Lemma~\ref{lem:events} and Lemma~\ref{lem:Ex_eigen}. Denote the event 
		\begin{align*}
		\mathcal{E}=\bigg\{   &\max_{ 1\leq i \leq n} |\vx_i^T(\vbeta-\vbeta^*)|^2+\frac{1}{n-1}\sum_{j=1,j\neq i}^n |\vx_j^T(\vbeta-\vbeta^*)|^2   
		\leq 2\sigma_x^2s\log(p\vee n)||\vbeta-\vbeta^*||_2^2,\ \forall \ \vbeta,\vbeta^* \in \bbB \bigg\}\\
		\bigcap\Big\{&\max_{ 1\leq i \leq n}\sup_{\substack{\vbeta\in\bbB\\\vv\in\bbK(p,4ks)}}\Big[\big|\E(\vx_i^T\vv|\vx_i^T\vbeta) \big|^2+\frac{1}{n-1}\sum_{j=1,j\neq i}^n \big|\E(\vx_j^T\vv|\vx_j^T\vbeta) \big|^2\Big] 
		\leq 2c_0\sigma_x^2s\log(p\vee n) \Big\}\\
		\bigcap\Big\{&  \sup_{\substack{\vbeta\in\bbB\\\vv\in\bbK(p,4ks)}} \frac{1}{n-1}\sum_{j=1,j\neq i}^n \big[ \vx_j^T\vv-\E(\vx_j^T\vv|\vx_j^T\vbeta) \big]^2   \leq 2\sigma_x^2\Big\}\\
		\bigcap\Big\{& \sup_{\vv\in\bbK(p,4ks)} \frac{1}{n-1}\sum_{j=1,j\neq i}^n (\vx_j^T\vv)  ^2  \leq 2\sigma_x^2\Big\} 
		\bigcap\Big\{ \max_{1\leq i \leq n}  |\vx_i^T\vbeta|\leq \sigma_x \sqrt{s\log(p\vee n)} ||\vbeta||_2,\ \forall \ \vbeta \in \bbB\Big\}.
		\end{align*} 
		Combining (\ref{A7-1}) -- (\ref{A7-6}), we have $P(\mathcal{E})\geq 1- 7\exp[-c_1s\log(p\vee n)]$, for some positive constants $c_0$, $c_1$, and all $n$ sufficiently large.

		Take $\delta = \frac{ h^4}{8s\log (p\vee n)}$.	 According to Assumption~\ref{K1}, 
		we observe
		\begin{align*}
		|I_{n1}| \leq &  \sqrt{(n-1) ^{-1}\sum_{j=1,j\neq i}^n\Big| K_h(\vx_i^T\vbeta-\vx_j^T\vbeta)-K_h(\vx_i^T\vbeta^*-\vx_j^T\vbeta^*)\Big|^2}\\
		&*\sqrt{(n-1)^{-1}\sum_{j=1,j\neq i}^n  \Big| \big[\vx_j- \E(\vx_j|\vx_j^T\vbeta) \big]^T\vv\Big|^2}\\
		\leq& 2h^{-2} \sqrt{|\vx_i^T(\vbeta-\vbeta^*)|^2 + (n-1) ^{-1}\sum_{j=1,j\neq i}^n  |\vx_j^T(\vbeta-\vbeta^*)|^2}\\
		&*\sqrt{(n-1)^{-1}\sum_{j=1,j\neq i}^n  \big[\vx_j^T\vv- \E(\vx_j^T\vv|\vx_j^T\vbeta) \big]^2}\\
		\leq & c_0h^{-2}\delta||\vv||_2\sqrt{s\log(p\vee n)}\leq \frac{c_0h^2}{8\sqrt{s\log (p\vee n)}},
		\end{align*} 
		on the event $\mathcal{E}$, for some constant $c_0>0$, and all $n$ sufficiently large, since $||\vv||_2\leq 1$. In the above, the last inequality applies the first and the third events in $\mathcal{E}$.

		Similarly for $I_{n2}$,  we have
		\begin{align*}
		|I_{n2}| \leq&  \sqrt{(n-1) ^{-1}\sum_{j=1,j\neq i}^n\Big| K_h(\vx_i^T\vbeta-\vx_j^T\vbeta)-K_h(\vx_i^T\vbeta^*-\vx_j^T\vbeta^*)\Big|^2}\\
		&*\sqrt{(n-1)^{-1}\sum_{j=1,j\neq i}^n  \Big| \big[\E(\vx_j|\vx_j^T\vbeta)- \E(\vx_i|\vx_i^T\vbeta)\big]^T\vv\Big|^2}\\
		\leq& 2h^{-2} \sqrt{|\vx_i^T(\vbeta-\vbeta^*)|^2 + (n-1) ^{-1}\sum_{j=1,j\neq i}^n  |\vx_j^T(\vbeta-\vbeta^*)|^2}\\
		&*\sqrt{ \big| \E(\vx_i^T\vv|\vx_i^T\vbeta)\big|^2  + (n-1)^{-1}\sum_{j=1,j\neq i}^n \big| \E(\vx_j^T\vv|\vx_j^T\vbeta)\big|^2}\\
		\leq &c_0h^{-2}\delta||\vv||_2 s\log(p\vee n)\leq c_0h^2/8,
		\end{align*}
		on the event $\mathcal{E}$, for some constant $c_0>0$, and all $n$ sufficiently large. In the above, the last inequality applies the first and the second events in $\mathcal{E}$.

		Assumption~\ref{A2}-(c) implies
		\begin{align*}
		|I_{n3}| &\leq ch^{-1} \Big| \big[ \E(\vx_i|\vx_i^T\vbeta^*)  -  \E(\vx_i|\vx_i^T\vbeta) \big]^T\vv\Big| \\
		&\leq ch^{-1}||\vv||_2 \Big[|\vx_i^T(\vbeta-\vbeta^*)| + (|\vx_i^T\vbeta|+|\vx_i^T\vbeta^*|)*||\vbeta-\vbeta^*||_2 \Big]\\
		&\leq c_0h^{-1}\delta||\vv||_2\sqrt{s\log(p\vee n)}\leq \frac{c_0h^3}{8\sqrt{s\log (p\vee n)}},
		\end{align*}
		on the event $\mathcal{E}$, for some positive constants $c$, $c_0$, and all $n$ sufficiently large. In the above, the second last inequality applies the first and the last events in $\mathcal{E}$.
		
		Assumption~\ref{K1}  implies that 
		\begin{align*}
		|I_{n4}| &\leq ch^{-1} (n-1)^{-1}\sum_{j=1,j\neq i}^n\Big|\big[\vx_j-\E(\vx_i|\vx_i^T\vbeta^*)\big]^T (\vv-\vv^*)\Big|\\
		&\leq \sqrt{2}ch^{-1} \sqrt{ \big[\E(\vx_i|\vx_i^T\vbeta^*)^T(\vv-\vv^*)\big]^2+(n-1)^{-1}\sum_{j=1,j\neq i}^n \big[\vx_j ^T (\vv-\vv^*)\big]^2  } \\
		&\leq c_0h^{-1} \sqrt{s\log(p\vee n)}||\vv-\vv^*||_2 \leq \frac{c_0h^3}{8\sqrt{s\log (p\vee n)}},
		\end{align*}
		on the event $\mathcal{E}$, for some positive constants $c$, $c_0$, and all $n$ sufficiently large. In the above, the second last inequality applies the second and the fourth events in $\mathcal{E}$, since $\big[\E(\vx_i|\vx_i^T\vbeta^*)^T(\vv-\vv^*)\big]^2\leq \big|\E(\vx_i|\vx_i^T\vbeta)^T(\vv-\vv^*) \big|^2+\frac{1}{n-1}\sum_{j=1,j\neq i}^n \big|\E(\vx_j|\vx_j^T\vbeta)^T(\vv-\vv^*) \big|^2$, for any $\vv-\vv^*\in\bbK(p,4ks)$.

		Combining  the above results, we conclude that on the event $\mathcal{E}$,
		\begin{align*}
		&\Big|(n-1)^{-1}\sum_{j=1,j\neq i}^nK_h(\vx_i^T\vbeta-\vx_j^T\vbeta)\big[\vx_j-\E(\vx_i|\vx_i^T\vbeta)\big]^T\vv\\
		&-(n-1)^{-1}\sum_{j=1,j\neq i}^nK_h(\vx_i^T\vbeta^*-\vx_j^T\vbeta^*)\big[\vx_j-\E(\vx_i|\vx_i^T\vbeta^*)\big]^T \vv^*\Big|\\
		\leq& \frac{c_0h^2}{2},
		\end{align*} 
		for some positive constant $c_0$, and all $n$ sufficiently large. Hence,
		\begin{align*}
		&P\Big(\sup \limits_{\substack{\vbeta \in \bbB\\\vv\in\bbK(p,2ks)}} |B_{n}(\vx_i^T\vbeta,\vv|\vbeta)| \geq c_0h^2 \Big) \\
		\leq &P\Big(\bigcup\limits_{(\vbeta^*,\vv^*)\in \mathcal{N}_{\delta}}\big|B_{n}(\vx_i^T\vbeta^*,\vv^*|\vbeta^*) \big|\geq  c_0h^2 /2\Big)\\
		&+ P\Big(\sup\limits_{( \vbeta^*,\vv^*)\in \mathcal{N}_{\delta}}\sup\limits_{ \substack{ ||\vbeta-\vbeta^*||_2\leq \delta\\ ||\vv-\vv^*||_2\leq \delta}}|B_{n}(\vx_i^T\vbeta,\vv|\vbeta)-B_{n}(\vx_i^T\vbeta^*,\vv^*|\vbeta^*)| \geq c_0h^2/2 \Big) \\
		\leq& \sum_{( \vbeta^*,\vv^*)\in \mathcal{N}_{\delta} }P\Big(\big|B_{n}(\vx_i^T\vbeta^*,\vv^*|\vbeta^*) \big|\geq   c_0h^2 /2\Big)+7\exp[-c_1s\log (p\vee n)] \\
		\leq& c p^{4ks}\delta^{-4ks}\exp(-c_1nh^5) + 7\exp[-c_1s\log (p\vee n)] \\
		=&\exp[-c_2\log(p\vee n)],
		\end{align*}   
		for some positive constants $c_0$,  $c_1$, $c_2>1$, and all $n$ sufficiently large,

		Hence there exist some universal positive constants $d_0$ and $d_1$,  such that for all $n$ sufficiently large,
		\begin{align*}
		P\Big(\max_{ 1\leq i \leq n} \sup\limits_{\substack{\vbeta \in \bbB\\\vv\in\bbK(p,2ks)}}|B_{n}(\vx_i^T\vbeta,\vv|\vbeta)| \geq d_0h^2 \Big)\leq&\sum_{i=1}^n P\Big(\sup \limits_{\substack{\vbeta \in \bbB\\\vv\in\bbK(p,2ks)}} |B_{n}(\vx_i^T\vbeta,\vv|\vbeta)| \geq d_0h^2 \Big) \\
		\leq & \exp[-d_1\log(p\vee n)],
		\end{align*} 
		which concludes the lemma.
	\end{proof}

	\begin{proof}[Proof of Lemma~\ref{lem:ghat1_g1}]
		
		We will prove (\ref{ghat1_g1}) below. The proof of (\ref{ghat1_g1_inf}) is similar.
		
		Theorem~\ref{Lasso_error} implies that $P\left(\vbetah\in\bbB_1\right)\geq 1-\exp(-c\log p)$, for some positive constant $c$, and all $n$ sufficiently large. Lemma~\ref{Gbetafunc} implies that 
		$$P\left(\max_{ 1\leq i \leq n}\left|\widehat{G}^{(1)} (\vx_i^T\vbetah|\vbetah)-G^{(1)} (\vx_i^T\vbeta_0|\vbeta_0)  \right| \geq c_0h \right) \leq  \exp(-c_1\log p),$$
		for some positive constants $c_0$, $c_1$, and all $n$ sufficiently large.
		Define
		\begin{align*}
		\bbM = &\Big\{m(\cdot|\vbeta): \vbeta\in\bbB_1,\ m(\cdot|\vbeta)\in C_1^1(T),\ \forall\ \vbeta,\mbox{ and }\\
		&\max_{  1\leq i\leq n}\sup_{\vbeta\in\bbB_1}\left|m(\vx_i^T\vbeta|\vbeta)-G^{(1)}(\vx_i^T\vbeta_0|\vbeta_0 )\right|\leq c_1h\Big\},
		\end{align*} 
		for some positive constant $c_1$, where $T=\{t\in\bbR: |t|\leq 2||\vbeta_0||_2\sigma_x\sqrt{\log (p\vee n)}\}$, and $C_1^1(T)$ is the set of all continuous and Lipschitz functions $f:T\ra \bbR$. 
		To prove (\ref{ghat1_g1}), it is sufficient to prove that there exist some positive constants $c_0$, $c_1$, such that for all $n$ sufficiently large,
		\begin{align*}
		P\Big(\max_{2\leq j \leq p} \sup_{\vbeta\in\bbB_1,m\in\bbM} \big|n^{-1/2}\sum_{i=1}^n\vtheta_j^T\vgamma(Z_i,\vbeta,m) \big| \geq c_0\left[h^2\log(p\vee n)\right]^{1/4}\Big)\leq\exp[-c_1\log (p\vee n)],  
		\end{align*} 
		where $Z_i=(\vx_i,\epsilon_i,A_i)$, $\vgamma(Z_i,\vbeta,m) = \big[m(\vx_i^T\vbeta|\vbeta) - G^{(1)}(\vx_i^T\vbeta_0|\vbeta_0)\big] \widetilde{\epsilon}_i \big[\vx_{i,-1}-\E(\vx_{i,-1}|\vx_i^T\vbeta_0)\big] $,  $\widetilde{\epsilon}_i=2(2A_i-1)[\ep_i+g(\vx_i)]$, and $m(\vx^T\vbeta|\vbeta)$ depends on $\vx$ only through $\vx^T\vbeta$.

		We have
		\begin{align*}
		&P\left(\max_{2\leq j \leq p}\sup_{\vbeta\in\bbB_1,m\in\bbM} \Big|n^{-1/2}\sum_{i=1}^n\vtheta_{j}^T \vgamma(Z_i,\vbeta,m) \Big|>t\right)\\
		\leq &P\left(\max_{2\leq j \leq p}\sup_{\vbeta\in\bbB_1,m\in\bbM}\Big|n^{-1/2}\sum_{i=1}^n\vtheta_{j}^T \vgamma(Z_i,\vbeta,m) \Big|>t\ \bigg|\mathcal{H}_n \cap\mathcal{J}_n\cap \mathcal{K}_n\right) +\exp[-c\log (p\vee n)]\\
		\leq &\sum_{j=2}^pP\left(\sup_{\vbeta\in\bbB_1,m\in\bbM}\Big|n^{-1/2}\sum_{i=1}^n\vtheta_{j}^T \vgamma(Z_i,\vbeta,m) \Big|>t\ \bigg|\mathcal{H}_n \cap \mathcal{J}_n\cap \mathcal{K}_n \right) +\exp[-c\log(p\vee n)],
		\end{align*}
		for some positive constant $c$ and all $n$ sufficiently large, where the events $\mathcal{H}_n$, $\mathcal{J}_n$, and  $\mathcal{K}_n$ are defined in Lemma~\ref{lem:events}. 
		In the proof below, we write $\vgamma(Z_i,\vbeta,m) $ as $\vgamma_{i}(\vbeta,m) $ for simplicity.
		
		Note that $\vtheta_{j}^T\vgamma_{i}(\vbeta,m) = 2(2A_i-1)\left[\epsilon_i+g(\vx_i)\right]\big[m(\vx_i^T\vbeta|\vbeta) - G^{(1)}(\vx_i^T\vbeta_0|\vbeta_0)\big] \big[\vx_{i,-1} -\E(\vx_{i,-1}|\vx_i^T\vbeta_0)\big]^T\vtheta_j $, where $(2A_i-1)$ is a Rademacher sequence, and independent of $(\vx_i,\epsilon_i)$.  Hence given $\{(\vx_i,\ep_i)\}_{i=1}^n$, on the event $\mathcal{H}_n $, we have 
		$ \sup_{\vbeta\in\bbB_1,m\in\bbM} n^{-1} \sum_{i=1}^n\left| \vtheta_{j}^T\vgamma_{i}(\vbeta,m) \right|^2\leq Ch^2,$
		for some positive constant  $C$, and any $j$. Therefore, 
		by Massart's concentration inequality (e.g., Theorem 14.2, \citet{buhlmann2011}), given $\{(\vx_i,\ep_i)\}_{i=1}^n$, $\forall\ t>0$,
		\begin{align*} 
		&P\bigg( \sup_{\vbeta\in\bbB_1,m\in\bbM} \Big|(nh)^{-1}\sum_{i=1}^n \vtheta_{j}^T\vgamma_{i}(\vbeta,m)\Big|\geq (\sqrt{n}h )^{-1}\E_{sup}\big[\vtheta_{j}^T\vgamma_i(\vbeta,m)\big] + t\ \bigg| \{(\vx_i,\ep_i)\}_{i=1}^n, \mathcal{H}_n  \cap\mathcal{J}_n\cap \mathcal{K}_n\bigg)\\
		\leq& \exp\left(-\frac{nt^2}{8}\right),
		\end{align*}
		where $\E_{sup}\big[\vtheta_{j}^T\vgamma_i(\vbeta,m)\big] = \E\left[\sup_{\vbeta\in\bbB_1,m\in\bbM}\Big| n^{-1/2}\sum_{i=1}^n\vtheta_{j}^T\vgamma_{i}(\vbeta,m)\Big| \ \bigg|\{(\vx_i,\ep_i)\}_{i=1}^n,\mathcal{H}_n \cap\mathcal{J}_n\cap \mathcal{K}_n\right].$
		Equivalently,  $\forall\ t>0$,
		\begin{align} 
		&P\bigg( \sup_{\vbeta\in\bbB_1,m\in\bbM} \Big|n^{-1/2}\sum_{i=1}^n \vtheta_{j}^T\vgamma_{i}(\vbeta,m) \Big|\geq\E_{sup}\big[\vtheta_{j}^T\vgamma_i(\vbeta,m)\big] +\sqrt{n} ht   \ \bigg|\{(\vx_i,\ep_i)\}_{i=1}^n,\mathcal{H}_n \cap\mathcal{J}_n\cap \mathcal{K}_n \bigg) \nonumber\\
		\leq &\exp\left(-\frac{nt^2}{8}\right).\label{mass_1}
		\end{align}
		Next we will derive  an upper bound for $\E_{sup}\big[\vtheta_{j}^T\vgamma_i(\vbeta,m)\big].$ Let $M_1$, $\cdots$, $M_{m(s)}$ denote all possible subsets of $\{1,\cdots,p\}$, corresponding to different submodels of sizes at most $ks$. Note that $m(s)\leq {p\choose ks}$. Let $S_{M_l} = \{\vbeta\in\bbB_1:
		\mbox{supp}(\vbeta) = M_l\}$, where supp$(\vbeta)$ denotes the support set of $\vbeta$. Then $\bbB_1 = \bigcup_{l=1}^{m(s)}S_{M_l} $.
		
		Given any $\delta_n>0$, (w.l.o.g., $\delta_n\leq c\sqrt{s}h^2$), for each $S_{M_l} $, $l=1,\cdots,m(s)$, we can cover it by $L_2-$balls of radius $\delta_n$. Note that this cover has cardinality 
		$$N_l\leq \left(1+\frac{2c_0\sqrt{s}h^2}{\delta_n}\right)^{ks},$$
		for some positive constant $c_0$.
		Denote the centers of these $L_2-$balls by $\vbeta^{\circ}_{l0},\cdots,\vbeta^{\circ}_{lN_l}$.
		Denote the collection of these $L_2-$balls by $\mathbb{C}(\vbeta^{\circ}_{ll'})$, $l=1,\cdots,m(s)$, $l'=1,\cdots,N_l$.

		Observe that on the event $\mathcal{J}_n$,  $\max_{1\leq i \leq n}\sup_{\vbeta\in\bbB_1}|\vx_i^T\vbeta|\in T$, then $\bigcup_{\vbeta\in\bbB_1}m(\cdot|\vbeta)\in C_1^1(T)$. By Theorem 2.7.1 in \citet{van1996weak}, the entropy of the $\delta_n-$covering number of $C_1^1(T)$ satisfies
		$$\log N\left(\delta_n, C_1^1(T),L_2(\bbP_n)\right)\leq C||\vbeta_0||_2\frac{\sqrt{\log(p\vee n)}}{\delta_n},$$
		for some positive constant $C$. So we can cover $C_1^1(T)$ with  $N_2\leq \exp\big[C||\vbeta_0||_2\delta_n^{-1}\sqrt{\log(p\vee n)}\big]$ $L_2-$balls of radius $\delta_n$.  Let the centers of these $L_2(\bbP_n)-$balls of the cover be $m_a^{\circ} (\cdot)$, $a=1,\cdots,N_2$. 
		Hence $\forall\ \vbeta\in\bbB_1$, $m(\cdot|\vbeta)\in C_1^1(T)$,  we can find $l,\ l'$, and  a function $m_a^{\circ} (\cdot): T\ra \bbR$ such that $\vbeta\in \mathbb{C}(\vbeta^{\circ}_{ll'})$, and  
		\begin{align}
		n^{-1} \sum_{i=1}^n  [m(\vx_i^T\vbeta_1|\vbeta)-m_a^{\circ}(\vx_i^T\vbeta_1)]^2 \leq \delta_n^2,\ \ \forall \vbeta_1\in\bbB_1 .\label{cover_m}
		\end{align} 
		On the event $\mathcal{H}_n\cap \mathcal{J}_n \cap \mathcal{K}_n$, there exist some positive constants $c_0$, $c_1$ and $c_2$, such that
		\begin{align*}
		&n^{-1}\left| \sum_{i=1}^n\vtheta_j^T \vgamma_{i}(\vbeta,m)-\sum_{i=1}^n \vtheta_j^T\vgamma_{i}(\vbeta^{\circ}_{ll'},m_a^{\circ})\right|\\
		\leq &n^{-1}\left| \sum_{i=1}^n\left[m(\vx_i^T\vbeta|\vbeta) - m_a^{\circ}(\vx_i^T\vbeta^{\circ}_{ll'})\right]\widetilde{\epsilon}_i [\vx_{i,-1} -\E(\vx_{i,-1}|\vx_i^T\vbeta_0)] ^T\vtheta_{j}\right| \\
		\leq &\sqrt{n^{-1}\sum_{i=1}^n\left[m(\vx_i^T\vbeta|\vbeta) - m_a^{\circ}(\vx_i^T\vbeta^{\circ}_{ll'})\right]^2} * \sqrt{n^{-1}\sum_{i=1}^n\left\{\widetilde{\epsilon}_i [\vx_{i,-1} -\E(\vx_{i,-1}|\vx_i^T\vbeta_0)] ^T\vtheta_{j}\right\}^2} \\
		\leq & c_0\sqrt{n^{-1}\sum_{i=1}^n\left[m(\vx_i^T\vbeta|\vbeta) - m(\vx_i^T\vbeta^{\circ}_{ll'}|\vbeta)\right]^2 + n^{-1}\sum_{i=1}^n\left[m(\vx_i^T\vbeta^{\circ}_{ll'}|\vbeta) - m_a^{\circ}(\vx_i^T\vbeta^{\circ}_{ll'})\right]^2}\\
		\leq & c_1\sqrt{n^{-1}\sum_{i=1}^n\left[\vx_i^T(\vbeta-\vbeta^{\circ}_{ll'})\right]^2 +\delta_n^2}\\
		\leq & c_2\delta_n,
		\end{align*} 
		where  the third inequality applies the event $\mathcal{H}_n$; the second last inequality applies (\ref{cover_m}) and the differentiability condition; the last inequality applies the event $\mathcal{K}_n$. 
		Hence, the $\delta_n-$covering number of the class of functions $\Gamma_j=\{\vtheta_j^T\vgamma(Z,\vbeta,m):\vbeta\in\bbB_1, m\in\bbM\}$ satisfies 
		\begin{align}
		N (\delta_n,\Gamma_j,L_1(\mathbb{P}_n))\leq c{p\choose ks}\left(1+\frac{2c_0\sqrt{s}h^2}{\delta_n}\right)^{ks}\exp\big[C\delta_n^{-1}\sqrt{\log(p\vee n)}\big], \ \ \forall j.\label{cover_no0}
		\end{align} 
		
		Recall that  $\vtheta_{j}^T\vgamma_{i}(\vbeta,m) = 2(2A_i-1)\left[\epsilon_i+g(\vx_i)\right]\big[m(\vx_i^T\vbeta|\vbeta) - G^{(1)}(\vx_i^T\vbeta_0|\vbeta_0)\big] \big[\vx_{i,-1} -\E(\vx_{i,-1}|\vx_i^T\vbeta_0)\big] ^T\vtheta_j$, where $(2A_i-1)$ is a Rademacher sequence independent of $(\vx_i,\epsilon_i)$.  
		Note that on the event $\mathcal{H}_n \cap\mathcal{J}_n\cap \mathcal{K}_n$, we have that $\sup_{\vbeta\in\bbB_1,m\in\bbM} \sqrt{n^{-1}\sum_{i=1}^{n}\left[\vtheta_{j}^T\vgamma_{i}(\vbeta,m)^2\right]} =c_0h\triangleq R_n$.
		Let $L=\min\{l:l\geq 1,\ 2^{-l}\leq 4/\sqrt{n}\}$.
		Therefore, Lemma~14.18  in \citet{van1996weak} implies that 
		\begin{align*}
		&\E_{sup}\big[\vtheta_{j}^T\vgamma_i(\vbeta,m)\big] \\
		= &\E\left[\sup_{\vbeta\in\bbB_1,m\in\bbM}\Big| n^{-1/2}\sum_{i=1}^n\vtheta_{j}^T\vgamma_{i}(\vbeta,m)\Big| \ \bigg|\{(\vx_i,\ep_i)\}_{i=1}^n,\mathcal{H}_n \cap\mathcal{J}_n\cap \mathcal{K}_n\right]\\
		\leq &Ch\left\{4+6\sum_{l=1}^L 2^{-l}\sqrt{\log \left[N (2^{-l}h,\Gamma_j,L_1(\mathbb{P}_n))+1\right]}   \right\}   \\
		\leq& Ch \left\{4+c_1\sum_{l=1}^L 2^{-l}\sqrt{ ks\log p + ks \log \left[1+\frac{2c_0\sqrt{s}h^2}{2^{-l}h} \right] + \frac{\sqrt{\log(p\vee n)}}{2^{-l}h }}\right\},
		\end{align*}
		for some positive constants $c_0$, $c_1$, where  the last inequality applies (\ref{cover_no0}). Hence we have $2^{-l}\geq 2/\sqrt{n}$, for any $1\leq l\leq L$. Note that 
		$$\frac{2\sqrt{s}h^2}{ 2^{-l}h}\leq 4h\sqrt{ns}\leq 4c_2\sqrt{n^2h^7}\leq 4c_2n,  $$
		for some positive constant $c_2$, by the assumptions of Theorem~\ref{Lasso_error}. Furthermore, the assumptions of Theorem~\ref{Lasso_error} imply  $s\sqrt{\log (p\vee n)}\leq c_1nh^5\leq c_2h^{-1}$, for some positive constants $c_1$, $c_2$. We thus have
		\begin{align*}
		&\E_{sup}\big[\vtheta_{j}^T\vgamma_i(\vbeta,m)\big]\\
		\leq& Ch \left[8+d_1\sum_{l=1}^L 2^{-l}\sqrt{s\log (p\vee n)  +2^{l}h^{-1}\sqrt{\log(p\vee n)}}\right]\\
		\leq& Ch \left[8+d_1\sum_{l=1}^L 2^{-l}\sqrt{c_2h^{-1}\sqrt{\log(p\vee n)} +2^{l}h^{-1}\sqrt{\log(p\vee n)}}\right]\\
		\leq& Ch \left[8+d_2\big[h^{-1}\sqrt{\log(p\vee n)}\big]^{1/2}\sum_{l=1}^L 2^{-l}(1+2^{l/2})\right]\\
		\leq& Ch \left[8+2d_2\big[h^{-1}\sqrt{\log(p\vee n)}\big]^{1/2}\right]\\
		\leq& d_3 \left[h^2 \log(p\vee n)\right]^{1/4},
		\end{align*}
		for some positive constants $d_1$, $d_2$, $d_3$, and all $n$ sufficiently large, since $s\log(p\vee n)\leq s\sqrt{\log(p\vee n)} *\sqrt{\log(p\vee n)}\leq c_1h^{-1}\sqrt{\log(p\vee n)}$, and $\big[h^{-1}\sqrt{\log(p\vee n)}\big]^{1/2}\geq 8$ for some positive constant $c_1$, and all $n$ sufficiently large.
		It follows from (\ref{mass_1}) that $\forall\ t>0$,
		\begin{align*}
		&	P\left( \sup_{\vbeta\in\bbB_1,m\in\bbM} \Big| n^{-1/2}\sum_{i=1}^n\vtheta_{j}^T\vgamma_{i}(\vbeta,m)\Big|
		\geq c \left[h^2 \log(p\vee n)\right]^{1/4} +\sqrt{n} ht \ \bigg|  \{(\vx_i,\ep_i)\}_{i=1}^n,\mathcal{H}_n \cap\mathcal{J}_n \cap \mathcal{K}_n \right) \\
		\leq &\exp\left(-\frac{nt^2}{8}\right).
		\end{align*} 
		Take $t = 4\sqrt{n^{-1}\log (p\vee n)}$.  Note that the assumptions of Theorem~\ref{Lasso_error} imply $h \sqrt{\log (p\vee n) }\leq c_1\sqrt{nh^7}\leq c_1 $, for some positive constant $c_1$. Hence we have 
		\begin{align*}
		&P\left( \sup_{\vbeta\in\bbB_1,m\in\bbM}\Big| n^{-1/2}\sum_{i=1}^n\vtheta_{j}^T\vgamma_{i}(\vbeta,m)\Big|
		\geq c \left[h^2 \log(p\vee n)\right]^{1/4}\ \bigg|  \mathcal{H}_n \cap\mathcal{J}_n\cap \mathcal{K}_n  \right) \\
		=&\E_{(\vx_i,\ep_i)}\left\{P\left( \sup_{\vbeta\in\bbB_1,m\in\bbM}\Big| n^{-1/2}\sum_{i=1}^n\vtheta_{j}^T\vgamma_{i}(\vbeta,m)\Big|
		\geq c \left[h^2 \log(p\vee n)\right]^{1/4}\ \bigg| \{(\vx_i,\ep_i)\}_{i=1}^n,\mathcal{H}_n \cap\mathcal{J}_n\cap \mathcal{K}_n  \right) \right\}\\
		\leq &\exp\left[-2\log (p\vee n)  \right].
		\end{align*} 
		Therefore, there exist positive constants $c_0$, $c_1$, such that for all $n$ sufficiently large,
		$$P\left(\max_{2\leq j \leq p} \sup_{\vbeta\in\bbB_1,m\in\bbM}\Big| n^{-1/2}\sum_{i=1}^n\vtheta_{j}^T\vgamma_{i}(\vbeta,m)\Big|\geq c_0\left[h^2 \log(p\vee n)\right]^{1/4}\right)\leq\exp[-c_1\log (p\vee n)].$$
	\end{proof}

	\begin{proof}[Proof of Lemma~\ref{lem:ghat_g}]
		
		We will prove (\ref{ghat_g1}) and (\ref{ghat_g2}) below. The proofs of (\ref{ghat_g1_inf}) and (\ref{ghat_g2_inf})  are similar.
		
		Recall that $\vxw_i = \vx_i-\E(\vx_i|\vx_i^T\vbeta_0)$, and $\vxh_i = \vx_i-\widehat{\E}(\vx_i|\vx_i^T\vbetah)$. To prove (\ref{ghat_g1}), observe that 
		\begin{align*}
		&\max_{2\leq j \leq p}\left|n^{-1/2}\sum_{i=1}^n\vtheta_j^T\vnu_1(Z_i,\vbetah,\widehat{G}^{(1)})\right|\\
		\leq &\max_{2\leq j \leq p}\left| n^{-1/2}\sum_{i=1}^n  \Big[G(\vx_{i}^{T} \vbeta_0|\vbeta_0)-G(\vx_{i}^{T} \vbetah|\vbetah) - G^{(1)}(\vx_i^T\vbetah|\vbetah) \vxw_{i,-1}^T(\vbeta_{0,-1}-\vbetah_{-1})\Big] G^{(1)}(\vx_i^T\vbeta_0|\vbeta_0) \vxw_{i,-1}^T\vtheta_j\right| \\
		&+\max_{2\leq j \leq p}\bigg| n^{-1/2}\sum_{i=1}^n  \Big[G(\vx_{i}^{T} \vbeta_0|\vbeta_0)-G(\vx_{i}^{T} \vbetah|\vbetah) - G^{(1)}(\vx_i^T\vbetah|\vbetah) \vxw_{i,-1}^T(\vbeta_{0,-1}-\vbetah_{-1})\Big]\\
		&\qquad\qquad\qquad* \Big[\widehat{G}^{(1)}(\vx_i^T\vbetah|\vbetah)-G^{(1)}(\vx_i^T\vbeta_0|\vbeta_0)\Big]\vxw_{i,-1}^T\vtheta_j\bigg| \\
		&+\max_{2\leq j \leq p}\left| n^{-1/2}\sum_{i=1}^n  \Big[G(\vx_{i}^{T} \vbeta_0|\vbeta_0)-G(\vx_{i}^{T} \vbetah|\vbetah)\Big] \widehat{G}^{(1)}(\vx_i^T\vbeta_0|\vbeta_0)( \vxh_{i,-1}-\vxw_{i,-1})^T \vtheta_j\right| \\
		&+\max_{2\leq j \leq p}\left| n^{-1/2}\sum_{i=1}^n G^{(1)}(\vx_i^T\vbetah|\vbetah) \widehat{G}^{(1)}(\vx_i^T\vbeta_0|\vbeta_0)(\vbeta_{0,-1}-\vbetah_{-1})^T( \vxh_{i,-1}\vxh_{i,-1}^T-\vxw_{i,-1}\vxw_{i,-1}^T)\vtheta_j\right| \\
		=&\sum_{k=1}^{4}V_{n1k} ,
		\end{align*}
		where the definition of $V_{n1k}$, $k=1,\cdots,4$, is clear from the context.
		
		We first bound $V_{n13}$. 
		Let $\mathcal{R}_n = \left\{b/2\leq\max_{1\leq i \leq n}\left|\widehat{G}^{(1)}(\vx_i^T\vbeta_0|\vbeta_0)\right|\leq2b\right\}$.  Note that $\max_{1\leq i \leq n}\left|G^{(1)}(\vx_i^T\vbeta_0|\vbeta_0)\right|\leq b$ by Assumption~\ref{A1}-(b). Lemma~\ref{G1func} implies that $P(\mathcal{R}_n )\geq 1-\exp[-c\log(p\vee n)]$ for some positive constant $c$, and all $n$ sufficiently large.    Then there exist positive constants $c_0$, $c_1$, $c_2$, $c_3$, such that for all $n$ sufficiently large,
		\begin{align*}
		V_{n13}\leq & 2b\sqrt{n}\sqrt{n^{-1}\sum_{i=1}^n   \Big[G(\vx_{i}^{T} \vbeta_0|\vbeta_0)-G(\vx_{i}^{T} \vbetah|\vbetah)\Big]^2} *\max_{2\leq j \leq p} \sqrt{n^{-1}\sum_{i=1}^n  \big[   (\vxh_{i,-1}-\vxw_{i,-1})^T\vtheta_j \big]^2}\\
		\leq &c_0\sqrt{n}||\vbeta_0-\vbetah||_2 *sh^2\sqrt{\log (p\vee n)}\\
		\leq &c_1\sqrt{ns}h^3*sh\sqrt{\log (p\vee n)}\leq c_2\sqrt{ns}h^3,
		\end{align*} 
		with probability at least $ 1-\exp(-c_3\log p)$. In the above, the second inequality applies (\ref{G-2}) in Lemma~\ref{lem:Gbound}, Lemma~\ref{Gbetafunc} and Lemma~\ref{lem:thetaj}, the third and fourth inequalities apply Theorem~\ref{Lasso_error} and its assumptions.
		
		We next bound $V_{n14}$. Similarly as in the proof of Lemma~\ref{lem:Ex_err}, we can show that 
		\begin{align*}
		\left|\left| n^{-1}\sum_{i=1}^n G^{(1)}(\vx_i^T\vbetah|\vbetah) \widehat{G}^{(1)}(\vx_i^T\vbeta_0|\vbeta_0)( \vxh_{i,-1}\vxh_{i,-1}^T-\vxw_{i,-1}\vxw_{i,-1}^T)\vtheta_j\right| \right|_\infty\leq c_0\sqrt{s}h^2,
		\end{align*} 
		with probability at least $1-\exp(-c_1\log p)$, for some positive constants $c_0$, $c_1$ and all $n$ sufficiently large. Hence Theorem~\ref{Lasso_error} and its assumptions imply that
		$ V_{n14}\leq c_0\sqrt{n}s^{3/2}h^4\leq d_0 \sqrt{ns}h^4*nh^5\leq d_1\sqrt{ns}h^3$, with  probability at least $1-\exp(-c_1\log p)$, for some positive constants $c_0$, $c_1$, $d_0$, $d_1$  and all $n$ sufficiently large.
		
		To bound $V_{n12}$, we note that 
		\begin{align*}
		V_{n12}\leq &\max_{2\leq j \leq p} \bigg| n^{-1/2}\sum_{i=1}^n  \Big[G(\vx_{i}^{T} \vbeta_0|\vbeta_0)-G(\vx_{i}^{T} \vbetah|\vbetah) \Big]  \Big[\widehat{G}^{(1)}(\vx_i^T\vbetah|\vbetah)-G^{(1)}(\vx_i^T\vbeta_0|\vbeta_0)\Big]\vxw_{i,-1}^T\vtheta_j\bigg|\\
		&+\max_{2\leq j \leq p}\bigg| n^{-1/2}\sum_{i=1}^n   G^{(1)}(\vx_i^T\vbetah|\vbetah)  \Big[\widehat{G}^{(1)}(\vx_i^T\vbetah|\vbetah)-G^{(1)}(\vx_i^T\vbeta_0|\vbeta_0)\Big](\vbeta_{0,-1}-\vbetah_{-1})^T\vxw_{i,-1}\vxw_{i,-1}^T\vtheta_j\bigg|\\
		=&	V_{n121}+	V_{n122},
		\end{align*} 
		where the definitions of $V_{n121}$ and $V_{n122}$ are clear from the context.
		For $V_{n121}$, we have
		\begin{align*}
		V_{n121}\leq & \max_{1\leq i \leq n}\Big|\widehat{G}^{(1)}(\vx_i^T\vbetah|\vbetah)-G^{(1)}(\vx_i^T\vbeta_0|\vbeta_0)\Big| * \sqrt{ n^{-1/2}\sum_{i=1}^n  \Big[G(\vx_{i}^{T} \vbeta_0|\vbeta_0)-G(\vx_{i}^{T} \vbetah|\vbetah) \Big]^2}\\
		&*\max_{2\leq j \leq p}\sqrt{n^{-1/2}\sum_{i=1}^n (\vxw_{i,-1}^T\vtheta_j)^2}\\
		\leq & c_0 h*||\vbetah-\vbeta_0||_2 \leq  c_1\sqrt{ns}h^3,
		\end{align*}
		with probability at least $1-\exp(-c_2\log p)$, for some positive constants $c_0$, $c_1$, $c_2$ and all $n$ sufficiently large. In the above, the second inequality applies   Lemma~\ref{lem:events}, Lemma~\ref{lem:Gbound} and Lemma~\ref{Gbetafunc}; the third inequality applies Theorem~\ref{Lasso_error}.
		Similarly, Theorem~\ref{Lasso_error} and Lemma~\ref{Gbetafunc} imply that  
		\begin{align*}
		V_{n122}\leq & \max_{1\leq i \leq n}b\Big|\widehat{G}^{(1)}(\vx_i^T\vbetah|\vbetah)-G^{(1)}(\vx_i^T\vbeta_0|\vbeta_0)\Big| * ||\vbetah_{-1}-\vbeta_{0,-1}||_1 * \max_{2\leq j \leq p} \left|\left|n^{-1/2}\sum_{i=1}^n \vxw_{i,-1} \vxw_{i,-1}^T\vtheta_j\right|\right|_\infty\\  
		\leq & c_0 h*||\vbetah-\vbeta_0||_1\leq  c_1s\sqrt{n}h^3,
		\end{align*}
		with probability at least $1-\exp(-c_2\log p)$, for some positive constants $c_0$, $c_1$, $c_2$ and all $n$ sufficiently large. In the above, the second inequality applies  Assumption~\ref{K4}-(a), Lemma~\ref{lem:events} and Lemma~\ref{lem14NCL}; the third inequality applies Theorem~\ref{Lasso_error}.
		
		Finally we bound $V_{n11}$.  Note that $\forall\vbeta\in\bbB_1$, $G^{(1)}(\cdot|\vbeta)$ is differentiable by Assumption~\ref{K4}-(a). Hence, there exists some positive constant $L$ such that $$\left|G(\vx_{i}^{T} \vbeta_0|\vbetah)-G(\vx_{i}^{T} \vbetah|\vbetah) - G^{(1)}(\vx_i^T\vbetah|\vbetah) \vx_{i,-1}^T(\vbeta_{0,-1}-\vbetah_{-1}) \right|\leq L \big[\vx_i^T(\vbeta_0-\vbetah)\big]^2.$$
		It implies that 
		\begin{align*}
		V_{n11}\leq  &\max_{2\leq j \leq p}  \bigg| n^{-1/2}\sum_{i=1}^n  \Big[G(\vx_{i}^{T} \vbeta_0|\vbeta_0)-G(\vx_{i}^{T} \vbeta_0|\vbetah) \Big] G^{(1)}(\vx_i^T\vbeta_0|\vbeta_0)\vxw_{i,-1}^T\vtheta_j\bigg|\\
		&+\max_{2\leq j \leq p} L\bigg| n^{-1/2}\sum_{i=1}^n  \big[\vx_i^T(\vbeta_0-\vbetah)\big]^2 G^{(1)}(\vx_i^T\vbeta_0|\vbeta_0) \vxw_{i,-1}^T\vtheta_j\bigg|\\
		&+\max_{2\leq j \leq p}  \bigg| n^{-1/2}\sum_{i=1}^n  \big[G^{(1)}(\vx_i^T\vbeta_0|\vbeta_0)\big]^2 (\vbeta_{0,-1}-\vbetah_{-1})^T\E(\vx_{i,-1}|\vx_i^T\vbeta)\vxw_{i,-1}^T\vtheta_j\bigg|\\
		&+\max_{2\leq j \leq p}  \bigg| n^{-1/2}\sum_{i=1}^n \big[G^{(1)}(\vx_i^T\vbetah|\vbetah)-G^{(1)}(\vx_i^T\vbeta_0|\vbeta_0) \big] G^{(1)}(\vx_i^T\vbeta_0|\vbeta_0)\\
		&\qquad\qquad\qquad\quad* (\vbeta_{0,-1}-\vbetah_{-1})^T\E(\vx_{i,-1}|\vx_i^T\vbeta)\vxw_{i,-1}^T\vtheta_j\bigg|\\
		=&\sum_{l=1}^4V_{n11l},
		\end{align*} 
		where the definition of $V_{n11l}$, $l=1,\cdots,4$, is clear from the context. To bound $V_{n112}$, note that 
		\begin{align*}
		V_{n112}\leq & Lb\sqrt{n}\sqrt{n^{-1}\sum_{i=1}^n   \big[\vx_i^T(\vbeta_0-\vbetah)\big]^4} *\max_{2\leq j \leq p}  \sqrt{n^{-1}\sum_{i=1}^n  (\vxw_{i,-1}^T\vtheta_j)^2}.
		\end{align*} 
		Note that $\frac{\vbeta_0-\vbetah}{||\vbeta_0-\vbetah||_2}\in \bbK(p,2ks)$ for $\vbetah\neq \vbeta_0$, where $\bbK(p,s_0)=\{\vv\in\bbR^p:||\vv||_2\leq 1,||\vv||_0\leq s_0\}$. 
		In Lemma~\ref{lem:cube_rate}, take $t = d_0\sigma_x^4||\vbeta_0-\vbetah||_2^2$ and $s_0=ks$, for some positive constant $d_0$, then Theorem~\ref{Lasso_error} and  Lemma~\ref{lem:cube_rate} imply that $n^{-1}\sum_{i=1}^n  \big[  \vx_i ^T(\vbeta_0-\vbetah)  \big]^4\leq c_0\sigma_x^4s^2h^8$ with probability at least $1-\exp(-c_1\log p)$, for  some positive constants $c_0$, $c_1$, and all $n$ sufficiently large. We thus have $V_{n112}\leq d_0s\sqrt{n}h^4$ with probability at least $1-\exp[-d_1\log(p\vee n)]$, for  some positive constants $d_0$, $d_1$, and all $n$ sufficiently large.
		
		To bound $V_{n113}$, note that $\E(\vx_{i,-1}|\vx_i^T\vbeta_0)$ and $\vxw_{i,-1}^T\vtheta_j$ are both sub-Gaussian by Lemma~\ref{lem:subg},  and $G^{(1)}(\vx_i^T\vbeta_0|\vbeta_0)$ is bounded by Assumption~\ref{K4}-(a). It's easy to show   $G^{(1)}(\vx_i^T\vbeta_0|\vbeta_0)\big]^2 \vxw_{i,-1}^T\vtheta_j$ is also sub-Gaussian by Lemma~\ref{lem:subg}. We observe that 
		\begin{align*}
		&	\E\Big\{\big[ G^{(1)}(\vx_i^T\vbeta_0|\vbeta_0)\big]^2 \E(\vx_{i,-1}|\vx_i^T\vbeta_0) \vxw_{i,-1}^T\vtheta_j\Big\}\\
		=&\E_{\vx_i^T\vbeta_0}\Big\{\big[ G^{(1)}(\vx_i^T\vbeta_0|\vbeta_0)\big]^2   \E(\vx_{i,-1}|\vx_i^T\vbeta_0)\E\big( \vxw_{i,-1}^T\vtheta_j\big| \vx_i^T\vbeta_0\big)\Big\}\\
		=&\vnull_{p-1},
		\end{align*}
		where $\vnull_{p-1}$ is a $(p-1)-$dimensional vector with all entries $0$, since $\E\big( \vxw_{i,-1} \big| \vx_i^T\vbeta_0\big)= \vnull_{p-1}$.
		Hence, Lemma~\ref{lem14NCL} implies that 
		\begin{align*}
		V_{n113}\leq & \sqrt{n}||\vbeta_0-\vbetah||_1*\max_{2\leq j \leq p} \Big|\Big|n^{-1}\sum_{i=1}^n \big[ G^{(1)}(\vx_i^T\vbeta_0|\vbeta_0)\big]^2 \E(\vx_{i,-1}|\vx_i^T\vbeta_0)  \vxw_{i,-1}^T\vtheta_j\Big|\Big|_\infty\\
		\leq &c_0\sqrt{n}sh^2*\sqrt{\frac{\log p}{n}} \leq c_0s\sqrt{n}h^{9/2},
		\end{align*}
		with probability at least $1-\exp(-c_1\log p)$, for  some positive constants $c_0$, $c_1$, and all $n$ sufficiently large, where the second inequality also applies Theorem~\ref{Lasso_error}, and the third inequality applies the assumptions of Theorem~\ref{Lasso_error}.
		
		Theorem~\ref{Lasso_error} and Assumption~\ref{K4}-(b) imply that
		\begin{align*}
		V_{n114}\leq &c_0\sqrt{n}||\vbetah-\vbeta_0||_2 \Big\{n^{-1}\sum_{i=1}^n   \big[(\vbeta_0-\vbetah)^T\E(\vx_i|\vx_i^T\vbeta_0)\big]^4  \Big\}^{1/4}*\max_{2\leq j \leq p} \Big[n^{-1}\sum_{i=1}^n  ( \vxw_{i,-1}^T\vtheta_j)^4\Big]^{1/4},
		\end{align*}
		with probability at least $1-\exp(-c_1\log p)$, for  some positive constants $c_0$, $c_1$, and all $n$ sufficiently large.  Lemma~\ref{lem:cube_rate} implies that $n^{-1}\sum_{i=1}^n   \big[(\vbeta_0-\vbetah)^T\E(\vx_{i,-1}|\vx_i^T\vbeta_0)\big]^4  \ \leq d_0||\vbeta_0-\vbetah||_2^4$, and $n^{-1}\sum_{i=1}^n  ( \vxw_{i,-1}^T\vtheta_j)^4\leq d_0$, 	with probability at least $1-\exp(-d_1\sqrt{n})$, for  some positive constants $d_0$, $d_1$, and all $n$ sufficiently large.  Hence combining the results with Theorem~\ref{Lasso_error}, we have $V_{n114}\leq c_0\sqrt{n}sh^4$, 	with probability at least $1-\exp(-c_1\log p)$, for  some positive constants $c_0$, $c_1$, and all $n$ sufficiently large. 
		
		We finally bound $V_{n111}$. Theorem~\ref{Lasso_error} implies that $P\left(\vbetah\in\bbB_1\right)\geq 1-\exp(-c\log p)$, for some positive constant $c$, and all $n$ sufficiently large.  	It is sufficient to show that there exist some positive constants $c_0$, $c_1$, such that for all $n$ sufficiently large,
		\begin{align*}
		P\left(\max_{2\leq j \leq p}\sup_{\vbeta\in\bbB_1} \Big|n^{-1/2}\sum_{i=1}^n \vtheta_j^T\vnu_{11}(Z_i,\vbeta) \Big|\geq c_0h^2s\sqrt{\log(p\vee n)} \right)\leq\exp[-c_1\log (p\vee n)], 
		\end{align*} 
		where $Z_i=(\vx_i,\epsilon_i,A_i)$,  
		$\vnu_{11}(Z_i,\vbeta) = \big[G(\vx_i^T\vbeta_0|\vbeta_0) -G(\vx_i^T\vbeta_0|\vbeta)\big] G^{(1)}(\vx_i^T\vbeta_0|\vbeta_0) \vxw_{i,-1} $. 	Note that the assumptions of Theorem~\ref{Lasso_error} imply that $h^2s\sqrt{\log(p\vee n)} \leq d_0h^2\sqrt{snh^5}\leq d_0h^3\sqrt{sn}$ for some positive constant $d_0$.  
		Note that for any $\vbeta\in\bbB_1$, we have
		\begin{align*}
		\E\big[\vtheta_j^T\vnu_{11}(Z_i,\vbeta) \big] =& \E_{\vx_i^T\vbeta_0}\E\Big\{ \big[G(\vx_i^T\vbeta_0|\vbeta_0) -G(\vx_i^T\vbeta_0|\vbeta)\big] G^{(1)}(\vx_i^T\vbeta_0|\vbeta_0) \vxw_{i,-1}^T\vtheta_j  |\vx_i^T\vbeta_0 \Big\}\\
		=& \E_{\vx_i^T\vbeta_0}\Big\{ \big[G(\vx_i^T\vbeta_0|\vbeta_0) -G(\vx_i^T\vbeta_0|\vbeta)\big] G^{(1)}(\vx_i^T\vbeta_0|\vbeta_0)\vtheta_j^T \E(\vxw_{i,-1} |\vx_i^T\vbeta_0) \Big\}\\
		=&0.
		\end{align*}   
		Denote the event $$\mathcal{T}_n = \Big\{\max_{2\leq j \leq p}\frac{1}{n}\sum_{i=1}^n(\vxw_{i,-1}^T\vtheta_j)^4\leq 20\xi_2^{-4}\sigma_x^4\mbox{, and } \sup_{\vbeta\in\bbB_1}\frac{1}{n}\sum_{i=1}^n \big[\E(\vx_{i}|\vx_i^T\vbeta)^T(\vbeta-\vbeta_{0})\big]^4 \leq 5\sigma_x^4s^2h^8 \Big\}.$$ Similarly as the above analysis, Lemma~\ref{lem:subg} and Lemma~\ref{lem:cube_rate} imply that $P(\mathcal{T}_n)\geq 1-\exp(-d_0\sqrt{n})$, for some positive constant $d_0$ and all $n$ sufficiently large.
		
		To prove (\ref{ghat_g_1}), we have
		\begin{align*}
		&P\left(\max_{2\leq j \leq p}\sup_{\vbeta\in\bbB_1}\Big|n^{-1/2}\sum_{i=1}^n\vtheta_j^T \vnu_{11}(Z_i,\vbeta) \Big|>t\right)\\
		\leq &P\left(\max_{2\leq j \leq p}\sup_{\vbeta\in\bbB_1}\Big|n^{-1/2}\sum_{i=1}^n\vtheta_j^T \vnu_{11}(Z_i,\vbeta) \Big|>t\ \bigg|\mathcal{G}_n \cap\mathcal{K}_n\cap\mathcal{T}_n  \right) +\exp[-c\log (p\vee n)]\\
		\leq& \sum_{j=2}^{p}P\left( \sup_{\vbeta\in\bbB_1} \Big|n^{-1/2}\sum_{i=1}^n \vtheta_j^T \vnu_{11}(Z_i,\vbeta) \Big|>t\ \bigg|\mathcal{G}_n  \cap\mathcal{K}_n \cap\mathcal{T}_n  \right)+\exp[-c\log (p\vee n)],
		\end{align*}
		for some positive constant $c$, and all $n$ sufficiently large, where  the events $\mathcal{G}_n$ and $\mathcal{K}_n$ are defined in Lemma~\ref{lem:events}.  

		We observe that there exists some $\vbeta^r$ between $\vbeta$ and $\vbeta_0$, such that
		\begin{align*}
		&\sup_{\vbeta\in\bbB_1}\frac{1}{n}\sum_{i=1}^n\ [ \vtheta_j^T \vnu_{11}(Z_i,\vbeta)]^2 \\
		\leq& b^2
		\sup_{\vbeta\in\bbB_1}  \frac{1}{n}\sum_{i=1}^n \big[G(\vx_i^T\vbeta_0|\vbeta_0) -G(\vx_i^T\vbeta_0|\vbeta)\big]^2  ( \vxw_{i,-1}^T\vtheta_j  )^2 \ \\
		\leq& b^2	\sup_{\vbeta\in\bbB_1}  \frac{1}{n}\sum_{i=1}^n \Big\{-f_0'(\vx_i^T\vbeta_0)\E(\vx_{i,-1}|\vx_i^T\vbeta)^T(\vbeta_{-1}-\vbeta_{0,-1}) \\
		&\qquad\qquad\qquad+ \frac{1}{2}\E\big\{f_0''(\vx_i^T\vbeta^r)[\vx_{i,-1}^T(\vbeta_{-1}-\vbeta_{0,-1})]^2|\vx_i^T\vbeta\big\}\Big\}^2  ( \vxw_{i,-1}^T\vtheta_j  )^2 \\ 
		\leq& 2b^2
		\sup_{\vbeta\in\bbB_1}  \frac{1}{n}\sum_{i=1}^n\big[f_0'(\vx_i^T\vbeta_0)\E(\vx_{i,-1}|\vx_i^T\vbeta)^T(\vbeta_{-1}-\vbeta_{0,-1})\big]^2  ( \vxw_{i,-1}^T\vtheta_j  )^2 \\
		&+c_0b^2
		\sup_{\vbeta\in\bbB_1}  \frac{1}{n}\sum_{i=1}^n\big[(\vbeta_{-1}-\vbeta_{0,-1})^T\E(\vx_{i,-1}\vx_{i,-1}^T|\vx_i^T\vbeta)^T(\vbeta_{-1}-\vbeta_{0,-1})\big] ( \vxw_{i,-1}^T\vtheta_j  )^2   \\
		\leq  & 2b^2
		\sup_{\vbeta\in\bbB_1}  \frac{1}{n}\sum_{i=1}^n\big[f_0'(\vx_i^T\vbeta_0)\E(\vx_{i}|\vx_i^T\vbeta)^T(\vbeta -\vbeta_0)\big]^2  ( \vxw_{i,-1}^T\vtheta_j  )^2  \\
		&+c_0b^2
		\sup_{\vbeta\in\bbB_1}\sqrt{ \frac{1}{n}\sum_{i=1}^n\big[ \lambda_{\max}(\E(\vx_i\vx_i^T|\vx_i^T\vbeta)) \big] ^2 ||\vbeta-\vbeta_0||_2^4  }\sqrt{  \frac{1}{n}\sum_{i=1}^n ( \vxw_{i,-1}^T\vtheta_j  )^4  }\\
		\leq  & b^4\sup_{\vbeta\in\bbB_1}  \frac{1}{n}\sum_{i=1}^n \big[\E\big(\vx_i|\vx_i^T\vbeta\big)^T(\vbeta-\vbeta_0)\big]^2  ( \vxw_{i,-1}^T\vtheta_j  )^2   +c_0b^2\sqrt{\xi_4}\sup_{\vbeta\in\bbB_1}||\vbeta-\vbeta_0||_2^2 \sqrt{ \frac{1}{n}\sum_{i=1}^n (\vxw_{i,-1}^T\vtheta_j)^4 },
		\end{align*}
		with probability at least $1-\exp[-c_2\log(p\vee n)]$, for some positive constants $c_0$, $c_1$, $c_2$, and all $n$ sufficiently large. In the above, the second inequality applies (\ref{form:Gnew_def}) in the proof of Lemma~\ref{lem:Gbound}; the third inequality applies Assumption~\ref{A1}-(b); the last inequality applies Assumption~\ref{A2}-(a) and \ref{A1}-(b). On the event $\mathcal{T}_n $, we have
		\begin{align*}
		&	\sup_{\vbeta\in\bbB_1}\frac{1}{n}\sum_{i=1}^n \big[\E\big(\vx_i|\vx_i^T\vbeta\big)^T(\vbeta-\vbeta_0)\big]^2  ( \vxw_{i,-1}^T\vtheta_j  )^2 \\
		\leq& \sqrt{\sup_{\vbeta\in\bbB_1}\frac{1}{n}\sum_{i=1}^n \big[\E(\vx_i|\vx_i^T\vbeta)^T(\vbeta-\vbeta_0)\big]^4  }\sqrt{\frac{1}{n}\sum_{i=1}^n( \vxw_{i,-1}^T\vtheta_j  )^4}\\
		\leq& d_0sh^4,
		\end{align*} 
		for some positive constant $d_0$. 
		Hence   we obtain that $\sup_{\vbeta\in\bbB_1}\frac{1}{n}\sum_{i=1}^n [ \vtheta_j^T \vnu_{11}(Z_i,\vbeta)]^2 \leq d_1sh^4$, on the event $\mathcal{T}_n $,  with probability at least $1-\exp[-d_2\log(p\vee n)]$, for some positive constants $d_1$, $d_2$, and all $n$ sufficiently large. 
		Therefore, by Massart's concentration inequality (e.g., Theorem 14.2, \citet{buhlmann2011}) on the event $\mathcal{G}_n \cap\mathcal{T}_n \cap\mathcal{K}_n $,  $\forall\ t>0$, 
		\begin{align} 
		P\bigg( \sup_{\vbeta\in\bbB_1} \Big|n^{-1/2} \sum_{i=1}^n \vtheta_j^T \vnu_{11}(Z_i,\vbeta)  \Big| 	\geq   E_{sup,1} +th^2\sqrt{ sn}  \bigg|\mathcal{G}_n \cap\mathcal{K}_n \cap\mathcal{T}_n \bigg)   \leq &\exp\left( - \frac{nt^2}{8}\right),\label{mass_2}
		\end{align}  
		where $E_{sup,1}=\E\left[\sup_{\vbeta\in\bbB_1} \Big| n^{-1/2}\sum_{i=1}^n \vtheta_j^T \vnu_{11}(Z_i,\vbeta) \Big|\ \bigg|\mathcal{G}_n \cap\mathcal{K}_n \cap\mathcal{T}_n  \right]$.
		
		Similarly as the proof of Lemma~\ref{lem:ghat1_g1}, we can cover $\bbB_1$   by ${p\choose ks}\left(1+\frac{2c_0sh^2\sqrt{\log(p\vee n)} }{\delta_n }\right)^{ks}$ $L_2-$balls of radius $\frac{\delta_n}{\sqrt{s\log(p\vee n)}} $.
		Denote the centers of these $L_2-$balls by $\vbeta^{\circ}_{l0},\cdots,\vbeta^{\circ}_{lN_l}$.
		Denote these $L_2-$balls by $\mathbb{C}(\vbeta^{\circ}_{ll'})$, $l=1,\cdots,m(s)$, $l'=1,\cdots,N_l$. 
		Hence $\forall\ \vbeta\in\bbB_1$,  we can find $l,\ l'$ such that $\vbeta\in \mathbb{C}(\vbeta^{\circ}_{ll'})$.  On the event $\mathcal{G}_n  \cap\mathcal{K}_n$, there exist some positive constants $c_0$, $c_1$, such that for all $n$ sufficiently large,
		\begin{align*}
		&n^{-1}\left| \sum_{i=1}^n\vtheta_j^T \vnu_{11}(Z_i,\vbeta) -\sum_{i=1}^n \vtheta_j^T\vnu_{11}(Z_i,\vbeta^{\circ}_{ll'})  \right|\\
		\leq &n^{-1}\left| \sum_{i=1}^n\left[G(\vx_i^T\vbeta_0|\vbeta) - G(\vx_i^T\vbeta_0|\vbeta^{\circ}_{ll'})\right]G^{(1)}(\vx_i^T\vbeta_0|\vbeta_0) [\vx_{i,-1} -\E(\vx_{i,-1}|\vx_i^T\vbeta_0)] ^T\vtheta_{j}\right| \\
		\leq & \sqrt{n^{-1}  \sum_{i=1}^n\left[G(\vx_i^T\vbeta_0|\vbeta) - G(\vx_i^T\vbeta_0|\vbeta^{\circ}_{ll'})\right]^2 }*\sqrt{n^{-1}  \sum_{i=1}^n\left[G^{(1)}(\vx_i^T\vbeta_0|\vbeta_0) [\vx_{i,-1} -\E(\vx_{i,-1}|\vx_i^T\vbeta_0)] ^T\vtheta_{j} \right]^2} \\
		\leq &c_0||\vbeta-\vbeta^{\circ}_{ll'}||_2\sqrt{s\log(p\vee n)} =c_0\delta_n,
		\end{align*}  
		with probability at least $1-\exp[-c_1\log(p\vee n)]$. In the above, the last inequality applies the event $\mathcal{G}_n \cap\mathcal{K}_n$, Assumption~\ref{A1}-(b) and Assumption~\ref{K4}-(c)  
		Hence, the $\delta_n-$covering number of the class of functions $V_{1j}=\{\vtheta_j^T\vnu_{11}(Z_i,\vbeta) :\vbeta\in\bbB_1\}$ is bounded by 
		\begin{align}
		N (\delta_n,V_{1j},L_1(\mathbb{P}_n))\leq c{p\choose ks}\left(1+\frac{2c_0sh^2\sqrt{\log(p\vee n)} }{\delta_n }\right)^{ks}.\label{cover_no1}
		\end{align} 
		
		Let $a_1,\cdots,a_n$ be a Rademacher sequence that is independent of the data. The symmetrization theorem (Theorem~14.3 in \citet{buhlmann2011}) implies that 
		\begin{align*} 
		E_{sup,1}=&\E\left[\sup_{\vbeta\in\bbB_1} \Big| n^{-1/2}\sum_{i=1}^n \vtheta_j^T \vnu_{11}(Z_i,\vbeta) \Big|\ \bigg|\mathcal{G}_n  \cap\mathcal{K}_n\cap\mathcal{T}_n  \right]\\
		\leq&2 \E_{ \{\vx_i\}_{i=1}^n}\left\{\E\left[\sup_{\vbeta\in\bbB_1} \Big| n^{-1/2}\sum_{i=1}^n a_i\vtheta_j^T \vnu_{11}(Z_i,\vbeta) \Big|\ \bigg| \{\vx_i\}_{i=1}^n,\mathcal{G}_n \cap\mathcal{K}_n \cap\mathcal{T}_n  \right]\right\}.
		\end{align*}
		
		Note that on the event $\mathcal{T}_n$, we have   $\sup_{\vbeta\in\bbB_1}\sqrt{n^{-1}\sum_{i=1}^n[a_i \vtheta_j^T \vnu_{11}(Z_i,\vbeta)]^2 }\leq ch^2\sqrt{s}\triangleq R_n$.   Lemma~14.18  in \citet{van1996weak} implies that 
		\begin{align*}
		E_{sup,1}\leq &2 \E_{ \{\vx_i\}_{i=1}^n}\left\{\E\left[\sup_{\vbeta\in\bbB_1} \Big| n^{-1/2}\sum_{i=1}^n a_i\vtheta_j^T \vnu_{11}(Z_i,\vbeta) \Big|\ \bigg| \{\vx_i\}_{i=1}^n,\mathcal{G}_n  \cap\mathcal{K}_n\cap\mathcal{T}_n  \right]\right\} \\ 
		\leq &Ch^2\sqrt{s}\left\{4+6\sum_{l=1}^L 2^{-l}\sqrt{\log \left[N (2^{-l}  h^2\sqrt{s},V_{j},L_1(\mathbb{P}_n))+1\right]}   \right\}   \\
		\leq& c_0h^2s\sqrt{\log(p\vee n)},
		\end{align*}
		for some positive constants $c_0$, $C$, where $L=\min\{l:l\geq 1,\ 2^{-l}\leq 4/\sqrt{n}\}$, and the last inequality applies (\ref{cover_no1}).   The analysis is similar as that  in the proof of Lemma~\ref{lem:ghat1_g1}.

		It follows from (\ref{mass_2}) that 
		\begin{align*}
		&P\bigg( \sup_{\vbeta\in\bbB_1} \Big| n^{-1/2}\sum_{i=1}^n \vtheta_j^T \vnu_{11}(Z_i,\vbeta)  \Big| 	\geq c_0h^2s\sqrt{\log(p\vee n)}+th^2\sqrt{ns}  \bigg| \mathcal{G}_n   \cap\mathcal{K}_n\cap\mathcal{T}_n \bigg)  \\
		\leq &\exp\left(-\frac{nt^2}{8}\right).
		\end{align*} 
		Take $t = 4\sqrt{n^{-1}\log (p\vee n)}$.  Then there exist some positive constants $c_0$, $c_1$, such that for all $n$ sufficiently large,
		\begin{align*}
		&P\left(\max_{2\leq j \leq p} \sup_{\vbeta\in\bbB_1}\left| n^{-1/2}\sum_{i=1}^n\vtheta_{j}^T\vnu_{11}(Z_i,\vbeta)\right|\geq    4ch^2s\sqrt{\log(p\vee n)} \right) 
		\leq \exp[-c_1\log (p\vee n)].
		\end{align*} 
		Combining all the above results, we prove (\ref{ghat_g1}).
		
		To prove (\ref{ghat_g2}), we note that 
		\begin{align*}
		&\max_{2\leq j \leq p}\left|n^{-1/2}\sum_{i=1}^n\vtheta_j^T\vnu_2(Z_i,\vbetah,\widehat{G},\widehat{G}^{(1)})\right|\\
		\leq &\max_{2\leq j \leq p}\left| n^{-1/2}\sum_{i=1}^n  \Big[G(\vx_{i}^{T} \vbetah|\vbetah)-\widehat{G}(\vx_{i}^{T} \vbetah|\vbetah)  \Big] G^{(1)}(\vx_i^T\vbetah|\vbetah) \big[\vx_{i,-1}-\E(\vx_{i,-1}|\vx_i^T\vbetah)\big]^T\vtheta_j\right| \\
		&+\max_{2\leq j \leq p}\left| n^{-1/2}\sum_{i=1}^n  \Big[G(\vx_{i}^{T} \vbetah|\vbetah)-\widehat{G}(\vx_{i}^{T} \vbetah|\vbetah)  \Big] \big[\widehat{G}^{(1)}(\vx_i^T\vbetah|\vbetah)-G^{(1)}(\vx_i^T\vbetah|\vbetah)\big] \big[\vx_{i,-1}-\E(\vx_{i,-1}|\vx_i^T\vbeta_0)\big]^T\vtheta_j\right| \\ 
		&+\max_{2\leq j \leq p}\bigg| n^{-1/2}\sum_{i=1}^n  \Big[G(\vx_{i}^{T} \vbetah|\vbetah)-\widehat{G}(\vx_{i}^{T} \vbetah|\vbetah)  \Big] \big[\widehat{G}^{(1)}(\vx_i^T\vbetah|\vbetah)-G^{(1)}(\vx_i^T\vbetah|\vbetah)\big]\\
		&\qquad\qquad\qquad\quad* \big[\E(\vx_{i,-1}|\vx_i^T\vbetah)-\E(\vx_{i,-1}|\vx_i^T\vbeta_0)\big]^T\vtheta_j\bigg| \\ 
		&+\max_{2\leq j \leq p}\left| n^{-1/2}\sum_{i=1}^n  \Big[G(\vx_{i}^{T} \vbetah|\vbetah)-\widehat{G}(\vx_{i}^{T} \vbetah|\vbetah)  \Big]\widehat{G}^{(1)}(\vx_i^T\vbetah|\vbetah) \big[\E(\vx_{i,-1}|\vx_i^T\vbetah)-\widehat{\E}(\vx_{i,-1}|\vx_i^T\vbetah)\big]^T\vtheta_j\right| \\
		=&\sum_{k=1}^{4}V_{n2k} ,
		\end{align*}
		where the definition of $V_{n2k}$, $k=1,\cdots,4$, is clear from the context. 
		
		We first bound $V_{n22}$.  Lemma~\ref{Gfunc} and Lemma~\ref{G1func} imply that $\max_{1\leq i \leq n}\Big|G(\vx_{i}^{T} \vbetah|\vbetah)-\widehat{G}(\vx_{i}^{T} \vbetah|\vbetah) \Big|\leq c_0h^2$, and $\max_{1\leq i \leq n}\Big|\widehat{G}^{(1)}(\vx_i^T\vbetah|\vbetah)-G^{(1)}(\vx_i^T\vbetah|\vbetah) \Big|\leq c_0h$, with probability at least $1-\exp(-c_1\log p)$, for some positive constants $c_0$, $c_1$ and all $n$ sufficiently large.  Note that $\vx_{i,-1}-\E(\vx_{i,-1}|\vx_i^T\vbeta_0)$ is sub-Gaussian by Lemma~\ref{lem:subg}. Then  Lemma~\ref{lem:thetaj} and Lemma~\ref{lem14NCL} imply that $\max_{2\leq j \leq p}n^{-1}\sum_{i=1}^n \Big\{ \big[\vx_{i,-1}-\E(\vx_{i,-1}|\vx_i^T\vbeta_0)\big]^T\vtheta_j\Big\}^2\leq c_2$,  with probability at least $1-\exp(-c_3n)$, for some positive constants $c_2$, $c_3$ and all $n$ sufficiently large.  Combining all these results, we have that $V_{n22}\leq d_0\sqrt{n}h^3$, with probability at least $1-\exp(-d_1\log p)$, for some positive constants $d_0$, $d_1$ and all $n$ sufficiently large.  
		
		Next, Assumption~\ref{A2}-(c) implies that 
		\begin{align*}
		&\max_{2\leq j \leq p}n^{-1}\sum_{i=1}^n \Big\{ \big[\E(\vx_{i,-1}|\vx_i^T\vbetah)-\E(\vx_{i,-1}|\vx_i^T\vbeta_0)\big]^T\vtheta_j\Big\}^2\\
		\leq&	\max_{2\leq j \leq p}Cn^{-1}\sum_{i=1}^n  \big[|\vx_i^T\vbetah-\vx_i^T\vbeta_0| + (|\vx_i^T\vbetah|+|\vx_i^T\vbeta_0|)||\vbetah-\vbeta_0||_2\big]^2||\vtheta_j||_2^2\\
		\leq&	\max_{2\leq j \leq p}2C\xi_2^{-2}n^{-1}\sum_{i=1}^n  \big[|\vx_i^T\vbetah-\vx_i^T\vbeta_0|^2 + (|\vx_i^T\vbetah|+|\vx_i^T\vbeta_0|)^2||\vbetah-\vbeta_0||_2^2\big] \\
		\leq&c_0||\vbetah-\vbeta_0||_2^2,
		\end{align*}
		with probability at least $1-\exp(-c_1n)$, for some positive constants $c_0$, $c_1$ and all $n$ sufficiently large. In the above, the second inequality applies Lemma~\ref{lem:thetaj}; the last inequality applies  Lemma~\ref{lem14NCL}.
		Combining Theorem~\ref{Lasso_error} with the above results, we have  $V_{n23}\leq d_0\sqrt{ns}h^5$, with probability at least $1-\exp(-d_1\log p)$, for some positive constants $d_0$, $d_1$ and all $n$ sufficiently large. 
		
		Observe that Theorem~\ref{Lasso_error}, Lemma~\ref{lem:Ebound} and Lemma~\ref{lem:thetaj} imply that $\max_{1\leq i \leq n}\Big|\big[\E(\vx_{i,-1}|\vx_i^T\vbetah)-\widehat{\E}(\vx_{i,-1}|\vx_i^T\vbetah)\big]^T\vtheta_j\Big|\leq c_0h^2$, with probability at least $1-\exp(-c_1\log p)$, for some positive constants $c_0$, $c_1$ and all $n$ sufficiently large. Then the above results and Assumption~\ref{K4}-(a) indicate that $V_{n24}\leq d_0\sqrt{n}h^4$, with probability at least $1-\exp(-d_1\log p)$, for some positive constants $d_0$, $d_1$ and all $n$ sufficiently large. 
		
		Finally we bound $V_{n21}$.  Define
		\begin{align*}
		\bbU  = \Big\{u(\cdot|\vbeta):\vbeta\in\bbB_1,\ u(\cdot|\vbeta)\in C_1^1(T),\ \forall\ \vbeta,\mbox{ and }  \max_{1\leq i \leq n}\sup_{\vbeta\in\bbB_1}\left|u(\vx_i^T\vbeta|\vbeta)-G(\vx_i^T\vbeta|\vbeta)\right|\leq c_1h^2\Big\},
		\end{align*}  
		for some positive constant $c_1$, where $T=\{t\in\bbR: |t|\leq 2||\vbeta_0||_2\sigma_x\sqrt{\log (p\vee n)}\}$, and $C_1^1(T)$ is the set of all continuous and Lipschitz functions $f:T\ra \bbR$. 
		It is sufficient to show that there exist some positive constants $c_0$, $c_1$, such that for all $n$ sufficiently large,
		\begin{align}
		P\left(\max_{2\leq j \leq p}\sup_{\vbeta\in\bbB_1,u\in\bbU} \Big|n^{-1/2}\sum_{i=1}^n \vtheta_j^T\vnu_{21}(Z_i,\vbeta,u) \Big|\geq c_0h[\log (p\vee n)]^{1/4} \right)\leq\exp[-c_1\log (p\vee n)], \label{ghat_g_1}
		\end{align} 
		where $Z_i=(\vx_i,\epsilon_i,A_i)$,  
		$\vnu_{21}(Z_i,\vbeta, u) = \big[G(\vx_i^T\vbeta|\vbeta) -u(\vx_i^T\vbeta|\vbeta)\big] G^{(1)}(\vx_i^T\vbeta|\vbeta) \big[\vx_{i,-1}-\E(\vx_{i,-1}|\vx_i^T\vbeta) \big]$,   $u(\vx^T\vbeta|\vbeta)$ depends on $\vx$ only through $\vx^T\vbeta$. 
		Note   the assumptions of Theorem~\ref{Lasso_error} imply that $ h [\log(p\vee n)]^{1/4} \leq d_0(nh^9)^{1/4}\leq d_0h^3\sqrt{n}*(nh^3)^{-1/4}\leq d_0sh^3\sqrt{n}$ for some positive constant $d_0$. 
		
		To prove (\ref{ghat_g_1}), we have
		\begin{align*}
		&P\left(\max_{2\leq j \leq p}\sup_{\vbeta\in\bbB_1,u\in\bbU}\Big|n^{-1/2}\sum_{i=1}^n\vtheta_j^T \vnu_{21}(Z_i,\vbeta, u) \right|>t\Big)\\
		\leq &P\left(\max_{2\leq j \leq p}\sup_{\vbeta\in\bbB_1,u\in\bbU}\Big|n^{-1/2}\sum_{i=1}^n\vtheta_j^T \vnu_{21}(Z_i,\vbeta, u) \Big|>t\ \bigg|\mathcal{G}_n \cap \mathcal{J}_n\cap \mathcal{K}_n \right) +\exp[-c\log (p\vee n)]\\
		\leq& \sum_{j=2}^{p}P\left( \sup_{\vbeta\in\bbB_1,u\in\bbU} \Big|n^{-1/2}\sum_{i=1}^n \vtheta_j^T \vnu_{21}(Z_i,\vbeta, u) \Big|>t\ \bigg|\mathcal{G}_n \cap \mathcal{J}_n \cap \mathcal{K}_n \right)+\exp[-c\log (p\vee n)],
		\end{align*}
		for some positive constant $c$, and all $n$ sufficiently large, where the events $\mathcal{G}_n$, $\mathcal{J}_n$ and $\mathcal{K}_n$ are defined in Lemma~\ref{lem:events}.  
		Note that 
		$\E[ \vtheta_j^T \vnu_{21}(Z_i,\vbeta,u)] =0$, and $ \sup\limits_{\vbeta\in\bbB_1,u\in\bbU} n^{-1} \sum_{i=1}^n\big| \vtheta_{j}^T\vnu_{21}(Z_i,\vbeta, u)  \big|^2\leq C h^4$, on the event $\mathcal{G}_n$, 
		for some positive constant  $C$, and any $j$.  Hence 
		by Massart's concentration inequality (e.g., Theorem 14.2, \citet{buhlmann2011}),
		$\forall\ t>0$, 
		\begin{align} 
		P\bigg( \sup_{\vbeta\in\bbB_1,u\in\bbU} \Big|n^{-1/2} \sum_{i=1}^n \vtheta_j^T \vnu_{21}(Z_i,\vbeta, u)  \Big| 	\geq  E_{sup,2} + th^2\sqrt{ n}  \bigg| \mathcal{G}_n \cap\mathcal{J}_n \cap \mathcal{K}_n  \bigg) 
		\leq &\exp\left( - \frac{nt^2}{8}\right),\label{mass_22}
		\end{align}  
		where $E_{sup,2} =  \E\left[\sup_{\vbeta\in\bbB_1,u\in\bbU} \Big| n^{-1/2}\sum_{i=1}^n \vtheta_j^T \vnu_{21}(Z_i,\vbeta, u) \Big|\ \bigg|\mathcal{G}_n \cap\mathcal{J}_n \cap \mathcal{K}_n \right]$.
		
		Similarly as the proof of Lemma~\ref{lem:ghat1_g1}, we can cover $\bbB_1$   by ${p\choose ks}\left(1+\frac{2c_0s^{3/2}h^2\log(p\vee n)}{\delta_n^2}\right)^{ks}$ $L_2-$balls of radius $\frac{\delta_n^2}{s\log(p\vee n)}$.
		Denote the centers of these $L_2-$balls by $\vbeta^{\circ}_{l0},\cdots,\vbeta^{\circ}_{lN_l}$.
		Denote these $L_2-$balls by $\mathbb{C}(\vbeta^{\circ}_{ll'})$, $l=1,\cdots,m(s)$, $l'=1,\cdots,N_l$.
		
		Observe that on the event $\mathcal{J}_n$, $\max_{1\leq i \leq n}\sup_{\vbeta\in\bbB_1}|\vx_i^T\vbeta|\in T$, then $\bigcup_{\vbeta\in\bbB_1}u(\cdot|\vbeta)\in C_1^1(T)$. By Theorem 2.7.1 in \citet{van1996weak}, we have 
		$$\log N\left(\delta_n, C_1^1(T),L_2(\bbP_n)\right)\leq C||\vbeta_0||_2\frac{\sqrt{\log(p\vee n)}}{\delta_n},$$
		for some positive constant $C$. So we can find $N_2\leq \exp\big[C||\vbeta_0||_2\delta_n^{-1}\sqrt{\log(p\vee n)}\big]$ balls with the centers $u_a^{\circ} (\cdot)$, $a=1,\cdots,N_2$, to cover $ C_1^1(T)$. 
		Hence $\forall\ \vbeta\in\bbB_1$, $u(\cdot|\vbeta)\in C_1^1(T)$,  we can find $l,\ l'$ and $a$ such that $\vbeta\in \mathbb{C}(\vbeta^{\circ}_{ll'})$ and $u_a^{\circ} (\cdot)$ satisfies 
		\begin{align}
		n^{-1} \sum_{i=1}^n  \left[u(\vx_i^T\vbeta_1|\vbeta_1)-u_a^{\circ}(\vx_i^T\vbeta_1)\right]^2 \leq \delta_n^2,\ \forall \vbeta_1\in\bbB_1.\label{cover_u}
		\end{align} 
		On the event $\mathcal{G}_n\cap \mathcal{J}_n \cap \mathcal{K}_n$, we have 
		\begin{align*}
		&n^{-1}\left| \sum_{i=1}^n\vtheta_j^T \vnu_{21}(Z_i,\vbeta, u) -\sum_{i=1}^n \vtheta_j^T\vnu_{21}(Z_i,\vbeta^{\circ}_{ll'}, u_a^{\circ})  \right|\\
		\leq &n^{-1}\left| \sum_{i=1}^n\left[G(\vx_i^T\vbeta|\vbeta) - G(\vx_i^T\vbeta^{\circ}_{ll'}|\vbeta^{\circ}_{ll'})\right]G^{(1)}(\vx_i^T\vbeta|\vbeta) [\vx_{i,-1} -\E(\vx_{i,-1} |\vx_i^T\vbeta)] ^T\vtheta_{j}\right| \\ 
		&+n^{-1}\left| \sum_{i=1}^n\left[u(\vx_i^T\vbeta|\vbeta) - u_a^{\circ}(\vx_i^T\vbeta)\right]G^{(1)}(\vx_i^T\vbeta|\vbeta) [\vx_{i,-1}  -\E(\vx_{i,-1} |\vx_i^T\vbeta)] ^T\vtheta_{j}\right| \\ 
		&+n^{-1}\left| \sum_{i=1}^n\left[u_a^{\circ}(\vx_i^T\vbeta)- u_a^{\circ}(\vx_i^T\vbeta^{\circ}_{ll'})\right]G^{(1)}(\vx_i^T\vbeta|\vbeta) [\vx_{i,-1}  -\E(\vx_{i,-1} |\vx_i^T\vbeta)] ^T\vtheta_{j}\right| \\ 
		&+n^{-1}\left| \sum_{i=1}^n\left[G(\vx_i^T\vbeta^{\circ}_{ll'}|\vbeta^{\circ}_{ll'})- u(\vx_i^T\vbeta^{\circ}_{ll'}|\vbeta^{\circ}_{ll'})\right]\big[G^{(1)}(\vx_i^T\vbeta|\vbeta)-G^{(1)}(\vx_i^T\vbeta^{\circ}_{ll'}|\vbeta^{\circ}_{ll'})\big] [\vx_{i,-1}  -\E(\vx_{i,-1} |\vx_i^T\vbeta)] ^T\vtheta_{j}\right| \\ 
		&+n^{-1}\left| \sum_{i=1}^n\left[ u(\vx_i^T\vbeta^{\circ}_{ll'}|\vbeta^{\circ}_{ll'})- u_a^{\circ}(\vx_i^T\vbeta^{\circ}_{ll'})\right]\big[G^{(1)}(\vx_i^T\vbeta|\vbeta)-G^{(1)}(\vx_i^T\vbeta^{\circ}_{ll'}|\vbeta^{\circ}_{ll'})\big] [\vx_{i,-1}  -\E(\vx_{i,-1} |\vx_i^T\vbeta)] ^T\vtheta_{j}\right| \\  
		&+n^{-1}\left| \sum_{i=1}^n\left[G(\vx_i^T\vbeta^{\circ}_{ll'}|\vbeta^{\circ}_{ll'})-  u(\vx_i^T\vbeta^{\circ}_{ll'}|\vbeta^{\circ}_{ll'})\right]G^{(1)}(\vx_i^T\vbeta^{\circ}_{ll'}|\vbeta^{\circ}_{ll'}) [\E(\vx_{i,-1} |\vx_i^T\vbeta^{\circ}_{ll'}) -\E(\vx_{i,-1} |\vx_i^T\vbeta)] ^T\vtheta_{j}\right|\\
		&+n^{-1}\left| \sum_{i=1}^n\left[ u(\vx_i^T\vbeta^{\circ}_{ll'}|\vbeta^{\circ}_{ll'})- u_a^{\circ}(\vx_i^T\vbeta^{\circ}_{ll'})\right]G^{(1)}(\vx_i^T\vbeta^{\circ}_{ll'}|\vbeta^{\circ}_{ll'}) [\E(\vx_{i,-1} |\vx_i^T\vbeta^{\circ}_{ll'}) -\E(\vx_{i,-1} |\vx_i^T\vbeta)] ^T\vtheta_{j}\right|\\
		=&\sum_{j=1}^{7}R_{nj}, 
		\end{align*}  
		where the definition of $R_{nj}$, $j=1,\cdots,7$, is clear from the context. (\ref{Gt})  in Lemma~\ref{lem:Gbound} and the event $\mathcal{G}_n$ implies that 
		$R_{n1}\leq c_1\sqrt{||\vbeta-\vbeta^{\circ}_{ll'}||_2s\log (p\vee n)} \leq d_0\delta_n,$
		with probability at least $1-\exp[-c_2\log(p\vee n)]$, for some positive constants $d_0$, $c_1$, $c_2$, and all $n$ sufficiently large. 
		The Cauchy-Schwartz inequality implies that
		\begin{align*}
		R_{n2}\leq \sqrt{n^{-1} \sum_{i=1}^n\left[u(\vx_i^T\vbeta|\vbeta) - u_a^{\circ}(\vx_i^T\vbeta\right] ^2}\sqrt{n^{-1}  \sum_{i=1}^n\left\{ G^{(1)}(\vx_i^T\vbeta|\vbeta) [\vx_{i,-1}  -\E(\vx_i|\vx_{i,-1}^T\vbeta)] ^T\vtheta_{j}\right\}^2 }  \leq d_0\delta_n,
		\end{align*}
		for some positive constant $d_0$, which applies the event $\mathcal{G}_n$, Assumption~\ref{K4}-(a) and (\ref{cover_u}). Since $u_a^{\circ}(\cdot)\in C_1^1(T)$, we have  
		\begin{align*}
		R_{n3}\leq d_1\sqrt{n^{-1} \sum_{i=1}^n\left[\vx_i^T(\vbeta-\vbeta^{\circ}_{ll'})\right] ^2}\sqrt{n^{-1}  \sum_{i=1}^n\left\{ G^{(1)}(\vx_i^T\vbeta|\vbeta) [\vx_{i,-1} -\E(\vx_i|\vx_{i,-1}^T\vbeta)] ^T\vtheta_{j}\right\}^2 } \leq d_0\delta_n,
		\end{align*}
		for some positive constants $d_0$, $d_1$, which applies the event  $\mathcal{G}_n\cap \mathcal{K}_n$, and Assumption~\ref{K4}-(a). Similarly, we can derive that $R_{n4}\leq d_0h^2\delta_n$ and $R_{n5}\leq d_0\delta_n^2$,  with probability at least $1-\exp[-c_2\log(p\vee n)]$, for some positive constants $d_0$, $c_2$, and all $n$ sufficiently large, which applies (\ref{Gtt}) in Lemma~\ref{lem:Gbound} and the event $\mathcal{G}_n$. In addition, we have $R_{n6}\leq d_0h^2\delta_n$ and $R_{n7}\leq d_0\delta_n^2$, which both apply Assumption~\ref{A2}-(c) and  Assumption~\ref{K4}-(a).

		Hence, the $\delta_n-$covering number of the class of functions $V_{2j}=\{\vtheta_j^T\vnu_{21}(Z_i,\vbeta, u) :\vbeta\in\bbB_1, u\in\bbU\}$ is bounded by 
		\begin{align}
		N (\delta_n,V_{2j},L_1(\mathbb{P}_n))\leq c{p\choose ks}\left[1+\frac{2c_0s^{3/2}h^2\log (p\vee n)}{\delta_n^2}\right]^{ks}\exp\big[C||\vbeta_0||_2\delta_n^{-1}\sqrt{\log(p\vee n)}\big], \ \forall j.\label{cover_no2}
		\end{align} 
		
		Let $a_1,\cdots,a_n$ be a Rademacher sequence that is independent of the data.  The symmetrization theorem (Theorem~14.3 in \citet{buhlmann2011}) implies that 
		\begin{align*} 
		E_{sup,2} = & \E\left[\sup_{\vbeta\in\bbB_1,u\in\bbU} \Big| n^{-1/2}\sum_{i=1}^n \vtheta_j^T \vnu_{21}(Z_i,\vbeta, u) \Big|\ \bigg|\mathcal{G}_n \cap\mathcal{J}_n\cap\mathcal{K}_n \right]\\
		\leq &2\E_{ \{\vx_i\}_{i=1}^n}\left\{\E\left[\sup_{\vbeta\in\bbB_1,u\in\bbU} \Big| n^{-1/2}\sum_{i=1}^n a_i\vtheta_j^T \vnu_{21}(Z_i,\vbeta, u) \Big|\ \bigg|\{\vx_i\}_{i=1}^n,\mathcal{G}_n \cap\mathcal{J}_n\cap\mathcal{K}_n \right]\right\}. 
		\end{align*}

		Note that on the event $\mathcal{G}_n\cap\mathcal{J}_n$, we have that $\sup_{\vbeta\in\bbB_1,u\in\bbU}\sqrt{n^{-1}\sum_{i=1}^n[a_i \vtheta_j^T \vnu_{21}(Z_i,\vbeta,u)]^2 }\leq ch^2\triangleq R_n$.  Lemma~14.18  in \citet{van1996weak} implies that 
		\begin{align*}
		E_{sup,2}\leq& 2\E_{ \{\vx_i\}_{i=1}^n}\left\{\E\left[\sup_{\vbeta\in\bbB_1,u\in\bbU} \Big| n^{-1/2}\sum_{i=1}^n a_i\vtheta_j^T \vnu_{21}(Z_i,\vbeta, u) \Big|\ \bigg|\{\vx_i\}_{i=1}^n,\mathcal{G}_n \cap\mathcal{J}_n\cap\mathcal{K}_n \right]\right\} \\ 
		\leq &ch^2 \left\{4+6\sum_{l=1}^L 2^{-l}\sqrt{\log \left[N (2^{-l}h^2,V_{j},L_1(\mathbb{P}_n))+1\right]}   \right\}   \\
		\leq& c_0h [\log(p\vee n)]^{1/4},
		\end{align*}
		with probability at least $1-\exp[-c_1\log(p\vee n)]$, for some positive constants $c$, $c_0$, $c_1$, and all $n$ sufficiently large, where $L=\min\{l:l\geq 1,\ 2^{-l}\leq 4/\sqrt{n}\}$, and the last inequality applies (\ref{cover_no2}).   The analysis is similar as that in the proof of Lemma~\ref{lem:ghat1_g1}. 
		
		It follows from (\ref{mass_22}) that 
		\begin{align*}
		&P\bigg( \sup_{\vbeta\in\bbB_1,u\in\bbU} \Big| n^{-1/2}\sum_{i=1}^n \vtheta_j^T \vnu_{21}(Z_i,\vbeta, u)  \Big| 	\geq ch [\log(p\vee n)]^{1/4} +th^2\sqrt{n} \bigg| \mathcal{G}_n \cap\mathcal{J}_n\cap\mathcal{K}_n   \bigg) \\
		\leq &\exp\left(-\frac{nt^2}{8}\right).
		\end{align*} 
		Take $t = 4\sqrt{n^{-1}\log (p\vee n)}$.  Note that the assumptions of Theorem~\ref{Lasso_error} imply $ h [\log(p\vee n)]^{1/4} \leq  d_1  (nh^9)^{1/4}\leq d_1$, for some positive constant $d_1$. Hence there exist some positive constants $c_0$, $c_1$, such that for all $n$ sufficiently large, 
		\begin{align*}
		&P\left(\max_{2\leq j \leq p} \sup_{\vbeta\in\bbB_1,u\in\bbU}\left| n^{-1/2}\sum_{i=1}^n\vtheta_{j}^T\vnu_{21}(Z_i,\vbeta,u)\right|\geq   c_0  h [\log(p\vee n)]^{1/4} \right) 
		\leq \exp[-c_1\log (p\vee n)].
		\end{align*} 
		Therefore, combining all the above results, we prove (\ref{ghat_g2}).

	\end{proof}

	\begin{proof}[Proof of Lemma~\ref{lem:Ehat_E}] 
		We will prove (\ref{Ehat_E}) below. The proof of  (\ref{Ehat_E_inf}) is similar.  
		
		Theorem~\ref{Lasso_error} implies that $P\left(\vbetah\in\bbB_1\right)\geq 1-\exp(-c\log p)$, for some positive constant $c$, and all $n$ sufficiently large. Lemma~\ref{Gbetafunc} implies that 
		$$P\left(\max_{ 1\leq i \leq n}\sup_{\vv\in\mathbb{K}(p,2ks+\widetilde{s})}\left|\widehat{\E}(\vx_i^T\vv|\vx_i^T\vbetah)-\E (\vx_i^T\vv|\vx_i^T\vbeta_0)  \right| \geq c_0sh^2\sqrt{\log(p\vee n)} \right) \leq  \exp(-c_1\log p),$$
		for some positive constants $c_0$, $c_1$ 
		and all $n$ sufficiently large. Define
		\begin{align*}
		\bbE  = \Big\{\vE(\cdot|\vbeta):\vbeta\in\bbB_1;  \mbox{ for any  unit vector }\vv\in\bbR^p, \vE(\cdot|\vbeta)^T\vv\in C_1^1(T),\  \forall\ \vbeta, \\\mbox{and } \max_{1\leq i \leq n}\sup_{\substack{\vbeta\in\bbB_1\\\vv\in\mathbb{K}(p,2ks+\widetilde{s})}}\Big|[\vE(\vx_i^T\vbeta|\vbeta)-\E(\vx_i |\vx_i^T\vbeta_0)]^T\vv\Big|\leq c_1  sh^2\sqrt{\log(p\vee n)} 
		\Big\},
		\end{align*} 
		for some positive constant $c_1$, where $T=\big\{t\in\bbR: |t|\leq 2||\vbeta_0||_2\sigma_x\sqrt{\log (p\vee n)}\big\}$, and $C_1^1(T)$ is the set of all continuous and Lipschitz functions $f:T\ra \bbR$. 
		To prove (\ref{Ehat_E}), it is sufficient to prove that there exist some positive constants $c_0$, $c_1$, such that for all $n$ sufficiently large,
		\begin{align*}
		P\left(\max_{2\leq j \leq p} \sup_{\vbeta\in\bbB_1,\vE\in\bbE} \Big|n^{-1/2}\sum_{i=1}^n\vtheta_j^T\vxi(Z_i,\vbeta,\vE) \Big| \geq c_0h\sqrt{s\log(p\vee n)}\right)\leq\exp[-c_1\log (p\vee n)],  
		\end{align*} 
		where $\vxi(Z_i,\vbeta,\vE) =  \widetilde{\epsilon}_i G^{(1)}(\vx_i^T\vbeta_0|\vbeta_0) [\vE_{-1}(\vx_i^T\vbeta|\vbeta) -\E(\vx_{i,-1}|\vx_i^T\vbeta_0)\big] $,    $\vE(\vx^T\vbeta|\vbeta)$ depends on $\vx$ only through $\vx^T\vbeta$, and $\vE_{-1}(\vx_i^T\vbeta|\vbeta)$ denotes the $(p-1)-$subvector of $\vE(\vx^T\vbeta|\vbeta)$ that excludes the $1^{st}$ entry.
		
		Note that $\max_{ 1\leq i \leq n}\left|G^{(1)}(\vx_i^T\vbeta_0|\vbeta_0)\right|\leq b$ according to assumption \ref{A1}-(b). Using the same technique as that in Lemma~\ref{lem:events}, the event $\mathcal{Q}_n = \big\{n^{-1}\sum_{i=1}^{n}\left[\widetilde{\epsilon}_iG^{(1)}(\vx_i^T\vbeta_0)\right]^2\leq 5b^2(\sigma_\ep^2+M^2)\big\}$ holds with probability at least $1-\exp(-cn)$, for some universal constant $c>0$.  Then  there exist some positive constant $c$, such that for all $n$ sufficiently large,
		\begin{align*}
		&P\left(\max_{2\leq j \leq p}\sup_{\vbeta\in\bbB_1,\vE\in\bbE} \Big|n^{-1/2}\sum_{i=1}^n\vtheta_{j}^T \vxi(Z_i,\vbeta,\vE) \Big|>t\right)\\ 
		\leq &\sum_{j=2}^pP\left(\sup_{\vbeta\in\bbB_1,\vE\in\bbE}\Big|n^{-1/2}\sum_{i=1}^n\vtheta_{j}^T \vxi(Z_i,\vbeta,\vE) \Big|>t\ \bigg|\mathcal{Q}_n \cap \mathcal{J}_n\cap \mathcal{K}_n \right) +\exp[-c\log(p\vee n)].
		\end{align*}
		
		Note that $\vtheta_j^T\vxi(Z_i,\vbeta,\vE) = 2(2A_i-1)\left[\epsilon_i+g(\vx_i)\right]G^{(1)}(\vx_i^T\vbeta_0|\vbeta_0)\vtheta_j^T\left[\vE_{-1}(\vx_i^T\vbeta|\vbeta) -\E(\vx_{i,-1}|\vx_i^T\vbeta_0)\right]$, where $(2A_i-1)$ is a Rademacher sequence, and independent of $(\vx_i,\epsilon_i)$. Hence given $\{(\vx_i,\ep_i)\}_{i=1}^n$, on the event $\mathcal{Q}_n $, we have 
		$$ \sup_{\vbeta\in\bbB_1,\vE\in\bbE} n^{-1} \sum_{i=1}^n\left| \vtheta_{j}^T\vxi(Z_i,\vbeta,\vE) \right|^2\leq Cs^2h^4\log(p\vee n),$$ 
		for some positive constant  $C$, and any $j$. Hence 
		by Massart's concentration inequality (e.g., Theorem 14.2, \citet{buhlmann2011}) on $\mathcal{Q}_n $, given $\{(\vx_i,\ep_i)\}_{i=1}^n$,  $\forall\ t>0$, 
		\begin{align} 
		&P\bigg( \sup_{\vbeta\in\bbB_1,\vE\in\bbE} \Big|n^{-1/2}\sum_{i=1}^n \vtheta_{j}^T\vxi(Z_i,\vbeta,\vE) \Big|\geq\E_{sup}\big[\vtheta_{j}^T\vxi(Z_i,\vbeta,\vE)\big]+tsh^2\sqrt{n\log(p\vee n)} \nonumber\\
		&\qquad\qquad\qquad\qquad\qquad\qquad\qquad\qquad\qquad\qquad\qquad\quad  \ \bigg| \{(\vx_i,\ep_i)\}_{i=1}^n,\mathcal{Q}_n \cap\mathcal{J}_n\cap \mathcal{K}_n \bigg) \nonumber\\
		\leq &\exp\left(-\frac{nt^2}{8}\right),\label{mass_3}
		\end{align}
		where $\E_{sup}\big[\vtheta_{j}^T\vxi(Z_i,\vbeta,\vE)\big] =\E\left[ \sup_{\vbeta\in\bbB_1,\vE\in\bbE} \left|n^{-1/2}\sum_{i=1}^n \vtheta_{j}^T\vxi(Z_i,\vbeta,\vE) \right|\ \Big| \{(\vx_i,\ep_i)\}_{i=1}^n,\mathcal{Q}_n \cap\mathcal{J}_n\cap \mathcal{K}_n\right] $
		It remains to find an upper bound for $\E_{sup}\big[\vtheta_{j}^T\vxi(Z_i,\vbeta,\vE)\big] $. 
		
		Use the $\delta_n-$cover for $\bbB_1$ and $C_1^1(T)$ constructed in the proof of Lemma~\ref{lem:ghat1_g1}. Hence $\forall\ \vbeta\in\bbB_1$, $\left[\vtheta_j^T\vE_{-1}(\cdot|\vbeta)\right]\in C_1^1(T)$ for any $j=2,\cdots,p$,  we can find $l,\ l'$ and $a$ such that $\vbeta\in \mathbb{C}(\vbeta^{\circ}_{ll'})$ and $\vE_a^{\circ} (\cdot)$ satisfies 
		\begin{align*}
		n^{-1} \sum_{i=1}^n  \left\{[\vE_{-1}(\vx_i^T\vbeta_1|\vbeta)- \vE_{a,-1}^{\circ}(\vx_i^T\vbeta_1)]^T\vtheta_j\right\}^2\leq \delta_n^2,\ \ \forall\vbeta_1\in\bbB_1. 
		\end{align*}  
		On the event $\mathcal{Q}_n\cap \mathcal{J}_n \cap \mathcal{K}_n$, there exists some positive constant $c$, such that 
		\begin{align*}
		&n^{-1}\left| \sum_{i=1}^n\vtheta_j^T\vxi(Z_i,\vbeta,\vE)-\sum_{i=1}^n \vtheta_j^T\vxi(Z_i,\vbeta^{\circ}_{ll'},\vE_a^{\circ})\right|\\
		\leq &n^{-1}\left| \sum_{i=1}^n\vtheta_j^T\left[\vE_{-1}(\vx_i^T\vbeta|\vbeta) - \vE_{a,-1}^{\circ}(\vx_i^T\vbeta^{\circ}_{ll'})\right]\widetilde{\epsilon}_i G^{(1)}(\vx_i^T\vbeta_0|\vbeta_0)\right| \\
		\leq &\sqrt{n^{-1}\sum_{i=1}^n\left\{\vtheta_j^T\left[\vE_{-1}(\vx_i^T\vbeta|\vbeta) - \vE_{-1}(\vx_i^T\vbeta^{\circ}_{ll'}|\vbeta)\right]\right\}^2+\left\{ \vtheta_j^T\left[\vE_{-1}(\vx_i^T\vbeta^{\circ}_{ll'}|\vbeta) - \vE_{a,-1}^{\circ}(\vx_i^T\vbeta^{\circ}_{ll'})\right]\right\}^2} \\
		&* \sqrt{n^{-1}\sum_{i=1}^n\left[\widetilde{\epsilon}_i G^{(1)}(\vx_i^T\vbeta_0|\vbeta_0)\right]^2} \\
		\leq & c\delta_n.
		\end{align*} 
		Hence, the $\delta_n-$covering number of the class of functions $\Phi_j=\{\vtheta_j^T\vxi(Z_i,\vbeta,\vE):\vbeta\in\bbB_1, \vE\in\bbE\}$ satisfies
		\begin{align}
		N (\delta_n,\Phi_j,L_1(\mathbb{P}_n))\leq c{p\choose ks}\left(1+\frac{2c_0\sqrt{s}h^2}{\delta_n}\right)^{ks}\exp\big[C||\vbeta_0||_2\delta_n^{-1}\sqrt{\log(p\vee n)}\big],\ \ \forall j.\label{cover_no3}
		\end{align}

		Recall $\vtheta_j^T\vxi(Z_i,\vbeta,\vE) = 2(2A_i-1)\left[\epsilon_i+g(\vx_i)\right]G^{(1)}(\vx_i^T\vbeta_0|\vbeta_0)\vtheta_j^T\left[\vE_{-1}(\vx_i^T\vbeta|\vbeta) -\E(\vx_{i,-1}|\vx_i^T\vbeta_0)\right]$, where $(2A_i-1)$ is a Rademacher sequence, and independent of $(\vx_i,\epsilon_i)$.
		Note that on the event $\mathcal{Q}_n \cap\mathcal{J}_n\cap \mathcal{K}_n$, we have that $\sup_{\vbeta\in\bbB_1,\vE\in\bbE} \sqrt{n^{-1}\sum_{i=1}^{n}\left[\vtheta_{j}^T\vxi(Z_i,\vbeta,\vE)^2\right]} =c_0sh^2\sqrt{\log(p\vee n)}\triangleq R_n$.
		Therefore, Lemma~14.18  in \citet{van1996weak} implies that 
		\begin{align*}
		&\E_{sup}\big[\vtheta_{j}^T\vxi(Z_i,\vbeta,\vE)\big] \\
		=&\E\left[\sup_{\vbeta\in\bbB_1,\vE\in\bbE}\Big| n^{-1/2}\sum_{i=1}^n \vtheta_j^T\vxi(Z_i,\vbeta,\vE)\Big| \ \bigg| \{(\vx_i,\ep_i)\}_{i=1}^n,\mathcal{Q}_n \cap\mathcal{J}_n\cap \mathcal{K}_n \right] \\
		\leq &Cs h^2\sqrt{\log(p\vee n)}\left\{4+6\sum_{l=1}^L 2^{-l}\sqrt{\log \left[N (2^{-l}sh^2\sqrt{\log(p\vee n)},\Phi_j,L_1(\mathbb{P}_n))+1\right]}   \right\}   \\
		\leq& c_0h\sqrt{s\log (p\vee n)},
		\end{align*} 
		for some positive constants $C$, $c_0$, and all $n$ sufficiently large, where $L=\min\{l:l\geq 1,\ 2^{-l}\leq 4/\sqrt{n}\}$, and the last inequality applies (\ref{cover_no3}). 
		It follows from (\ref{mass_3}) that 
		\begin{align*}
		&	P\bigg( \sup_{\vbeta\in\bbB_1,\vE\in\bbE} \Big| n^{-1/2}\sum_{i=1}^n\vtheta_{j}^T\vxi(Z_i,\vbeta,\vE)\Big|
		\geq c h\sqrt{s\log (p\vee n)}+tsh^2\sqrt{n\log(p\vee n)} \\
		&\qquad\qquad\qquad\qquad\qquad\qquad\qquad\qquad\quad \bigg| \{(\vx_i,\ep_i)\}_{i=1}^n,\mathcal{Q}_n \cap\mathcal{J}_n \cap \mathcal{K}_n \bigg)  \leq  \exp\left(-\frac{nt^2}{8}\right).
		\end{align*} 
		Take $t = 4\sqrt{n^{-1}\log (p\vee n)}$.  Note that the assumptions of Theorem~\ref{Lasso_error} imply $h\sqrt{s\log (p\vee n)}\leq c_1\sqrt{nh^7}\leq c_1 $, for some positive constant $c_1$. Hence we have  
		\begin{align*}
		&P\left( \sup_{\vbeta\in\bbB_1,\vE\in\bbE}\Big| n^{-1/2}\sum_{i=1}^n\vtheta_{j}^T\vxi(Z_i,\vbeta,\vE)\Big|
		\geq c h\sqrt{s\log (p\vee n)}\ \bigg|\mathcal{Q}_n \cap\mathcal{J}_n\cap \mathcal{K}_n  \right) \\
		= &\E\left\{P\left( \sup_{\vbeta\in\bbB_1,\vE\in\bbE}\Big| n^{-1/2}\sum_{i=1}^n\vtheta_{j}^T\vxi(Z_i,\vbeta,\vE)\Big|
		\geq c h\sqrt{s\log (p\vee n)}\ \bigg|\{(\vx_i,\ep_i)\}_{i=1}^n,\mathcal{Q}_n \cap\mathcal{J}_n\cap \mathcal{K}_n  \right) \right\}\\
		\leq &\exp\left[-2\log (p\vee n)  \right].
		\end{align*} 
		Therefore, there exist some positive constants $c_0$, $c_1$, such that for all $n$ sufficiently large,
		$$P\left(\max_{2\leq j \leq p} \sup_{\vbeta\in\bbB_1,\vE\in\bbE}\Big| n^{-1/2}\sum_{i=1}^n\vtheta_{j}^T\vxi(Z_i,\vbeta,\vE)\Big| \geq c_0h\sqrt{s\log (p\vee n)}\right)\leq\exp[-c_1\log (p\vee n)].$$ 
	\end{proof}

	\begin{proof}[Proof of Lemma~\ref{Gbetafunc}] 
		We will prove (\ref{Gbetabound}) and (\ref{G1betabound}) below. The proof of (\ref{Ebetabound})  is similar.   
		\begin{align*}
		&\max_{ 1\leq i \leq n}\sup \limits_{\vbeta\in\bbB_1 n}\big|\widehat{G}(\vx_i^T\vbeta|\vbeta)-G(\vx_i^T\vbeta_0|\vbeta_0)  \big|\\
		\leq &\max_{ 1\leq i \leq n}\sup \limits_{ \vbeta\in\bbB_1}\big|\widehat{G}(\vx_i^T\vbeta|\vbeta)-G (\vx_i^T\vbeta|\vbeta)  \big|\\
		&+\max_{ 1\leq i \leq n}\sup \limits_{ \vbeta\in\bbB_1}\big|G(\vx_i^T\vbeta|\vbeta)-G(\vx_i^T\vbeta_0|\vbeta_0)  \big|.
		\end{align*} 
		Lemma~\ref{lem:Gbound} implies that 
		\begin{align*}
		&\max_{ 1\leq i \leq n}\sup \limits_{ \vbeta\in\bbB_1}\big|G(\vx_i^T\vbeta|\vbeta)-G(\vx_i^T\vbeta_0|\vbeta_0)  \big| \\
		\leq&\max_{ 1\leq i \leq n}\sup \limits_{ \vbeta\in\bbB_1}\big|f_0'(\vx_i^T\vbeta)[\vx_{i,-1}-\E(\vx_{i,-1}|\vx_i^T\vbeta)]^T\vgamma\big| + \frac{1}{2}\max_{ 1\leq i \leq n}\sup \limits_{ \vbeta\in\bbB_1}\big|f_0''(\vx_i^T\vbeta_1) (\vx_{i,-1}^T\vgamma)^2\big| \\	
		&+ \frac{1}{2}\max_{ 1\leq i \leq n}\sup \limits_{ \vbeta\in\bbB_1}\big|\E\big[f_0''(\vx_i^T\vbeta_2) (\vx_{i,-1}^T\vgamma)^2|\vx_i^T\vbeta\big]\big| \\	
		\leq &C\max_{ 1\leq i \leq n}\sup \limits_{ \vbeta\in\bbB_1}\big|[\vx_{i,-1}-\E(\vx_{i,-1}|\vx_i^T\vbeta)]^T\vgamma\big| + C\max_{ 1\leq i \leq n}\sup \limits_{ \vbeta\in\bbB_1}  (\vx_{i,-1}^T\vgamma)^2  \\	
		&+C\max_{ 1\leq i \leq n}\sup\limits_{ \vbeta\in\bbB_1}\big|\E\big[(\vx_{i,-1}^T\vgamma)^2|\vx_i^T\vbeta\big]\big| \\	
		\leq &C\max_{ 1\leq i \leq n}\big|\big|\vx_i-\E(\vx_i|\vx_i^T\vbeta)\big|\big|_\infty \sup \limits_{ \vbeta\in\bbB_1}||\vgamma||_1+ C\max_{ 1\leq i \leq n}||\vx_i||_\infty^2  \sup \limits_{ \vbeta\in\bbB_1}||\vgamma||_1^2  +C||\vgamma||_2^2\log(p\vee n),
		\end{align*}
		where $\vgamma=\vbeta_{-1}-\vbeta_{0,-1}$, $\vbeta_1$ and $\vbeta_2$ are between $\vbeta$ and $\vbeta_0$.  The last part of the last inequality comes from Assumption~\ref{A2}-(b). 
		
		For a sub-Gaussian random vector $\vx\in\bbR^p$, its property implies that $$P\left(\max_{ 1\leq i \leq n} ||\vx_i||_\infty\geq c_1 \sqrt{\log (p\vee n)}\right)\leq \exp\left[-c_2\log (p\vee n)\right],$$ for positive constants $c_1$ and $c_2$. Similar bounds also work for $\max_{ 1\leq i \leq n}\big|\big|\vx_i-\E(\vx_i|\vx_i^T\vbeta)\big|\big|_\infty $.
		The assumptions of Theorem~\ref{Lasso_error} imply that $||\vgamma||_2^2\log(p\vee n)\leq d_0sh^4\log(p\vee n) =  d_0sh^2\sqrt{\log(p\vee n)} *\sqrt{h^4\log(p\vee n) }\leq  d_1sh^2\sqrt{\log(p\vee n)} \sqrt{nh^9}\leq  d_1sh^2\sqrt{\log(p\vee n)}$, for some positive constants $d_0$, $d_1$.
		Hence we have 
		$$\max_{ 1\leq i \leq n}\sup \limits_{\vbeta\in\bbB_1}\big|G(\vx_i^T\vbeta|\vbeta) -G(\vx_i^T\vbeta_0|\vbeta_0)  \big| \leq Csh^2\sqrt{ \log  (p\vee n)},$$ for positive constant $C$, with probability at least $1- \exp[-c_2\log (p\vee n)]$. 	Hence we can apply Lemma~\ref{Gfunc} and derive that 
		$$P\left(\max_{1\leq i \leq n}\sup \limits_{\vbeta\in\bbB_1}\big|\widehat{G}(\vx_i^T\vbeta|\vbeta)-G(\vx_i^T\vbeta_0|\vbeta_0)  \big| \geq c_0sh^2\sqrt{\log  (p\vee n)} \right)\leq\exp[-c_1\log  (p\vee n)].$$ 
		Hence we can conclude (\ref{Gbetabound}). 
		
		Similarly,  it's sufficient to bound $\max_{ 1\leq i \leq n}\sup \limits_{ \vbeta\in\bbB_1}\big|G^{(1)}(\vx_i^T\vbeta|\vbeta)-G^{(1)}(\vx_i^T\vbeta_0|\vbeta_0)  \big|$ to prove (\ref{G1betabound}). Lemma~\ref{lem:Gbound} implies that there exist $\vbeta_1$ and $\vbeta_2$ between $\vbeta$ and $\vbeta_0$, such that
		\begin{align*}
		&\max_{ 1\leq i \leq n}\sup \limits_{ \vbeta\in\bbB_1}\big|G^{(1)}(\vx_i^T\vbeta|\vbeta)-G^{(1)}(\vx_i^T\vbeta_0|\vbeta_0)  \big| \\
		\leq&\max_{ 1\leq i \leq n}\sup \limits_{ \vbeta\in\bbB_1}\big|f_0''(\vx_i^T\vbeta)[\vx_{i,-1}-\E(\vx_{i,-1}|\vx_i^T\vbeta)]^T\vgamma\big| + \frac{1}{2}\max_{ 1\leq i \leq n}\sup \limits_{ \vbeta\in\bbB_1}\big|f_0'''(\vx_i^T\vbeta_1) (\vx_{i,-1}^T\vgamma)^2\big| \\	
		&+  \max_{ 1\leq i \leq n}\sup_{\vbeta\in\bbB_1} \left|f_0'(\vx_i^T\vbeta)\E^{(1)}(\vx_{i,-1}^T\vgamma|\vx_i^T\vbeta)\right|+ \frac{1}{2}\max_{ 1\leq i \leq n}\sup \limits_{ \vbeta\in\bbB_1}\big|\E^{(1)}\big[f_0''(\vx_i^T\vbeta_2) (\vx_{i,-1}^T\vgamma)^2|\vx_i^T\vbeta\big]\big| \\	
		\leq &C\max_{ 1\leq i \leq n}\sup \limits_{ \vbeta\in\bbB_1}\big|[\vx_{i,-1}-\E(\vx_{i,-1}|\vx_i^T\vbeta)]^T\vgamma\big| + C\max_{ 1\leq i \leq n}\sup \limits_{ \vbeta\in\bbB_1}  (\vx_{i,-1}^T\vgamma)^2  \\	
		&+C\max_{ 1\leq i \leq n}\sup_{\vbeta\in\bbB_1} \left| \E^{(1)}(\vx_{i,-1}^T\vgamma|\vx_i^T\vbeta)\right|+ C\max_{ 1\leq i \leq n}\sup\limits_{ \vbeta\in\bbB_1}\big|\E^{(1)}\big[(\vx_{i,-1}^T\vgamma)^2|\vx_i^T\vbeta\big]\big| \\	
		\leq &C\max_{ 1\leq i \leq n}\big|\big|\vx_i-\E(\vx_i|\vx_i^T\vbeta)\big|\big|_\infty \sup \limits_{ \vbeta\in\bbB_1}||\vgamma||_1+ C\max_{ 1\leq i \leq n}||\vx_i||_\infty  \sup \limits_{ \vbeta\in\bbB_1}||\vgamma||_1^2  +C||\vgamma||_2+C||\vgamma||_2^2 \sqrt{\log(p\vee n)}\\
		\leq &c_0sh^2 \sqrt{\log(p\vee n)},
		\end{align*}
		with probability at least $1-\exp[-c_1\log (p\vee n)]$, for some positive constants $C$,  $c_0$, $c_1$, and all $n$ sufficiently large. The last two parts of the second last inequality come from Assumption~\ref{A2}-(b). Note that $sh^2\sqrt{\log(p\vee n)} \leq d_0nh^7\leq d_0h$ for some positive constant $d_0$. We thus have
		$$P\left(\max_{ 1\leq i \leq n}\sup \limits_{ \vbeta\in\bbB_1}\big|G^{(1)}(\vx_i^T\vbeta|\vbeta)-G^{(1)}(\vx_i^T\vbeta_0|\vbeta_0)  \big|\geq c_0h\right)\leq\exp[-c_1\log  (p\vee n)],$$ 
		for some positive constants $c_0$, $c_1$, and all $n$ sufficiently large. Combining this result with Lemma~\ref{G1func}, we can conclude (\ref{G1betabound}). 
	\end{proof}

	\begin{proof}[Proof of Lemma~\ref{Efunc}]
		We will prove (\ref{Eboundphi}) below. The proof of  (\ref{Ebound}) is similar.    
		\begin{align*}
		&\sup \limits_{\substack{\vbeta \in \bbB_1\\\vv\in\bbK(p,2ks+\widetilde{s})}}\Big|\frac{1}{n}\sum_{i=1}^n\vv^T[\widehat{\E} (\vx_i|\vx_i^T\vbeta)-\E(\vx_i|\vx_i^T\vbeta_0)] [\widehat{\E} (\vx_i|\vx_i^T\vbeta)-\E(\vx_i|\vx_i^T\vbeta_0)]^T\vv \Big|   \\
		\leq& \sup \limits_{\substack{\vbeta \in \bbB_1\\\vv\in\bbK(p,2ks+\widetilde{s})}}\Big|\frac{2}{n}\sum_{i=1}^n\vv^T[\widehat{\E} (\vx_i|\vx_i^T\vbeta)-\E(\vx_i|\vx_i^T\vbeta_0)] [\widehat{\E} (\vx_i|\vx_i^T\vbeta)-\E(\vx_i|\vx_i^T\vbeta_0)]^T\vv \Big|   \\
		&+\sup \limits_{\substack{\vbeta \in \bbB_1\\\vv\in\bbK(p,2ks+\widetilde{s})}}\Big|\frac{2}{n}\sum_{i=1}^n\vv^T[\E (\vx_i|\vx_i^T\vbeta)-\E(\vx_i|\vx_i^T\vbeta_0)] [\E (\vx_i|\vx_i^T\vbeta)-\E(\vx_i|\vx_i^T\vbeta_0)]^T\vv \Big|  .
		\end{align*} 
		Lemma~\ref{lem:Ebound} implies that $\sup \limits_{\substack{\vbeta \in \bbB\\\vv\in\bbK(p,2ks+\widetilde{s})}}\frac{1}{n}\sum_{i=1}^n\big| [\widehat{\E} (\vx_i|\vx_i^T\vbeta)-\E(\vx_i|\vx_i^T\vbeta )]^T\vv \big|^2\leq c_0h^4$ holds with probability at least $1-\exp[-c_1\log(p\vee n)]$, for some positive constants $c_0$, $c_1$, and all $n$ sufficiently large. 
		
		Assumption~\ref{A2} implies that there exists positive constant $C$ such that
		\begin{align*}
		&\sup \limits_{\substack{\vbeta \in \bbB_1\\\vv\in\bbK(p,2ks+\widetilde{s})}}\frac{1}{n}\sum_{i=1}^n\big| [\E (\vx_i|\vx_i^T\vbeta)-\E(\vx_i|\vx_i^T\vbeta_0)]^T\vv \big|^2 \\
		\leq &C  \sup \limits_{\vbeta\in\bbB_1}\frac{1}{n}\sum_{i=1}^n\big| \vx_i^T\vbeta -\vx_i^T\vbeta_0|^2 + C \sup \limits_{\vbeta\in\bbB_1}\frac{1}{n}\sum_{i=1}^n( | \vx_i^T\vbeta|^2+ | \vx_i^T\vbeta_0|^2)||\vbeta-\vbeta_{0}||_2^2.
		\end{align*}
		By Lemma~\ref{lem15NCL}, we have that  $\sup \limits_{\vbeta\in\bbB_1}\frac{1}{n}\sum_{i=1}^n\big| \vx_i^T\vbeta|^2*||\vbeta-\vbeta_{0}||_2^2 \leq c_0sh^4$ with probability at least $1-\exp[-c_1\log(p\vee n)]$, for some positive constants $c_0$, $c_1$, and all $n$ sufficiently large. Similarly, we can derive that $\sup \limits_{\vbeta\in\bbB_1}\frac{1}{n}\sum_{i=1}^n\big| \vx_i^T\vbeta -\vx_i^T\vbeta_0|^2 \leq c_1sh^4$with probability at least $1-\exp[-c_1\log(p\vee n)]$, for some positive constants $c_0$, $c_1$, and all $n$ sufficiently large. 
		Combining these two results,  we can derive that 
		$$P\Big(\sup \limits_{\substack{\vbeta \in \bbB\\\vv\in\bbK(p,2ks+\widetilde{s})}}\frac{1}{n}\sum_{i=1}^n\big|[\widehat{\E} (\vx_i|\vx_i^T\vbeta)-\E(\vx_i|\vx_i^T\vbeta_0)]^T\vv \big|^2
		\geq c_0 sh^4 \Big)\leq  \exp(-c_1\log p),$$ 
		for some positive constants $c_0$, $c_1$, and all $n$ sufficiently large. 
		Hence we can conclude (\ref{Eboundphi}). 
	\end{proof}

	\begin{proof} [Proof of Lemma~\ref{lem:thetaj}]
		First, we define $\vphi_{0j}^*\triangleq  \Big(-(\vd_{0j})_{1:(j-2)}^T,1,-(\vd_{0j})_{(j-1):(p-2)}^T\Big)^T$. By the definitions of $\vd_{0j}$ and $ \tau^2_{0j}$, we know that
		\begin{align*}
		\vOmega_{-(j-1)}\vphi_{0j}^* &=\vOmega_{-(j-1),(j-1)}-\vOmega_{-(j-1),-(j-1)} \vd_{0j}=  \vnull_{p-2},\\
		\vOmega_{j-1}^T\vphi_{0j}^* &=\Omega_{(j-1),(j-1)}-\vOmega_{(j-1),-(j-1)} \vd_{0j}=   \tau^2_{0j},
		\end{align*} 
		where $	\vOmega_{j-1}\in\bbR^{p-1}$ is the $(j-1)^{th}$ column of $\vOmega$, and $\vOmega_{-(j-1)}\in\bbR^{(p-2)\times (p-1)}$ is the submatrix of $\vOmega$ with its $(j-1)^{th}$ row removed. Given these two facts, we can derive that 
		\begin{align*}
		\vOmega\vphi_{0j}^* = \tau^2_{0j}\ve_{j-1} = \vOmega\vtheta_j \tau^2_{0j},
		\end{align*}
		since $\vOmega\vTheta=\vI_{p-1}$, where $\ve_{j-1}$ is the $(j-1)^{th}$ column of $\vI_{p-1}$. Assumption~\ref{A2}-(a) indicates that $\lam_{min}(\vOmega)\geq \xi_2>0$, then we have $\vphi_{0j}^* = \Big(-(\vd_{0j})_{1:(j-2)}^T,1,-(\vd_{0j})_{(j-1):(p-2)}^T\Big)^T =\vtheta_j \tau^2_{0j}=\vphi_{0j} $. We thus have that $||\vtheta_j ||_0=||\vd_{0j}||+1\leq \widetilde{s}+1$.
		
		To prove the second part of the lemma, note that $\E(\vxw_{-1}\vxw_{-1}^T) = \E[\Cov(\vx_{-1}|\vx^T\vbeta_0)]$.
		Assumption \ref{A2} indicates that $\sup_{||\vv||_2=1}\vv^T\E(\vxw_{-1}\vxw_{-1}^T) \vv= \xi_1$. We can derive that $$\sup_{||\vv||_2=1}\vv^T\vOmega\vv = \sup_{||\vv||_2=1}\E \big\{[G^{(1)}(\vx^T\vbeta_0|\vbeta_0)(\vxw_{-1}^T\vv)]^2\big\}\leq b^2\sup_{||\vv||_2=1}\vv^T\E(\vxw_{-1}\vxw_{-1}^T)\vv=b^2\xi_1.$$ 
		Recall that $\tau_{0j}^2 = \Omega_{(j-1),(j-1)}-(\vOmega_{-(j-1),(j-1)})^T(\vOmega_{-(j-1),-(j-1)})^{-1}\vOmega_{-(j-1),(j-1)}$. Since $\vOmega$ is positive definite, we thus have that $(\vOmega_{-(j-1),-(j-1)})^{-1}$ is positive definite, and  $\tau_{0j}^2\leq \Omega_{(j-1),(j-1)}= \ve_{j-1}^T\vOmega\ve_{j-1} \leq b^2\xi_1$, uniformly in $j=2,\cdots,p$.
		
		
		Since $\vTheta = \vOmega^{-1}$, we know that $\vtheta_j^T\vOmega \vtheta_j = \Theta_{(j-1),(j-1)}$. We observe that $ \tau^{-2}_{0j}=\Theta_{(j-1),(j-1)}\leq ||\vtheta_j||_2 $. 
		Note that $\tau^{-2}_{0j}=\Omega_{(j-1),(j-1)}^{-1}\leq \xi_2^{-1}$ by Assumption~\ref{A2}-(a). We observe that
		$	||\vtheta_j||_2\geq \vtheta_j^T\vOmega \vtheta_j =\tau_{0j}^{-4} (\vphi_{0j}^T\vOmega \vphi_{0j} ) \geq  \xi_2\tau_{0j}^{-4}|| \vphi_{0j}||_2^2 =\xi_2	||\vtheta_j||_2^2,$
		where the second inequality applies Assumption~\ref{A2}-(a).
		It implies that   $\tau^{-2}_{0j}\leq ||\vtheta_j||_2 \leq \xi_2^{-1}$ uniformly in $j$,  which completes the proof of the lemma.  
	\end{proof}

	\begin{proof}[Proof of Lemma~\ref{cor_Sig}]
		It suffices to show that 
		$$ \max_{2\leq j,k \leq p}\bigg|  \frac{1}{n}\sum_{i=1}^n [\widetilde{Y}_i-\widehat{G}(\vx_i^T\vbetah|\vbetah)]^2 [\widehat{G}^{(1)}(\vx_i^T\vbetah|\vbetah)]^2 \vthetah_{j} ^T\vxh_{i,-1}\vxh_{i,-1}^T\vthetah_{k}   - \vtheta_j^T\vLam\vtheta_k\bigg|=o_p(1).$$
		Rewrite that
		\begin{align*}
		& \max_{2\leq j,k \leq p}\bigg|  \frac{1}{n}\sum_{i=1}^n [\widetilde{Y}_i-\widehat{G}(\vx_i^T\vbetah|\vbetah)]^2 [\widehat{G}^{(1)}(\vx_i^T\vbetah|\vbetah)]^2 \vthetah_{j} ^T\vxh_{i,-1}\vxh_{i,-1}^T\vthetah_{k}   - \vtheta_j^T\vLam\vtheta_k\bigg|\\
		\leq& \max_{2\leq j,k \leq p}\bigg|\frac{1}{n}\sum_{i=1}^n  \widetilde{\epsilon}_i^2 [\widehat{G}^{(1)}(\vx_i^T\vbetah|\vbetah)]^2 \vthetah_{j} ^T(\vxh_{i,-1}\vxh_{i,-1}^T-\vxw_{i,-1}\vxw_{i,-1}^T)\vthetah_{k} \bigg| \\ 
		& + \max_{2\leq j,k \leq p}\bigg|\frac{1}{n}\sum_{i=1}^n  \widetilde{\epsilon}_i^2\big\{[\widehat{G}^{(1)}(\vx_i^T\vbetah|\vbetah)]^2-[G^{(1)}(\vx_i^T\vbeta_0|\vbeta_0)]^2 \big\}  \vthetah_{j} ^T\vxw_{i,-1}\vxw_{i,-1}^T\vthetah_{k} \bigg| \\
		&+\max_{2\leq j,k \leq p}\bigg|\frac{1}{n}\sum_{i=1}^n  \widetilde{\epsilon}_i^2[G^{(1)}(\vx_i^T\vbeta_0|\vbeta_0)]^2  \vthetah_{j} ^T\vxw_{i,-1}\vxw_{i,-1}^T\vthetah_{k}  -\vtheta_j^T\vLam\vtheta_k\bigg|\\
		&+\max_{2\leq j,k \leq p}\bigg|\frac{1}{n}\sum_{i=1}^n [G(\vx_i^T\vbeta_0|\vbeta_0)-\widehat{G}(\vx_i^T\vbetah|\vbetah)]^2 [\widehat{G}^{(1)}(\vx_i^T\vbetah|\vbetah)]^2   \vthetah_{j} ^T\vxh_{i,-1}\vxh_{i,-1}^T\vthetah_{k} \bigg|\\
		&+\max_{2\leq j,k \leq p}\bigg|\frac{2}{n}\sum_{i=1}^n \widetilde{\epsilon}_i[G(\vx_i^T\vbeta_0|\vbeta_0)-\widehat{G}(\vx_i^T\vbetah|\vbetah)]  [\widehat{G}^{(1)}(\vx_i^T\vbetah|\vbetah)]^2  \vthetah_{j} ^T\vxh_{i,-1}\vxh_{i,-1}^T\vthetah_{k} \bigg|\\
		\triangleq&\sum_{l=1}^5\max_{2\leq j,k \leq p}|J_{njkl}|,
		\end{align*} 
		where $J_{njkl}$'s are defined in the context. Note that 
		\begin{align*}
		\max_{2\leq j,k \leq p}|J_{njk1}|  \leq& \max_{2\leq j,k \leq p}\bigg|\frac{1}{n}\sum_{i=1}^n   \widetilde{\epsilon}_i^2 [\widehat{G}^{(1)}(\vx_i^T\vbetah|\vbetah)]^2 \vtheta_{j} ^T(\vxh_{i,-1}\vxh_{i,-1}^T-\vxw_{i,-1}\vxw_{i,-1}^T)\vtheta_{k}\bigg| \\  
		&+ \max_{2\leq j,k \leq p}\bigg|\frac{1}{n}\sum_{i=1}^n  \widetilde{\epsilon}_i^2 [\widehat{G}^{(1)}(\vx_i^T\vbetah|\vbetah)]^2 (\vthetah_{j}-\vtheta_{j}) ^T(\vxh_{i,-1}\vxh_{i,-1}^T-\vxw_{i,-1}\vxw_{i,-1}^T)\vtheta_{k} \bigg| \\  
		&+ \max_{2\leq j,k \leq p}\bigg|\frac{1}{n}\sum_{i=1}^n  \widetilde{\epsilon}_i^2 [\widehat{G}^{(1)}(\vx_i^T\vbetah|\vbetah)]^2 \vtheta_{j} ^T(\vxh_{i,-1}\vxh_{i,-1}^T-\vxw_{i,-1}\vxw_{i,-1}^T)(\vthetah_{k}-\vtheta_{k}) \bigg| \\  
		&+ \max_{2\leq j,k \leq p}\bigg|\frac{1}{n}\sum_{i=1}^n  \widetilde{\epsilon}_i^2 [\widehat{G}^{(1)}(\vx_i^T\vbetah|\vbetah)]^2 (\vthetah_{j}-\vtheta_{j}) ^T(\vxh_{i,-1}\vxh_{i,-1}^T-\vxw_{i,-1}\vxw_{i,-1}^T)(\vthetah_{k}-\vtheta_{k}) \bigg| \\  
		\triangleq&\max_{2\leq j,k \leq p}|J_{njk11}|+\max_{2\leq j,k \leq p}|J_{njk12}|+\max_{2\leq j,k \leq p}|J_{njk13}|+\max_{2\leq j,k \leq p}|J_{njk14}|.
		\end{align*} 
		
		Consider the event $\mathcal{F}_n=\{\max_{1\leq i \leq n} |\widetilde{\epsilon}_i|\leq  (\sigma_\ep+M) \sqrt{\log (p\vee n)}\} $, which holds with probability at least $1-\exp[-c \log (p\vee n)]$ by the sub-Gaussian property for $\widetilde{\epsilon}$, , for some constants $c>0$, and all $n$ sufficiently large.  Then by the proofs of Lemma~\ref{lem:Ex_err}, we can derive that 
		\begin{align*}
		\max_{2\leq j,k \leq p}|J_{njk11}|&\leq   O_p(\log (p\vee n)) *O_p(\sqrt{s}h^2)= O_p(\sqrt{s}h^2\log (p\vee n)) = o_p(nh^7)=o_p(1),
		\end{align*}
		with probability at least  $1-\exp(-c_1\log p)$,  for some constant $c_1>0$, and all $n$ sufficiently large. 
		Similarly, we have that $\max_{2\leq j,k \leq p}|J_{njk12}|\leq O_p(\sqrt{s}h^2\log (p\vee n)) * \max_{2\leq j \leq p}||\vthetah_{j}-\vtheta_{j}||_1= o_p(nh^7)*O_p(\widetilde{s}\eta) =o_p(1) $, and $\max_{2\leq j,k \leq p}|J_{njk13}|=o_p(1) $, with the same probability bound. Finally, we can derive that $\max_{2\leq j,k \leq p}|J_{njk14}|\leq O_p(\sqrt{s}h^2\log (p\vee n)) * \max_{2\leq j \leq p}||\vthetah_{j}-\vtheta_{j}||_1^2= o_p(nh^7)*O_p(\widetilde{s}^2\eta^2) =o_p(1) $. 
		
		To bound $\max_{2\leq j,k \leq p}|J_{njk2}|$, Lemma~\ref{Gbetafunc} and Lemma~\ref{lem:cube_rate} together imply that
		\begin{align*}
		&\max_{2\leq j,k \leq p}|J_{njk2}|\\
		\leq&  \max_{1\leq i \leq n}\big|[\widehat{G}^{(1)}(\vx_i^T\vbetah|\vbetah)]^2-[G^{(1)}(\vx_i^T\vbeta_0|\vbeta_0)]^2 \big| *\Big[\frac{1}{n} \sum_{i=1}^n\widetilde{\epsilon}_i^6\Big]^{1/3} * \max_{2\leq j,k \leq p}\Big[\frac{1}{n}\sum_{i=1}^n| \vthetah_{j} ^T\vxw_{i,-1}\vxw_{i,-1}^T\vthetah_{k}  |^{3/2}\Big]^{2/3}\\
		\leq &O_p(h) * \Big[\frac{1}{n} \sum_{i=1}^n\widetilde{\epsilon}_i^6\Big]^{1/3}*\max_{2\leq j \leq p} \Big[\frac{1}{n}\sum_{i=1}^n| \vxw_{i,-1}^T\vthetah_{j} |^3\Big]^{2/3}\\
		\leq & O_p(h) *O_p(1) = O_p(h)=o_p(1),
		\end{align*}
		with probability at least $1-\exp[-c_1\log (p\wedge n)]$, for some positive constant $c_1$, and all $n$ sufficiently large. We can also derive that 
		\begin{align*}
		\max_{2\leq j,k \leq p}|J_{njk4}|&\leq  \max_{1\leq i \leq n} [\widehat{G}(\vx_i^T\vbetah|\vbetah)]- G(\vx_i^T\vbeta_0|\vbeta_0)]^2  *\max_{1\leq i \leq n}[\widehat{G}^{(1)}(\vx_i^T\vbetah|\vbetah)]^2*\max_{2\leq j \leq n}\frac{1}{n}\sum_{i=1}^n(\vxh_{i,-1}^T\vtheta_j )^2\\
		&\leq O_p(s^2h^4\log (p\vee n)) * O_p(1)*O_p(1) =O_p(s^2h^4\log (p\vee n))=o_p(1),
		\end{align*}
		with probability at least $1-\exp(-c_1\log p)$for some positive constant $c_1$, and all $n$ sufficiently large, by Lemma~\ref{lem:thetax}, Lemma~\ref{Gfunc} and Lemma~\ref{Gbetafunc}. 
		Conditional on the event $\mathcal{F}_n$, we can see $\max_{2\leq j,k \leq p}|J_{njk5}|\leq O_p (sh^2\log(p\vee n))= o_p(1)$. To bound $\max_{2\leq j,k \leq p}|J_{njk4}|$, let $\widehat{\vLam} = \frac{1}{n}\sum_{i=1}^n  \widetilde{\epsilon}_i^2 [G^{(1)}(\vx_i^T\vbeta_0|\vbeta_0)]^2  \vxw_{i,-1}\vxw_{i,-1}^T$, then we can rewrite it as  
		\begin{align*}
		\max_{2\leq j,k \leq p}|J_{njk4}|\leq& \max_{2\leq j,k \leq p} \bigg| \vtheta_j^T \{\widehat{\vLam} -\vLam\}\vtheta_k \bigg| +\max_{2\leq j,k \leq p}\big| (\vthetah_j- \vtheta_j)^T\widehat{\vLam}\vtheta_k\big|+\max_{2\leq j,k \leq p}\big| \vthetah_j^T \widehat{\vLam}(\vthetah_k-\vtheta_k) \big|\\
		=&\max_{2\leq j,k \leq p}| J_{njk41}|+\max_{2\leq j,k \leq p}|J_{njk42}|+\max_{2\leq j,k \leq p}|J_{njk43}|,
		\end{align*}
		where the definition of $J_{njk4l}$ is clear. Given $(\epsilon_i,\vx_i)$, for any $\vtheta_j$, $ \widetilde{\epsilon}_iG^{(1)}(\vx_i^T\vbeta_0|\vbeta_0)   \vxw_{i,-1}^T\vtheta_j$ is sub-Gaussian with variance proxy at most $b^2 [\epsilon_i+g(\vx_i)]^2 (\vxw_{i,-1}^T\vtheta_j)^2$. Hence, we can conclude that $ \frac{1}{n}\sum_{i=1}^n \widetilde{\epsilon}_i G^{(1)}(\vx_i^T\vbeta_0|\vbeta_0)   \vxw_{i,-1}^T\vtheta_j$ is sub-Gaussian with variance proxy at most $ \frac{b^2}{n}\sum_{i=1}^n [\epsilon_i+g(\vx_i)]^2 (\vxw_{i,-1}^T\vtheta_j)^2\rap c$  for constant $c>0$. 
		Lemma~\ref{lem14NCL} implies that 
		\begin{align*}
		P\left(\max_{2\leq j,k \leq p}|J_{njk41}|\geq c \sqrt{\frac{\log p}{n}} \right)\leq \sum_{j,k}P\left( |J_{njk41}|\geq c \sqrt{\frac{\log p}{n}} \right) \leq p^2\exp(-c_1\log p),
		\end{align*}
		for some positive constant $c_1$, and all $n$ sufficiently large, 	where $ \sqrt{\frac{\log p}{n}} = o(h^{5/2})$. For $J_{njk42}$, we can conclude that
		\begin{align*}
		\max_{2\leq j,k \leq p}|J_{njk42}|&\leq b^2 \Big[\frac{1}{n} \sum_{i=1}^n\widetilde{\epsilon}_i^6\Big]^{1/3} *\max_{1\leq k \leq p} \Big[\frac{1}{n}\sum_{i=1}^n| \vxw_{i,-1}^T\vthetah_{k} |^3\Big]^{1/3}*\max_{2\leq j \leq p} \Big[\frac{1}{n}\sum_{i=1}^n| \vxw_{i,-1}^T(\vthetah_j- \vtheta_j) |^3\Big]^{1/3}\\
		&\leq O_p(\widetilde{s}^{1/2}h) =o_p(1).
		\end{align*}
		Similar proof works to bound $\max_{2\leq j,k \leq p}|J_{njk43}|$. The lemma is proved.
	\end{proof}

	\section{Auxiliary results} \label{sec:proof_auxil}

	\setcounter{lemma}{0}
	\renewcommand{\thelemma}{B\arabic{lemma}} 
	
	\blem[Lemma 14 in \citet{NCL}]
	\label{lem14NCL} 
	Let $p_1$, $p_2$ be two arbitrary positive integers. If $\{\vx_i\in\bbR^{p_1}:i=1,\cdots,n\}$ are independent zero-mean sub-Gaussian random vectors with variance proxy $\sigma_x^2$, then for any fixed unit vector $\vv\in\bbR^{p_1}$, $\forall\ t>0$,
	\begin{align}
	P\left(\Big|\frac{1}{n}\sum_{i=1}^n\big[(\vx_i^T\vv)^2-\emph{E}(\vx_{i}^T\vv)^2\big]\Big|\geq t\right)\leq 2\exp\left[-cn\min\left(\frac{t^2}{\sigma_x^4},\frac{t}{\sigma_x^2}\right)\right],\label{lem14-1}
	\end{align}
	for some universal constant $c>0$.
	Moreover, if $\{\vy_i\in\bbR^{p_2}:i=1,\cdots,n\}$ are independent zero-mean sub-Gaussian random vectors with variance proxy $\sigma_y^2$, then $\forall\ t>0$,
	\begin{align}
	P\left(\Big|\Big|\frac{1}{n}\sum_{i=1}^n\big[\vx_i\vy_i^T -\emph{E}(\vx_i\vy_i^T) \big]\Big|\Big|_\infty\geq t \right)\leq 6p_1p_2\exp\left[-cn\min\left(\frac{t^2}{\sigma_x^2\sigma_y^2},\frac{t}{\sigma_x\sigma_y}\right)\right].\label{lem14-2}
	\end{align}
	Let if $p=p_1\vee p_2$. If $\log p = O(n)$, then there exist universal positive constants $c_0$, $c_1$ and $c_2$ such that
	\begin{align}
	P\left(\Big|\Big|\frac{1}{n}\sum_{i=1}^n\big[\vx_i\vy_i^T -\emph{E}(\vx_i\vy_i^T) \big]\Big|\Big|_\infty\geq c_0\sigma_x\sigma_y\sqrt{\frac{\log p}{n}}\right)\leq c_1\exp(-c_2\log p) .\label{lem14-3}
	\end{align}
	\elem
	
	\blem[Lemma 15 in \citet{NCL}]
	\label{lem15NCL} 
	Let $\bbK(s_0) = \{\vv\in\bbR^p:||\vv||_2\leq 1,||\vv||_0\leq s_0\}$.  If $\{\vx_i\in\bbR^{p}:i=1,\cdots,n\}$ are independent zero-mean sub-Gaussian random vectors with variance proxy $\sigma_x^2$, then there is a universal constant $c > 0$ such that for any $s_0\geq 1$,
	\begin{align*}
	P\Big(\sup_{\vv\in\mathbb{K}(2s_0)}\Big|\frac{1}{n}\sum_{i=1}^n\big[(\vx_i^T\vv)^2-\emph{E}(\vx_{i}^T\vv)^2\big]\Big|\geq t\Big)\leq 2\exp\left[-cn\min\left(\frac{t^2}{\sigma_x^4},\frac{t}{\sigma_x^2}\right)+2s_0\log p\right].
	\end{align*}
	\elem

	\blem \label{lem:cube_rate} 
	Let $\vx_1,\cdots,\vx_n$ be independent sub-Gaussian random vectors with variance proxy $\sigma_x^2$. For any $s_0\geq 1$, there exists a universal constant $c>0$ such that for all $n$ sufficiently large.
	\begin{align*}
	P\Big(\sup_{\vgamma\in\mathbb{K}(p,2s_0)}\Big|\frac{1}{n}\sum_{i=1}^n\big[|\vx_i^T\vgamma|^3-\emph{E}|\vx_{i}^T\vgamma|^3\big]\Big|\geq t\Big)\leq \exp\left\{-c\min\left[\frac{nt^2}{\sigma_x^6}, \frac{(nt)^{2/3}}{\sigma_x^2}\right]+2s_0\log p \right\};\\
	P\Big(\sup_{\vgamma\in\mathbb{K}(p,2s_0)}\Big|\frac{1}{n}\sum_{i=1}^n\big[|\vx_i^T\vgamma|^4-\emph{E}|\vx_{i}^T\vgamma|^4\big]\Big|\geq t\Big)\leq \exp\left\{-c\min\left[\frac{nt^2}{\sigma_x^8}, \frac{(nt)^{1/2}}{\sigma_x^2}\right]+2s_0\log p \right\}.
	\end{align*}
	\elem

	\begin{proof} 
		For any fixed $\vgamma\in\bbR^p$ such that $||\vgamma||_2\leq 1$, $\vx_i^T\vgamma$ is also sub-Gaussian with variance proxy bounded by $\sigma^2$. Applying the result on concentration inequality for the polynomial functions of independent sub-Gaussian random variables,  Theorem 1.4 of \citet{Adamczak2015} and the example in their section 3.1.2, we have $\forall\ t>0$, there exist universal positive constants $c_1$ and $c_2$ such that
		\begin{align}
		P\Big(\frac{1}{n}\sum_{i=1}^n\big(  |\vx_i^T\vgamma|^3-\E |\vx_i^T\vgamma|^3\big) \geq t\Big)\leq  c_1\exp\left\{-c_2\min\left[\frac{nt^2}{\sigma_x^6}, \frac{(nt)^{2/3}}{\sigma_x^2}\right]\right\} ,\label{cube_rate1}\\
		P\Big(\frac{1}{n}\sum_{i=1}^n\big(  |\vx_i^T\vgamma|^4-\E |\vx_i^T\vgamma|^4\big) \geq t\Big)\leq  c_1\exp\left\{-c_2\min\left[\frac{nt^2}{\sigma_x^8}, \frac{(nt)^{1/2}}{\sigma_x^2}\right]\right\} .
		\end{align} 
		
		Next, we apply the covering technique of Lemma~\ref{lem15NCL} to extend the above probability bound to uniformly on $\bbK(p,2s_0) = \{\vv\in\bbR^p:||\vv||_2\leq 1,||\vv||_0\leq 2s_0\}$. 
		
		For any subset $\mathcal{U} \subseteq\{1,\cdots,p\}$, define $\kS_\kU = \{\vgamma\in \bbR^p: ||\vgamma||_2\leq 1,\mbox{supp}(\vgamma)\subseteq \kU\}$. Then $\bbK(p,2s_0) = \bigcup_{|\kU |\leq 2s_0}\kS_\kU $. 
		Let $\kA=\{u_1,\cdots,u_m\}$ be a $\frac{1}{4}-$cover of $\kS_\kU$, that is $\forall\ \vgamma\in \kS_\kU$, there exists some $\vxi\in\kA$ such that $||\vgamma-\vxi||_2\leq 1/4$.  We can construct $\kA$ such that $|\kA|\leq 16^{2s_0}$. We observe that 
		\begin{align*}
		\Big|\frac{1}{n}\sum_{i=1}^n\big(  |\vx_i^T\vgamma|^3-  |\vx_i^T\vxi|^3\big)\Big|\leq & \frac{1}{n}\sum_{i=1}^n|\vx_i^T\vgamma|^2  |\vx_i^T(\vgamma-\vxi)| + \frac{1}{n}\sum_{i=1}^n|\vx_i^T\vgamma| *|\vx_i^T\vxi| *  |\vx_i^T(\vgamma-\vxi)| \\
		&+ \frac{1}{n}\sum_{i=1}^n|\vx_i^T\vxi|^2  |\vx_i^T(\vgamma-\vxi)|.
		\end{align*}
		By H\"older inequality,
		\begin{align*}
		\frac{1}{n}\sum_{i=1}^n|\vx_i^T\vgamma|^2  |\vx_i^T(\vgamma-\vxi)| \leq \Big(\frac{1}{n}\sum_{i=1}^n|\vx_i^T\vgamma|^3 \Big)^{2/3}*\Big(\frac{1}{n}\sum_{i=1}^n|\vx_i^T(\vgamma-\vxi)|^3 \Big)^{1/3}.
		\end{align*}
		Since $4(\vgamma-\vxi)\in\kS_\kU$, we have
		$$\sup_{\vgamma\in\kS_\kU}\sup_{\vxi\in\kA} \frac{1}{n}\sum_{i=1}^n|\vx_i^T\vgamma|^2  |\vx_i^T(\vgamma-\vxi)| \leq \frac{1}{4n}\sup_{\vgamma\in\kS_\kU}\sum_{i=1}^n|\vx_i^T\vgamma|^3.$$
		Similarly analysis applies to the other two terms. Note that $\max_{\vxi\in\kA}\frac{1}{n}\sum_{i=1}^n|\vx_i^T\vxi|^3 \leq  \sup_{\vgamma\in\kS_\kU}\\\frac{1}{n}\sum_{i=1}^n|\vx_i^T\vgamma|^3$, then we have 
		\begin{align*}
		\sup_{\vgamma\in\kS_\kU}\frac{1}{n}\sum_{i=1}^n|\vx_i^T\vgamma|^3\leq &  \max_{\vxi\in\kA}\frac{1}{n}\sum_{i=1}^n|\vx_i^T\vxi|^3 + \frac{3}{4n}\sup_{\vgamma\in\kS_\kU}\sum_{i=1}^n|\vx_i^T\vgamma|^3,
		\end{align*}   
		which implies that $\sup_{\vgamma\in\kS_\kU}\frac{1}{n}\sum_{i=1}^n|\vx_i^T\vgamma|^3\leq 4  \max_{\vxi\in\kA}\frac{1}{n}\sum_{i=1}^n|\vx_i^T\vxi|^3 $. Combining this with (\ref{cube_rate1}) and applying the union bound, there exists a universal constant $c>0$ such that 
		\begin{align*}
		P\Big(\sup_{\vgamma\in\kS_\kU}\frac{1}{n}\sum_{i=1}^n\big(  |\vx_i^T\vgamma|^3-\E |\vx_i^T\vgamma|^3\big) \geq 4t\Big) &\leq   P\Big(\max_{\vxi\in\kA}\frac{1}{n}\sum_{i=1}^n  \big(|\vx_i^T\vxi|^3-\E |\vx_i^T\vgamma|^3\big) \geq t\Big)\\
		&\leq 16^{2s_0}\exp\left\{-c \min\left[\frac{nt^2}{\sigma_x^6}, \frac{(nt)^{2/3}}{\sigma_x^2}\right]\right\}.
		\end{align*} 
		Taking a union bound over the ${p\choose 2s_0}\leq p^{2s_0}$ choices of $\kU$ for $\bbK(p,2s_0)$ yields that for all $n$ sufficiently large,
		\begin{align*}
		P\Big(\sup_{\vgamma\in\bbK(p,2s_0)}\Big|\frac{1}{n}\sum_{i=1}^n\big(  |\vx_i^T\vgamma|^3-\E |\vx_i^T\vgamma|^3\big)\Big|  \geq t\Big)\leq    \exp\left\{-c\min\left[\frac{nt^2}{\sigma_x^6}, \frac{(nt)^{2/3}}{\sigma_x^2}\right]+2s_0\log p \right\}.  
		\end{align*}  
		Hence, the first claim of Lemma~\ref{lem:cube_rate} is proved. The second claim can be probed similarly. 
	\end{proof}

	\blem \label{lem:An1bound}  
	Under the assumptions of Theorem~\ref{Lasso_error}, there exist some positive constants $c_0$, $c_1$ such that for all $n$ sufficiently large,
	\begin{align*}
	P\left(\max_{  1\leq i\leq n}\sup \limits_{\vbeta \in \bbB} |A_{n1}(\vx_i^T\vbeta|\vbeta)| \geq c_0 h^2 \right)\leq  \exp(-c_1nh^5).
	\end{align*} 
	\elem
	\begin{proof}	
		Write $a_{nj}(t|\vbeta) = K\Big(\frac{t-\vx_j^T\vbeta}{h}\Big)\widetilde{\epsilon}_j$.
		Then
		$	A_{n1}(\vx_i^T\vbeta|\vbeta) =  [(n-1)h]^{-1} \sum_{j=1,j\neq i}^n a_{nj}(\vx_i^T\vbeta|\vbeta) $. 
		Let  $f_{\vbeta}(\cdot)$ denote the p.d.f of $\vx^T\vbeta$. Assumption~\ref{A1}, \ref{K1} and \ref{K3} together imply that 
		\begin{align*}
		\E [a_{nj}^2(t|\vbeta) ]&\leq (\sigma_\epsilon^2+M^2)\E\Big\{ K^2\Big(\frac{t-\vx_j^T\vbeta}{h}\Big) \Big\} =(\sigma_\epsilon^2+M^2)h  \int K^2(z) f_{\vbeta}(t-hz) dz \leq ch,
		\end{align*}
		for some positive constant $c$.
		Note that $\E [a_{nj}(t|\vbeta)] =0$. Note that $\epsilon_i$ is sub-Gaussian, $K(\cdot)$ and $g(\cdot)$ are bounded almost everywhere. It is easy to see that $ \E\big[ |a_{nj}(t|\vbeta) |^k\big]\leq \frac{1}{2}\E[a_{nj}^2(t|\vbeta) ] L^{k-2}k!$, for some positive real $L$ and every integer $k\geq 2$. For any fixed $\vbeta$ and $0\leq v\leq \frac{1}{2L}\sqrt{(n-1)\E[a_{nj}^2(\vx_i^T\vbeta|\vbeta) ]}$,  by Bernstein's inequality,  
		\begin{align*}
		P\left(  \Big|\sum_{j=1,j\neq i}^n a_{nj}(\vx_i^T\vbeta|\vbeta)\Big| \geq 2v\sqrt{c(n-1)h} \right)\leq 2\exp (-v^2).
		\end{align*}
		Taking $v = \sqrt{c(n-1)h^5}$, we have 
		$$P\left(  \big|A_{n1}(\vx_i^T\vbeta|\vbeta)\big| \geq ch^2 \right)  \leq 2\exp(-cnh^5).$$
		
		To obtain the uniform bound, we cover $\bbB$ with  $L_2-$balls with radius $\delta$. Denote the unit Euclidean sphere as $\mathcal{S}^{ks} = \{\vv\in\bbR^{ks}: ||\vv||_2 = 1\}$. Let the covering number $N(\delta,\mathcal{S}^{ks},\rho) $ be the minimum $n$ such that there exists an $\delta-$cover of $\mathcal{S}^{ks}$ of size $n$, with respect to the $L_2$ distance $\rho$. 	It is well known that $N(\delta,\mathcal{S}^{ks},\rho) \leq (1+\frac{2}{\delta})^{ks}$. Consider the  decomposition 
		$$\big\{\vbeta\in \mathcal{S}^{p} : ||\vbeta||_0=ks\big\} = \bigcup\limits_{\mathcal{S}\subseteq[p]: |\mathcal{S}| = ks}\big\{\vbeta\in \mathcal{S}^p : \mbox{supp}(\vbeta) =\mathcal{S} \big\},$$
		where $|\mathcal{S}|$ is the cardinal number of $\mathcal{S}$.
		Let $\mathcal{N}_{\delta}$ be an $\delta-$cover  of $\bbB = \{\vbeta\in\bbB_0:||\vbeta-\vbeta_0||_2\leq r,  ||\vbeta||_0\leq ks\}$. 	it is easy to show the covering number $N = |\mathcal{N}_{\delta}|$ satisfies $$N\leq \Bigg\{ \Bigg( \begin{array}{c} p \\ ks \end{array} \Bigg) \Big(1+\frac{2r}{\delta}\Big)^{ks}\Bigg\}^2\leq \bigg\{\Big(1+\frac{2r}{\delta}\Big)\frac{ep}{ks}\bigg\}^{2ks} \leq c_2\Big(\frac{p }{\delta }\Big)^{2ks},$$
		for sufficiently large $c_2$. 
		For any $\vbeta$ in such a ball with center $\vbeta^*$, let us take $\delta = \frac{h^4}{2\sqrt{n}}$, then 
		the Lipschitz condition for $K(\cdot)$ implies that
		\begin{align*}
		&	\Big|(n-1)^{-1}\sum_{j=1,j\neq i}^n [K_h(\vx_i^T\vbeta-\vx_j^T\vbeta)-K_h(\vx_i^T\vbeta^*-\vx_j^T\vbeta^*)]\widetilde{\epsilon}_j\Big|\\
		\leq &c_0[(n-1)h^2]^{-1}\sum_{j=1,j\neq i}^n |(\vx_i-\vx_j)^T(\vbeta-\vbeta^*)|*|\widetilde{\epsilon}_j|\\
		\leq&2c_0h^{-2}\sqrt{|\vx_i^T(\vbeta-\vbeta^*)|^2+(n-1)^{-1}\sum_{j=1,j\neq i}^n |\vx_j^T(\vbeta-\vbeta^*)|^2} *\sqrt{(n-1)^{-1}\sum_{j=1,j\neq i}^n\widetilde{\epsilon}_j^2},
		\end{align*} 
		for some positive constant $c_0$.
		Lemma~\ref{lem:subg} and Lemma~\ref{lem14NCL} imply that $P\Big(\Big| (n-1)^{-1}\sum_{j=1,j\neq i}^n\widetilde{\epsilon}_j^2 - 4(\sigma_\ep^2+M^2)\Big| \geq \sigma_\ep^2+M^2 \Big)\leq \exp(-c_1n)$ for some constant $c_1>0$. 
		Lemma~\ref{lem15NCL} suggests
		$$P\Big(\Big| (n-1)^{-1}\sum_{j=1,j\neq i}^n |\vx_j^T(\vbeta-\vbeta^*)|^2 - (\vbeta-\vbeta^*)^T\E(\vx\vx^T) (\vbeta-\vbeta^*) \Big| \geq \sigma_x^2||\vbeta-\vbeta^*||_2^2,\ \forall\ \vbeta,\vbeta^*\in\bbB \Big)\leq \exp(-c_1n),$$ 
		for some constant $c_1>0$ and all $n$ sufficiently large.
		Taking $t=(n-1)\sigma_x^2$ and $s_0=ks$, Lemma~\ref{lem15NCL} suggests
		$$P\left(  |\vx_i^T(\vbeta-\vbeta^*)| \geq \sqrt{n}\sigma_x||\vbeta-\vbeta^*||_2,\ \forall\ \vbeta,\vbeta^*\in\bbB\right)\leq \exp(-c_1n),$$ 
		for some positive constant $c_1$ and all $n$ sufficiently large.
		Define the event
		\begin{align*}
		\mathcal{E}_1 = \Big\{& (n-1)^{-1}\sum_{j=1,j\neq i}^n\widetilde{\epsilon}_j^2 \leq 5 (\sigma_\ep^2+M^2),\   |\vx_i^T(\vbeta-\vbeta^*)| \geq \sqrt{ n}\sigma_x^2 ||\vbeta-\vbeta^*||_2,\ \forall\ \vbeta,\vbeta^*\in\bbB\Big\}\\
		&\bigcap\Big\{  (n-1)^{-1}\sum_{j=1,j\neq i}^n |\vx_j^T(\vbeta-\vbeta^*)|^2  \leq  (\xi_3+\sigma_x^2)||\vbeta-\vbeta^*||_2^2,\ \forall\ \vbeta,\vbeta^*\in\bbB   \Big\}.
		\end{align*} 
		We have  that $P(\mathcal{E}_1)\geq 1- 3\exp(-c_1n)$ for all $n$ sufficiently large, according to the above discussions.  Hence on the $\mathcal{E}_1$,  we have 
		\begin{align*}
		&\Big|(n-1)^{-1}\sum_{j=1,j\neq i}^n [K_h(\vx_i^T\vbeta-\vx_j^T\vbeta)-K_h(\vx_i^T\vbeta^*-\vx_j^T\vbeta^*)]\widetilde{\epsilon}_j\Big|\\
		\leq& 2c_0h^{-2}\sqrt{5 (\sigma_\ep^2+M^2)}*\sqrt{n\sigma_x^2 + \sigma_x^2+\xi_3}*||\vbeta-\vbeta^*||_2\\
		\leq& c_2h^{-2} \sqrt{n}||\vbeta-\vbeta^*||_2 \leq c_2h^{-2} \sqrt{n}\delta= c_2h^2/2,
		\end{align*} 
		for some positive constant $c_2$ and all $n$ sufficiently large. 
		We thus have
		\begin{align*}
		P\left(\sup \limits_{\vbeta \in \bbB} |A_{n1}(\vx_i^T\vbeta|\vbeta)| \geq c_2h^2 \right)
		\leq &P\left(\bigcup\limits_{\vbeta^*\in \mathcal{N}_\delta} |A_{n1}(\vx_i^T\vbeta^*|\vbeta^*)| \geq c_2h^2/2 \right)\\\
		&+ P\left(\sup\limits_{ \vbeta^*\in \mathcal{N}_\delta}\sup\limits_{  ||\vbeta-\vbeta^*||_2\leq \delta}|A_{n1}(\vx_i^T\vbeta|\vbeta)-A_{n1}(\vx_i^T\vbeta^*|\vbeta^*)| \leq c_2h^2/2 \right) \\
		\leq &\sum_{\vbeta^*\in \mathcal{N}_\delta}P\left( |A_{n1}(\vx_i^T\vbeta^*|\vbeta^*)| \geq  c_2h^2/2 \right)+  1- P(\mathcal{E}_1)\\
		\leq &c p^{2ks}\delta^{-2ks}\exp ( -c_2nh^5  )+ 3\exp(-c_1n)\\
		=&c  \exp\big( 2ks\log p - 2ks \log \delta -c_2nh^5 \big)+ 3\exp(-c_1n).
		\end{align*}  
		By the  assumptions of Theorem~\ref{Lasso_error}, $d_0s\log (p\vee n) \leq  nh^5$ for some constant $d_0$.
		Thus $- s\log \delta = c*s\log(h^{-1}) + c*s\log n\leq s\log (p\vee n) $.  It is followed that 
		$$	P\left(\sup \limits_{\vbeta \in \bbB} |A_{n1}(\vx_i^T\vbeta|\vbeta)| \geq  c_0h^2\right)\leq \exp(-cnh^5),$$
		for some positive constants $c_0$ and $c$. We therefore have  
		\begin{align*}
		P\left(\max_{  1\leq i\leq n}\sup \limits_{\vbeta \in \bbB} |A_{n1}(\vx_i^T\vbeta|\vbeta)| \geq  c_0h^2\right)\leq& \sum_{i=1}^n	P\left(\sup \limits_{\vbeta \in \bbB} |A_{n1}(\vx_i^T\vbeta|\vbeta)| \geq  c_0h^2\right)\\
		\leq &n \exp(-cnh^5) =  \exp [-(cnh^5-\log n)]\\
		\leq&   \exp(-c_1nh^5),
		\end{align*} 
		for some  positive constants $c_0$ $c_1$, and all $n$ sufficiently large.
	\end{proof}

	\blem \label{lem:An2bound} 
	Under the assumptions of Theorem~\ref{Lasso_error}, there exist some positive constants $c_0$, $c_1$ such that  for all $n$ sufficiently large,
	\begin{align*}
	P\left(\max_{  1\leq i\leq n}\sup \limits_{\vbeta \in \bbB} |A_{n2}(\vx_i^T\vbeta|\vbeta)| \geq c_0h^2\right)\leq \exp[-c_1\log(p\vee n)].
	\end{align*}  
	\elem
	\begin{proof}
		Let $A_{n2}(\vx_i^T\vbeta|\vbeta) = [(n-1)h]^{-1}\sum_{j=1,j\neq i}^n \gamma_i(z_j),$ where $\gamma_i(z_j) = K \Big(\frac{\vx_i^T\vbeta-\vx_j^T\vbeta}{h}\Big) \big[ f_0(\vx_j^T\vbeta_0) - G(\vx_i^T\vbeta|\vbeta)\big]$.  Recall that $f_{\vbeta}(\cdot)$ denotes the p.d.f of $\vx^T\vbeta$. Note that
		\begin{align*}
		\E \gamma_i(z_j) = &\E \Big\{K \Big(\frac{\vx_i^T\vbeta-\vx_j^T\vbeta}{h}\Big) \big[ f_0(\vx_j^T\vbeta_0) - G(\vx_i^T\vbeta|\vbeta)\big]\Big\}\\
		=&\E_{(\vx_i^T\vbeta,\vx_j^T\vbeta)} \Big\{\E \Big\{K \Big(\frac{\vx_i^T\vbeta-\vx_j^T\vbeta}{h}\Big) \big[ f_0(\vx_j^T\vbeta_0) - G(\vx_i^T\vbeta|\vbeta)\big] \Big| \vx_i^T\vbeta,\vx_j^T\vbeta \Big\}\Big\}\\
		=&    \E_{(\vx_i^T\vbeta,\vx_j^T\vbeta)} \Big\{K \Big(\frac{\vx_i^T\vbeta-\vx_j^T\vbeta}{h}\Big) \big[ G(\vx_j^T\vbeta|\vbeta) - G(\vx_i^T\vbeta|\vbeta)\big]\Big\}\\
		= &  \E_{\vx_i^T\vbeta} \Big\{h\int K(-z)  \big[ G(\vx_i^T\vbeta+hz|\vbeta) - G(\vx_i^T\vbeta|\vbeta)\big] f_{\vbeta}(\vx_i^T\vbeta+hz) dz\Big\} \\
		=&   \E_{\vx_i^T\vbeta} \Big\{h \int K(-z)  \Big[G^{(1)}(\vx_i^T\vbeta|\vbeta)hz + \frac{h^2z^2}{2}  G^{(2)}(t_1|\vbeta)\Big] \Big[f_{\vbeta}(\vx_i^T\vbeta) + hzf_{\vbeta}'(\widetilde{t})  \Big] dz\Big\}\\
		=&  \E_{\vx_i^T\vbeta} \Big\{ \frac{h^3f_{\vbeta}(\vx_i^T\vbeta)}{2} \int z^2 K(-z) G^{(2)}(t_1|\vbeta) dz \Big\}+ \E_{\vx_i^T\vbeta} \Big\{ h^3G^{(1)}(\vx_i^T\vbeta|\vbeta) \int z^2 K(-z) f_{\vbeta}'(\widetilde{t})  dz \Big\}\\
		& +  \E_{\vx_i^T\vbeta} \Big\{ \frac{h^4 }{2} \int z^3 K(-z) G^{(2)}(t_1|\vbeta)f_{\vbeta}'(\widetilde{t})  dz\Big\},
		\end{align*}
		where $t_1$ and $\widetilde{t}$ are both between $\vx_i^T\vbeta$ and $\vx_i^T\vbeta+hz$. In the above, the third equality applies the independence between $\vx_i^T\vbeta$ and  $\vx_j^T\vbeta$, and  $G(\vx_j^T\vbeta|\vbeta) = \E[f_0(\vx_j^T\vbeta_0)|\vx_j^T\vbeta]$.
		According to \ref{K1}--\ref{K4}, we know that $\sup \limits_{\vbeta \in \bbB} \E\big[A_{n2}(\vx_i^T\vbeta|\vbeta)\big] \leq ch^2$ for some $c$ large enough.
		
		Observing that $\sup \limits_{\vbeta \in \bbB} \E[\gamma_i(z_j)^2 ]\leq c_1h$,  for some positive constant $c_1$. Since $K(\cdot)$ is bounded on the real line,  for any fixed $\vbeta$,  by Bernstein's inequality, $\forall\ t\geq 0$, there exists some constant  $c >0$   such that 
		\begin{align*}
		P\left(  \left|\sum_{j=1,j\neq i}^n \gamma_i(z_j) -  \E \gamma_i(z_j) \right| \geq t\ \big|\ \vx_i^T\vbeta \right)  \leq 2\exp\left[\frac{-c t^2 }{(n-1)h}\right].
		\end{align*}
		Note that $	\E\gamma(z_i) \leq ch^3$ for some positive constant $c$. Take $t =(n-1)h^3$, then we have that 
		\begin{align*}
		&P\left(  \left|\sum_{j=1,j\neq i}^n \gamma_i(z_j) -  \E \gamma_i(z_j) \right| \geq (n-1)h^3  \right) \\
		=&\E_{\vx_i^T\vbeta}\left\{P\left(  \left|\sum_{j=1,j\neq i}^n \gamma_i(z_j) -  \E \gamma_i(z_j) \right| \geq (n-1)h^3\ \big|\ \vx_i^T\vbeta \right) \right\} \\
		\leq &2\exp\left[-c(n-1)h^5\right].
		\end{align*}
		Hence we can conclude that $P\Big(  \big|A_{n2}(t|\vbeta)\big| \geq c_0h^2\Big)  \leq 2 \exp (-c_1nh^5)$ for some positive constants $c_0$, $c_1$, and all $n$ sufficiently large.
		
		
		To obtain the uniform bound, we cover $\bbB$ with $L_2-$balls with radius $\delta^2$.  
		Let $\mathcal{N}_{\delta^2}$ be the $\delta^2-$cover  of $\bbB$. 	The covering number $N_2 = |\mathcal{N}_{\delta^2}|$ satisfies   $N_2  \leq c p^{2ks}\delta^{-4ks }$ for sufficiently large $c$, as shown in the proof of Lemma~\ref{lem:An1bound}.
		For any $\vbeta $ in such a ball with center $\vbeta^*$, we need to bound
		\begin{align*}
		&\Big|(n-1)^{-1}\sum_{j=1,j\neq i}^n  K_h(\vx_i^T\vbeta-\vx_j^T\vbeta)\big[f_0(\vx_j^T\vbeta_0) - G(\vx_i^T\vbeta|\vbeta)\big]\\
		&-(n-1)^{-1}\sum_{j=1,j\neq i}^n K_h(\vx_i^T\vbeta^*-\vx_j^T\vbeta^*)\big[ f_0(\vx_j^T\vbeta_0) - G(\vx_i^T\vbeta^*|\vbeta^*)\big] \Big|\\
		\leq& \Big|(n-1)^{-1}\sum_{j=1,j\neq i}^n  \Big[K_h(\vx_i^T\vbeta-\vx_j^T\vbeta)-K_h(\vx_i^T\vbeta^*-\vx_j^T\vbeta^*) \Big] \big[ G(\vx_j^T\vbeta_0|\vbeta_0) - G(\vx_j^T\vbeta|\vbeta )\big]\Big|\\ 
		&+\Big|(n-1)^{-1}\sum_{j=1,j\neq i}^n  \Big[K_h(\vx_i^T\vbeta-\vx_j^T\vbeta)-K_h(\vx_i^T\vbeta^*-\vx_j^T\vbeta^*) \Big] \big[ G(\vx_j^T\vbeta|\vbeta) - G(\vx_i^T\vbeta|\vbeta )\big]\Big|\\ 
		&+\Big|(n-1)^{-1}\sum_{j=1,j\neq i}^n  K_h(\vx_i^T\vbeta^*-\vx_j^T\vbeta^*)\big[  G(\vx_i^T\vbeta|\vbeta )  - G(\vx_i^T\vbeta^*|\vbeta^*) \big] \Big| \\ 
		=&\sum_{k=1}^3 |I_{nk}|,
		\end{align*} 
		where the definition of $I_{nk}$ is clear from the context. 
		
		Lemma~\ref{lem15NCL} implies that 
		\begin{align*}
		P\left(\max_{1\leq i \leq n} |\vx_i^T(\vbeta-\vbeta^*)| \geq  \sigma_x \sqrt{ s\log(p\vee n)}||\vbeta-\vbeta^*||_2,\ \forall\vbeta,\vbeta^*\in\bbB\right) 
		\leq \exp[-cs\log(p\vee n)],
		\end{align*} 
		for some positive constant $c$, and all $n$ sufficiently large. 
		Lemma~\ref{lem15NCL} implies that  
		\begin{align*}
		&P\bigg(\max_{1\leq i \leq n} \left|(n-1)^{-1}\sum_{j=1,j\neq i}^n |\vx_j^T(\vbeta-\vbeta^*)|^2 - (\vbeta-\vbeta^*)^T\E(\vx\vx^T)(\vbeta-\vbeta^*)\right|\\
		& \qquad\qquad\qquad\qquad\qquad\qquad\qquad\geq c_0 \sigma_x^2   \sqrt{\frac{s\log(p\vee n)}{n}} ||\vbeta-\vbeta^*||_2^2, \forall\ \vbeta,\vbeta^*\in\bbB\bigg)\\
		\leq &\exp[-cs\log(p\vee n)],
		\end{align*}
		for some positive constants $c_0$, $c$, and all $n$ sufficiently large. Assumption~\ref{A2} implies that $(\vbeta-\vbeta^*)^T\E(\vx\vx^T)(\vbeta-\vbeta^*)\leq \xi_3||\vbeta-\vbeta^*||_2^2$. 
		Hence we have that 
		\begin{align*}
		\max_{1\leq i \leq n} (n-1)^{-1}\sum_{j=1,j\neq i}^n |\vx_j^T(\vbeta-\vbeta^*)|^2 \leq& \left(\xi_3+c_0\sigma_x^2\sqrt{\frac{s\log(p\vee n)}{n}}\right)||\vbeta-\vbeta^*||_2^2\\
		\leq& \sigma_x^2s\log (p\vee n)||\vbeta-\vbeta^*||_2^2,\ \forall\ \vbeta,\vbeta^*\in\bbB
		\end{align*} 
		with probability at least $1-\exp[-c\log(p\vee n)]$, for some positive constants $c_0$, $c$, and all $n$ sufficiently large.  
		
		Denote the event 
		\begin{align*}
		\mathcal{E}_2 = &\Big\{ \max_{1\leq i \leq n}    |\vx_i^T\vbeta|^2+(n-1)^{-1}\sum_{j=1,j\neq i}^n |\vx_j^T\vbeta|^2   
		\leq 2\sigma_x^2s\log(p\vee n)||\vbeta||_2^2,\ \forall\ \vbeta \in \bbB \Big\}\\
		\bigcap&\Big\{ \max_{1\leq i \leq n}   |\vx_i^T(\vbeta-\vbeta^*)|^2+(n-1)^{-1}\sum_{j=1,j\neq i}^n |\vx_j^T(\vbeta-\vbeta^*)|^2   
		\leq 2\sigma_x^2s\log(p\vee n)||\vbeta-\vbeta^*||_2^2,\ \forall\ \vbeta,\vbeta^* \in \bbB \Big\}\\
		\bigcap&\Big\{\max_{1\leq i \leq n} \big|G(\vx_i^T\vbeta|\vbeta) - G(\vx_i^T\vbeta^*|\vbeta^*) \big| \leq c_0 \sqrt{s\log (p\vee n) }||\vbeta-\vbeta^*||_2^{1/2},\ \forall\ \vbeta,\vbeta^* \in \bbB \Big\}\\
		\bigcap&\Big\{n^{-1}\sum_{i=1}^n \big|G(\vx_i^T\vbeta|\vbeta) - G(\vx_i^T\vbeta_0|\vbeta_0) \big|^2\leq  c_0||\vbeta-\vbeta_0||_2^2,\ \forall\ \vbeta\in\bbB \Big\}.
		\end{align*} 
		Hence $P(\mathcal{E}_2)\geq 1-6\exp[-c_1\log(p\vee n)]$, for some positive constant  $c_0$, $c_1$ and all $n$ sufficiently large. In the above, the third event applies (\ref{Gt}) in Lemma~\ref{lem:Gbound}, and the fourth event applies (\ref{G-2}) in Lemma~\ref{lem:Gbound}.

		Take $\delta = \frac{ h^3}{4\sqrt{s\log (p\vee n)}}$.	 
		There exist positive constants $c_0$, $c_1$, such that for all $n$ sufficiently large, 
		\begin{align*}
		|I_{n1}| \leq&  c_1 [(n-1)h^2]^{-1}\sum_{j=1,j\neq i}^n | (\vx_i-\vx_j)^T(\vbeta-\vbeta^*)|*|G(\vx_j^T\vbeta_0|\vbeta_0) - G(\vx_j^T\vbeta|\vbeta)|\\
		\leq & c_1 h^{-2} \sqrt{|\vx_i^T(\vbeta-\vbeta^*)|^2+(n-1)^{-1}\sum_{j=1,j\neq i}^n |\vx_j^T(\vbeta-\vbeta^*)|^2}  \\
		&* \sqrt{(n-1)^{-1}\sum_{j=1,j\neq i}^n |G(\vx_j^T\vbeta_0|\vbeta_0) - G(\vx_j^T\vbeta|\vbeta)|^2} \\
		\leq  &c_1h^{-2}||\vbeta-\vbeta^*||_2 *||\vbeta-\vbeta_0||_2  \sqrt{s\log(p\vee n)} \\
		\leq &c_0 h^{-2}\delta^2  \sqrt{s\log(p\vee n)}= \frac{c_0h^4}{16\sqrt{s\log (p\vee n)}}.
		\end{align*}  
		on the event $\mathcal{E}_2 $. In the above, the second last inequality applies the second and the fourth events in $\mathcal{E}_2$.

		For $I_{n2}$,  the Lipschitz condition for $K(\cdot)$ implies that there exists $\widetilde{t}_j$ between $\vx_i^T\vbeta$ and $\vx_j^T\vbeta$ such that on the event $\mathcal{E}_2 $,
		\begin{align*}
		|I_{n2}| &\leq c_1 [(n-1)h^2]^{-1}\sum_{j=1,j\neq i}^n| (\vx_i-\vx_j)^T(\vbeta-\vbeta^*)|* \big| G^{(1)}(\widetilde{t}_j|\vbeta) (\vx_i-\vx_j)^T\vbeta \big|\\
		&\leq c_1h^{-2}||\vbeta-\vbeta^*||_2 *||\vbeta||_2 s  \log(p\vee n)\\
		& \leq c_0 h^{-2}\delta^2  s \log(p\vee n)= \frac{c_0h^4}{16},
		\end{align*}  
		for all $n$ sufficiently large. In the above, the second inequality applies the first and the second events in $\mathcal{E}_2$.
		Assumption~\ref{K4}-(a) implies that $\max_{ 1\leq i \leq n}\sup_{\vbeta\in\bbB}|G^{(1)}(\vx_i^T\vbeta)|\leq b$ and  $G^{(2)}(t|\vbeta)$ is bounded for any $t\in\bbR$, $\vbeta\in\bbB$. It indicates that $\max_{  i\neq j}\sup_{\vbeta\in\bbB}|G^{(1)}(\widetilde{t}_j|\vbeta)|\leq c$ for positive constant $c$.
		
		For $|I_{n3}|$, we have
		\begin{align*}
		|I_{n3}| &\leq c_1h^{-1} |G(\vx_i^T\vbeta|\vbeta) - G(\vx_i^T\vbeta^*|\vbeta^*) |\\
		& \leq  c_1h^{-1} ||\vbeta-\vbeta^*||_2^{1/2}\sqrt{s\log (p\vee n) } \\
		&\leq c_0 h^{-1}\delta \sqrt{s\log (p\vee n)}= \frac{c_0h^2}{4},
		\end{align*} 
		on the event $\mathcal{E}_2 $,   for positive constants $c_0$, $c_1$, and all $n$ sufficiently large.  In the above, the second inequality applies the third event in $\mathcal{E}_2$.
		Combining all the previous results, we conclude that 
		\begin{align*}
		&\Big|(n-1)^{-1}\sum_{j=1,j\neq i}^n  K_h(\vx_i^T\vbeta-\vx_j^T\vbeta)\big[f_0(\vx_j^T\vbeta_0) - G(\vx_i^T\vbeta|\vbeta)\big]\\
		&-(n-1)^{-1}\sum_{j=1,j\neq i}^n K_h(\vx_i^T\vbeta^*-\vx_j^T\vbeta^*)\big[ f_0(\vx_j^T\vbeta_0) - G(\vx_i^T\vbeta^*|\vbeta^*)\big] \Big|\leq  c_0h^2/2,
		\end{align*}
		on the event $\mathcal{E}_2 $,   for all $n$ sufficiently large. Then it implies that
		\begin{align*}
		&P\left(\sup \limits_{\vbeta \in \bbB} |A_{n2}(\vx_i^T\vbeta|\vbeta)| \geq c_0h^2\right) \\
		\leq& P\left(\bigcup\limits_{\vbeta^*\in \mathcal{N}_{\delta^2}}\big|A_{n2}(\vx_i^T\vbeta^*|\vbeta^*) \big|\geq  c_0h^2/2\right)\\
		&+ P\left(\sup\limits_{ \vbeta^*\in \mathcal{N}_{\delta^2}}\sup\limits_{  ||\vbeta-\vbeta^*||_2\leq \delta^2}|A_{n2}(\vx_i^T\vbeta|\vbeta)-A_{n2}(\vx_i^T\vbeta^*|\vbeta^*)| \leq c_0h^2/2 \right) \\
		\leq& \sum_{\vbeta^*\in \mathcal{N}_{\delta^2}}P\left(\big|A_{n2}(\vx_i^T\vbeta^*|\vbeta^*) \big|\geq   c_0h^2/2\right)+5\exp[-c\log (p\vee n)] \\
		\leq &c p^{2ks}\delta^{-4ks}\exp(-cnh^5) +5\exp[-c\log (p\vee n)]\\
		\leq& \exp[-c_2\log (p\vee n)],
		\end{align*}  
		for some positive constants $c$, $c_0$, and  $c_2>1$, and all $n$ sufficiently large. 
		We thus conclude 
		\begin{align*}
		P\Big(\max_{  1\leq i\leq n}\sup \limits_{\vbeta \in \bbB} |A_{n2}(\vx_i^T\vbeta|\vbeta)| \geq  c_0h^2\Big)\leq& \sum_{i=1}^n	P\Big(\sup \limits_{\vbeta \in \bbB} |A_{n2}(\vx_i^T\vbeta|\vbeta)| \geq  c_0h^2\Big)\\
		\leq &n \exp[-c_2\log (p\vee n)]= \exp\{-[c_2\log (p\vee n)-\log n]\}\\
		\leq&  \exp[-c_1\log (p\vee n)],
		\end{align*} 
		for positive constants $c_0$ $c_1$, and all $n$ sufficiently large.
	\end{proof}

	\blem \label{lem:An3bound} 
	Under the assumptions of Theorem~\ref{Lasso_error}, there exist some positive constants $c_0$, $c_1$ such that for all $n$ sufficiently large,
	\begin{align*}
	P\left(\max_{1\leq i \leq n}\sup \limits_{\vbeta \in \bbB} \left|A_{n3}(\vx_i^T\vbeta|\vbeta)-\emph{E}\left[A_{n3}(\vx_i^T\vbeta|\vbeta)\right]\right| \geq c_0h^2\right)\leq \exp(-c_1nh^5). 
	\end{align*}  
	\elem
	\begin{proof}
		Let $A_{n3}(\vx_i^T\vbeta|\vbeta) = [(n-1)h]^{-1}\sum_{j=1,j\neq i}^n z_{ij}(\vbeta),$ where $z_{ij}(\vbeta) = K \Big(\frac{\vx_i^T\vbeta-\vx_j^T\vbeta}{h}\Big)$. Observing that $\sup \limits_{\vbeta \in \bbB} \E[z_{ij}^2(\vbeta)]\leq c_1h$.  
		Since $K(\cdot)$ is bounded on the real line, then for any fixed $\vbeta$,  by Bernstein's inequality, $\forall\ \eta>0$, there exists constant $c>0$ such that 
		\begin{align*}
		P\left(  \left|\sum_{j=1,j\neq i}^n z_{ij}(\vbeta) -  \E z_{ij}(\vbeta) \right| \geq \eta \right)  \leq 2\exp\left[\frac{-c \eta^2 }{(n-1)h+\eta}\right]. 
		\end{align*}
		Take $\eta = (n-1)h^3$, then we can conclude that $P\left(\left|A_{n3}(\vx_i^T\vbeta|\vbeta)-\E\left[A_{n3}(\vx_i^T\vbeta|\vbeta)\right]\right| \geq c_0h^{2}\right)  \leq 2 \exp(-c_1nh^5)$ for positive constants $c_0$ and $c_1$.
		
		To obtain the uniform bound, we cover $\bbB$ with $L_2-$balls with radius $\delta$.  
		Let $\mathcal{N}_{\delta}$ be an $\delta-$cover  of $\bbB$. 	The covering number $N = |\mathcal{N}_{\delta}|$ satisfies   $N \leq c p^{2ks}\delta^{-2ks }$ for sufficiently large $c$, as shown in the proof of Lemma~\ref{lem:An1bound}. 
		For any $\vbeta$ in such a ball with center $\vbeta^*$, let us take $\delta = \frac{h^4}{4\sqrt{n}}$, then the Lipschitz condition for $K(\cdot)$ implies that
		\begin{align*}
		&\left|A_{n3}(\vx_i^T\vbeta|\vbeta)-A_{n3}(\vx_i^T\vbeta^*|\vbeta^*)\right|\\
		=&\Big|(n-1)^{-1}\sum_{j=1,j\neq i}^n [K_h(\vx_i^T\vbeta-\vx_j^T\vbeta)-K_h(\vx_i^T\vbeta^*-\vx_j^T\vbeta^*)] \Big|\\
		\leq &[c(n-1)h^2]^{-1}\sum_{j=1,j\neq i}^n \big|(\vx_i-\vx_j)^T(\vbeta-\vbeta^*)\big|\leq c_0h^2/4,
		\end{align*} 
		with probability at least $1-\exp(-cn)$, for some positive constants $c$, $c_0$, and all $n$ sufficiently large, similarly as the proof of Lemma~\ref{lem:An1bound}. It also implies that
		\begin{align*}
		&\left|\E\left[A_{n3}(\vx_i^T\vbeta|\vbeta)\right] -\E\left[A_{n3}(\vx_i^T\vbeta^*|\vbeta^*)\right]\right|\\
		=&\Big|\E_{(\vx_i^T\vbeta,\vx_i^T\vbeta^*)}\left[A_{n3}(\vx_i^T\vbeta|\vbeta)-A_{n3}(\vx_i^T\vbeta^*|\vbeta^*)\right] \Big|\\
		\leq & c_0h^2/4,
		\end{align*} 
		with probability at least $1-\exp(-cn)$, for some positive constants $c$, $c_0$, and all $n$ sufficiently large.	Then it implies that
		\begin{align*}
		&P\left(\sup \limits_{\vbeta \in \bbB}\left|A_{n3}(\vx_i^T\vbeta|\vbeta)-\E\left[A_{n3}(\vx_i^T\vbeta|\vbeta)\right]\right|  \geq c_0h^2 \right) \\
		\leq &P\left(\bigcup\limits_{\vbeta^* \in \mathcal{N}_{\delta}}\left|A_{n3}(\vx_i^T\vbeta^*|\vbeta^*)-\E\left[A_{n3}(\vx_i^T\vbeta^*|\vbeta^*)\right]\right| \geq  c_0h^2/2 \right)\\
		&+ P\left(\sup\limits_{ \vbeta^*\in \mathcal{N}_{\delta^2}}\sup\limits_{  ||\vbeta-\vbeta^*||_2\leq \delta^2}\left|A_{n3}(\vx_i^T\vbeta|\vbeta)-A_{n3}(\vx_i^T\vbeta^*|\vbeta^*)\right| \leq c_0h^2/4 \right) \\
		&+ P\left(\sup\limits_{ \vbeta^*\in \mathcal{N}_{\delta^2}}\sup\limits_{  ||\vbeta-\vbeta^*||_2\leq \delta^2}\left|\E\left[A_{n3}(\vx_i^T\vbeta|\vbeta)\right] -\E\left[A_{n3}(\vx_i^T\vbeta^*|\vbeta^*)\right]\right| \leq c_0h^2/4 \right) \\
		\leq &\sum_{\vbeta^* \in \mathcal{N}_{\delta}}P\left(\left|A_{n3}(\vx_i^T\vbeta^*|\vbeta^*)-\E\left[A_{n3}(\vx_i^T\vbeta^*|\vbeta^*)\right]\right|\geq c_0h^2/2 \right)+\exp(-cn)\\
		\leq& c p^{2ks}\delta^{- 2ks }\exp(-cnh^5)+\exp(-cn)\leq \exp(-c_1nh^5).
		\end{align*}  
		for some positive constants $c$, $c_0$, $c_1$, and all $n$ sufficiently large. We conclude 
		\begin{align*}
		&P\left(\max_{  1\leq i\leq n}\sup \limits_{\vbeta \in \bbB}\left|A_{n3}(\vx_i^T\vbeta|\vbeta)-\E\left[A_{n3}(\vx_i^T\vbeta|\vbeta)\right]\right|  \geq  c_0h^2\right)\\
		\leq& \sum_{i=1}^n	P\left(\sup \limits_{\vbeta \in \bbB}\left|A_{n3}(\vx_i^T\vbeta|\vbeta)-\E\left[A_{n3}(\vx_i^T\vbeta|\vbeta)\right]\right|  \geq  c_0h^2\right)\\
		\leq &n \exp(-c_1nh^5)= \exp [-(c_1nh^5-\log n)]\\
		\leq&  \exp(-c_2nh^5),
		\end{align*} 
		for positive constants $c_0$, $c_1$, $c_2$, and all $n$ sufficiently large.
	\end{proof}

	\blem \label{lem:Ex_eigen}  
	Under the assumptions of Theorem 1 and Lemma 2, there exist universal positive constants $c_0$, $c_1$ such that for all n sufficiently large,
	\begin{align*}
	P\left\{\sup \limits_{\substack{\vbeta \in \bbB\\\vv\in\bbK(p-1,2ks+\widetilde{s})}}  \left(n^{-1}\sum_{i=1}^n\vv^T\left[\vx_{i,-1}-\emph{E}(\vx_{i,-1}|\vx_i^T\vbeta)\right]\left[\vx_{i,-1}-\emph{E}(\vx_{i,-1}|\vx_i^T\vbeta)\right]^T\vv\right) \geq c_0\right\}\leq \exp(-c_1n),
	\end{align*}  
	where  $\bbK(p-1,2ks+\widetilde{s}) = \{\vv\in\bbR^{p-1}:||\vv||_2\leq 1,||\vv||_0\leq 2ks+\widetilde{s}\}$,  and $\widetilde{s}=\max_{ 2\leq j\leq p}||\vd_{0j}||_0$.
	\elem
	
	\begin{proof}
		We observe that
		\begin{align*}
		&\sup \limits_{\substack{\vbeta \in \bbB\\\vv\in\bbK(p-1,2ks+\widetilde{s})}}n^{-1}\sum_{i=1}^n\vv^T\Big[\vx_{i,-1}-\E(\vx_{i,-1}|\vx_i^T\vbeta)\Big]\Big[\vx_{i,-1}-\E(\vx_{i,-1}|\vx_i^T\vbeta)\Big]^T\vv \\
		\leq & \sup \limits_{\vv\in\bbK(p-1,2ks+\widetilde{s})} 2n^{-1}\sum_{i=1}^n\vv^T\Big[\vx_{i,-1}-\E(\vx_{i,-1}|\vx_i^T\vbeta_0)\Big]\Big[\vx_{i,-1}-\E(\vx_{i,-1}|\vx_i^T\vbeta_0)\Big]^T\vv\\
		&+\sup \limits_{\substack{\vbeta \in \bbB\\\vv\in\bbK(p-1,2ks+\widetilde{s})}}2n^{-1}\sum_{i=1}^n\vv^T\Big[\E(\vx_{i,-1}|\vx_i^T\vbeta_0)-\E(\vx_{i,-1}|\vx_i^T\vbeta)\Big]\Big[\E(\vx_{i,-1}|\vx_i^T\vbeta_0)-\E(\vx_{i,-1}|\vx_i^T\vbeta)\Big]^T\vv\\
		\triangleq&2J_{n1}+2J_{n2},
		\end{align*}
		where the definition for $J_{nk}$ is clear from the context.

		Note that $\vx_{i,-1}-\E(\vx_{i,-1}|\vx_i^T\vbeta_0)$ is sub-Gaussian with covariance matrix $\E\big[\Cov (\vx_{-1}|\vx^T\vbeta_0)\big]$. Assumption~\ref{A2} implies that
		\begin{align*}
		J_{n1} &\leq \xi_1 + \sup \limits_{\vv\in\bbK(p-1,2ks+\widetilde{s})}\big|\vv^T\bm{\Psi}_n\vv \big|,
		\end{align*}
		where  $ \bm{\Psi}_n  = n^{-1}\sum_{i=1}^n\big[ \vx_{i,-1}-\E(\vx_{i,-1}|\vx_i^T\vbeta_0)\big]\big[ \vx_{i,-1}-\E(\vx_{i,-1}|\vx_i^T\vbeta_0)\big]^T - \E\big[\Cov (\vx_{-1}|\vx^T\vbeta_0)\big]$.  
		Lemma~\ref{lem15NCL} implies that 
		\begin{align*}
		P\left( \sup \limits_{\vv\in\bbK(p-1,2ks+\widetilde{s})}\Big| \vv^T\bm{\Psi}_n \vv \Big| \geq  \sigma_x^2 \right) \leq \exp(-c_1n),
		\end{align*} 
		for some universal positive constant $c_1$, and all $n$ sufficiently large, since $s\log p = o(n)$, and $\widetilde{s} \log p=o(n)$. 
		Hence with probability at least $1-\exp(-c_1n)$, we have $J_{n1} \leq  \xi_1+\sigma_x^2 $.
		
		To bound $J_{n2}$, Assumption~\ref{A2}-(c) implies that
		\begin{align*}
		J_{n2} \leq &\sup_{ \vbeta \in \bbB }c^2  n^{-1}\sum_{i=1}^n\big[|\vx_i^T\vbeta_0-\vx_i^T\vbeta|  +(|\vx_i^T\vbeta|+|\vx_i^T\vbeta_0|) ||\vbeta-\vbeta_0||_2\big]^2\\
		\leq &\sup_{ \vbeta \in \bbB }\frac{2c^2}{n} \sum_{i=1}^n(\vbeta-\vbeta_0)^T\vx_i \vx_i^T(\vbeta-\vbeta_0)  + \sup_{ \vbeta \in \bbB }\frac{4c^2}{n} ||\vbeta-\vbeta_0||_2^2 \sum_{i=1}^n\big[(\vx_i^T \vbeta)^2+(\vx_i^T \vbeta_0)^2\big] ,
		\end{align*}
		for some positive constant $c$.
		Since $\vbeta\in\bbB$, we have $||\vbeta-\vbeta_0||_2\leq r$. Note that $\vx_i$ is also sub-Gaussian. Combining  Lemma~\ref{lem15NCL} and similar technique as above, we have
		\begin{align*}
		P\left\{ n^{-1} \sum_{i=1}^n [\vx_i^T (\vbeta-\vbeta_0)]^2     \leq (\xi_3+\sigma_x^2)||\vbeta-\vbeta_0||_2,\ \forall\  \vbeta\in\bbB \right\} \geq 1- \exp(-c_1n),\\
		P\left\{  n^{-1} \sum_{i=1}^n |\vx_i^T \vbeta|^2   \leq  (\xi_3+\sigma_x^2)||\vbeta||_2 ,\ \forall\ \vbeta\in\bbB\right\} \geq 1- \exp(-c_1n),
		\end{align*} 
		for some positive constant $c_1$ and all $n$ sufficiently large. We thus have that $J_{n2}\leq c_0r^2 $,  with probability at least $1- 2\exp(-c_1n)$, for some positive constants $c_0$, $c_1$, and all $n$ sufficiently large. Since $r\leq 1$, the conclusion follows.
	\end{proof}

	\blem \label{lem:subexp} 
	Under assumptions \ref{A1} and \ref{K4}, if $\log p = O(n)$, then there exist some positive constants $c_1$, $c_2$, such that for all n sufficiently large,
	\begin{align*}
	P\left( \max_{2\leq j\leq p}\Big|\Big|\frac{1}{n} \sum_{i=1}^n [G^{(1)}(\vx_i^T\vbeta_0|\vbeta_0)]^2 \vxw_{i,-j*}\vxw_{i,-1}^T\vphi_{0j} \Big|\Big|_\infty\geq c_1\sigma_x^2 \sqrt{\frac{\log p}{n}}\right)\leq \exp(-c_2\log p),
	\end{align*}
	where $\vphi_{0j} = \tau^{2}_{0j}\vtheta_j$. 
	\elem
	\begin{proof}  
		Note that for any $j$,  $G^{(1)}(\vx_i^T\vbeta_0|\vbeta_0)\vxw_{i,-1}^T\vphi_{0j} $ and $G^{(1)}(\vx_i^T\vbeta_0|\vbeta_0)\vxw_{i,-j*}$ are both sub-Gaussian with the variance proxy no larger than $2b^4\xi_1^2\xi_2^{-2}\sigma_x^2 $ and $2b^2\sigma_x^2$, respectively, by Lemma~\ref{lem:subg} and Lemma~\ref{lem:thetaj}. 
		The definition of $\vd_{0j}$ implies that $\E\big\{[G^{(1)}(\vx_i^T\vbeta_0|\vbeta_0)]^2 \vxw_{i,-j*}\vxw_{i,-1}^T\vphi_{0j}\big\} = \vnull_{p-2}$. Hence Lemma~\ref{lem14NCL} implies that 
		$$P\left(  \Big| \frac{1}{n}\sum_{i=1}^n [G^{(1)}(\vx_i^T\vbeta_0|\vbeta_0)]^2 \vxw_{i,-j*}\vxw_{i,-1}^T\vphi_{0j} \Big| \geq   2c_0b^3\xi_1\xi_2^{-1}\sigma_x^2\sqrt{\frac{\log p}{n}}\right)\leq   \exp(-c_1\log p),$$
		for some positive constants $c_0$, $c_1>1$, and all $n$ sufficiently large. Note that $2b^3\xi_1\xi_2^{-1}$ is a positive constant that does not depend on $\vx_i$.
		Then we have
		\begin{align*}
		&P\left( \max_{2\leq j\leq p}\Big|\Big|\frac{1}{n} \sum_{i=1}^n [G^{(1)}(\vx_i^T\vbeta_0|\vbeta_0)]^2 \vxw_{i,-1}^T\vphi_{0j} \vxw_{i,-j*}\Big|\Big|_\infty\geq c_1 \sigma_x^2\sqrt{\frac{\log p}{n}}\right)\\
		\leq &\sum_{j=2}^pP\left(  \Big| \frac{1}{n}\sum_{i=1}^n [G^{(1)}(\vx_i^T\vbeta_0|\vbeta_0)]^2 \vxw_{i,-1}^T\vphi_{0j} \vxw_{i,-j*}\Big| \geq   c_1 \sigma_x^2\sqrt{\frac{\log p}{n}}\right)\\
		\leq &\exp(-c_2\log p),
		\end{align*}
		for some positive constants $c_1$, $c_2$, and all $n$ sufficiently large. 
	\end{proof}
	
	\blem \label{lem:G1xx} 
	Assume the conditions of Lemma~\ref{dbound} are satisfied, then there exist universal positive constants $c_0$ and $c_1$ such that for all n sufficiently large,
	\begin{align*}
	P\left( \max_{2\leq j \leq p}\Big|\Big|\frac{1}{n} \sum_{i=1}^n \big\{[\widehat{G}^{(1)}(\vx_i^T\vbetah|\vbetah)]^2 - [G^{(1)}(\vx_i^T\vbeta_0|\vbeta_0)]^2\big\} \vxw_{i,-1} \vxw_{i,-1}^T\vphi_{0j}\Big|\Big|_\infty \geq c_0h\right)\leq \exp(-c_1\log p),\\
	P\left( \Big|\Big|\frac{1}{n} \sum_{i=1}^n \big\{[\widehat{G}^{(1)}(\vx_i^T\vbetah|\vbetah)]^2 - [G^{(1)}(\vx_i^T\vbeta_0|\vbeta_0)]^2\big\} \vxw_{i,-1}  \vxw_{i,-1}\Big|\Big|_\infty \geq c_0h\right)\leq \exp(-c_1\log p),
	\end{align*}  
	where $\vphi_{0j} = \tau^{2}_{0j}\vtheta_j$. 
	\elem
	\begin{proof}   
		We will prove the first part of the claim below. The proof of the second part is similar.
		The Cauchy-Schwartz inequality implies that
		\begin{align*}
		&\Big|\Big|\frac{1}{n} \sum_{i=1}^n \big\{[\widehat{G}^{(1)}(\vx_i^T\vbetah|\vbetah)]^2 - [G^{(1)}(\vx_i^T\vbeta_0|\vbeta_0)]^2\big\} \vxw_{i,-1} \vxw_{i,-1}^T\vphi_{0j}\Big|\Big|_\infty\\
		\leq &\max_{1\leq i\leq n}\left|  [\widehat{G}^{(1)}(\vx_i^T\vbetah|\vbetah)]^2 - [G^{(1)}(\vx_i^T\vbeta_0|\vbeta_0)]^2 \right|*\sqrt{\max_{2\leq j \leq p}n^{-1}\sum_{i=1}^n \xw_{i,j} ^2 }\sqrt{\frac{1}{n} \sum_{i=1}^n  (\vxw_{i,-1}^T\vphi_{0j})^2 }.
		\end{align*}
		
		Lemma~\ref{Gbetafunc} and  Assumption~\ref{K4}-(a) together imply that 
		\begin{align}
		&P\left(\max_{1\leq i\leq n}\big|  [\widehat{G}^{(1)}(\vx_i^T\vbetah|\vbetah)]^2 - [G^{(1)}(\vx_i^T\vbeta_0|\vbeta_0)]^2 \big|\geq ch\right)\nonumber\\
		\leq&P\left(\max_{1\leq i\leq n}\big|  \widehat{G}^{(1)}(\vx_i^T\vbetah|\vbetah) - G^{(1)}(\vx_i^T\vbeta_0|\vbeta_0) \big|\geq c_0h\right) \nonumber\\
		&+ P\left(\max_{1\leq i\leq n}\big|  \widehat{G}^{(1)}(\vx_i^T\vbetah|\vbetah) + G^{(1)}(\vx_i^T\vbeta_0|\vbeta_0) \big|\geq 2b+c_0h\right)\nonumber \\
		\leq&2\exp(-d_1\log p),\label{Ghat2}
		\end{align}
		for some universal positive constants $c$, $c_0$, $d_1$, and all $n$ sufficiently large. Note that Lemma~\ref{lem:subg} implies that $ \xw_{i,j}$ is sub-Gaussian with variance proxy at most $2\sigma_x^2$. Hence $\E (\xw_{i,j}^2)\leq 2\sigma_x^2$ uniformly in j. Lemma~\ref{lem14NCL} implies that there exist  some positive constants $d_1$, $d_2$, such that for all $n$ sufficiently large,
		\begin{align*}
		P\left(\max_{2\leq j \leq p}n^{-1}\sum_{i=1}^n \xw_{i,j} ^2\geq 3\sigma_x^2\right)
		&\leq  P\left(\max_{2\leq j \leq p}n^{-1}\sum_{i=1}^n \xw_{i,j}^2\geq \max_{ 2\leq j\leq p}\E(\xw_{i,j} ^2) +d_1\sigma_x^2\sqrt{\frac{\log p}{n}}\right) \\
		&\leq \exp(-d_2\log p).
		\end{align*}  
		Since $\vxw_{i,-1}^T\vphi_{0j}$ is sub-Gaussian with variance proxy at most $C\sigma_x^2$ for some constant $C>0$,  Lemma~\ref{lem14NCL} also implies that 
		$P\left(\frac{1}{n} \sum_{i=1}^n( \vxw_{i,-1}^T\vphi_{0j})^2 \geq 2\xi_1\xi_2^{-2} \right)\leq \exp(-d_2\log p),$
		for some positive constants $d_1$, $d_2$, and all $n$ sufficiently large.
		
		Hence we have 
		\begin{align*}
		&P\left(\max_{2\leq j\leq p} \Big|\Big|\frac{1}{n} \sum_{i=1}^n \big\{[\widehat{G}^{(1)}(\vx_i^T\vbetah|\vbetah)]^2 - [G^{(1)}(\vx_i^T\vbeta_0|\vbeta_0)]^2\big\} \vxw_{i,-1} \vxw_{i,-1}^T\vphi_{0j}\Big|\Big|_\infty \geq  ch(a^2\xi_0)^{-1}\sigma_x\sqrt{6\xi_2}\right)\\
		\leq &P\left(\max_{1\leq i\leq n}\big|  [\widehat{G}^{(1)}(\vx_i^T\vbetah|\vbetah)]^2 - [G^{(1)}(\vx_i^T\vbeta_0|\vbeta_0)]^2 \big|\geq ch\right) \\
		&+ P\left(\big|\big|\frac{1}{n} \sum_{i=1}^n\vxw_{i,-1} \vxw_{i,-1}^T\big|\big|_\infty\geq 3\sigma_x^2\right) + \sum_{j=2}^pP\left(\frac{1}{n} \sum_{i=1}^n( \vxw_{i,-1}^T\vphi_{0j})^2 \geq 2\xi_2(a^2\xi_0)^{-2} \right)\\
		\leq &\exp(-c_1\log p),
		\end{align*} 
		for some positive constants $c$, $c_1$, and all $n$ sufficiently large.
	\end{proof}

	\blem \label{lem:Ex_err} 
	Assume the conditions of Lemma~\ref{dbound} are satisfied, then there exist universal positive constants $c_0$ and $c_1$ such that for all n sufficiently large,
	\begin{align*}
	P\left( \max_{2\leq j \leq p}\Big|\Big|\frac{1}{n} \sum_{i=1}^n  [\widehat{G}^{(1)}(\vx_i^T\vbetah|\vbetah)]^2 (\vxh_{i,-1}  \vxh_{i,-1}^T-\vxw_{i,-1} \vxw_{i,-1}^T)\vphi_{0j}\Big|\Big|_\infty \geq c_0\sqrt{s}h^2\right)\leq \exp(-c_1\log p), 
	\end{align*}  
	where $\vphi_{0j} = \tau^{2}_{0j}\vtheta_j$, and $\vxh_{i,-1} = \vx_{i,-1}-\widehat{\emph{E}}(\vx_{i,-1}|\vx_i^T\vbetah)$, $\vxw_{i,-1}= \vx_{i,-1}-\emph{E}(\vx_{i,-1}|\vx_i^T\vbeta_0)$.
	\elem
	\begin{proof}  
		Observe that 
		\begin{align*}
		&\Big|\Big|\frac{1}{n} \sum_{i=1}^n  [\widehat{G}^{(1)}(\vx_i^T\vbetah|\vbetah)]^2 (\vxh_{i,-1}\vxh_{i,-1}^T-\vxw_{i,-1}\vxw_{i,-1}^T)\vphi_{0j}\Big|\Big|_\infty\\
		\leq &\Big|\Big|\frac{1}{n} \sum_{i=1}^n  \big[\widehat{G}^{(1)}(\vx_i^T\vbetah|\vbetah)\big]^2 (\vxh_{i,-1}-\vxw_{i,-1})(\vxh_{i,-1}-\vxw_{i,-1})^T\vphi_{0j}\Big|\Big|_\infty\\
		&+\Big|\Big|\frac{1}{n} \sum_{i=1}^n  [\widehat{G}^{(1)}(\vx_i^T\vbetah|\vbetah)]^2 (\vxh_{i,-1}-\vxw_{i,-1}) \vxw_{i,-1}^T\vphi_{0j}\Big|\Big|_\infty\\
		&+\Big|\Big|\frac{1}{n} \sum_{i=1}^n  [\widehat{G}^{(1)}(\vx_i^T\vbetah|\vbetah)]^2 \vxw_{i,-1}(\vxh_{i,-1}-\vxw_{i,-1})^T\vphi_{0j}\Big|\Big|_\infty.
		\end{align*}
		
		Inequality (\ref{Ghat2}) implies that 
		\begin{align*}
		P\left(\max_{1\leq i\leq n}\big[\widehat{G}^{(1)}(\vx_i^T\vbetah|\vbetah)\big]^2 \geq b^2+ch\right)  	\leq \exp(-d_1\log p),
		\end{align*}
		for some positive constants $c$, $d_1$, and all $n$ sufficiently large. 
		Note that $\vxh_{i,-1}-\vxw_{i,-1}= \widehat{\E}(\vx_{i,-1}|\vx_i^T\vbetah) - \E(\vx_{i,-1}|\vx_i^T\vbeta_0)$.
		Theorem~\ref{Lasso_error} and Lemma~\ref{Efunc} imply that 
		\begin{align*}
		P\left( \big|\big|\frac{1}{n}\sum_{i=1}^n(\vxh_{i,-1}-\vxw_{i,-1})(\vxh_{i,-1}-\vxw_{i,-1})^T\big|\big|_\infty \geq c_0 sh^4 \right)\leq  \exp(-c_1\log p),\\
		P\left(\frac{1}{n}\sum_{i=1}^n \big|(\vxh_{i,-1}-\vxw_{i,-1})^T\vphi_{0j}\big|^2 \geq c_0 sh^4 ||\vphi_{0j}||_2^2 \right)\leq  \exp(-c_1\log p),
		\end{align*}
		for some positive constants $c_0$, $c_1$, and all $n$ sufficiently large. 
		Lemma~\ref{lem:thetaj} implies that $||\vphi_{0j}||_2 = \tau_{0j}^2||\vtheta_j||_2\leq b^2\xi_1\xi_2^{-1}$.
		Hence we can conclude that 
		\begin{align*}
		&P\left(\max_{2\leq j\leq p} \Big|\Big|\frac{1}{n} \sum_{i=1}^n  [\widehat{G}^{(1)}(\vx_i^T\vbetah|\vbetah)]^2 (\vxh_{i,-1}-\vxw_{i,-1})(\vxh_{i,-1}-\vxw_{i,-1})^T\vphi_{0j}\Big|\Big|_\infty\geq c_0sh^4\right)\\
		\leq &P\left(\max_{1\leq i\leq n}[\widehat{G}^{(1)}(\vx_i^T\vbetah|\vbetah)]^2 \geq 2b^2\right)+ P\left( \Big|\Big|\frac{1}{n}\sum_{i=1}^n(\vxh_{i,-1}-\vxw_{i,-1})(\vxh_{i,-1}-\vxw_{i,-1})^T\Big|\Big|_\infty \geq c' sh^4 \right) \\
		&+ \sum_{j=2}^pP\left( \frac{1}{n}\sum_{i=1}^n \big|(\vxh_{i,-1}-\vxw_{i,-1})^T\vphi_{0j}\big|^2 \geq c' b^4\xi_1^2\xi_2^{-2} sh^4 \right)\\
		\leq &\exp(-c_1\log p).
		\end{align*} 
		for some positive constants $c_0$, $c_1$, $c'$, and all $n$ sufficiently large. 
		Similarly, we have that $P\left(\max_{2\leq j\leq p} \Big|\Big|\frac{1}{n} \sum_{i=1}^n  [\widehat{G}^{(1)}(\vx_i^T\vbetah|\vbetah)]^2 (\vxh_{i,-1}-\vxw_{i,-1})\vxw_{i,-1}^T\vphi_{0j}\Big|\Big|_\infty\geq c_0\sqrt{s}h^2\right)\leq \exp(-c_1\log p)$, and $P\left(\max_{2\leq j\leq p} \Big|\Big|\frac{1}{n} \sum_{i=1}^n  [\widehat{G}^{(1)}(\vx_i^T\vbetah|\vbetah)]^2\vxw_{i,-1} (\vxh_{i,-1}-\vxw_{i,-1})^T\vphi_{0j}\Big|\Big|_\infty\geq c_0\sqrt{s}h^2\right)\leq \exp(-c_1\log p)$, for some positive constants $c_0$, $c_1$, and all $n$ sufficiently large. This concludes the proof of Lemma~\ref{lem:Ex_err}.
	\end{proof}

	\blem \label{lem:thetax} 
	Assume the conditions of Lemma~\ref{dbound} are satisfied, then there exist some positive constant $c$ such that for all $n$ sufficiently large,
	$$\max_{2\leq j\leq p}\vthetah_j^T \left(\frac{1}{n}\sum_{i=1}\vxh_{i,-1}\vxh_{i,-1}^T\right)\vthetah_j  \leq 4\xi_1\xi_2^{-2},$$
	with probability at least $1-\exp(-c\log p)$, 
	where $\xi_1$ and $\xi_2$ are defined in Assumption~\ref{A2}-(a).
	\elem
	\begin{proof}  
		Assumption~\ref{A2}-(a) implies that $\inf_{\vv\in\kV_1}\vv^T\E\big[\Cov (\vx_{-1}|\vx^T\vbeta_0)\big]\vv \geq \xi_0$, where $\kV_1 = 
		\{\vv\in\bbR^{p-1}:  ||\vv||_2=1,  	||\vv||_0\leq 2ks\}$. 
		Define $$\mathcal{E}_0=\left\{\sup_{\vv\in\mathbb{K}(p-1,2ks)}\Big|\vv^T \Big[\frac{1}{n}\sum_{i=1}^n\vxw_{i,-1}\vxw_{i,-1}^T - \E(\vxw_{i,-1}\vxw_{i,-1}^T)\Big]\vv\Big|\geq  \frac{\xi_0}{54}\right\},$$
		where $\mathbb{K}(p-1,2ks) = \{\vv\in\bbR^{p-1}: ||\vv||_0\leq 2ks, ||\vv||_2\leq 1\}$.
		By taking $s_0=\frac{n}{2\log p}$ and $t = \frac{\xi_0}{54}$, Lemma~\ref{lem15NCL} implies that 
		$P\left(\mathcal{E}_0\right)\leq 2\exp\Big(-cn\min\Big\{ \frac{\xi_0}{54}, 1\Big\}\Big), $
		for some positive constant $c$ and all $n$ sufficiently large.
		Then Lemma~13 in \citet{NCL} implies that on the event $\mathcal{E}_0$, for any $\vv\in\bbR^{p-1}$, we have that 
		$\vv^T\big(\frac{1}{n}\sum_{i=1}^n\vxw_{i,-1}\vxw_{i,-1}^T \big)\vv\leq \frac{3\xi_1}{2}||\vv||_2^2 + \frac{\xi_0\log p}{n}||\vv||_1^2 $.
		
		Lemma~\ref{lem:thetaj} and results in Lemma~\ref{dbound}-(3) of the main paper imply that 
		\begin{align*}
		\max_{2\leq j\leq p}||\vthetah_{j}||_2^2\leq 2\max_{2\leq j\leq p}(||\vtheta_{j}||_2^2+||\vtheta_{j}-\vthetah_{j}||_2^2)\leq 2(\xi_2^{-2} + c_0\eta^2\widetilde{s}),\\
		\max_{2\leq j\leq p}||\vthetah_{j}||_1^2\leq 2\max_{2\leq j\leq p}(||\vtheta_{j}||_1^2+||\vtheta_{j}-\vthetah_{j}||_1^2)\leq 2 (\widetilde{s}\xi_2^{-2}+ c_0\widetilde{s}^2\eta^2),
		\end{align*} 
		with probability at least $1-\exp(-c_1\log p)$, for some positive constants $c_0$, $c_1$, and all $n$ sufficiently large. Hence on the event $\mathcal{E}_0$,
		\begin{align*}
		&\max_{2\leq j\leq p}\vthetah_j^T \left(\frac{1}{n}\sum_{i=1}\vxh_{i,-1}\vxh_{i,-1}^T\right)\vthetah_j\\
		\leq& \max_{2\leq j\leq p}\vthetah_j^T \left(\frac{1}{n}\sum_{i=1}\vxw_{i,-1}\vxw_{i,-1}^T\right)\vthetah_j +\max_{2\leq j\leq p}\vthetah_j^T \left(\frac{1}{n}\sum_{i=1}\vxh_{i,-1}\vxh_{i,-1}^T-\vxw_{i,-1}\vxw_{i,-1}^T\right)\vthetah_j \\
		\leq& \frac{3\xi_1}{2} \max_{2\leq j\leq p}||\vthetah_j||_2^2 + \frac{\xi_0\log p}{n} \max_{2\leq j\leq p}||\vthetah_j||_1^2 +2\max_{2\leq j\leq p}||\vthetah_j||_1 \left|\left| \frac{1}{n}\sum_{i=1}(\vxh_{i,-1}-\vxw_{i,-1})\vxw_{i,-1}^T\right|\right|_\infty\\
		&+\max_{2\leq j\leq p}||\vthetah_j||_1 \left|\left| \frac{1}{n}\sum_{i=1}(\vxh_{i,-1}-\vxw_{i,-1})(\vxh_{i,-1}-\vxw_{i,-1})^T\right|\right|_\infty\\
		\leq&3\xi_1\left(\xi_2^{-2}+ c_0h^2\widetilde{s} \right)+  2\left(\frac{\xi_0\log p}{n} +c_0\sqrt{s}h^2\right) \left(\widetilde{s}\xi_2^{-2}+ c_0\widetilde{s}^2h^2\right)\\
		\leq &4\xi_1\xi_2^{-2},
		\end{align*}
		with probability at least $1-\exp(-c\log p)$, for some positive constants $c_0$, $c$, and all $n$ sufficiently large,
		since  $ n^{-1}\widetilde{s}\log p = o(h^5\widetilde{s})=o(1)$, $\widetilde{s}\sqrt{s}h^2=o(1)$ and $n^{-1}\widetilde{s}^2h^2\log p = o(h^7\widetilde{s}^2 )=o(1)$. In the above, the third inequality applies the results in Lemma~\ref{lem:Ex_err}. Hence it concludes the proof of the lemma.
	\end{proof}

	\section{Identifiability conditions for the classical low-dimensional single index model} \label{sec:id_cond}
	We assume that the underlying low-dimensional true  model for the treatment-covariates interaction term
	$f_0(\vx^T\vbeta_0)$ complies with the classical
	identification assumptions for the single-index model (i.e., our condition \ref{A1}-(c)). To be self-contained, 
	we state below a set of sufficient conditions for identifying
	$\vbeta_0$ in the low-dimension model as stated in Theorem~2.1 in \citet{horowitz2012semiparametric}.
	\begin{enumerate}
		\item[(a)] $f_0(\cdot)$ is differentiable and non-constant on the support of $\vx^T\vbeta_0$.
		\item[(b)] The components of $\vx_{T_0}$ are continuously distributed random variables that have a joint probability density function, where $T_0$
		is the index set corresponding to the nonzero coefficients in $\vbeta_0$ and $\vx_{T_0}$ denotes the subvector of $\vx$ with  indices in $T_0$.
		\item[(c)] The support of $\vx_{T_0}$ is not contained in any proper linear space of $\bbR^s$, with $s=|T_0|$.
		\item[(d)] $\beta_1=1$ and $||\vbeta_0||_0\geq 2$.
	\end{enumerate}
	

	\noindent{\it Remark.} The literature has slightly different versions of identifiability conditions for the single index model, for example \citet{ichimura1993}. The above conditions are cited for their transparency. As discussed in \citet{horowitz2012semiparametric}, a more complex set of conditions are available to allow for some components of $\vx$ being discrete.
	In particular, the following two additional conditions are needed:
	(1) varying the values of the discrete components must not divide the support of $\vx^T\vbeta_0$ into disjoint subsets, and (2) $f_0(\cdot)$ must satisfy a non-periodicity condition.

	\section{Examples for verifying the regularity conditions} \label{sec:normal_verify}
	We  verify the key conditions on $G(\vx^T\vbeta|\vbeta)$, $G^{(1)}(\vx^T\vbeta|\vbeta)$, $\E(\vx|\vx^T\vbeta)$ and $\E(\vx\vx^T|\vx^T\vbeta)$
	when $\vx$ follows a multivariate normal distribution.
	We focus on conditions that are not much discussed in the current literature on	inference for high-dimensional linear regression.
	For notation simplicity, we assume that $\vx\sim N(\vnull, \vI_p)$. Similar results can be obtained for a multivariate normal distribution with a general covariance $\vSigma$.  
	
	Given $\vx\sim N(\vnull, \vI_p)$, then for any $\vw,\vbeta\in\bbR^p$, we have
	$$\left(\begin{array}{c}
	\vx^T\vw\\ \vx^T\vbeta 
	\end{array} \right)\sim N\left(\vnull,\left(\begin{array}{cc}
	||\vw||_2^2&\vbeta^T\vw\\ \vbeta^T\vw&||\vbeta ||_2^2
	\end{array} \right)\right).$$
	For any $\vbeta\neq \vnull_p$, the distribution of $\vx^T\vw$ conditional on $\vx^T\vbeta $ is normal with mean $\frac{\vx^T\vbeta}{||\vbeta ||_2^2}\vbeta^T\vw$, and variance $||\vw||_2^2-\frac{(\vbeta^T\vw)^2}{||\vbeta||_2^2}$. 
	We thus have
	$$\vx^T\vbeta_0|\vx^T\vbeta=t \sim N\left(\frac{\vbeta_0^T\vbeta}{||\vbeta||_2^2}t,||\vbeta_0||_2^2-\frac{(\vbeta_0^T\vbeta)^2}{||\vbeta||_2^2} \right).$$
	
	In the following subsections, we demonstrate the key assumptions in \ref{A2}-(a)(b)(c) and \ref{K4}-(b)(c) hold with high probability in the above setup.
	
	\subsection{Verify Assumption~\ref{A2}-(a)}
	First, we verify the eigenvalue conditions involving $\E[\Cov(\vx_{-1}|\vx^T\vbeta_0)]$  in Assumption~\ref{A2}-(a). Note that for any $\vw\in\bbR^p$, we have  
	$ \Var (\vx^T\vw|\vx^T\vbeta) =  ||\vw||_2^2 - \frac{(\vbeta^T\vw)^2}{||\vbeta||_2^2} .$ 
	Recall that $\mathcal{V}_1=
	\{\vv\in\bbR^{p-1}:  ||\vv||_2=1,  	||\vv||_0\leq 2ks\}$, 
	where $k>1$ is a positive integer. Therefore, we have
	\begin{align*}
	\inf_{\vv\in \mathcal{V}_1}\vv^T\E\big[\Cov (\vx_{-1}|\vx^T\vbeta_0)\big]\vv  
	=&\inf_{\vv\in \mathcal{V}_1}\E\left\{\Var (\vx_{-1}^T\vv|\vx^T\vbeta_0)  \right\} \\
	=&\inf_{\vv\in \mathcal{V}_1}  \left[ 1- \frac{(\vbeta_{0,-1}^T\vv)^2}{||\vbeta_0||_2^2}\right] \\
	\geq&1-\sup_{\vv\in \mathcal{V}_1}   \frac{(\vbeta_{0,-1}^T\vv)^2}{||\vbeta_0||_2^2} \\
	\geq & 1-  \frac{||\vbeta_{0,-1}||_2^2}{||\vbeta_0||_2^2} = \frac{1}{||\vbeta_0||_2^2} ,
	\end{align*}
	since $\vbeta_0=(1,\vbeta_{0,-1}^T)^T$, where the last inequality applies the Cauchy-Schwartz inequality. In the current setup, it is straightforward to show $\lambda_{\max}\left(\E (\vx\vx^T) \right) \leq 1$. Furthermore, 
	\begin{align*}
	\lambda_{\max}\left\{\E\big[\Cov (\vx_{-1}|\vx^T\vbeta_0)\big]\right\}
	=&\sup_{ \vv\in\bbR^{p-1}: ||\vv||_2=1}\E\left\{\Var (\vx_{-1}^T\vv|\vx^T\vbeta_0)  \right\} \\
	=&\sup_{\vv\in\bbR^{p-1}:  ||\vv||_2=1}\left[ 1- \frac{(\vbeta_{0,-1}^T\vv)^2}{||\vbeta_0||_2^2}\right]\leq 1.
	\end{align*} 
	
	The assumption $\lambda_{\min}(\vOmega)\geq \xi_2$ is similar to the condition
	imposed on the population Hessian matrix for high-dimensional generalized linear models. To see this is a reasonable assumption, we consider the special case that $\inf_{t}|f'_0(t)|\geq a$ for some positive constant $a$ (e.g., $f_0$ is a linear function). Then
	\begin{align*}
	\lambda_{\min}(\vOmega)
	=&\inf_{ \vv\in\bbR^{p-1}: ||\vv||_2=1}\E\left\{[G^{(1)}(\vx^T\vbeta_0|\vbeta_0)]^2(\vxw_{-1}^T\vv )^2 \right\} \\
	\geq& a^2\inf_{ \vv\in\bbR^{p-1}: ||\vv||_2=1}\E\left\{ (\vxw_{-1}^T\vv )^2 \right\}  \geq  \frac{a^2}{||\vbeta_0||_2^2},
	\end{align*}
	where the analysis is similar as above, since $\E\left( \vxw_{-1}\vxw_{-1}^T\right) = \E\big[\Cov (\vx_{-1}|\vx^T\vbeta_0)\big]$.

	Finally, we verify the eigenvalue conditions involving $\lambda_{\max}(\E(\vx_i\vx_i^T|\vx_i^T\vbeta))$ in Assumption~\ref{A2}-(a).  
	For any $\vw\in\bbR^p$, 
	$$\E [(\vx^T\vw)^2|\vx^T\vbeta] =[\E (\vx^T\vw |\vx^T\vbeta)]^2 +\Var (\vx^T\vw|\vx^T\vbeta) = \frac{(\vx^T\vbeta)^2(\vbeta^T\vw)^2}{||\vbeta ||_2^4} + ||\vw||_2^2 - \frac{(\vbeta^T\vw)^2}{||\vbeta||_2^2} .$$
	Therefore, we have
	\begin{align*}
	&\sup_{\vbeta\in\bbB}n^{-1}\sum_{i=1}^n\left[\lambda_{\max}\big(\E(\vx_i\vx_i^T|\vx_i^T\vbeta)\big)\right]^2\\
	=&\sup_{\vbeta\in\bbB}n^{-1}\sum_{i=1}^n\sup_{\vv\in\bbR^p:||\vv||_2=1}\left\{\E [(\vx_i^T\vv)^2|\vx_i^T\vbeta] \right\}^2\\
	=& \sup_{\vbeta\in\bbB}n^{-1}\sum_{i=1}^n\sup_{\vv\in\bbR^p:||\vv||_2=1}\left[\frac{(\vx_i^T\vbeta)^2(\vbeta^T\vv)^2}{||\vbeta ||_2^4} + 1- \frac{(\vbeta^T\vv)^2}{||\vbeta||_2^2}\right]^2\\
	\leq& \sup_{\vbeta\in\bbB}n^{-1}\sum_{i=1}^n\sup_{\vv\in\bbR^p:||\vv||_2=1}\left[\frac{2(\vx_i^T\vbeta)^4(\vbeta^T\vv)^4}{||\vbeta ||_2^8} +2\right]\\
	\leq& \sup_{\vbeta\in\bbB}n^{-1} \sum_{i=1}^n\left[\frac{2(\vx_i^T\vbeta)^4 }{||\vbeta ||_2^4} +2\right]\\
	\leq &  \sup_{\vbeta\in\bbB}\left\{\frac{3\E\left[(\vx_i^T\vbeta)^4\right]}{||\vbeta||_2^4}+2\right\} =11,
	\end{align*}
	with probability at least $1-\exp(-c_1\sqrt{n})$, for some positive constants $c_0$, $c_1$, and all $n$ sufficiently large. In the above, the second last inequality applies the Cauchy-Schwartz inequality,and the last inequality applies Lemma~\ref{lem:cube_rate}. Similarly, we have
	\begin{align*}
	&\max_{ 1\leq i \leq n}\sup_{\vbeta\in\bbB_1}   \lambda_{\max}\big(\E(\vx_i\vx_i^T|\vx_i^T\vbeta)\big) \\
	=&\max_{ 1\leq i \leq n}\sup_{\vbeta\in\bbB_1}\sup_{\vv\in\bbR^p:||\vv||_2=1}  \E[(\vx_i^T\vv)^2|\vx_i^T\vbeta]  \\
	=&\max_{ 1\leq i \leq n}\sup_{\vbeta\in\bbB_1}\sup_{\vv\in\bbR^p:||\vv||_2=1}  \left[\frac{(\vx_i^T\vbeta)^2(\vbeta^T\vv)^2}{||\vbeta ||_2^4} + 1- \frac{(\vbeta^T\vv)^2}{||\vbeta||_2^2} \right]\\
	\leq &\max_{ 1\leq i \leq n}\sup_{\vbeta\in\bbB_1}  \frac{(\vx_i^T\vbeta)^2 }{||\vbeta ||_2^2} +1,
	\end{align*}
	where the last inequality applies the Cauchy-Schwartz inequality.
	Since $\vx_i^T\vbeta\sim N(0,||\vbeta||_2^2)$, by the tail property of the normal distribution, we have $P\Big(\max\limits_{1\leq i\leq n}|\vx_i^T\vbeta|\geq c_0||\vbeta||_2\sqrt{\log(p\vee n)},\\ \forall\vbeta\in\bbB_1\Big)\leq \exp[-c_1\log(p\vee n)]$,  some positive constants $c_0$, $c_1$, and all $n$ sufficiently large.  Thus we have 
	\begin{align*}
	&P\left(\max_{1\leq i \leq n}\sup_{\vbeta\in\bbB_1}  \lambda_{\max}\big(\E(\vx_i\vx_i^T|\vx_i^T\vbeta)\big) \geq M \log(p\vee n)\right)\\
	\leq &P\left(\max_{ 1\leq i \leq n}  (\vx_i^T\vbeta)^2 \geq [M\log(p\vee n)-1]||\vbeta ||_2^2,\ \forall\ \vbeta\in\bbB_1\right)\\  
	\leq &\exp[-c_1\log(p\vee n)],
	\end{align*}
	for some positive constants $M$, $c_1$, and all $n$ sufficiently large.

	\subsection{Verify Assumption~\ref{A2}-(b)}
	
	Next we verify the key conditions in \ref{A2}-(b). Observe that $\E(\vx_{-1}^T\veta|\vx^T\vbeta=t) =  \frac{t}{||\vbeta||_2^2}\vbeta_{-1}^T\veta$. Hence we have $\E^{(1)}(\vx_{-1}^T\veta|\vx^T\vbeta=t) = \frac{\vbeta_{-1}^T\veta}{||\vbeta||_2^2}$, and $\E^{(2)}(\vx_{-1}^T\veta|\vx^T\vbeta=t) = 0$,   satisfying the following constraints:
	\begin{align*}
	\max_{1\leq i \leq n}\sup_{ \vbeta \in \bbB}|\E^{(1)}(\vx_{i,-1}^T\veta|\vx_i^T\vbeta) |=\sup_{ \vbeta \in \bbB} \frac{|\vbeta_{-1}^T\veta|}{||\vbeta||_2^2}\leq \sup_{ \vbeta \in \bbB} \frac{||\veta||_2}{||\vbeta||_2} \leq ||\veta||_2,\\
	\sup_{|t|\leq 2||\vbeta_0||_2\sigma_x\sqrt{\log(p\vee n)}}\sup_{ \vbeta \in \bbB}|\E^{(2)}(\vx_{-1}^T\veta|\vx^T\vbeta=t) |\leq ||\veta||_2,
	\end{align*}
	since $||\vbeta||_2\geq 1$ for any $\vbeta\in \bbB$. Recall that $\bbB_1=\{\vbeta\in\bbB: ||\vbeta-\vbeta_0||_2\leq c_0\sqrt{s}h^2,  ||\vbeta||_0\leq ks \}$, for some constants $k>1$ and $c_0>0$.
	Since $\E[(\vx_{-1}^T\veta)^2|\vx^T\vbeta=t] = \frac{t^2}{||\vbeta||_2^2}\vbeta_{-1}^T\veta + ||\veta||_2^2-\frac{(\vbeta_{-1}^T\veta)^2}{||\vbeta||_2^2}$, the Cauchy-Schwartz inequality implies that
	\begin{align*}
	\max_{1\leq i \leq n}\sup_{\vbeta\in\bbB_1}  \left\{\left|\E^{(1)}\left[(\vx_{i,-1}^T\veta)^2|\vx_i^T\vbeta\right]\right| \right\}
	=\max_{1\leq i \leq n}\sup_{\vbeta\in\bbB_1}  \frac{2|\vx_i^T\vbeta|(\vbeta_{-1}^T\veta)^2}{||\vbeta ||_2^4} 
	\leq ||\veta||_2^2\max_{1\leq i \leq n}\sup_{\vbeta\in\bbB_1}  \frac{2|\vx_i^T\vbeta|}{||\vbeta ||_2^2}.
	\end{align*}
	We have $P\left(\max_{1\leq i\leq n}|\vx_i^T\vbeta|\geq c_0||\vbeta||_2\sqrt{\log(p\vee n)},\ \forall\ \vbeta\in\bbB_1\right)\leq \exp[-c\log(p\vee n)]$, for some positive constants $c_0$, $c_1$, and all $n$ sufficiently large. Note that for any $\vbeta\in\bbB_1$, we have $\beta_1=1$.
	We thus have 
	$$1\leq ||\vbeta||_2\leq ||\vbeta_0||_2 + ||\vbeta-\vbeta_0||_2 \leq ||\vbeta_0||_2 + c_0\sqrt{s}h^2\leq c_0,$$ 
	for some constant $c_0>0$.  Then we have 
	\begin{align*}
	&P\left(\max_{1\leq i \leq n}\sup_{\vbeta\in\bbB_1}  \left\{\left|\E^{(1)}\left[(\vx_{i,-1}^T\veta)^2|\vx_i^T\vbeta\right]\right| \right\}\geq M||\veta||_2^2\sqrt{\log(p\vee n)}\right)\\
	\leq &P\left(\max_{1\leq i\leq n}\sup_{\vbeta\in\bbB_1 }\frac{2|\vx_i^T\vbeta|}{||\vbeta||_2^2}\geq M\sqrt{\log(p\vee n)}\right)\\ 
	\leq &P\left(\max_{1\leq i\leq n}\sup_{\vbeta\in\bbB_1 }|\vx_i^T\vbeta|\geq c_0\sqrt{\log(p\vee n)}\right)\\
	\leq &\exp[-c_1\log(p\vee n)],
	\end{align*}
	for some positive constants $M$, $c_0$, $c_1$, and all $n$ sufficiently large.

	\subsection{Verify the Lipschitz condition of  E{\boldmath$(x|x^T\beta)$} in Assumption~\ref{A2}-(c)}\label{sec:normal_E}
	For any $\vbeta_1,\vbeta_2\in\bbB$, we observe that
	\begin{align*}
	\E(\vx^T\vv|\vx^T\vbeta_1) -&\E (\vx^T\vv |\vx^T\vbeta_2) = \frac{\vx^T\vbeta_1}{||\vbeta_1 ||_2^2}\vbeta_1^T\vv- \frac{\vx^T\vbeta_2}{||\vbeta_2||_2^2}\vbeta_2^T\vv\\
	=& \frac{\vbeta_1^T\vv}{||\vbeta_1 ||_2^2} (\vx^T\vbeta_1-\vx^T\vbeta_2)+ \vx^T\vbeta_2\left( \frac{\vbeta_1^T\vv}{||\vbeta_1||_2^2} -\frac{\vbeta_2^T\vv}{||\vbeta_2||_2^2}\right)\\
	=& \frac{\vbeta_1^T\vv}{||\vbeta_1 ||_2^2} (\vx^T\vbeta_1-\vx^T\vbeta_2)+ (\vx^T\vbeta_2)\frac{(\vbeta_1-\vbeta_2)^T\vv}{||\vbeta_1||_2^2} + (\vx^T\vbeta_2)(\vbeta_2^T\vv) \left( \frac{1}{||\vbeta_1||_2^2} -\frac{1}{||\vbeta_2||_2^2}\right)\\
	\triangleq&A_1+A_2+A_3,
	\end{align*}
	where the definition of $A_k$, $k=1,\cdots,3$, is clear from the context. 
	Note that for the identifiability condition assumes $\beta_1=1$. We have	
	\begin{align}
	1\leq ||\vbeta||_2\leq ||\vbeta_0||_2 + ||\vbeta-\vbeta_0||_2 \leq ||\vbeta_0||_2 + r\leq c_0,\label{beta_bound}
	\end{align} 
	for some constant $c_0>0$. By the Cauchy-Schwartz Inequality and (\ref{beta_bound}), we obtain that $\sup_{\vv\in\bbK(2ks+\widetilde{s}) }|A_1|\leq \sup_{\vv\in\bbK(2ks+\widetilde{s}) } |\vx^T\vbeta_1-\vx^T\vbeta_2| *\frac{||\vbeta_1||_2||\vv||_2}{||\vbeta_1 ||_2^2} \leq |\vx^T\vbeta_1-\vx^T\vbeta_2| $, and $\sup_{\vv\in\bbK(2ks+\widetilde{s}) }|A_2|\leq  |\vx^T\vbeta_2| *||\vbeta_1-\vbeta_2||_2 $. To bound $\sup_{\vv\in\bbK(2ks+\widetilde{s}) }|A_3|$, observe that
	\begin{align*}
	\sup_{\vv\in\bbK(2ks+\widetilde{s}) }|A_3|=&\sup_{\vv\in\bbK(2ks+\widetilde{s}) } |\vx^T\vbeta_2|*|\vbeta_2^T\vv|* \Big|\frac{||\vbeta_2||_2^2-||\vbeta_1||_2^2}{||\vbeta_1||_2^2||\vbeta_2||_2^2}\Big|\\
	=& \sup_{\vv\in\bbK(2ks+\widetilde{s}) }|\vx^T\vbeta_2|*|\vbeta_2^T\vv|* \frac{\big|(\vbeta_1+\vbeta_2)^T(\vbeta_1-\vbeta_2)\big|}{||\vbeta_1||_2^2||\vbeta_2||_2^2} \\
	\leq &\sup_{\vv\in\bbK(2ks+\widetilde{s}) }|\vx^T\vbeta_2|*||\vbeta_2||_2*||\vv||_2* \frac{||\vbeta_1+\vbeta_2||_2||\vbeta_1-\vbeta_2||_2}{||\vbeta_1||_2^2||\vbeta_2||_2^2}\\
	\leq& |\vx^T\vbeta_2|* ||\vbeta_1-\vbeta_2||_2* \frac{||\vbeta_1||_2+||\vbeta_2||_2}{||\vbeta_1||_2^2||\vbeta_2||_2}\\
	\leq& 2|\vx^T\vbeta_2|* ||\vbeta_1-\vbeta_2||_2,
	\end{align*}
	where the last inequality applies (\ref{beta_bound}).
	Combining all these results, we show that
	$$\sup_{\vv\in\bbK(2ks+\widetilde{s}) }\Big|\E( \vx^T\vv|\vx^T\vbeta_1) -\E(\vx^T\vv |\vx^T\vbeta_2)\Big|\leq 3 \big(|\vx^T\vbeta_1-\vx^T\vbeta_2|  +  |\vx^T\vbeta_2|*||\vbeta_1-\vbeta_2||_2\big).$$
	Similarly, we can also show that 
	$$\sup_{\vv\in\bbK(2ks+\widetilde{s}) }\Big|\E( \vx^T\vv|\vx^T\vbeta_1) -\E(\vx^T\vv |\vx^T\vbeta_2)\Big|\leq 3 \big(|\vx^T\vbeta_1-\vx^T\vbeta_2|  +  |\vx^T\vbeta_1|*||\vbeta_1-\vbeta_2||_2\big).$$
	Hence, it implies that  
	$$\sup_{\vv\in\bbK(2ks+\widetilde{s}) }\Big|\E( \vx^T\vv|\vx^T\vbeta_1) -\E(\vx^T\vv |\vx^T\vbeta_2)\Big|\leq 3\big[|\vx^T\vbeta_1-\vx^T\vbeta_2|  +\min(|\vx^T\vbeta_1|,|\vx^T\vbeta_2|) * ||\vbeta_1-\vbeta_2||_2\big].$$

	\subsection{Verify the assumptions on G{\boldmath$(x^T\beta|\beta)$} in \ref{K4}-(c) }\label{sec:normal_G}
	Let $\sigma_{\vbeta}^2 =||\vbeta_0||_2^2-\frac{(\vbeta_0^T\vbeta)^2}{||\vbeta||_2^2} $, and $\phi(\cdot)$ be the p.d.f of $N(0,1)$. We observe that 
	\begin{align*}
	G(t|\vbeta) =\E[f_0(\vx^T\vbeta_0)|\vx^T\vbeta=t] =\int f_0(z)\sigma_{\vbeta}^{-1} \phi\left(\frac{z-\frac{\vbeta_0^T\vbeta}{||\vbeta||_2^2}t}{\sigma_{\vbeta}}\right)dz.
	\end{align*}
	Let $w =\sigma_{\vbeta}^{-1} \left(z-\frac{\vbeta_0^T\vbeta}{||\vbeta||_2^2}t\right)$, by a transformation of variable, we have 
	\begin{align}
	G(t|\vbeta) =\int f_0\left(\sigma_{\vbeta} w+\frac{\vbeta_0^T\vbeta}{||\vbeta||_2^2}t\right) \phi(w)dw.\label{Gt_norm}
	\end{align}
	Let $\sigma_{\vbeta_1}^2 =||\vbeta_0||_2^2-\frac{(\vbeta_0^T\vbeta_1)^2}{||\vbeta_1||_2^2} $, and $\sigma_{\vbeta_2}^2 =||\vbeta_0||_2^2-\frac{(\vbeta_0^T\vbeta_2)^2}{||\vbeta_2||_2^2} $. Then we have
	\begin{align}
	G(t|\vbeta_1) -G(t|\vbeta_2)&=\int \left[f_0\Big(\sigma_{\vbeta_1}w+\frac{\vbeta_0^T\vbeta_1}{||\vbeta_1||_2^2}t\Big)-f_0\Big(\sigma_{\vbeta_2}w+\frac{\vbeta_0^T\vbeta_2}{||\vbeta_2||_2^2}t\Big)\right] \phi(w)dw\nonumber\\
	&=\int f_0'(\widetilde{w})\left[(\sigma_{\vbeta_1}-\sigma_{\vbeta_2})w+\frac{\vbeta_0^T\vbeta_1}{||\vbeta_1||_2^2}t-\frac{\vbeta_0^T\vbeta_2}{||\vbeta_2||_2^2}t \right] \phi(w)dw,\label{Gt_diff_norm}
	\end{align}
	where $\widetilde{w}$ is between $\sigma_{\vbeta_1}w+\frac{\vbeta_0^T\vbeta_1}{||\vbeta_1||_2^2}t$ and $\sigma_{\vbeta_2}w+\frac{\vbeta_0^T\vbeta_2}{||\vbeta_2||_2^2}t$. Assumption (A1)-(b) indicates that $f_0$ is differentiable and $\max_{1\leq i \leq n}| f_0'(\vx_i^T\vbeta_0)|\leq b$. Then we can obtain
	\begin{align*}
	|G(\vx^T\vbeta_1|\vbeta_1) -G(\vx^T\vbeta_2|\vbeta_2)|&\leq b|\sigma_{\vbeta_1}-\sigma_{\vbeta_2}|*\E|w| + b|t|*\left| \frac{\vbeta_0^T\vbeta_1}{||\vbeta_1||_2^2}-\frac{\vbeta_0^T\vbeta_2}{||\vbeta_2||_2^2}\ \right|.
	\end{align*}
	As $w\sim N(0,1)$, we have $\E|w| = \sqrt{2/\pi}$. According to analysis in Section~\ref{sec:normal_E}, we have that $\Big| \frac{\vbeta_0^T\vbeta_1}{||\vbeta_1||_2^2} -\frac{\vbeta_0^T\vbeta_2}{||\vbeta_2||_2^2} \Big|\leq c_1 ||\vbeta_1-\vbeta_2||_2 $, for some positive constant $c_1$. Without loss of generality, we assume $\sigma_{\vbeta_1}\geq \sigma_{\vbeta_2}>0$. then $|\sigma_{\vbeta_1}^2-\sigma_{\vbeta_2}^2| = (\sigma_{\vbeta_1}-\sigma_{\vbeta_2})^2 +2\sigma_{\vbeta_2}(\sigma_{\vbeta_1}-\sigma_{\vbeta_2})\geq (\sigma_{\vbeta_1}-\sigma_{\vbeta_2})^2$. We thus have 
	\begin{align*}
	|\sigma_{\vbeta_1}-\sigma_{\vbeta_2}|&\leq \sqrt{|\sigma_{\vbeta_1}^2-\sigma_{\vbeta_2}^2|} = \sqrt{\frac{(\vbeta_0^T\vbeta_1)^2}{||\vbeta_1||_2^2}-\frac{(\vbeta_0^T\vbeta_2)^2}{||\vbeta_2||_2^2}}\\
	&\leq   \sqrt{\frac{(\vbeta_0^T\vbeta_1)^2-(\vbeta_0^T\vbeta_2)^2}{||\vbeta_1||_2^2} + (\vbeta_0^T\vbeta_2)^2\left(\frac{1}{||\vbeta_1||_2^2}-\frac{1}{||\vbeta_2||_2^2}\right)}\\
	&\leq   \sqrt{\frac{\vbeta_0^T(\vbeta_1+\vbeta_2)*\vbeta_0^T(\vbeta_1-\vbeta_2)}{||\vbeta_1||_2^2} }+ \frac{|\vbeta_0^T\vbeta_2|}{||\vbeta_1||_2||\vbeta_2||_2}\sqrt{ (\vbeta_1+\vbeta_2)^T(\vbeta_1-\vbeta_2)}\\
	&\leq c_1 ||\vbeta_1-\vbeta_2||_2^{1/2},
	\end{align*} 
	for some positive constant $c_1$, where the last inequality applies (\ref{beta_bound}) and Assumption~\ref{A1}-(b).
	Note that $||\vbeta_0||_2$ is bounded by Assumption~\ref{A1}-(b). Combining all these results,  we conclude that for some constant $C>0$,
	\begin{align*}
	|G(t|\vbeta_1) -G(t|\vbeta_2)|\leq C\Big(|t|*||\vbeta_1-\vbeta_2||_2 + \sqrt{||\vbeta_1-\vbeta_2||_2}\Big). 
	\end{align*} 
	Hence we have
	\begin{align*}
	\sup_{|t|\leq c_0\sqrt{s\log(p\vee n)}}\big[G(t|\vbeta_1) -G(t|\vbeta_2)\big]^2\leq& \sup_{|t|\leq c_0\sqrt{s\log(p\vee n)}}2C^2\Big(t^2*||\vbeta_1-\vbeta_2||_2^2 +  ||\vbeta_1-\vbeta_2||_2\Big)\\
	\leq & 2C^2\Big( 2rs\log( p\vee n) ||\vbeta_1-\vbeta_2||_2 +  ||\vbeta_1-\vbeta_2||_2\Big)\\
	\leq & c_1s\log( p\vee n) ||\vbeta_1-\vbeta_2||_2,
	\end{align*}  
	for some positive constants $C$, $c_1$,   and all $n$ sufficiently large. 
	We thus have validated the assumption on $G(\cdot|\vbeta)$ in \ref{K4}-(c).

	\subsection{Verify the assumptions on G{\boldmath$^{(1)}(x^T\beta|\beta)$} in \ref{K4}-(b) and \ref{K4}-(c)} \label{sec:normal_G1}

	To validate the assumption on $G^{(1)}(\vx^T\vbeta|\vbeta)$ in \ref{K4}-(b), we first note that 
	\begin{align*}
	G^{(1)}(t|\vbeta) =\frac{\vbeta_0^T\vbeta}{||\vbeta||_2^2}\int f_0'\Big(\sigma_{\vbeta} w+\frac{\vbeta_0^T\vbeta}{||\vbeta||_2^2}t\Big) \phi(w)dw,
	\end{align*}
	by (\ref{Gt_norm}),	where $\sigma_{\vbeta}^2 =||\vbeta_0||_2^2-\frac{(\vbeta_0^T\vbeta)^2}{||\vbeta||_2^2} $.
	
	By Taylor expansion,  for some $\widetilde{w}$ between $\sigma_{\vbeta} w+\frac{\vbeta_0^T\vbeta}{||\vbeta||_2^2}\vx^T\vbeta$ and $\vx^T\vbeta_0$,
	\begin{align*}
	&G^{(1)}(\vx^T\vbeta|\vbeta) -G^{(1)}(\vx^T\vbeta_0|\vbeta_0)\\
	=&\int \left[\frac{\vbeta_0^T\vbeta}{||\vbeta||_2^2}f_0'\Big(\sigma_{\vbeta} w+\frac{\vbeta_0^T\vbeta}{||\vbeta||_2^2}\vx^T\vbeta\Big)-f_0'(\vx^T\vbeta_0)\right] \phi(w)dw\\
	=&\frac{\vbeta_0^T\vbeta}{||\vbeta||_2^2}\int \left[f_0'\Big(\sigma_{\vbeta} w+\frac{\vbeta_0^T\vbeta}{||\vbeta||_2^2}\vx^T\vbeta\Big)-f_0'(\vx^T\vbeta_0)\right] \phi(w)dw + f_0'(\vx^T\vbeta_0) \left(\frac{\vbeta_0^T\vbeta}{||\vbeta||_2^2}-1\right)\\ 
	\triangleq& D_1+D_2,
	\end{align*}
	where the definitions of $D_1$ and $D_2$ are clear from the context. By the assumptions for $f_0(\cdot)$ in Assumption~\ref{A1}-(b), we have $|D_2|\leq b*\frac{|\vbeta^T(\vbeta_0-\vbeta)|}{||\vbeta||_2^2}\leq b||\vbeta-\vbeta_0||_2 $, by (\ref{beta_bound}). 
	
	To bound $|D_1|$, by Taylor expansion and Assumption~\ref{A1}-(b), there exist some positive constants $c$, $c_1$, $c_2$, such that 
	\begin{align*}
	|D_1|\leq  &  \left|c\int  f_0''(\vx^T\vbeta_0)\left(\sigma_{\vbeta} w+\frac{\vbeta_0^T\vbeta}{||\vbeta||_2^2}\vx^T\vbeta-\vx^T\vbeta_0\right)  \phi(w)dw\right|\\
	& + c_1  \int  \left(\sigma_{\vbeta} w+\frac{\vbeta_0^T\vbeta}{||\vbeta||_2^2}\vx^T\vbeta-\vx^T\vbeta_0\right)^2  \phi(w)dw \\ 
	\leq &  c_2\left|\frac{\vbeta_0^T\vbeta}{||\vbeta||_2^2}\vx^T\vbeta-\vx^T\vbeta_0 \right| + c_2  \sigma_{\vbeta}^2 +c_2\left(\frac{\vbeta_0^T\vbeta}{||\vbeta||_2^2}\vx^T\vbeta-\vx^T\vbeta_0\right)^2.
	\end{align*}
	Similarly as in Section~\ref{sec:normal_G}, we can obtain that
	\begin{align*}
	\left|\frac{\vbeta_0^T\vbeta}{||\vbeta||_2^2}\vx^T\vbeta-\vx^T\vbeta_0 \right|&\leq c ||\vbeta_0||_2*\left(|\vx^T\vbeta-\vx^T\vbeta_0 | + |\vx^T\vbeta_0|*||\vbeta-\vbeta_0||_2\right),\\
	\mbox{and }\qquad	\sigma_{\vbeta}^2& = ||\vbeta_0||_2^2-\frac{(\vbeta_0^T\vbeta)^2}{||\vbeta||_2^2} \leq ||\vbeta_0||_2^2*||\vbeta-\vbeta_0||_2.
	\end{align*}
	Then we have for some positive constant $C$,
	\begin{align*} 
	\big|G^{(1)}(\vx^T\vbeta|\vbeta) - G^{(1)}(\vx^T\vbeta_0|\vbeta_0)\big|\leq& C \Big[|\vx^T\vbeta-\vx^T\vbeta_0 | +|\vx^T\vbeta-\vx^T\vbeta_0 | ^2 \\
	&+ (1+|\vx^T\vbeta_0|)*||\vbeta-\vbeta_0||_2 +|\vx^T\vbeta_0|^2*||\vbeta-\vbeta_0||_2^2\Big] .
	\end{align*}   
	Note that  $  ||\vbeta-\vbeta_0||_2^2\leq  2$. We have
	\begin{align*}
	&\sup_{ \vbeta \in \bbB }n^{-1}\sum_{i=1}^{n}||\vbeta-\vbeta_0||_2^{-2}\big[G^{(1)}(\vx_i^T\vbeta|\vbeta) - G^{(1)}(\vx_i^T\vbeta_0|\vbeta_0)\big]^2\\
	\leq &c_0\sup_{ \vbeta \in \bbB }n^{-1}\sum_{i=1}^{n} \Big[\frac{|\vx_i^T(\vbeta-\vbeta_0) |^2}{||\vbeta-\vbeta_0||_2^2} +\frac{|\vx_i^T(\vbeta-\vbeta_0) | ^4 }{||\vbeta-\vbeta_0||_2^2}+ 1+|\vx^T\vbeta_0|^2 +|\vx^T\vbeta_0|^4*||\vbeta-\vbeta_0||_2^2\Big]  \\
	\leq& c_1,
	\end{align*} 
	with probability at least $1-\exp[-c_2s\log (p\vee n)]$, for some positive constants $c_0$, $c_1$, $c_2$ and all $n$ sufficiently large. In the above, the first part of the inequality applies Lemma~\ref{lem15NCL},  the second part of the inequality applies Lemma~\ref{lem:cube_rate}, and the remaining part applies Lemma~\ref{lem14NCL} and Lemma~\ref{lem:cube_rate}. Then we validate Assumption \ref{K4}-(b).
	
	Next we validate assumptions for $G^{(1)}(\vx^T\vbeta|\vbeta)$ in \ref{K4}-(c). By Assumption~\ref{A1}-(b), for some positive constant $c_1$,
	\begin{align*}
	&\left|G^{(1)}(t|\vbeta_1) -G^{(1)}(t|\vbeta_2)\right|\\
	&\leq \left| \frac{\vbeta_0^T\vbeta_1 }{||\vbeta_1||_2^2}-\frac{\vbeta_0^T\vbeta_2}{||\vbeta_2||_2^2}\right|\int f_0'\left(\sigma_{\vbeta_1}w+\frac{\vbeta_0^T\vbeta_1}{||\vbeta_1||_2^2}t\right) \phi(w)dw\\
	&+\frac{|\vbeta_0^T\vbeta_2|}{||\vbeta_2||_2^2}\int \left|f_0'\left(\sigma_{\vbeta_1}w+\frac{\vbeta_0^T\vbeta_1}{||\vbeta_1||_2^2}t\right)-f_0'\left(\sigma_{\vbeta_2}w+\frac{\vbeta_0^T\vbeta_2}{||\vbeta_2||_2^2}t\right)\right| \phi(w)dw\\
	\leq &b\left| \frac{\vbeta_0^T\vbeta_1 }{||\vbeta_1||_2^2}-\frac{\vbeta_0^T\vbeta_2}{||\vbeta_2||_2^2}\right| + c_1\E|w|*\frac{|\vbeta_0^T\vbeta_2|}{||\vbeta_2||_2^2}*|\sigma_{\vbeta_1}-\sigma_{\vbeta_2}|  +  c_1\frac{|\vbeta_0^T\vbeta_2|}{||\vbeta_2||_2^2}*\left| \frac{(\vbeta_0^T\vbeta_1)}{||\vbeta_1||_2^2}t-\frac{(\vbeta_0^T\vbeta_2)}{||\vbeta_2||_2^2}t\right|\\
	\triangleq&|E_1|+|E_2|+|E_3|,
	\end{align*}
	where the definition of $E_k$, $k=1,\cdots,3$, is clear from the context. As discussed above, we have that $|E_1|\leq 3b||\vbeta_0||_2 *||\vbeta_1-\vbeta_2||_2$, $|E_2|\leq c_2*||\vbeta_0||\sqrt{||\vbeta_1-\vbeta_2||_2}$, and 
	$|E_3|\leq c_2|t|*||\vbeta_0||^2* ||\vbeta_1-\vbeta_2||_2 $ for some constant $c_2>0$. In conclusion, for some positive constant $C$, we have
	$\big|G^{(1)}(t|\vbeta_1) -G^{(1)}(t|\vbeta_2)\big|\leq C\Big[|t| * ||\vbeta_1-\vbeta_2||_2 + \sqrt{||\vbeta_1-\vbeta_2||_2}\Big].$
	Then applying similar techniques as those in Section~\ref{sec:normal_G}, we can validate Assumption \ref{K4}-(c).

	\section{Algorithms and additional numerical results}\label{sec:algo_numeric}
	\subsection{Pseudo Codes for the algorithms 
		in Section~\ref{sec:algo_est}}\label{sec:algo_supp} 
	In this subsection, we provide pseudo codes for the algorithms introduced in Section~\ref{sec:algo_est}. 
	Algorithm~\ref{Profiled} is the main algorithm for 
	solving the  penalized
	high-dimensional profiled estimating equation
	for the initial estimator $\vbetah$.
	It extends the proximal algorithm  
	\citep{Nesterov, agarwal2012} to estimate the profiled semiparametric estimator.
	Algorithm~\ref{project} describes the 
	details of the projection step in Algorithm~\ref{Profiled},
	using an algorithm introduced in \citet{Duchi2008}.

	\begin{algorithm} [!h] 
		\caption{An algorithm for solving the penalized profiled estimating equation.\\ Input: initial value $\vbeta^0$, $\lambda$, $\gamma_u$, data $\{\vx_i,\widetilde{Y}_i\}_{i=1}^n$} \label{Profiled} 
		\begin{algorithmic}[1]
			\State Set $t=1$, $\vbeta^t= \vbeta^{t-1}= \vbeta^0$, coef.err = $||\vbeta^0||_2+1$, model.err$^{t-1}= $model.err$^{t-2}=\Var(\widetilde{Y}_i)$. 
			\While {coef.err $> 0.01*||\vbeta^{t-1}||_2$ or model.err$^{t-1}<$ model.err$^{t-2}$} 
			\State $h^t\leftarrow 0.9n^{-1/6}\min\{\mbox{std}(\vx_i^T\vbeta^{t}), \mbox{IQR}(\vx_i^T\vbeta^{t})/1.34\}$. 
			\State $w_{ij}^t \leftarrow K\Big(\frac{\vx_i^T\vbeta^{t} -\vx_j^T\vbeta^{t} }{h^t}\Big)$; $w_{ij}^{'t} \leftarrow (h^{t})^{-1}K'\Big(\frac{\vx_i^T\vbeta^{t} -\vx_j^T\vbeta^{t}}{h^t}\Big)$.
			\State $\widehat{G}(\vx_i^T\vbeta^{t}|\vbeta^{t}) \leftarrow \frac{ \sum_{j\neq i}w_{ij}^t\widetilde{Y}_j}{\sum_{j\neq i}w_{ij}^t}$.
			\State $\widehat{G}^{(1)}(\vx_i^T\vbeta^{t}|\vbeta^{t}) \leftarrow \frac{ \sum_{j\neq i}w_{ij}^{'t}\widetilde{Y}_j}{\sum_{j\neq i}w_{ij}^t} - \widehat{G}(\vx_i^T\vbeta^{t}|\vbeta^{t}) *\sum_{j\neq i}w_{ij}^{'t}$.
			\State $\widehat{\E}(\vx_i|\vx_i^T\vbeta^{t})\leftarrow \frac{ \sum_{j\neq i}w_{ij}^t\vx_j}{\sum_{j\neq i}w_{ij}^t}$; $\vxh_{i}^t \leftarrow \vx_i - \widehat{\E}(\vx_i|\vx_i^T\vbeta^{t})$.
			\State model.err$^{t} \leftarrow  \frac{1}{n}\sum_{i=1}^n[\widetilde{Y}_i-\widehat{G}(\vx_i^T\vbeta^{t}|\vbeta^{t}) ]^2.$ 
			\State $\vbeta_{-1}^{t+1} \leftarrow \argmin\limits_{\vbeta_{-1}\in\bbR^{p-1}: ||\vbeta_{-1}||_1\leq \rho}
			\frac{\gamma_u}{2}||\vbeta_{-1}-\vbeta_{-1}^t||_2^2 + [\vS_n(\vbeta^t, \widehat{G},\widehat{\E})]^T(\vbeta_{-1}-\vbeta_{-1}^t)  + \lambda||\vbeta_{-1}||_1 $,	by (\ref{update1}) and Algorithm~\ref{project}.
			\State $\vbeta^{t+1} \leftarrow (1,(\vbeta_{-1}^{t+1})^T)^T.$
			\State coef.err$ \leftarrow  ||\vbeta^{t+1}-\vbeta^{t}||_2.$
			\State $t\leftarrow t+1$, $\gamma_u\leftarrow 2*\gamma_u$.
			\EndWhile	
			\State Output $\vbeta^{t}$.	
		\end{algorithmic} 
	\end{algorithm}	
	
	\begin{algorithm} [!h] 
		\caption{An algorithm for projecting $\vbeta$ onto the $L_1$-ball: $\{\vbeta:||\vbeta||_1\leq \rho\}$.\\ Input: initial value $\vbeta$, $\rho$} \label{project} 
		\begin{algorithmic}[1]
			\If {$||\vbeta||_1\leq \rho$}
			\State Output $\vbeta$.	
			\Else  
			\State Sort $\{|\beta_j|\}_{j=1}^p$ into $b_1\geq b_2\geq \cdots\geq b_p$.
			\State Find $J = \max\{2\leq j\leq p: b_j - \frac{(\sum_{r=1}^{j}b_r) - \rho}{j}>0\}$, and $\delta = \frac{1}{J}[(\sum_{r=1}^{J}b_r) - \rho]$.
			\State Output $\vbeta^o = T_s (\vbeta, \delta)$.	
			\EndIf	 
		\end{algorithmic} 
	\end{algorithm}	 
	 
	\newpage
	\subsection{Computation of $\bf{d}_j(${\boldmath$\beta$}$, \eta)$}\label{sec:algo_dj}
	In Section~\ref{sec:inf_method}, we introduce a nodewise Dantzig estimator $\vd_j(\vbetah,\eta)$, as defined in (\ref{dj_def}), to 
	obtain the approximate inverse of $\nabla \vS_n(\vbetah,\widehat{G},\widehat{\E})$.
	This estimator can be solved via a linear programming problem as follows: 
	\begin{align}
	\min\limits_{\bm{\xi}^+,\bm{\xi}^-\in\bbR^{p-2} }||\vxi^+||_1+||\vxi^-||_1 \label{linear_prog} 
	& \mbox{ subject to } \vxi^+\geq 0, \vxi^-\geq 0, \mbox{ and }\\ 
	\frac{1}{n}\sum_{i=1}^n [\widehat{G}^{(1)}(\vx_i^T\vbetah|\vbetah)]^2\xh_{i,k} \vxh_{i,-j*}^T(\vxi^+-\vxi^-)&\geq \frac{1}{n}\sum_{i=1}^n [\widehat{G}^{(1)}(\vx_i^T\vbetah|\vbetah)]^2 \xh_{i,j}\xh_{i,k}-\eta, \mbox{ for all }  k\neq 1, k\neq j, \nonumber\\
	\frac{1}{n}\sum_{i=1}^n [\widehat{G}^{(1)}(\vx_i^T\vbetah|\vbetah)]^2\xh_{i,k} \vxh_{i,-j*}^T(\vxi^+-\vxi^-)&\leq \frac{1}{n}\sum_{i=1}^n [\widehat{G}^{(1)}(\vx_i^T\vbetah|\vbetah)]^2 \xh_{i,j}\xh_{i,k}+ \eta, \mbox{ for all }  k\neq 1, k\neq j, \nonumber 
	\end{align} 
	for any given $j\in\{2,\cdots,p\}$. Then $(\bm{\xi}^+-\bm{\xi}^-)$ is an estimator of $\vd_j$.
	In our numerical analysis, we apply the function ``lp'' in the R package \texttt{lpSolve}  \citep{lpSolve} for linear programming.
	
	\subsection{Additional numerical results}\label{sec:add_numeric}
	\paragraph{Example 1}(Effect of tuning parameter $\eta$).  
	We compare inference performance with different choices of $\eta$
	($\eta=15h$, $20h$, $25h$, and $30h$)
	and the $\lambda$ selected by 5-fold cross-validation. 
	We consider the same model as in Section~\ref{sec:sim_res} in the main paper, with $n=300$ and $p=200$.
	
	Table~\ref{tab:infcv} summarizes the average Type I errors and powers. We observe that 
	inference performance is not very sensitive to $\eta$. Also, the choice $\eta=25h$ leads to
	performance similar as that obtained using 
	$\eta$ chosen by cross-validation.
	\begin{table}[!h]
		\centering
		\caption{Performance of the bootstrap procedure for simultaneous testing with different choices of $\eta$.}{
			\begin{tabular}{c|c|ccccc}
				\hline 
				\multirow{2}{*}{$\eta$}& Type I error & \multicolumn{5}{c}{Power} \\
				\cline{2-7}
				& $\kG_1$ &$\kG_2$ &$\kG_3$ &$\kG_4$  &$\kG_5$ &$\kG_6$  \\ 
				\hline  
				15h& 3.6\%& 79.6\% & 93.2\% & 94.4\%& 94.8\% &100\% \\  
				20h&5.0\% & 92.8\% & 95.6\% &97.0\% & 97.6\% &100\%\\ 
				25h& 5.6\%& 96.4\% & 96.2\% & 97.8\% & 98.6\% & 100\%\\  	
				30h&4.2\% & 96.8\% & 96.6\% &98.2\% & 99.2\% &100\%\\ 
				CV& 4.6\%& 97.2\% & 96.8\% & 98.4\% & 99.0\% &100\%\\  	
				\hline 
		\end{tabular}}
		\label{tab:infcv}
	\end{table}

	\paragraph{Example 2}(Comparison
	with alternative algorithms). We compare the proposed semiparametric procedure
	with the nonparametric O-learning procedure \citep{zhao2012estimating}, and the decision list based approach  \citep{list_paper}.
	We use the ``DTRlearn2'' R package 
	with the Gaussian kernel for O-learning   \citep{DTRlearn2} and the ``listdtr'' R package 
	for the decision list approach \citep{listdtr}.  As the alternative procedures do not perform inference, our comparison is focused on estimating the optimal value function. Given an estimated decision rule indexed by $\vbetah$, we can estimate the optimal value function  by
	$\widehat{V}(\vbetah) =\frac{\sum_{i=1}^n I[A_i = d(\vx_i)]Y_i}{\sum_{i=1}^n I[A_i = d(\vx_i)]}.$

	We consider two different settings. The first setting (setting 1) corresponds to the index model in Section 4.2
	in the main paper, for which the optimal value is $3.423$ based an independent Monte Carlo simulation with $10^7$ replicates. In the second setting (setting 2),  $Y=1+\vx^T\veta + (A-\frac{1}{2})f_0(\vx) + \ep$, where 
	$\ep\sim N(0, 1)$, $A\sim \mbox{Bernoulli}(0.5)$, $\vx=(x_1,\cdots,x_p)^T$ has elements independently distributed as Uniform$(-1,1)$, $\veta = (2, 1, 0.5, 0,\cdots,0)^T$ and $f_0(\vx)=20(1-x_1^2-x_2^2)(x_1^2+x_2^2-0.36)$. 
	The optimal value of setting 2 is $2.443$, based on an independent Monte Carlo simulation with $10^7$ replicates.
	

	\begin{table}[!h]
		\centering
		\caption{Estimated bias (with standard error in the parentheses) for the optimal value and the average match ratios}{
			\begin{tabular}{cccccccc}
				\hline
				$n$&$p$& &New & O-learning& List learning\\ 
				\hline  
				\multicolumn{6}{c}{Setting 1}\\ 
				\multirow{4}{*}{300} &
				\multirow{2}{*}{200}&value & -0.034 (0.008)  & 0.195 (0.011) & -0.258 (0.010)\\ 
				&&MR& 93.76\%  & 75.89\% & 79.91\%\\  
				&\multirow{2}{*}{800}&value &-0.055 (0.008)  & 0.277 (0.010) & -0.300 (0.028)\\   
				&&MR& 92.67\%  & 63.70\% & 79.05\%\\  
				\hline 
				\multirow{4}{*}{500}
				&\multirow{2}{*}{200}&value & -0.035 (0.006) & 0.172 (0.009) &-0.289 (0.017) \\  
				&&MR& 95.51\%  & 80.82\% & 81.43\%\\  
				&\multirow{2}{*}{800}&value & -0.032 (0.006)  & 0.288 (0.008) &-0.272 (0.017) \\ 
				&&MR& 94.75\%  & 67.93\% &80.77\% \\  
				\hline 
				\hline  
				\multicolumn{6}{c}{Setting 2}\\ 
				\multirow{4}{*}{300} &
				\multirow{2}{*}{200}&value & -0.642 (0.010)  & 0.269 (0.010) &-0.676 (0.012) \\  
				&&MR& 50.60\%  & 49.98\% & 49.98\%  \\  
				&\multirow{2}{*}{800}&value & -0.649 (0.009) & 0.464 (0.009) & -0.647 (0.024)\\  
				&&MR& 50.78\%  & 49.98\% &50.07\% \\  
				\hline 
				\multirow{4}{*}{500}
				&\multirow{2}{*}{200}& value& -0.673 (0.007) & 0.141 (0.010) & -0.689 (0.020)\\  
				&&MR& 50.52\%  & 49.98\% &49.95\% \\  
				&\multirow{2}{*}{800}& value& -0.655 (0.008) & 0.434 (0.007) & -0.684 (0.017) \\ 
				&&MR& 50.68\%  & 49.96\% & 50.07\%\\  
				\hline  
		\end{tabular}}
		\label{tab:value}
	\end{table}
	
	Table~\ref{tab:value} summarizes the average bias and standard error 
	for estimating the optimal values for the two settings
	for $p=200, 800$ and $n=300, 500$. It also reports the average match ratio (MR). MR is estimated as the percentage of times the estimated optimal decision rule coincides with the true optimal decision rule, the latter of which is computed using an independent sample of size $10^4$.  Due to the computational cost, for the decision list based estimators, we run 200 simulations. For the other two estimators, the results are based on 500 simulation runs.
	
	We have the following observations. (1) In setting 1, our proposed method 
	has smaller biases for estimating the optimal value comparing with the two other approaches. 
	This is likely due to the fact the proposed method is semiparametric. In contrast,  
	the other two approaches do not make use of the model structure in estimating the optimal decision rule.
	(2) In setting 2, O-learning has smaller bias for estimating the optimal value.  
	It is noted that in this setting the 
	model does not have the index form and hence the proposed semiparametric procedure 
	is based on a misspecified model.
	(3) In both settings, the performance of 
	O-learning deteriorates as $p$ gets larger while the performance of the new method is stable.

	\paragraph{Example 3}(Correlated design with discrete covariates). 
	In this example, the covariates include three discrete components, which are independent and uniformly distributed on the set $\{-1,0,1\}$. All the other covariates follow a $(p-3)$-dimensional multivariate normal distribution with mean zero and covariance matrix $\vSigma$, with $\Sigma_{i,j}= 0.5^{|i-j|} $. The three discrete variables are the fifth and the last two of the $p$ covariates.
	The model has the same form as the example in Section 4.2 of the main paper and has $\vbeta_{0} = (1,-1,-0.8,0.6,-0.5,0,\cdots,0)$.
	
	Table~\ref{tab:est_ar5} summarizes the estimation results for $n=300,500$ and $p=200,800$, based on 500 simulations. We observe that the proposed profiled estimator has satisfactory performance in this experiment.
	
	\begin{table}[!h]
		\centering
		\caption{Performance of the estimator for the correlated design with discrete covariates}{
			\begin{tabular}{cccccc}
				\hline 
				$n$&$p$& $l_1$ error & $l_2$ error & False Negative & False Positive \\ 
				\hline 
				\multirow{2}{*}{300}
				&200& 0.81 (0.02)& 0.32 (0.00) & 0.00 (0.00) & 8.31 (0.30) \\  
				&800& 1.10 (0.02)& 0.41 (0.01) & 0.01 (0.01) & 14.82 (0.61) \\   
				\hline 
				\multirow{2}{*}{500}
				&200& 0.53 (0.01)& 0.22 (0.00) & 0.00 (0.00) &6.59 (0.25) \\  
				&800& 0.73 (0.01)& 0.28 (0.00) & 0.00 (0.00) & 12.48 (0.46) \\   
				\hline
		\end{tabular}}
		\label{tab:est_ar5}
	\end{table}
	
	Next we investigate the proposed wild bootstrap inference procedure for testing group hypotheses with $\kG_1 = \{6,7,8,9\}$, $\kG_2 = \{5,6,7,8,9\}$, $\kG_3= \{4,6,7,8,9\}$, $\kG_4 = \{4,5,6,7,8,9\}$, $\kG_5 = \{3,6,7,8,9\}$ and $\kG_6 = \{2,6,7,8,9\}$. Note that $\kG_2$ includes a discrete variable. Table~\ref{tab:inference_ar5} summarizes the results based on 1000 bootstrap samples and 500 simulation runs.  We observe that 
	the estimated type I errors and powers are reasonable for all scenarios. 
	
	\begin{table}[!h]
		\centering
		\caption{Performance of the wild bootstrap inference procedure for the correlated design with discrete covariates.}{
			\begin{tabular}{cc|c|ccccc}
				\hline 
				\multirow{2}{*}{$n$}&\multirow{2}{*}{$p$}& Type I error & \multicolumn{5}{c}{Power} \\
				\cline{3-8}
				&& $\kG_1$ &$\kG_2$ &$\kG_3$ &$\kG_4$  &$\kG_5$ &$\kG_6$  \\ 
				\hline 
				\multirow{2}{*}{300}
				&200& 5.6\%& 98.8\% & 96.8\% & 97.2\% & 100\% & 100\%\\  	
				&800& 6.6\%& 84.4\%& 86.6\%  & 88.0\%& 99.8\% &91.8\% \\   
				\hline 
				\multirow{2}{*}{500}
				&200& 5.0\%& 96.6\% &97.8\% & 98.0\% & 100\% & 100\%\\  
				&800& 7.2\%& 89.8\% & 94.8\% & 95.6\%& 100\% & 92.2\% \\   
				\hline 
		\end{tabular}}
		\label{tab:inference_ar5}
	\end{table}

	\paragraph{Example 4}(Addition results for the real-data example in Section~\ref{sec:realdata} of the main paper).  In Table~\ref{tab:realcoef}, we report the estimated coefficients for the variables in Table~\ref{tab:realdata}.
	In the table, 
	``insulin'' stands for fasting insulin, ``Cr'' 
	stands for creatinine, and ``waist'' stands for waist circumference.
	\begin{table}[!h]
		\centering
		\caption{Real data analysis: profiled estimator for variables in Table~\ref{tab:realdata} of the main paper}{
			\begin{tabular}{c|ccccccc}
				\hline 
				Variable& fasting insulin&creatinine& BMI&waist \\
				Coef& $1$&$0.0011$ &  $-0.0070$ & $-0.0047$  \\\hline
				Variable &HbA$_{1c}$& HomaS&Cr:insulin&Cr:BMI&\\
				Coef&$-0.0519$&  $0.0071$  & $0$& $0$\\\hline
				Variable& Cr:waist& Cr:HbA$_{1c}$&Cr:HomaS& insulin:BMI \\
				Coef&$0$ &$-0.0171$& $0.0110$ &$0$ \\\hline 
				Variable& insulin:waist & insulin:HbA$_{1c}$& insulin:HomaS& BMI:waist\\
				Coef&$0$ &$0$  &$0$& $-0.0046$  \\\hline
				Variable&BMI:HbA$_{1c}$&BMI:HomaS& waist:HbA$_{1c}$&waist:HomaS\\
				Coef&$-0.0349$&$0.0051$&$-0.0399$& $0.0067$ \\\hline
				Variable&HbA$_{1c}$:HomaS& LDL-C& total cholesterol& age\\
				Coef&$-0.0002$ &$0.0030$ & $0.0032$ &$0.0022$  \\\hline
				Variable& weight&&\\ 
				Coef& $0$ & & \\
				\hline 
		\end{tabular}}
		\label{tab:realcoef}
	\end{table}
	
	\newpage
	As an example of using the estimated model to interpret the covariate effect on the outcome. 
	we consider the effect of
	baseline HbA$_{1c}$ on the outcome
	of receiving the recommended treatment. 
	Figure~\ref{fig:realdata} plots the 
	$\widehat{G}(\vx^T\vbetah|\vbetah)$ versus baseline HbA$_{1c}$ while fixing all the other covariates at their respective sample averages. The plot suggests that for such an average patient, receiving pioglitazone (treatment 0)
	is likely to reduce the level of HbA$_{1c}$, and larger benefit is expected for patients with a smaller value of baseline HbA$_{1c}$.
	
	\begin{figure}[!h]
		\centering
		\includegraphics[width=3in]{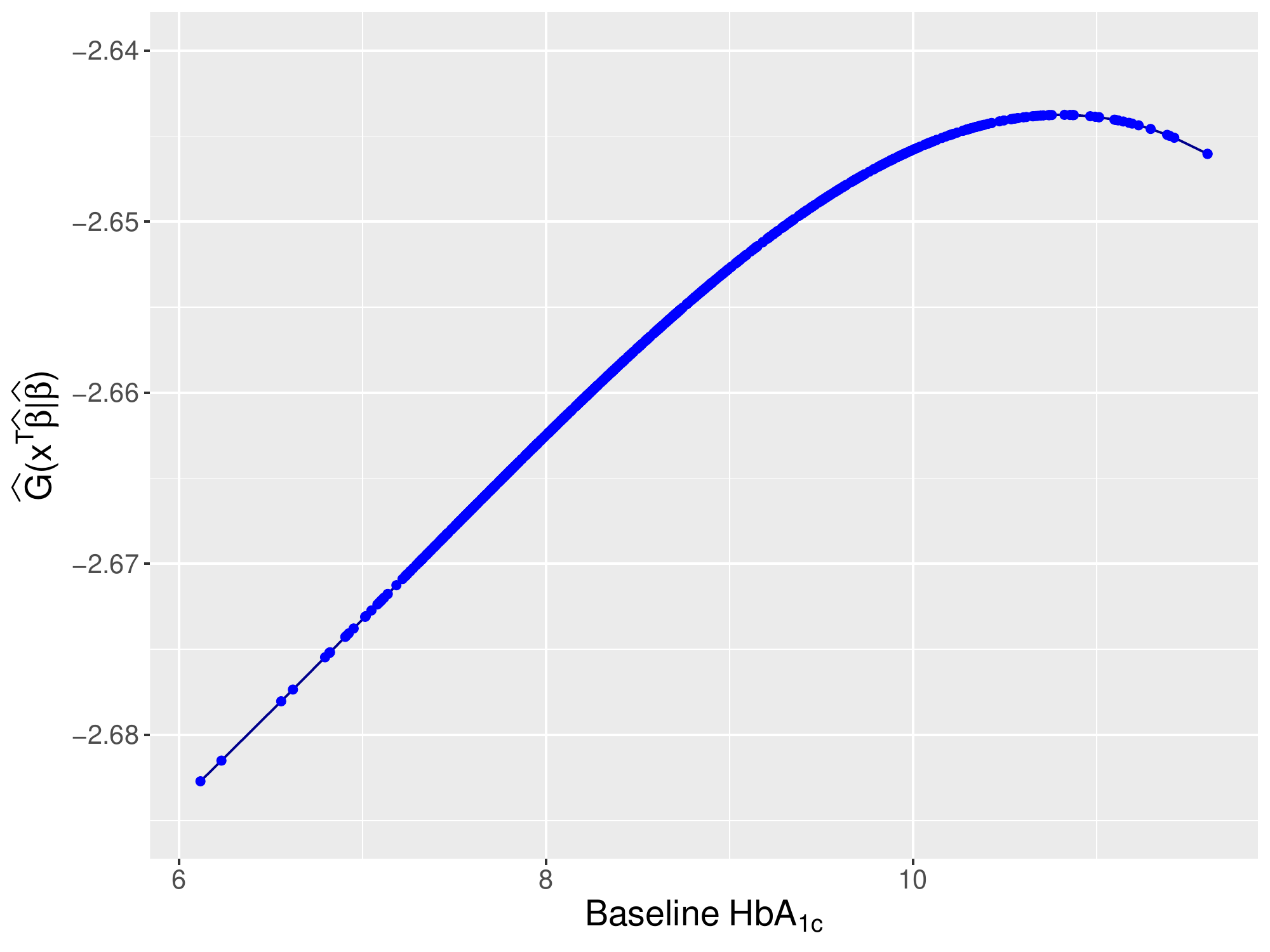}
		\caption{Plot of $\widehat{G}(\vx^T\vbetah|\vbetah)$ versus baseline HbA$_{1c}$ while fixing all the other covariates at their respective sample averages.}
		\label{fig:realdata}
	\end{figure}
	
	\paragraph{Example 5} (Extension to observational studies). 
	We investigate the procedure proposed in 
	Section~\ref{sec:discuss} of the main paper for observation studies.
	We consider the same model as in Section~\ref{sec:sim_res} of the main paper, except that $A$ is generated according to $P(A=1|\vx) = [1+\exp(-\vx^T\vxi)]^{-1}$,
	where $\vxi = (0.2,0.2,-0.4,0,\cdots,0)^T$.
	We estimate the propensity score via $L_1$-regularized logistic regression. Table~\ref{tab:obs} suggests the promising performance for the proposed estimator
	for observational studies.

	\begin{table}[!h]
		\centering
		\caption{Performance of the penalized profile least-squares estimator}{
			\begin{tabular}{cccccc}
				\hline 
				$n$&$p$& $l_1$ error & $l_2$ error & False Negative & False Positive \\ 
				\hline 
				\multirow{2}{*}{300}
				&200& 1.00 (0.01)& 0.42 (0.00) & 0.06 (0.01) & 9.13 (0.30) \\  
				&800& 1.26 (0.02)& 0.49 (0.00) & 0.10 (0.01) & 16.29 (0.62) \\  
				\hline 
				\multirow{2}{*}{500}
				&200& 0.72 (0.01)& 0.30 (0.00) & 0.00 (0.00) & 8.32 (0.26) \\  
				&800& 0.97 (0.01)& 0.39 (0.00) & 0.01 (0.00) & 15.18 (0.48) \\   
				\hline
		\end{tabular}}
		\label{tab:obs}
	\end{table}

\end{document}